\newcommand{\norm}[1]{\left\lVert#1\right\rVert}
\newcommand{\abs}[1]{\lvert#1\rvert}
\newcommand{\scl}[2]{\langle#1,#2\rangle}
\newcommand{\suup}[1]{ \underset{#1}{\sup} }
\def\begf{\begin{frame}}
\def\enf{\end{frame}}
\def\begz{\begin{itemize}}
\def\endz{\end{itemize}}
\def\lp{\left(} 
\def\rp{\right)} 
\def\dm{\lp\begin{array}}	
\def\fm{\end{array}\rp}
\def\begf{\begin{frame}}
\def\enf{\end{frame}}
\def\begz{\begin{itemize}}
\def\endz{\end{itemize}}
\def\lp{\left(} 
\def\rp{\right)} 
\def\dm{\lp\begin{array}}	
\def\fm{\end{array}\rp}
\def\m2{M_2 \lp \cc \rp}
\def\m3{M_3 \lp \cc \rp}	
\def\mn{M_n \lp \cc \rp}	
\def\re{\text{Re}}			
\def\im{\text{Im}}
\def\ot{\otimes}
\def\ds{\partial\!\!\!\slash}
\def\cc{{\mathbb{C}}}
\def\C{{\mathbb{C}}}	
\def\rr{{\mathbb{R}}}
\def\R{{\mathbb{R}}}
\def\nn{{\mathbb{N}}}
\def\N{{\mathbb{N}}}
\def\zz{{\mathbb{Z}}} 		
\def\ii{{\mathbb{I}}}
\def\I{{\mathbb{I}}}
\def\hhh{{\mathbb H}}
\def\mm{{\mathcal M}}	
\def\M{{\mathcal M}}		
\def\aa{{\mathcal A}}
\def\A{{\mathcal A}}
\def\hh{{\mathcal H}}
\def\pp{{\mathcal P}}
\def\X{{\mathcal X}}
\def\K{{\mathcal K}}
\def\cinf{C^{\infty}\lp\mm\rp}
\def\lda{\text{Lip}_D(\A)}
\def\xox{{\xi}_x}
\def\yox{{\xi}_y}
\def\xoz{{\zeta}_x}
\def\yoz{{\zeta}_y}
\def\ox{\omega_{\xi}}
\def\oz{\omega_{\zeta}}
\def\ot{\otimes}
\def\xox{{\xi}_x}
\def\yox{{\xi}_y}
\def\xoz{{\zeta}_x}
\def\yoz{{\zeta}_y}
\def\ox{\omega_{\xi}}
\def\oz{\omega_{\zeta}}
\def\xo0{\omega^0_x}
\def\yo0{\omega^0_y}
\def\ou{{\omega_{1}}}
\def\od{{\omega_{2}}}
\def\ou{{\omega_{1}}}
\def\oc{{\omega_{c}}}
\def\akom{\alpha_\kappa\omega_m}
\def\akton{\alpha_{\tilde\kappa}\omega_n}
\def\xo0{x_\omega^0}
\def\yo0{y_\omega^0}
\def\aea{\alpha_e(\aa)}
\def\pa{{\mathcal P}(\aa)}
\def\sa{{\mathcal S}(\aa)}
\def\fm{\Phi(x^\mu)}
\def\dm{\partial_\mu}
\def\X{{\mathcal X}}
\def\dmm{\left(\begin{array}}
\def\fmm{\end{array}\right)}
\newcommand{\HH}{\mathcal{H}}
\newcommand{\de}{\mathrm{d}}
\newtheorem{theorem}{Theorem}[section]
\newtheorem{cor}[theorem]{Corollary}
 \newtheorem{defi}[theorem]{Definition}
\newtheorem{lem}[theorem]{Lemma}
\newtheorem{prop}[theorem]{Proposition}
\newtheorem{thm}[theorem]{Theorem}
\theoremstyle{definition}
\newtheorem{definition}[theorem]{Definition}
\theoremstyle{remark}
\numberwithin{equation}{section}
\begin{document}

\title[Distances in NCG: from Monge to Higgs]{From Monge to Higgs:\\[5pt] a survey of distance computations
in \\[3pt]noncommutative geometry}

\author[P. Martinetti]{Pierre Martinetti}
\address{%Dipartimento di Matematica e Informatica, Universit\`a di Trieste \& 
Dipartimento di Matematica, Universit\`a di Genova}
\curraddr{}
\email{martinetti@dima.unige.it}
%\thanks{April $2^{nd}$ 2016}

%\subjclass[2000]{Primary }
%    The 2010 edition of the Mathematics Subject Classification is
%    now available.  If you are citing a classification from the
%    new scheme, use the following input coding instead.
%\subjclass[2010]{Primary }

\date{11 January 2015}

\begin{abstract} This is a review of explicit computations of Connes
  distance in noncommutative geometry, covering finite dimensional
  spectral triples, almost-commutative geometries, and spectral triples
  on the algebra of compact operators. Several applications to physics
  are covered, like the metric interpretation of the Higgs field, and
  the comparison of Connes distance with the minimal length that
  emerges in various models of quantum spacetime. Links with other areas of
  mathematics are studied, in particular the horizontal
  distance in sub-Riemannian
  geometry.  The interpretation of Connes
  distance as a noncommutative version of the Monge-Kantorovich metric
  in optimal transport is also discussed.  
\end{abstract}

\maketitle

\tableofcontents

\noindent \emph{Proceedings of the workshop ``Noncommutative Geometry and
  Optimal Transport'', Besan\c{c}on november 2014.}
\newpage
\section{Introduction}
The distance formula in noncommutative geometry has been introduced
by Connes at the end of the 80's \cite{Connes:1989fk}. Given a
so-called \emph{spectral triple} $(\A, \HH, D)$, that is  an involutive algebra $\A$ acting on a
Hilbert space  ${\mathcal H}$ via a representation $\pi$, and an operator $D$ on $\mathcal H$ such
that the commutator $[D, \pi(a)]$ is bounded for any $a$ in $\A$, one
defines on the space $\sa$ of states of $\A$ 
the (possibly infinite) distance 
\begin{equation}
  \label{eq:-1}
  d(\varphi, \varphi') =\sup_{a\in\A}\left\{|\varphi(a) -\varphi'(a)|,\;
    ||[D,\pi(a)]||\leq 1\right\} \quad \forall \varphi, \varphi'\in\sa.
\end{equation}
For $\A=C_0^\infty(\M)$ the commutative algebra of smooth functions vanishing at infinity on a
locally compact and complete manifold $\M$, acting on the Hilbert
space $\HH$
of square integrable differential forms and $D$ the signature
operator, this distance computed between pure states gives back the
geodesic distance on $\M$.  In this sense,  eq.~\eqref{eq:-1} is a generalization of Riemannian geodesic distance that makes
sense in a noncommutative setting, and provides an original tool to study the geometry of the space of states
on an algebra. Besides its mathematical inte\-rest, Connes distance also has an intriguing echo in physics,
for it yields a metric
interpretation for the Higgs field \cite{Connes:1996fu}, the missing piece of the
\emph{Standard Model of Fundamental Interactions} recently discovered
by the Large Hadronic Collider at CERN.

In the 90's, Rieffel \cite{Rieffel:1999wq} noticed that \eqref{eq:-1} was a noncommutative version of the Wasserstein distance of
order $1$ in the theory of optimal transport (the modern
version of Monge \emph{d\'eblais et
  remblais} problem). More exactly, this is a
noncommutative generalization of Kantorovich dual formula of the
Wasserstein distance \cite{Kan42}. Formula \eqref{eq:-1}, which we call
\emph{spectral distance} in this survey, thus offers an unexpected
connection 
between an ancient mathematical problem and the most recent discovery
in high energy physics.

The meaning of this connection is far from clear. Yet, Rieffel's
observation suggests that the spectral distance may provide an interesting starting point for a theory of optimal transport in
noncommutative geometry, as well as  a possible interpretation of
the Higgs field as a cost function on spacetime. More specifically, one may wonder
\begin{itemize}
\item  What remains of the duality
Wasserstein (minimizing a cost)/Kantorovich (maximizing a profit) in the noncommutative setting ?  Is there some
``noncommutative cost'' that one is minimizing while computing the
supremum in the distance formula ?  

\item May the noncommutative geometry point of view on the Wasserstein distance help to
solve some problems in optimal transport ? Vice-versa, can one use
results of optimal transport to address relevant issues in
noncommutative geometry ? 

\item Is such a generalization of the Riemannian distance truly interesting
  for physics~?
\end{itemize}

These questions were at the origin of
the mini-workshop \emph{Optimal transport and noncommutative geometry}
hold in Besan\c{c}on in november 2014, and whose present text is part
of the proceedings. Both optimal transport and
noncommutative geometry are 
active areas of research, but with little intersection. In addition, the metric aspect of noncommutative
geometry is a part of the theory that has been relatively little
studied so far  \cite{Connes:1992bc}. Nevertheless several results -
including explicit computations - have
been obtained in the recent years, and links with other areas of geometry (like
sub-Riemannian geometry) have been discovered.  

 This survey aims at providing an account of the metric aspect of
noncommutative geometry, readable by non experts.  The questions
listed above will serve as a guideline (they are
discussed in a more systematic way in the last section of the paper),
but our point of view is rather to emphasize explicit
calculations of the spectral distance, starting
with commutative examples and going further in noncommutativity:
finite dimensional algebras, matrix valued functions on a manifold,
compact operators.  We omit the proof (that can be found in the indicated bibliography) and
stress for each example some application in physics, or some relation
with other part of mathematics. 
\smallskip

More precisely,  after some generalities in section~\ref{sec:general} where we introduce formula
\eqref{eq:-1} and discuss some of its properties, we begin our survey
in section \ref{section:finite} with
finite dimensional spectral triples. This is essentially a review of \cite{Iochum:2001fv}
with some slight generalizations to non-pure states. Depending on the finite dimensional algebra
being commutative or not, one deals with distance on a graphs
(\S\ref{sec:discretespaces} and \S \ref{sec:graph}) or on projective
spaces (\S\ref{sec:proj}), like the sphere (\S\ref{section:ball}).  In
section \ref{section:almostcg} we consider products of spectral
triples. After general properties in \S\ref{section:pythagoras} mainly
taken from \cite{DAndrea:2012fkpm}, we
focus on \emph{almost commutative geometries} in \S \ref{section:almost}, that is the product of a manifold by a finite dimensional spectral
triple. This is in this context that the Higgs field acquires a
metric interpretation \cite{Chamseddine:1996kx,Connes:1996fu}
as the component of the metric in a discrete internal dimension
\cite{Martinetti:2002ij}, as explained in \S\ref{higgs}. Section \ref{subriemannian} is entirely
devoted to the relation between
almost-commutative geometry and sub-Riemannian geometry. As recalled
in \S\ref{sec:subriem}, this relation
has
been pointed out in \cite{Connes:1996fu} but fully studied in
\cite{Martinetti:2006db, Martinetti:2008hl}: formula \eqref{eq:-1} yields a
(possibly infinite) distance on the bundle
$\pa$ of pure states of
the algebra of matrix-valued functions on a manifold, which is finite between certain classes of
 leaves of the horizontal foliation of $\pa$. This is in contrast
with the horizontal distance which, by definition, is infinite between
horizontal leaves. The difference between the spectral and the
horizontal distances is governed by the holonomy associated to a Dirac
operator of an almost commutative geometry (\S\ref{sec:obstruction}). The
computation of the distances is worked out in details for a simple
example (bundle on a circle) in \S\ref{sec:counterexample}-\ref{sec:fiber}
In section \ref{sec:compact} we consider a truly noncommutative example,
that is a spectral triple on the algebra of compact operators. We view
the latter first as the algebra of the Moyal plane in \S\ref{section:Moyal}, then as the algebra
describing some models of quantum spacetimes in \S\ref{DFR}. 
In the last section of the paper, we discuss various problems, in particular what
could play the role of geodesics in a noncommutative framework
(\S\ref{points} and \S\ref{sec:geodesics}), and
how to export Kantorovich duality to the noncommutative side (\S\ref{sectionncg}).
\smallskip

Although we try to cover a wide range of examples, this survey is not
exhaustive.
For a state of the art on the topological aspect of
metric noncommutative geometry, we invite the reader to see the nice
review of Latr\'emoli\`ere in this volume
\cite{Latremoliere:2015ab}, or \cite{Mesland:2015aa} for an approach
oriented towards $KK$-theory. Among the subjects that are not treated
here, let us mention applications to dynamical
systems \cite{Bellissard:2010fk},  fractals (see e.g. \cite{Christensen:2011fk,
  Christensen:2006fk}), as well as the pseudo-Riemannian case, e.g. in \cite{Moretti:2003zw} and \cite{Franco:2010fk}.
\medskip

The notations are collected at the end of the text, before the bibliography.

\section{Distances on the space of states of an algebra}
\label{sec:general}
A state $\varphi$ of a complex $C^*$-algebra $\A$ is a linear
application from $\A$ to $\C$ which is positive (any positive element $a^*a$
of $\A$ has image a non-negative real number)
and of norm $1$. A similar notion exists
for real algebras, although one should be careful that
selfadjointness, $\varphi(x^*) = \bar\varphi(x)$, does not follow from
positivity, as it does for unital complex algebras \cite{Goodearl:1982fk}.

For any $C^*$-algebra the set of states $\sa$ is convex, and even compact in the weak-$*$ topology in case the algebra is unital. The extremal
points are the pure states $\pa$. By Gelfand theorem, for
$\A=C_0(\X)$ the algebra of continuous functions vanishing at infinity
on a locally compact topological space $\X$, the pure states are in
$1$-to-$1$ correspondence with the points of $x$, viewed as the evaluation
\begin{equation}
  \label{eq:02}
  \delta_x(f) := f(x) \quad \forall x\in\X, f\in C_0(\X).
\end{equation}
Taking as a rough definition of noncommutative geometry a ``space
whose algebra of functions $\A$ is non-commutative'', pure states of $\A$ thus appear as
natural candidates to play the role of points in a noncommutative
framework. One may prefer to focus on classes of irreducible
representations rather than on pure states; this is discussed
in \S \ref{points}.

\subsection{Commutative case: the Monge-Kantorovich distance}
\label{subsec:MK}
In the commutative case $\A= C_0(\X)$, a distance in the
space of states $\sa$ is provided by optimal transport. Namely, given a function
$c:\X\times\X \to \R$ called the \emph{cost function}, the optimal
transport between two probability measures $\mu_1, \mu_2$ on $\X$ is 
\begin{equation}
  \label{eq:3}
  W(\mu_1, \mu_2) :=\inf_{\rho} \int_{\X\times \X} c(x,y) \, d\rho(x, y)
\end{equation}
where the infimum runs on all the measure $\rho$ on $\X\times\X$ with
marginals $\mu_1, \mu_2$. When the cost function $c$ is a distance,
then $W(\mu_1, \mu_2)$ is a distance on the space of probability
measures on $\X$, called the \emph{Wasserstein} or the \emph{Monge-Kantorovich}
distance of order $1$. One obtains a distance on the space of states
noticing that  any probability measure $\mu$
 defines a state 
\begin{equation}
  \label{eq:4}
  \varphi(f) = \int_\X f(x)\, d\mu(x),
\end{equation}
and any state comes in this way.
%That is, there is a 
%$1$-to-$1$ correspondence between $\text{Prob}(\X)$ and ${\mathcal S}(C_0(\X))$.

The Monge-Kantorovich distance is important
in probability theory because  the convergence in $W$ always implies the weak$^*$
 convergence (with convergence of moments). For $\X$ compact, $W$ actually metrizes the weak* topology on probability
 measures. This is not the only distance to make it, but according
 to Villani \cite[p. 97]{Villani:2009tp}  this is the most convenient one.

A similar definition exists for any order
$p\in\N^*$, by considering instead of  \eqref{eq:3}
\begin{equation}
  \label{eq:138}
 W_p(\mu_1, \mu_2) :=\inf_{\rho} \left(\int_{\X\times \X} c^p(x,y) \,
   d\rho(x, y)\right)^{\frac 1p}.
\end{equation}
Nevertheless, in
this paper we will mostly consider the distance of order $1$,
because in this particular case there exists a dual formulation which
makes sense in a noncommutative context. Indeed, 
Kantorovich showed \cite{Kan42} that $W$ can be equivalently written as
\begin{equation}
  \label{eq:05}
    W(\mu_1, \mu_2):=\sup_{||f||_\text{Lip}\leq 1} \int_\X f d\mu_1 -
    \int_\X f d\mu_2 
\end{equation}
where the supremum runs on all real functions which are Lipschitz with
respect to the cost, that is 
\begin{equation}
  \label{eq:6}
  |f(x) - f(y)| \leq c(x,y) \quad\forall x,y \in \X. 
\end{equation}
As explained in \S \ref{subsection:spectraldist} below, for ${\mathcal X} = \M$
a Riemannian complete manifold,  the dual form \eqref{eq:05} of
the Wasserstein distance  coincides with the spectral distance
\eqref{eq:-1} for $\A=C^\infty_0(\M)$ acting on the space of
differential forms and $D$ the signature operator. 
\smallskip

Before entering the details, let us stress why the dual formulation of
Kantorovich may be of  interest for physics. Computed between pure states, $W(\delta_x, \delta_y)= c(x, y)$ gives
back the cost function. In
particular on a Riemannian manifold $\M$, taking as cost the geodesic
distance, the Wasserstein distance
\eqref{eq:05} between pure states provides an algebraic formulation of
the geodesic distance in terms of supremum, in contrast with the usual
definition as the infimum on the length of all paths between $x$ and $y$. This view on the geodesic distance does not rely on any
notion ill defined in the quantum context, such as points or
path between points. It
only involves algebraic tools, 
typical from quantum physics.
\smallskip 

Possible noncommutative generalizations of $W_p$ for $p\geq 2$ are
discussed in \S \ref{sectionncg}.

\subsection{Noncommutative case: Connes spectral distance}
\label{subsection:spectraldist}

A distance on the space $\sa$ of states of a non-necessarily commutative $C^*$-algebra $\A$
has been introduced by Connes at the end of the 80's \cite{Connes:1989fk} in the framework of noncommutative geometry. 

Assuming $\A$ acts on an Hilbert space $\HH$,   then given an operator
$D$ on $\HH$, one
associates to any pair of states $\varphi, \varphi'\in\sa$ 
the quantity
\begin{equation}
  \label{eq:1}
  d(\varphi, \varphi') :=\sup_{a\in \text{Lip}_D(\A)} |\varphi(a) -\varphi'(a)|
\end{equation}
where the $D$-Lipschitz ball of $\A$ is the subset of $\A$ defined as
\begin{equation}
  \label{eq:14}
  L_D(\A):= \left\{a \in \A, ||[D,a]||\leq 1\right\},
\end{equation}
where the norm ia the operator norm on $\HH$. In \eqref{eq:14} as well as most of the
time in the rest of the paper, we omit the symbol of representation and we identify an element $a$ of $\A$ with
    its representation $\pi(a)$ as bounded operator on $\HH$. In these conditions
    \eqref{eq:1} coincides with \eqref{eq:-1}.

 Eq. \eqref{eq:1} is obviously invariant under the exchange of $\varphi$
and $\varphi'$, and is zero if $\varphi =\varphi'$. The triangle
inequality is easy to check. For two states $\varphi, \varphi'$
that are equal everywhere but on some element $a_\infty$  such that
$[D, a_\infty]$ is unbounded, one has $d(\varphi, \varphi')=0$
although $\varphi\neq \varphi'$. To avoid this, one requires $[D,a]$ be bounded for any $a\in\A$. Then \eqref{eq:1} defines a
distance (possibly infinite) on $\sa$. Asking $(D-\lambda \I)^{-1}$ to
be
compact for any $\lambda$ in the resolvent set of $D$ (in case $\A$ is
unital, this means that $D$ has compact resolvent), the set $(\A, \HH,
D)$ is called a \emph{spectral triple} (and $D$ a \emph{Dirac operator}). Hence the name \emph{spectral
  distance} to denote \eqref{eq:1}. 

For $\A=C_0^\infty(\M)$ the algebra of smooth functions vanishing at infinity on a
locally compact complete Riemannian manifold $\M$, with multiplicative
representation on the Hilbert space $\HH= L^2(\M, \wedge)$ of square
integrable differential forms,
\begin{equation}
     \label{eq:13}
     (f\psi)(x) = f(x)\psi(x) \quad \forall x\in\M, \psi\in L^2(M, \wedge),
   \end{equation}
 and $D=d+ d^\dagger$ the signature
operator ($d$ is the exterior  derivative, $d^\dagger$ its adjoint)
then the spectral distance \eqref{eq:1} computed between pure states gives back the
geodesic distance on $\M$,
\begin{equation}
  \label{eq:15}
  d(\delta_x, \delta_y) = d_\text{geo}(x, y).
\end{equation}
A similar result is obtained, in case $\M$ is a spin manifold, with
$\HH= L^2(\M, S)$ the Hilbert space of square integrable spinors and 
  \begin{equation}
 D= \ds :=-i\gamma^\mu(\partial_\mu + \omega_\mu)
 \label{eq:Diracoperateur}
 \end{equation}
the usual Dirac
 operator, with  $\omega_\mu$ the spin connection and $\gamma^\mu$ the
 Dirac matrices satisfying
 \begin{equation}
\gamma^\mu\gamma^\nu +
 \gamma^\nu\gamma^\mu = 2g^{\mu\nu}\I
\label{eq:142}
 \end{equation}
 where $g^{\mu\nu}$ the
 Riemannian metric on $\M$. 

Furthermore, in \cite{Rieffel:1999ec} Rieffel noticed that for $\M$
compact and for any state (pure or not), formula \eqref{eq:1} was nothing but Kantorovich dual
formulation \eqref{eq:05} of the Wasserstein distance. This is because the
norm of the commutator $[d+d^\dagger, f]$ (or $[\ds, f]$ in case the
spin structure is taken into account) is nothing but the Lipschitz
norm of $f$.  We show in
\cite{dAndrea:2009xr} that this remains true 
for a locally compact manifold as soon as it is complete (the latter
condition guarantees 
that looking for the supremum on $C^\infty_0(\M)$ or on Lipschitz
functions is equivalent). 
\bigskip

Therefore, Connes spectral distance appears as a 
generalization of the Wasserstein distance. More precisely, it
provides a formulation of Kantorovich dual
formula which makes sense also in a noncommutative context. Whether there exists a noncommutative version of the initial
definition \eqref{eq:3} of the Wasserstein distance as an infimum is an open
question, discussed in section \ref{sectionncg}.

Let us stress that Connes formula makes sense in a wider context: one may look for the
supremum on the Lipschitz  ball $L(a)\leq 1$ for any seminorm $L$ on
$\A$, non necessarily coming from the commutator with an operator. One does not even need to work with an algebra: states and Lipschitz
seminorms makes sense for \emph{ordered unit spaces} (see
\cite[\S 11]{Rieffel:1999ec} for an extended discussion on that matter). This flexibility is useful when one
focuses on topological aspects of the distance (for instance under
 which conditions does \eqref{eq:1}
 metrize the weak$^*$ topology on $\sa$~?~\cite{Latremoliere:2015ab}). In this review, we adopt the point
 of view that spectral triples provide algebras and operators $D$ - hence seminorms $L_D$ - 
 that are relevant for physics as well as for other aspect of mathematics,
 offering thus various examples where
 the explicit computation of the spectral distance is worth
 undertaken.

It is also worth mentioning 
that by adding more conditions on $\A$, $\HH$ and $D$, one is able to fully characterize a
Riemannian closed (spin) manifold $\M$ as a spectral triple
$(\A, \HH, D)$ where $\A$ is commutative
\cite{connesreconstruct}. Focusing only on the metric aspect, one may as well be interested in characterizing a metric space in terms
of algebraic datas, without the need of any smooth
structure. A good reference on this topic is \cite{Weaver:1999aa}. A
general reference on the algebraic way of characterizing a smooth
manifold is \cite{Netsruev:2003aa}.

\subsection{Isometries \& projections}
Before making explicit computations of the distance, let us list
various definitions and easy but useful general results. In all this section, $d$
denotes the spectral distance \eqref{eq:1} associated to an arbitrary
spectral triple $(\A, \HH, D)$.

\begin{defi}
\label{optimelement}
  We call ``optimal element for a pair of states $\varphi, \varphi'\in\sa$'' any element $a$ in $L_D(\A)$ such that 
  \begin{equation}
|\varphi(a)  - \varphi'(a)|  = d(\varphi, \varphi')
\label{eq:121}
\end{equation}
or, in case the supremum is not reached, any sequence $\left\{a_n\in L_D(\A)\right\}$ such that
  \begin{equation}
    \label{eq:122}
 \lim_{n\to\infty} |\varphi(a_n)  - \varphi'(a_n)| = d(\varphi, \varphi').
  \end{equation}
\end{defi}
\begin{lem}\cite[Lem. 1]{Iochum:2001fv}
  The supremum in  \eqref{eq:1} can be searched equivalently on
  selfadjoint elements of $\A$. In case $\A$ is unital, the supremum
  can be equivalently searched on positive elements.
\end{lem}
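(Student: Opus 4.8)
\emph{The plan} is to prove two separate reductions and chain them: first that the supremum over all of $L_D(\A)$ equals the supremum over its selfadjoint elements, and then, when $\A$ is unital, that the latter equals the supremum over positive elements. Write $d_{\mathrm{sa}}$ and $d_+$ for the suprema of $|\varphi(a)-\varphi'(a)|$ restricted to selfadjoint and to positive $a\in L_D(\A)$ respectively. Since positive elements are selfadjoint and selfadjoint elements lie in $\A$, the inclusions give $d_+ \leq d_{\mathrm{sa}} \leq d(\varphi,\varphi')$ for free, so in each case only the reverse inequality needs work.

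For the selfadjoint reduction the key device is a phase rotation. Given any $a \in L_D(\A)$, write $\varphi(a)-\varphi'(a) = r\,e^{i\theta}$ with $r=|\varphi(a)-\varphi'(a)|\geq 0$, and set $b := e^{-i\theta}a$. Then $\varphi(b)-\varphi'(b) = r$ is real, and $[D,b] = e^{-i\theta}[D,a]$ has the same operator norm, so $b \in L_D(\A)$. Now take the selfadjoint part $x := \tfrac12(b+b^*)$. Because $\varphi,\varphi'$ are states on a complex $C^*$-algebra, $\varphi(b^*) = \overline{\varphi(b)}$, whence $\varphi(x)-\varphi'(x) = \re\big(\varphi(b)-\varphi'(b)\big) = r = |\varphi(a)-\varphi'(a)|$. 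Thus the selfadjoint $x$ realizes exactly the value attained by $a$, and $d_{\mathrm{sa}}\geq d(\varphi,\varphi')$ will follow once we verify $x\in L_D(\A)$.

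That last verification is where the only genuine subtlety sits, and it uses selfadjointness of $D$ (standard for a Dirac operator, and valid for the signature and Dirac examples above). From $D=D^*$ one computes $[D,b^*] = -[D,b]^*$, hence $\|[D,b^*]\| = \|[D,b]\|$, and therefore $\|[D,x]\| \leq \tfrac12\big(\|[D,b]\|+\|[D,b^*]\|\big) = \|[D,b]\| \leq 1$. This confirms $x\in L_D(\A)$ and completes the first reduction.

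For the positive reduction in the unital case I would exploit two facts: adding a scalar multiple of the unit changes neither the Lipschitz constraint, since $[D,\id]=0$, nor the difference $\varphi(\cdot)-\varphi'(\cdot)$, since $\varphi(\id)=\varphi'(\id)=1$. Given a selfadjoint $x \in L_D(\A)$, its spectrum lies in $[-\|x\|,\|x\|]$ because elements of a $C^*$-algebra are bounded, so $x+\|x\|\,\id$ is positive, still belongs to $L_D(\A)$, and satisfies $\varphi(x+\|x\|\id)-\varphi'(x+\|x\|\id) = \varphi(x)-\varphi'(x)$. Hence $d_+\geq d_{\mathrm{sa}}$, and chaining with the first reduction yields $d_+ = d_{\mathrm{sa}} = d(\varphi,\varphi')$. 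The main obstacle is not any hard estimate but the bookkeeping ensuring that every manipulation simultaneously preserves the state difference and the commutator-norm constraint: the phase rotation is precisely what makes the selfadjoint projection lossless, and $D=D^*$ is what keeps the constraint intact.
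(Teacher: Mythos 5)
Your proof is correct and is essentially the standard argument of the cited reference \cite{Iochum:2001fv}: the phase rotation followed by taking the selfadjoint part (using $[D,b^*]=-[D,b]^*$ for $D=D^*$ to preserve the Lipschitz constraint, and $\varphi(b^*)=\overline{\varphi(b)}$ for states on a complex $C^*$-algebra), then the shift $x\mapsto x+\norm{x}\,\I$ in the unital case. The paper itself omits the proof, but your reductions match the one in the bibliography, including the two genuinely necessary checks (norm preservation of the commutator and losslessness of the real-part projection after rotation).
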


 We call isometry of the state
space an application $\alpha: \sa \to \sa$ such that 
\begin{equation}
  d(\varphi, \varphi') = d(\alpha(\varphi), \alpha(\varphi'))
  \quad\forall \varphi, \varphi'\in \sa.
\label{eq:46}
\end{equation}
A class of isometries particularly useful for explicit
computations are the lift to states of inner automorphisms of $\A$, that
is\begin{equation}
  \label{eq:47}
  \alpha_u(\varphi) := \varphi\circ\alpha_u
\end{equation}
where $\alpha_u:=\text{Ad} \, u$ for some unitary  $u\in\A$.
\begin{lem} 
\label{selfadjointlem}
Let $u$ be a unitary element in $\A$ that commutes with $D$, then
$\alpha_u$ is an isometry of $\sa$. Namely 
\begin{equation}
d(\varphi, \varphi') = d(\varphi\circ\alpha_u, \varphi'\circ\alpha_u)
\quad \forall\varphi, \varphi'\in\sa.
\label{eq:144}
\end{equation}
\end{lem}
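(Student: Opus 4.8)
The plan is to show that the inner automorphism $\alpha_u = \operatorname{Ad} u$ induces a bijection of the $D$-Lipschitz ball $L_D(\A)$ onto itself, from which the isometry property of $\alpha_u$ follows immediately by the definition \eqref{eq:1} of the spectral distance as a supremum over that ball. First I would write out the pullback state explicitly: for $a \in \A$ we have $(\varphi \circ \alpha_u)(a) = \varphi(u a u^*)$, and similarly for $\varphi'$, so that
\begin{equation}
|(\varphi\circ\alpha_u)(a) - (\varphi'\circ\alpha_u)(a)| = |\varphi(u a u^*) - \varphi'(u a u^*)|.
\end{equation}
Setting $b := u a u^* = \alpha_u(a)$, the right-hand side is exactly $|\varphi(b) - \varphi'(b)|$, so computing the supremum over $a \in L_D(\A)$ of the left-hand side is the same as computing the supremum over $b \in \alpha_u(L_D(\A))$ of $|\varphi(b) - \varphi'(b)|$.

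The crux is therefore the claim that $\alpha_u(L_D(\A)) = L_D(\A)$, i.e.\ that conjugation by $u$ preserves the constraint $\|[D,a]\| \le 1$. Here is where the hypothesis $[D,u]=0$ (equivalently $[D,u^*]=0$, since $u$ is unitary) enters. I would compute the commutator $[D, uau^*]$ using the Leibniz rule for the commutator with $D$,
\begin{equation}
[D, uau^*] = [D,u]\,au^* + u\,[D,a]\,u^* + ua\,[D,u^*],
\end{equation}
and observe that the first and third terms vanish by the commutation assumption, leaving $[D, uau^*] = u\,[D,a]\,u^*$. Since $u$ is unitary, conjugation by $u$ is an isometry of $\BH$, so $\|u[D,a]u^*\| = \|[D,a]\|$. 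Hence $\|[D, \alpha_u(a)]\| = \|[D,a]\|$, which shows $a \in L_D(\A) \iff \alpha_u(a) \in L_D(\A)$; as $\alpha_u$ is a bijection of $\A$ (with inverse $\alpha_{u^*}$), this gives $\alpha_u(L_D(\A)) = L_D(\A)$.

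Combining the two steps, the supremum defining $d(\varphi\circ\alpha_u, \varphi'\circ\alpha_u)$ is taken over the same set $L_D(\A)$ of the same quantity as the one defining $d(\varphi,\varphi')$, so the two distances coincide, proving \eqref{eq:144}. I do not expect any genuine obstacle: the only point requiring slight care is that $[D,a]$ need only be a bounded operator (not that $D$ itself be bounded), so the Leibniz expansion must be justified as an identity of bounded operators --- but since $u$ and $u^*$ are bounded and all three commutators appearing are bounded by the spectral-triple hypothesis, this is routine and the displayed algebraic identity holds on the common domain. One may optionally note that by Lemma~\ref{selfadjointlem}'s companion (the selfadjointness lemma) it suffices to check everything on selfadjoint $a$, and $uau^*$ is selfadjoint whenever $a$ is, so the reduction is consistent.
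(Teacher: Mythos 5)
Your proof is correct and is essentially the argument the paper has in mind (the paper only cites \cite[Prop.~1.29]{Martinetti:2001fk} for it): the Leibniz rule plus $[D,u]=[D,u^*]=0$ gives $[D,\alpha_u(a)]=u[D,a]u^*$, so $\alpha_u$ is a bijection of the Lipschitz ball $L_D(\A)$ onto itself and the two suprema coincide. Your closing remarks on boundedness of the commutators and on the reduction to selfadjoint elements are the right points of care and are handled correctly.
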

\noindent The proof is easy an can be found e.g. in
\cite[Prop. 1.29]{Martinetti:2001fk}. The result is also valid for
some operator  $u$ in ${\mathcal B}(\HH)$
that is not necessarily the representation of a unitary element of
$\A$. In this case one should consider only the states whose domain contains $u$, that is
such that $\varphi\circ \alpha_u$ and $\varphi'\circ \alpha_u$ make sense.
 \medskip

Other useful applications are projections, that sometimes permit to reduce the
search for the supremum in \eqref{eq:1} to subsets of $\A$ more
tractable than $\A$ itself. 
\begin{defi}
   The projection of a spectral triple $(\aa, \hh, D, \pi)$ ($\pi$ is
  the representation of $\A$ on $\HH$) by a projection
  $e=e^*=e^2\in{\mathcal B}(\HH)$ is the triple
  \begin{equation}
    \label{tripletrestreint}
    \aa_e:= \aea,\quad \hh_e:= e\hh,\quad
    D_e:= eDe\big|_{\hh_e},%    \pi_e:= \pi\big|_{\hh_e}
  \end{equation}
where $\alpha_e(a):= e\pi(a)e$ for any $a\in\A$. \end{defi}
\noindent The projected triple $(\aa_e,\, \hh_e,\, D_e)$ may not be a spectral
triple since  $\aa_e$ may not be an algebra,  for instance when
$e\notin\pi(\A)$. Nevertheless, when $\A$ is unital the set
\begin{equation}
  \label{eq:50}
  \aa_e^{\text{sa}}:=\left\{ e\pi(a)e, a=a^*\in\A\right\}
\end{equation}
of selfadjoint elements of $\aa_e$ is an ordered unit
space. Therefore, as explained at the end
of \S \ref{subsection:spectraldist},
the
notion of states of $\aa_e^{sa}$ - and by extension of $\A_e$ - still makes sense, with an obvious map from ${\mathcal S}(\aa_e)$ to $\sa$,
\begin{equation}
\varphi\to \varphi\circ\alpha_e.
\label{eq:51}
\end{equation} 
Given $\varphi, \varphi'\in {\mathcal S}(\A_e)$, we  still
call \emph{spectral distance} the quantity
\begin{equation}
  \label{eq:48}
  d_e(\varphi, \varphi'):= \sup_{\text{Lip}_D(\aa_e)}
  |\varphi(a) - \varphi'(a)|.
\end{equation}
\begin{lem}
\cite [\label{projectionlem}Lem. 1]{Martinetti:2002ij}
Let $(\A, \HH, D)$ be a unital spectral triple and $e$ a projection in $\HH$
that commutes with $D$, then for any states $\varphi, \varphi'$ of $\aa_e$ one has
\begin{equation}
d_e(\varphi,\varphi')=d(\varphi\circ\alpha_e,\varphi'\circ\alpha_e).
\label{eq:49}
\end{equation}
\end{lem}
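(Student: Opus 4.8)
The plan is to realise both sides of \eqref{eq:49} as suprema of the \emph{same} functional $b\mapsto|\varphi(b)-\varphi'(b)|$ over two nested subsets of $\aa_e$, and then to show that these two subsets yield the same supremum. The whole argument rests on a single algebraic identity, so I would establish that first.

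First, the key computation. For $a\in\A$ put $b=\alpha_e(a)=e\pi(a)e$ and view it as an operator on $\HH$. The hypothesis $[D,e]=0$ gives $De=eD$, whence
\[
[D,e\pi(a)e]=De\pi(a)e-e\pi(a)eD=e\,D\pi(a)\,e-e\,\pi(a)D\,e=e\,[D,\pi(a)]\,e .
\]
Since $e[D,\pi(a)]e$ annihilates $(\id-e)\HH$ and maps into $e\HH=\hh_e$, its operator norm on $\HH$ equals that of its restriction $[D_e,b]$ to $\hh_e$, so
\[
\bigl\|[D_e,\alpha_e(a)]\bigr\|_{\hh_e}=\bigl\|e\,[D,\pi(a)]\,e\bigr\|_{\HH}.
\]
This identifies the Lipschitz seminorm of the projected triple with the \emph{compressed} commutator norm of the original one; the commutation $[D,e]=0$ is indispensable here, for otherwise the cross terms $De\neq eD$ would survive and the two commutators would not be related by a mere compression.

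The easy inequality $d(\varphi\circ\alpha_e,\varphi'\circ\alpha_e)\le d_e(\varphi,\varphi')$ follows at once. As $e$ is a norm-one projection, $\|eXe\|\le\|X\|$, so the identity gives $\|[D_e,\alpha_e(a)]\|\le\|[D,\pi(a)]\|$; hence $\alpha_e$ maps $L_D(\A)$ into the Lipschitz ball $\text{Lip}_D(\aa_e)$, while $\varphi(\alpha_e(a))=(\varphi\circ\alpha_e)(a)$ shows the objective is unchanged. Thus the supremum defining the left-hand side of \eqref{eq:49} runs over $\alpha_e\bigl(L_D(\A)\bigr)$, a subset of $\text{Lip}_D(\aa_e)$, which is exactly the asserted inequality.

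What remains, and is the crux, is the reverse bound $d_e(\varphi,\varphi')\le d(\varphi\circ\alpha_e,\varphi'\circ\alpha_e)$: I must show the two nested sets produce the same supremum, i.e.\ that each $b\in\aa_e$ with $\|[D_e,b]\|\le1$ can be \emph{lifted} to some $a\in\A$ with $\alpha_e(a)=b$ and $\|[D,\pi(a)]\|\le1$. The difficulty is genuine, because $\|eXe\|\le\|X\|$ may be strict: an arbitrary preimage $a_0$ of $b$ only satisfies $\|e[D,\pi(a_0)]e\|\le1$, not $\|[D,\pi(a_0)]\|\le1$. The clean case is $e=\pi(p)$ for a projection $p\in\A$ commuting with $D$: then $\aa_e=\pi(p\A p)$ is a corner, and $a=pa_0p\in\A$ gives $\pi(a)=e\pi(a_0)e=b$ together with $[D,\pi(a)]=e[D,\pi(a_0)]e$, so $\|[D,\pi(a)]\|=\|[D_e,b]\|\le1$ and the lift is automatic. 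For a general projection $e\in\mathcal B(\HH)$ no such section exists, and the heart of the proof is to build the lift by an approximation argument, using the block decomposition of $D$ across $\HH=e\HH\oplus(\id-e)\HH$ granted by $[D,e]=0$ to suppress the off-diagonal contributions to $\|[D,\pi(a)]\|$ while leaving the corner $e\pi(a)e=b$ intact. I expect this lifting step to be the sole real obstacle; the first two steps are purely formal.
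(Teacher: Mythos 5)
Your first two steps are correct and are exactly the right ones: the identity $[D,e\pi(a)e]=e[D,\pi(a)]e$ together with $\|eXe\|\le\|X\|$ shows that $\alpha_e$ maps $L_D(\A)$ into the Lipschitz ball of $\aa_e$ without changing the objective, whence $d(\varphi\circ\alpha_e,\varphi'\circ\alpha_e)\le d_e(\varphi,\varphi')$; and you have put your finger on precisely the right difficulty for the converse. The gap is that the ``approximation argument'' you hope will build the lift for a general projection $e\in{\mathcal B}(\HH)$ does not exist: for such $e$ the reverse inequality is simply false, so no argument can close the gap at the level of generality you (and, read literally, the statement) allow. Take $\A=\C^2$ represented on $\HH=\C^4$ by $\pi(z_1,z_2)=\mathrm{diag}(z_1,z_2,z_1,z_2)$, let $D=D_1\oplus D_2$ with $D_i=\bigl(\begin{smallmatrix}0&m_i\\ \bar m_i&0\end{smallmatrix}\bigr)$ and $|m_1|>|m_2|>0$, and let $e$ be the orthogonal projection onto the second copy of $\C^2$. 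Then $e$ commutes with $D$ (and even with $\pi(\A)$), $\aa_e\simeq\C^2$, and one computes $d_e(\delta_1,\delta_2)=1/|m_2|$ while $d(\delta_1\circ\alpha_e,\delta_2\circ\alpha_e)=d(\delta_1,\delta_2)=1/\max(|m_1|,|m_2|)=1/|m_1|$. The compression has erased the constraint carried by the block $(\I-e)\HH$, and no preimage of the optimal element of $\aa_e$ can have commutator norm $\le 1$ on all of $\HH$.

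The resolution is that the equality can only be proved — and is only ever used — in what you call the clean case, namely $e=\pi(p)$ with $p=p^*=p^2\in\A$ commuting with $D$, where $a=pa_0p$ is an exact lift satisfying $[D,\pi(a)]=e[D,\pi(a_0)]e$, of norm $\|[D_e,b]\|$. This is exactly the situation in every application in the paper: in Prop.~\ref{reduction} the projection is $\I_1\otimes(s_1+s_2)$ with $s_1,s_2$ the supports of normal states of $\A_2$, hence an element of the (unital) algebra. So the lemma should be read with the implicit hypothesis $e\in\pi(\A)$ (or at least that every element of the $D_e$-Lipschitz ball of $\aa_e$ lifts into $L_D(\A)$); under that hypothesis your ``clean case'' paragraph already completes the proof, and without it only the inequality $d(\varphi\circ\alpha_e,\varphi'\circ\alpha_e)\le d_e(\varphi,\varphi')$ survives. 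Rather than searching for a deformation or approximation argument, you should either add this hypothesis or record the counterexample above.
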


Said differently, a projection that commutes with the Dirac
operator behaves like an isometry. The difference between
\eqref{eq:49}  and \eqref{eq:46} is that in \eqref{eq:49} the set of elements on
which the supremum is searched is smaller on the l.h.s. than on the
r.h.s. 
Notice also that the application \eqref{eq:51} has no reason to be
surjective.

\subsection{Connected components}

Given a spectral triple $(\A, \HH, D)$,
we denote the set of states at finite spectral distance from  a state $\varphi\in\sa$ by
\begin{equation}
\label{con} \text{Con}(\varphi):= \{\varphi'\in \sa;\; d(\varphi, \varphi') <
+\infty\}.
\end{equation}
The
notation is justified because this set
coincides with the connected component of $\varphi$ in $\sa$ for the
topology metrized by the spectral distance (see \cite
[Def. 2.1]{dAndrea:2009xr}).
\begin{prop}
\label{convex}
For any $\varphi\in\sa$, the set  $\text{Con}(\varphi)$ is convex.
\end{prop}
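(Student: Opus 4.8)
The plan is to establish a convexity estimate for the spectral distance in each of its arguments, from which convexity of $\text{Con}(\varphi)$ follows at once. Since $\sa$ is convex (as recalled at the start of \S\ref{sec:general}), for any $\varphi_1, \varphi_2\in\sa$ and $t\in[0,1]$ the convex combination $\psi := t\varphi_1 + (1-t)\varphi_2$ is again a state, so the statement is meaningful.

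First I would fix $\varphi_1, \varphi_2\in\text{Con}(\varphi)$ together with $t\in[0,1]$ and bound $d(\varphi,\psi)$ directly from the definition \eqref{eq:1}. For any $a\in L_D(\A)$, using linearity of the states and writing $\varphi(a) = t\,\varphi(a) + (1-t)\,\varphi(a)$, one obtains
\begin{equation}
|\varphi(a) - \psi(a)| = \left| t\big(\varphi(a)-\varphi_1(a)\big) + (1-t)\big(\varphi(a)-\varphi_2(a)\big)\right| \leq t\,|\varphi(a)-\varphi_1(a)| + (1-t)\,|\varphi(a)-\varphi_2(a)|.
\label{eq:convexbound}
\end{equation}
Bounding each term on the right by the corresponding spectral distance and then taking the supremum over $a\in L_D(\A)$ on the left yields
\begin{equation}
d(\varphi, \psi) \leq t\, d(\varphi, \varphi_1) + (1-t)\, d(\varphi, \varphi_2).
\label{eq:distconvex}
\end{equation}
Because $\varphi_1, \varphi_2\in\text{Con}(\varphi)$, both distances on the right-hand side of \eqref{eq:distconvex} are finite, hence $d(\varphi,\psi) < +\infty$, i.e. $\psi\in\text{Con}(\varphi)$. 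This is exactly convexity of $\text{Con}(\varphi)$.

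I expect no serious obstacle here: the argument is a one-line consequence of the definition once \eqref{eq:convexbound} is in place. The only conceptual point worth emphasizing is that \eqref{eq:distconvex} states precisely that $\varphi'\mapsto d(\varphi,\varphi')$ is a convex function on $\sa$, which itself reflects the structural fact that the spectral distance is a supremum over $L_D(\A)$ of maps that are \emph{affine} in the state variable, and a pointwise supremum of affine functions is always convex. One should also note that no appeal to the triangle inequality is needed; the estimate follows purely from convexity.
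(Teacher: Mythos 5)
Your proof is correct, and it takes a genuinely different route from the paper's. The paper first establishes the scaling identity $d(\varphi_s,\varphi_t)=|s-t|\,d(\varphi_0,\varphi_1)$ along the segment (eq.~\eqref{eq:58}), deduces from the triangle inequality that $d(\varphi_0,\varphi_1)$ is finite, and then applies the triangle inequality a second time to conclude that $d(\varphi,\varphi_s)<\infty$. You instead prove the convexity estimate $d(\varphi,\psi)\leq t\,d(\varphi,\varphi_1)+(1-t)\,d(\varphi,\varphi_2)$ directly, by writing $\varphi(a)-\psi(a)$ as a convex combination of $\varphi(a)-\varphi_1(a)$ and $\varphi(a)-\varphi_2(a)$ and taking the supremum over $L_D(\A)$ --- exactly the observation that a pointwise supremum of functions affine in the state variable is convex, so no triangle inequality is needed. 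Your argument is shorter and yields the stronger quantitative statement that $\varphi'\mapsto d(\varphi,\varphi')$ is convex on $\sa$; the paper's route, on the other hand, produces as a by-product the identity \eqref{eq:58}, which says that the segment $s\mapsto\varphi_s$ is a ``geodesic'' in the sense of \eqref{eq:118} --- a fact the author reuses later (see \S\ref{points} and Prop.~\ref{propmkncg}). Both proofs are complete; which one prefers depends on whether one wants the convexity inequality or the metric-line property of convex combinations.
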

\begin{proof}
For any $\varphi_0, \varphi_1\in \text{Con}(\varphi)$ and $s\in[0,1]$, denote
\begin{equation}
\varphi_s:= s\,\varphi_0 + (1-s)\,\varphi_1.
\label{eq:57}
\end{equation}
One easily checks that 
\begin{equation}
  \label{eq:58}
  d(\varphi_s, \varphi_t) = |s-t|\, d(\varphi_0, \varphi_1) \quad \forall s,t\in\R.
\end{equation}
By the triangle inequality, $d(\varphi_0, \varphi_1)$ is 
finite. Thus the same is true for   $d(\varphi_s, \varphi_t)$. In particular $d(\varphi_0, \varphi_s)$ is
finite, so again by the triangle inequality
$d(\varphi, \varphi_s)$ is finite for any  $s\in[0,1].$ Hence 
$\varphi_s\in \text{Con}(\varphi)$, showing the later is convex.
\end{proof}
 Restricting the connected component of a state to pure
states, by prop. \ref{convex} one obtains a set whose convex hull is still in the connected component,
\begin{equation}
\overline{\text{Con}(\varphi) \cap \pa} \subset \text{Con}(\varphi).
\label{eq:52}
\end{equation}
But at this point nothing guarantees that $\text{Con}(\varphi)$ is the
convex hull of its restriction to pure states. We come
back to this point in section \ref{sectionncg}.
\medskip

The following lemma is useful to characterize the connected components. 
 \begin{lem}
\label{infinitelem}
For any two states $\varphi,
\varphi'$ of $\A$,  the distance $d(\varphi, \varphi')$ is infinite if
and only if there exists a sequence $a_n\in\A$ such that 
\begin{equation}
  \label{eq:62}
  \lim_{n\to\infty}  L_D(a_n)=0  \;\text{ and }\;  \lim_{n\to\infty}
  \varphi(a_n)  - \varphi'(a_n) = \infty.
\end{equation}
In particular $d(\varphi, \varphi')$ is infinite as soon as there
exists an element $a\in \A$ such that 
\begin{equation}
  \label{eq:64}
 L_D(a) = 0 \; \text{ and } \; \varphi(a) \neq \varphi'(a).
\end{equation}
% where 
% \begin{equation}
%   \label{eq:66}
%   \tilde \A := \A \slash \text{Ker} \; L_D.
% \end{equation}
%  is the quotient (in the sense of vector spaces) of $\A$ by the kernel
%  of the seminorm \eqref{eq:14}.
 \end{lem}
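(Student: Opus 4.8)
The plan is to reduce the whole statement to a rescaling argument, exploiting two elementary facts: that $L_D$ is a seminorm, hence positively homogeneous ($L_D(\lambda a)=|\lambda|\,L_D(a)$), and that states are linear, so that $\varphi(\lambda a)-\varphi'(\lambda a)=\lambda\bigl(\varphi(a)-\varphi'(a)\bigr)$. Using the earlier lemma that allows the supremum in \eqref{eq:1} to be restricted to selfadjoint elements, I would assume throughout that the elements involved are selfadjoint, so that $\varphi(a)-\varphi'(a)\in\R$ and the symbol ``$\to\infty$'' is unambiguous; since replacing $a$ by $-a$ leaves $L_D(a)$ unchanged, I can moreover always arrange $\varphi(a)-\varphi'(a)$ to be positive.

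For the implication ``$d(\varphi,\varphi')=+\infty\Rightarrow$ existence of the sequence'', I would start from the definition of the supremum: if the distance is infinite there is a sequence of selfadjoint $c_n\in\A$ with $L_D(c_n)\le 1$ and $\lambda_n:=\varphi(c_n)-\varphi'(c_n)\to+\infty$. The idea is then to rescale so as to drive the seminorm to zero while keeping the difference of states divergent. Setting $a_n:=\lambda_n^{-1/2}c_n$ gives $L_D(a_n)\le\lambda_n^{-1/2}\to 0$ together with $\varphi(a_n)-\varphi'(a_n)=\lambda_n^{1/2}\to+\infty$, which is precisely \eqref{eq:62}. The delicate point, and what I expect to be the only real obstacle, is the balance between the two scalings: a factor of order $\lambda_n$ would fail to send $L_D$ to zero, whereas a factor decaying too fast would also kill the difference, so one must pick the geometric mean $\lambda_n^{-1/2}$, which makes both limits hold simultaneously.

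Conversely, assume a sequence $a_n$ as in \eqref{eq:62} is given. If $L_D(a_n)\neq 0$, I normalize by $b_n:=L_D(a_n)^{-1}a_n\in L_D(\A)$, so that
\[
|\varphi(b_n)-\varphi'(b_n)|=\frac{|\varphi(a_n)-\varphi'(a_n)|}{L_D(a_n)},
\]
whose numerator tends to $\infty$ and denominator to $0$; hence $d(\varphi,\varphi')\ge|\varphi(b_n)-\varphi'(b_n)|\to+\infty$. In the degenerate case $L_D(a_n)=0$ (which also yields the ``in particular'' statement, taking $a_n=a$ constant with $L_D(a)=0$ and $\varphi(a)\neq\varphi'(a)$), the element $t\,a_n$ lies in $L_D(\A)$ for every $t>0$, while $|\varphi(t\,a_n)-\varphi'(t\,a_n)|=t\,|\varphi(a_n)-\varphi'(a_n)|$ is unbounded, so again $d(\varphi,\varphi')=+\infty$. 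Beyond the balancing of the two limits mentioned above, the remaining care is purely bookkeeping: treating the vanishing-seminorm case separately so as never to divide by zero.
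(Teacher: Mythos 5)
Your argument is correct and is essentially the paper's: the author handles the ``in particular'' clause by the same scaling trick ($a_n:=na$, your $t\,a$), and for the main equivalence simply defers to \cite[Lemma 1]{Martinetti:2006db}, whose content is precisely the normalization/rescaling argument you spell out (your $\lambda_n^{-1/2}$ balancing and the $b_n:=L_D(a_n)^{-1}a_n$ renormalization, with the selfadjointness reduction justified by the earlier lemma). Your version merely supplies the details the paper omits, including the harmless case distinction when $L_D(a_n)=0$.
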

 \begin{proof}
   The proof that the non-finiteness of the distance is equivalent to \eqref{eq:62} is easy and can be found for instance
   in \cite[Lemma 1]{Martinetti:2006db}. The second statement follows
   by considering $a_n := n a, \; n\in \N$.  \end{proof} 

In the finite dimensional case, there are stronger results.
\begin{lem}
\label{infinitelemfinitedim}
For a spectral triple with finite dimensional $\A$ and $\HH$,  the distance between two states $\varphi$, $\varphi'$ is finite if and
only if 
\begin{equation}
  \label{eq:65}
  \varphi(a) = \varphi'(a) \quad \forall a\in \text{Ker}\;L_D.
\end{equation}
In particular, for $\A$ unital the distance is finite on the whole space of states if
and only if 
 \begin{equation}
   \label{eq:60}
  \text{Ker} \, L_D = \left\{ \lambda \I, \, \lambda\in\C\right\}.
 \end{equation}
 \end{lem}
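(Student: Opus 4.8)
The plan is to prove the two implications of the equivalence separately and then deduce the unital ``in particular'' statement as a corollary. Throughout I write $N:=\text{Ker}\,L_D=\{a\in\A:[D,a]=0\}$ for the kernel of the seminorm $L_D(a)=\|[D,a]\|$; note that $N$ is a linear subspace of the finite-dimensional space $\A$.

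First I would dispatch the direction ``finite distance $\Rightarrow$ agreement on $N$'', which needs no finite-dimensionality and is merely the contrapositive of the second assertion of Lemma \ref{infinitelem}. Indeed, if $\varphi$ and $\varphi'$ failed to agree on $N$, there would exist $a\in\A$ with $L_D(a)=0$ and $\varphi(a)\neq\varphi'(a)$; then condition \eqref{eq:64} forces $d(\varphi,\varphi')=+\infty$. Hence a finite distance already implies $\varphi(a)=\varphi'(a)$ for all $a\in N$.

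The substance is the converse, and this is where finite-dimensionality enters essentially. The idea is that $L_D$ is a genuine (finite-valued, since $[D,a]$ is bounded for all $a$) seminorm on the vector space $\A$, so it descends to a norm on the quotient $\A/N$. The linear functional $\ell(a):=\varphi(a)-\varphi'(a)$ vanishes on $N$ by hypothesis, hence factors through a linear functional $\bar\ell$ on $\A/N$. Since $\dim(\A/N)<\infty$, the functional $\bar\ell$ is automatically continuous for the quotient norm, so there is a constant $C\ge 0$ with $|\bar\ell([a])|\le C\,\|[a]\|$, where $\|[a]\|=\inf_{n\in N}L_D(a+n)\le L_D(a)$. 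Combining these gives $|\varphi(a)-\varphi'(a)|\le C\,L_D(a)$ for every $a\in\A$, and taking the supremum over $\{a:L_D(a)\le 1\}$ yields $d(\varphi,\varphi')\le C<+\infty$. The main obstacle is precisely this boundedness step: it is exactly where finiteness of $\dim\A$ is indispensable, so that every linear functional on $\A/N$ is continuous, whereas in infinite dimensions a functional annihilating $N$ need not be bounded with respect to $L_D$.

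Finally I would read off the unital case. One always has $\I\in N$, hence $\C\I\subseteq N$. If $N=\C\I$, then any two states agree on $N$ because $\varphi(\lambda\I)=\lambda=\varphi'(\lambda\I)$ by normalization, so the first part gives $d<+\infty$ on all of $\sa$. Conversely, if $N\supsetneq\C\I$, then since $D=D^*$ the space $N$ is stable under the adjoint and therefore contains a self-adjoint element $a$ that is not a multiple of $\I$. As $a=a^*$ is non-scalar its spectrum has at least two points, so the range $\{\varphi(a):\varphi\in\sa\}$ is a non-degenerate interval and one may select states $\varphi,\varphi'$ with $\varphi(a)\neq\varphi'(a)$; by \eqref{eq:64} these lie at infinite distance, so the distance is not finite on all of $\sa$. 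This closes the equivalence.
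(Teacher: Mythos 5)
Your proof is correct, and for the main equivalence it is essentially the paper's argument in a cleaner dress: both proofs hinge on the fact that $L_D$ descends to a genuine norm on the finite-dimensional quotient $\A/\text{Ker}\,L_D$. The paper runs this by contradiction — it takes a sequence $a_n$ witnessing infinite distance, splits $a_n=\tilde a_n+k_n$ along an orthogonal complement of $\text{Ker}\,L_D$, and invokes equivalence of norms in finite dimensions to force $\varphi(\tilde a_n)-\varphi'(\tilde a_n)\to 0$ — whereas you argue directly via automatic continuity of the functional $\varphi-\varphi'$ on the quotient, obtaining the quantitative bound $|\varphi(a)-\varphi'(a)|\le C\,L_D(a)$ and hence $d(\varphi,\varphi')\le C$. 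The two are the same idea; yours has the small advantage of producing an explicit finite upper bound rather than only a contradiction. Where you genuinely diverge is the unital corollary: the paper shows that any non-scalar element is separated by two states via an existence theorem of Kadison--Ringrose and an ad hoc second state, while you exploit the $*$-stability of $\text{Ker}\,L_D$ (which follows from $D=D^*$) to extract a non-scalar self-adjoint element and separate it using its spectrum. Your route is tidier and avoids the somewhat awkward auxiliary state in the paper's version; just note that it does use the self-adjointness of $D$, which is standard for spectral triples but worth stating since the $*$-stability of the kernel is exactly where it enters.
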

\begin{proof}
For the first statement, by lemma \ref{infinitelem}
   one just needs to show that
   \begin{equation}
d(\varphi, \varphi')=\infty \Longrightarrow \exists\,  a\in\text{Ker}\,
L_D \text{  such that } \varphi(a)\neq
   \varphi'(a).
\label{eq:67}
   \end{equation}
Let us thus assume $d(\varphi, \varphi')$ is infinite. This means
there exists a sequence $a_n\in \A$ satisfying \eqref{eq:62}.
By hypothesis $\HH$ is isomorphic to $\C^N$ for some $N\in\N$ and $\A$
is a subalgebra of $M_N(\C)$. The kernel of $L_D$ is a vector subspace
of $M_N(\C)$. Let $K^\perp$ denote its orthogonal complement in $M_N(\C)$
and
\begin{equation}
\tilde \A :=  \A \cap K^\perp\simeq \A\slash \text{Ker} \, L_D.
\label{eq:70}
\end{equation}
%We choose as a basis of $\HH$ a set of
%eigenvectors of $D$, so that $\text{Ker} \, L_D$ identifies with the set of diagonal
%matrices.
%denote the set of elements of $\A$ with null diagonal. 
Any $a_n$ decomposes in a unique way as
\begin{equation}
  \label{eq:69bbis}
  a_n = \tilde a_n + k_n
\end{equation}
where $k_n\in \text{Ker} \, L_D$  and $\tilde a_n\in \tilde \A$.  
On $\tilde\A$, the seminorm $L_D$ is actually a norm. Moreover, since
$L_D(a_n) = L_D(\tilde a_n)$ for
any $n$, by \eqref{eq:62} one gets
\begin{equation}
\lim_{n\to\infty} L_D (\tilde a_n) =
\lim_{n\to\infty} L_D(a_n) =0,
\label{eq:55}
\end{equation}
and all the norms on a finite
dimensional vector are equivalent, so that $\tilde a_n$ tends to
zero in the $C^*$-norm of $M_N(\C)$. Since states are continuous, this means
\begin{equation}
  \label{eq:340bis}
  \lim_{n\to\infty} \varphi(\tilde a_n) - \varphi'(\tilde a_n) = 0,
\end{equation}
hence
\begin{equation}
  \label{eq:34}
   \lim_{n\to\infty} \varphi(a_n) - \varphi'(a_n) = \lim_{n\to\infty} \varphi(k_n) - \varphi'(k_n),
\end{equation}
which is infinite by \eqref{eq:62}. This cannot be true if
\eqref{eq:65} holds true, since the r.h.s. of \eqref{eq:34} would be
zero. Therefore, for a finite dimensional spectral triple the
non-finiteness of $d(\varphi, \varphi')$ implies that $\varphi$ and $\varphi'$ do not
coincide on $\text{Ker}\, L_D$. 

The second statement follows by noticing that for any element $a\neq
\I$, there exist at least two states $\varphi, \varphi'$ that do not
take the same value on $a$. Indeed, given any non-zero $a\in\A$, there
exists at least one state $\varphi$ such that $\varphi(a)\neq 0$
\cite[Theo. 4.3.4]{Kadison1983}. Assume that $\varphi(a)\neq 1$. Then the state
\begin{equation}
  \label{eq:56}
  \varphi' := \frac 12 \varphi + \frac 12 \varphi_0
\end{equation}
where $\varphi_0$ is the state that takes value $1$ on each
$a\in\A$, is such that $\varphi'(a)\neq \varphi(a)$.
If $\varphi(a)=1$, then again by \cite[theo. 4.3.4]{Kadison1983} there
exists at least a state $\varphi'$ such that $\varphi'(a-\I)\neq 0$,
that is $\varphi'(a)\neq 1$.
 \end{proof}
In a wider context (i.e. not necessarily finite dimensional and with a
seminorm not necessarily coming from the commutator with a Dirac-like
operator), condition \eqref{eq:60} is one of the the requirements of what Rieffel called a
\emph{Lip-norm} \cite{Rieffel:1999wq, rieffel2003}, that is a seminorm $L_D$ such that \eqref{eq:1}
metrizes the weak$^*$ topology. For a state of the art of the topological aspect of the
spectral distance, we invite the reader to see the extensive contribution of
Latr\'emoli\`ere in the present volume \cite{Latremoliere:2015ab}.

\section{Finite dimensional algebras}
\label{section:finite}

To begin our survey of explicit computations of the spectral distance \eqref{eq:1},
let us consider finite dimensional (complex) $C^*$-algebras, that is 
finite sums of matrix algebras,
\begin{equation}
  \label{eq:25}
  \A = \bigoplus_{i=1}^N M_{n_i}(\C)
\end{equation}
where $n_i\in\N$ for any $1\leq i\leq N$. We begin by commutative
examples $\A=\C^N$ in \S\ref{sec:discretespaces} and \S\ref{sec:graph},
then we study matrix algebras in \S\ref{sec:proj} and \S\ref{section:ball}.

\subsection{Discrete spaces}
\label{sec:discretespaces}

The simplest case, that is $\A=\C^2$, is instructive although
it is commutative and elementary. Making $\A$ act on $\HH=\C^2$ as diagonal
matrices,
\begin{equation}
  \label{eq:26}
  \pi(z_1, z_2) := \left(\begin{array}{cc} z_1 &0 \\0 & z_1\end{array}\right), 
\end{equation}
with 
\begin{equation}
  \label{eq:27}
  D =\left(\begin{array}{cc} 0 &m \\ \bar m & 0\end{array}\right) \quad m\in\C
\end{equation}
as Dirac operator (the diagonal of $D$ commutes with the
representation $\pi$ and so is not relevant for the distance
computation), one easily computes that
the spectral distance between the two pure states
\begin{equation}
\delta_i(z_1, z_2)
:= z_i, \; i=1,2
\label{eq:29bbis}
\end{equation}
of $\C^2$ is
\begin{equation}
  \label{eq:28}
  d (\delta_1, \delta_2) = \frac 1{|m|}.
\end{equation}
The spectral distance thus allows to equip the
discrete two-point space $\left\{\delta_1, \delta_2\right\}$ with a
generalization of the geodesic distance, although the usual notion of length-of-the-shortest-path no longer makes sense since there is no ``points'',
i.e. no pure states, between $\delta_1$ and $\delta_2$. Incidentally,  this raises
the question of what should play the role of geodesics in noncommutative
geometry: a curve in $\sa$, in $\pa$, or something else ? We come back
to this question in \S\ref{points}  and \S\ref{sec:geodesics}.
%  Taking as a definition of geodesics between any two points
% $x, y$  in a metric space $(\X,
% d)$ a curve $c(t)\in\X$ for all $t\in [0,1]$ with $c(0)=x, c(1)=y$
% such that
% \begin{equation}
%   \label{eq:30}
%   d(c(t), c(s)) = |t-s| d(x,y) \quad \forall s,t\in[0,1],
% \end{equation}
%  one has a natural geodesic in noncommutative geometry but it lies in
%  the space of non-pure states.
%  \begin{prop}
%    Let $(\A, \HH, D)$ be an arbitrary spectral triple where $\A$ is
%    $C^*$. For any $\varphi_1,
%    \varphi_2\in\sa$ , one has that $\varphi(t):= t\varphi_1 +
%    (1-t)\varphi_2$  is a geodesic between
%    $\varphi_1, \varphi_2$ for the spectral distance:
%    \begin{equation}
%      d(\varphi(t), \varphi(s)) = |s-t| \, d(\varphi_1 \varphi_2) \quad
%      \forall t,s\in [0,1].
%    \end{equation}
%  \end{prop}
\medskip

The construction  above generalizes to
arbitrary dimension: consider $\A=\C^N$ acting diagonally on $\C^N$, with $D$ a $N\times
N$ selfadjoint matrix with null-diagonal. For
simplicity, we restrict to Dirac operators with real entries, that is
\begin{equation}
D_{ij}=D_{ji}\in\R
\label{eq:1020}
\end{equation}
for any $i, j\in [1, N]$. One has $N$-pure states
$\delta_i, i=1,2, ..., N$ and we write the distance
\begin{equation}
d(i,j):= d(\delta_i, \delta_j).
\label{eq:96}
\end{equation}
\begin{prop}\cite[Prop. 7]{Iochum:2001fv} 
\label{distancetroispoints} For $N=3$, one deals with a three point
space with distance
\begin{equation}
d(1,2)=\sqrt{\frac{D_{13}^2+D_{23}^2}{D_{12}^{2}D_{13}^{2}+D_{12}^{2}D_{23}^{2}+D_{23}^{2}D_{13}^{2}}}.
\label{eq:3030}
\end{equation}
The other distances are obtained  by cyclic 
permutations of the indices, and verify the triangle inequality ``to the square'' 
\begin{equation}
\label{trgcarre}
d(1,2)^2 + d(2,3)^2 \geq  d(1,3)^2.
\end{equation}
\end{prop}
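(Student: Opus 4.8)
The plan is to convert the supremum \eqref{eq:1} into an explicit finite-dimensional constrained optimization and solve it. Since $\A=\cc^3$ acts diagonally, I would first invoke the lemma permitting the supremum to be searched over selfadjoint elements to restrict to real diagonal matrices $a=\text{diag}(a_1,a_2,a_3)$ with $a_i\in\R$. The crucial simplifying observation is that both the objective $\delta_1(a)-\delta_2(a)=a_1-a_2$ and the commutator $[D,a]$ are unchanged when a multiple of $\I$ is added to $a$; hence everything depends only on the differences of the $a_i$, and the problem is genuinely one of optimizing over two free parameters.

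The first computation is the commutator itself. With $a$ diagonal one finds $[D,a]_{ij}=D_{ij}(a_j-a_i)$, a real skew-symmetric $3\times 3$ matrix. The key fact I would use is that the operator norm of a real skew-symmetric $3\times 3$ matrix equals the Euclidean norm of its three independent entries (its nonzero eigenvalues being $\pm i$ times that quantity). Setting $u=a_2-a_1$ and $v=a_3-a_1$, so that $a_3-a_2=v-u$, this turns the constraint $L_D(a)\le 1$ into the ellipsoidal inequality
\begin{equation}
D_{12}^2 u^2 + D_{13}^2 v^2 + D_{23}^2 (v-u)^2 \le 1,
\end{equation}
while the quantity to be maximized is simply $|a_1-a_2|=|u|$.

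With this reduction the proof becomes a one-variable calculus exercise: the objective does not involve $v$, so for fixed $u$ I would minimize the left-hand side over the free variable $v$. The stationarity condition gives $v^{\ast}=D_{23}^2 u/(D_{13}^2+D_{23}^2)$, and substituting back collapses the constraint to
\begin{equation}
\frac{D_{12}^2 D_{13}^2 + D_{12}^2 D_{23}^2 + D_{13}^2 D_{23}^2}{D_{13}^2+D_{23}^2}\,u^2 \le 1.
\end{equation}
The largest admissible $|u|$ is read off directly and equals \eqref{eq:3030}; choosing $v=v^{\ast}$ and scaling $u$ to saturate the inequality shows the bound is attained, so the supremum is genuinely reached.

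For the triangle inequality ``to the square'' I would exploit the cyclic symmetry. The denominator $D_{12}^2 D_{13}^2 + D_{12}^2 D_{23}^2 + D_{13}^2 D_{23}^2$ is symmetric in the three indices and hence common to $d(1,2)^2$, $d(2,3)^2$ and $d(1,3)^2$; only the numerators change, each being the sum of the squares of the two edges incident to the vertex opposite the relevant pair. Clearing this common positive denominator reduces \eqref{trgcarre} to the transparent inequality $2D_{13}^2\ge 0$. The main obstacle is really the operator-norm computation for the skew-symmetric commutator together with the attainment argument; once the constraint is recognized as an ellipsoid, the optimization and the cyclic bookkeeping for \eqref{trgcarre} are elementary.
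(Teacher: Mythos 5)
Your proof is correct, and since the paper explicitly omits the proof of this proposition (deferring to the cited reference \cite{Iochum:2001fv}), it should be judged on its own merits: the reduction to real diagonal $a$, the identification of $\norm{[D,a]}$ with the Euclidean norm $\sqrt{\sum_{i<j}D_{ij}^2(a_j-a_i)^2}$ of the entries of the real skew-symmetric commutator, and the elimination of the free variable $v$ from the ellipsoidal constraint is exactly the standard route taken in that reference, and your algebra (including the reduction of \eqref{trgcarre} to $2D_{13}^2\geq 0$ over the common symmetric denominator) checks out. The only caveat is the degenerate case $D_{13}=D_{23}=0$, where your stationary point $v^{\ast}$ and the quotient \eqref{eq:3030} are both indeterminate; there $d(1,2)=1/\abs{D_{12}}$ directly, a boundary situation the proposition itself glosses over.
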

Formula \eqref{eq:3030} is invertible. That is, given three positive numbers  $(a,b,c)$ verifying
(\ref{trgcarre}), there exists a Dirac operator giving these numbers as
distances.
\begin{prop} \cite [Prop. 8]{Iochum:2001fv} 
Let  $a,b,c$ three positive real numbers such that
\begin{equation}
a^2+b^2\geq c^2,\quad b^2+c^2 \geq a^2, \quad
a^2+c^2\geq b^2.\label{eq:41}
\end{equation}
There exists an operator $D$ such that
\begin{equation}
d(1,2)= a,\quad d(1,3)= b,\quad d(2,3)=c.\label{eq:43}
\end{equation}
It has coefficients 
\begin{equation}
D_{12}=\sqrt{{\frac{2(b^2 + c^2 - a^2)}{(a+b+c)(-a+b+c)(a-b+c)(a+b-c)}}},\label{eq:4040}
\end{equation}
$D_{13}$ and $D_{23}$ are obtained by cyclic permutations of $a,b,c$.
\end{prop}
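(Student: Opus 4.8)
The plan is to read \eqref{eq:4040} as the solution of an inverse problem: given $a,b,c$, I would solve the system built from \eqref{eq:3030} and its cyclic images for the unknowns $D_{12}^2, D_{13}^2, D_{23}^2$, and then check that the conditions \eqref{eq:41} are exactly what guarantees a real solution. First I would record the three distance formulas together. Writing
\begin{equation}
P := D_{12}^2D_{13}^2 + D_{12}^2D_{23}^2 + D_{23}^2D_{13}^2
\end{equation}
for the common (fully symmetric) denominator of \eqref{eq:3030}, the cyclic permutations in Proposition \ref{distancetroispoints} give
\begin{equation}
d(1,2)^2 = \frac{D_{13}^2 + D_{23}^2}{P}, \qquad d(1,3)^2 = \frac{D_{12}^2 + D_{23}^2}{P}, \qquad d(2,3)^2 = \frac{D_{12}^2 + D_{13}^2}{P},
\end{equation}
the numerator of $d(i,j)$ being the sum of squares of the two edges meeting the third vertex $k$. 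Setting $a=d(1,2)$, $b=d(1,3)$, $c=d(2,3)$ and $x := D_{12}^2$, $y := D_{13}^2$, $z := D_{23}^2$, this is a linear system $y+z = a^2 P$, $x+z = b^2 P$, $x+y = c^2 P$ in $(x,y,z)$, once $P$ is regarded as a parameter.

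Solving the linear system (add all three relations, then subtract them in pairs) gives
\begin{equation}
x = \tfrac12(b^2+c^2-a^2)\,P, \qquad y = \tfrac12(a^2+c^2-b^2)\,P, \qquad z = \tfrac12(a^2+b^2-c^2)\,P.
\end{equation}
This is where \eqref{eq:41} enters: these inequalities are precisely the conditions that $x,y,z\geq 0$, i.e. that $D_{12},D_{13},D_{23}$ can be chosen real. It then remains to fix $P$ self-consistently. Substituting the three expressions into the defining relation $P=xy+xz+yz$ yields $1=(\alpha\beta+\beta\gamma+\gamma\alpha)P$, with $\alpha,\beta,\gamma$ the half-factors above. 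A short Heron-type identity gives $\alpha\beta+\beta\gamma+\gamma\alpha=\tfrac14(a+b+c)(-a+b+c)(a-b+c)(a+b-c)$, whence $P=4/[(a+b+c)(-a+b+c)(a-b+c)(a+b-c)]$. Back-substituting into the expression for $x$ reproduces \eqref{eq:4040} exactly, and the cyclic permutations of $a,b,c$ (which leave the symmetric denominator untouched) give $D_{13}$ and $D_{23}$.

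I expect the main obstacle to be the boundary and degenerate behaviour rather than the algebra. One must verify that $P$ is \emph{finite and positive}: since $a,b,c>0$, at most one of $\alpha,\beta,\gamma$ can vanish — two vanishing would force one of $a,b,c$ to be zero — so $\alpha\beta+\beta\gamma+\gamma\alpha>0$ and $P$ is well defined. It is here, and not in \eqref{eq:41}, that strict positivity of the three numbers is genuinely used. When one of the inequalities \eqref{eq:41} is saturated the corresponding $D_{ij}$ vanishes and the associated edge is absent; I would check directly that \eqref{eq:3030} degenerates correctly (for instance $b^2+c^2=a^2$ forces $D_{12}=0$ and $d(1,2)^2=d(1,3)^2+d(2,3)^2$). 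The cleanest organisation is to treat the case of all $D_{ij}\neq 0$ as generic, apply the already-established formula \eqref{eq:3030} there to confirm that the constructed $D$ yields the prescribed distances, and extend to the boundary by continuity of the distance in the entries of $D$.
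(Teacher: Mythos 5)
Your proposal is correct and is exactly the argument the paper has in mind: the survey omits the proof and simply remarks that formula \eqref{eq:3030} ``is invertible,'' and your solution of the linear system for $D_{12}^2,D_{13}^2,D_{23}^2$ in terms of $P$, the self-consistency equation $P=xy+xz+yz$ combined with the Heron identity, and the observation that \eqref{eq:41} is precisely nonnegativity of the three squares, is that inversion carried out in full. Your algebra reproduces \eqref{eq:4040}, and your handling of the degenerate cases (at most one of $\alpha,\beta,\gamma$ can vanish since $a,b,c>0$, so $P$ stays finite and positive and the graph stays connected) is a sensible completion of the sketch.
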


A surprising interpretation of \eqref{eq:3030} and \eqref{eq:4040} comes from electric
circuits \cite{Iochum:2001fv}. Starting with three numbers $a,b,c$ satisfying \eqref{eq:41},
one defines 
\begin{equation}
  \label{eq:42}
  r_1:= a^2+ b^2 - c^2,\quad r_2:= a^2 + c^2 - b^2, \quad r_3 := b^2 +
  c^2 - a^2.
\end{equation}
By \eqref{eq:43}, $d(1, 2)^2 = r_1 + r_2$ is the
resistance between the points $1, 2$ of the ``star'' circuit made of
the three resistances $r_1, r_2, r_3$ (fig. \ref{circuit}), and
similarly for $d(1,3)$ and $d(2,3)$. It is well known in electricity
that the star circuit with resistance $r_i$ is equivalent to a
triangle circuit with resistance
\begin{equation}
R_{ij}:= D_{ij}^{-2}\label{eq:12}
\end{equation}
where the $D_{ij}$'s
are precisely given by formula \eqref{eq:4040}.
So modulo the reparametrizations \eqref{eq:42} and \eqref{eq:12}, the
passage from the distances to the coefficients of the Dirac operator is
similar to the passage from the star to the triangle circuits.
\begin{figure}[h]
\label{circuit}
\begin{center}
\mbox{\rotatebox{0}{\scalebox{0.7}{\includegraphics{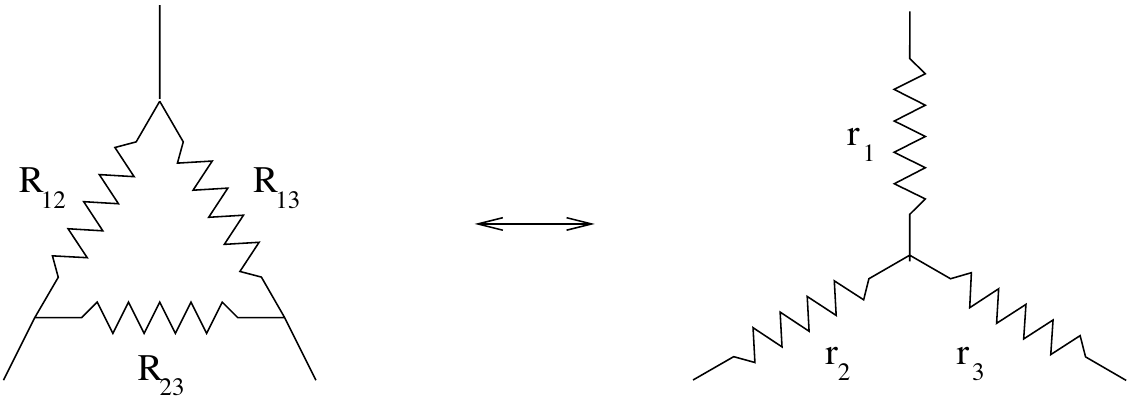}}}}
\caption{Equivalent triangle and star circuits.}\end{center}
\end{figure}

Unfortunately, the electric analogy no longer makes sense in higher
dimension. Indeed, in order to work out $\text{Lip}_D(\A)$, one needs to solve the characteristic
polynomial of the antisymmetric matrix $[D,a]$ (assuming $a$ is
real). This is of order
$\frac N2$, and is in principle not explicitly calculable as
soon as $\frac N2 \geq 5$, meaning there is little hope to explicitly compute the distance in a space with more than $N=10$
points. In fact the difficulty arises much earlier, at $N=4$. Consider
$\A=\C^4$ acting diagonally on $\HH=\C^4$ with $D$ a $4\times 4$ real
symmetric matrix with entries $D_{ij}$. Write
\begin{equation*}
\label{d4}
d_1 := \frac{1}{D_{12}},\; d_2 := {\frac{1}{ D_{13}}},\; d_3
:= { \frac{1}{ D_{14}}},\; d_4 := {\frac{1}{ D_{23}}},\; d_5 := {\frac{1}{D_{24}}},\; d_6:= {\frac{1}{D_{34}}}.
\end{equation*}
\begin{prop} \cite [Theo. 9]{Iochum:2001fv}
\label{fuck}
\begin{itemize}
\item [i.]On a four point space, $d(i,j)$ is the root of a polynomial of degree $\delta\leq
    12$, and is not in general solvable by radicals. 

\item[ii.] However there are cases where $d(1,2)$ is computable
   explicitly. For instance when
    ${\frac{1}{d_2}}={\frac{1}{d_5}}=\infty$, one has
    \begin{equation*}
      d(1,2) =\left\{\begin{array}{ll}
 d_1& \text{ when }\quad d_1^2\leq d_6^2,\\ & \\
 \frac{{d_1}\,{\sqrt{{{\left( {{{d_3}}^2} +
             {d_1}\,{d_6}\right)}^2}}}}{ {\sqrt{{{{d_1}}^2}
       {{{d_3}}^2}}}\,{\sqrt{{{{d_3}}^2} + {{{d_6}}^2}}}}  &
\text{ when } \quad d_1d_6 = d_3d_4,\\ & \\
  \sqrt{\frac{ d_1^2(d_3^2 + d_6^2)(d_4^2 + d_6 ^2)}{(d_3d_4 -
          d_1d_6)^2}}&\text{ when } \quad C \leq 0,\\
      \max\lp\sqrt{\frac{ d_1^2(d_3^2 + d_4^2)}{(d_3+d_4)^2 +
          (d_1-d_6)^2}},\; \sqrt{ \frac{ d_1^2(d_3^2 -
          d_4^2)}{(d_3-d_4)^2
          +(d_1+d_6)^2}}\rp &\text{otherwise},\end{array}\right.
\end{equation*}
      where
      \begin{equation*}
C := ({{({d_3} + {d_4}) }^2}{d_6} +
      ({d_1} - {d_6})({d_3}\,{d_4} - {{{d_6}}^2})) ({{( {d_3} - {d_4}
          ) }^2}{d_6} +({d_1}+{d_6}) \,({d_3}{d_4} + {{{d_6}}^2})).\label{eq:54}
         \end{equation*}
As well, 

\begin{equation*}
      d(1,3)   =\left\{ \begin{array}{ll} 
{\sqrt{{{{d_3}}^2} + {{{d_6}}^2}}} &\text{ when }(d_3^2 + d_6^2) \leq
(d_1d_6 - d_3d_4)^2,\\ &\\
      {\sqrt{{{{d_1}}^2} + {{{d_4}}^2}}} &\text{ when } (d_1^2
      + d_4^2) \leq (d_1d_6 - d_3d_4)^2,\\ & \\
       = \max\lp{\frac{{\sqrt{{{\left( {d_1}\,{d_3} +{d_4}\,{d_6}
                  \right)}^2}}}}{{\sqrt{{{\left( {d_3} + {d_4}\right)
                }^2} + {{\left( {d_1} - {d_6}
                  \right)}^2}}}}},{\frac{{\sqrt{{{\left( {d_1}\,{d_3}
                    +{d_4}\,{d_6} \right) }^2}}}} {{\sqrt{{{\left(
                    {d_3} -{d_4}\right) }^2} + {{( {d_1} + {d_6}
                  )^2}}}}}} \rp&\text{ otherwise}.\end{array}\right.\label{eq:37}
    \end{equation*}
  \end{itemize}
 The other distances are obtained by cyclic permutations. 
\end{prop}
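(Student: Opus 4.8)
The plan is to turn the supremum \eqref{eq:1} into a finite-dimensional constrained optimisation, and then to split the two assertions: for (i) I would bound the algebraic degree of the optimiser by a Bézout/elimination count and dispose of solvability by a single explicit instance; for (ii) I would exploit a degeneracy that collapses the optimisation to an explicitly solvable convex problem. First I set up the common framework. By the reduction to selfadjoint elements of Lemma~\cite[Lem.~1]{Iochum:2001fv} it suffices to take $a=\mathrm{diag}(a_1,a_2,a_3,a_4)$ with $a_i\in\R$, so that $[D,a]$ is the real antisymmetric matrix $M$ with entries $m_{ij}=D_{ij}(a_j-a_i)$. Its eigenvalues are $\pm i\lambda_1,\pm i\lambda_2$, and since its characteristic polynomial is $t^4+St^2+P$ with $S=\sum_{i<j}m_{ij}^2$ and $P=\mathrm{Pf}(M)^2$, where $\mathrm{Pf}(M)=m_{12}m_{34}-m_{13}m_{24}+m_{14}m_{23}$, the operator norm is
\[ \|[D,a]\|^2=\lambda_1^2=\tfrac12\lp S+\sqrt{S^2-4P}\rp. \]
Hence $d(1,2)=\sup\{|a_1-a_2|:\|[D,a]\|\le 1\}$ is the maximum of a linear functional over the convex body $\{a:\|[D,a]\|\le 1\}$, taken modulo the irrelevant global shift $a\mapsto a+c\I$, i.e. an optimisation in three effective real variables.

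For part (i) I would work on the boundary $\lambda_1=1$. Since $\lambda_1^2$ is a root of $\tau^2-S\tau+P=0$, the constraint rationalises to the single polynomial equation $\tilde G:=P-S+1=0$ of degree $4$ (together with $S\le 2$ selecting the larger root). Using translation invariance, the Lagrange conditions for maximising $a_1-a_2$ reduce to two equations $\partial\tilde G=0$ in the directions transverse to $(1,-1,0,0)$, together with $\tilde G=0$. The key point is the perfect-square structure $P=Q^2$ with $Q=\mathrm{Pf}(M)$ \emph{quadratic}: writing $\partial_k\tilde G=2QQ_k-S_k$, the combination $Q_j\,\partial_k\tilde G-Q_k\,\partial_j\tilde G=-(Q_jS_k-Q_kS_j)$ is only of degree $2$. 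Replacing one cubic Lagrange equation by this conic leaves a system of degrees $(4,3,2)$, whose Bézout number is $24$; the even symmetry $a\mapsto-a$ pairs the solutions, so at most $12$ distinct values of $d(1,2)=|a_1-a_2|$ survive, giving the degree bound $\delta\le 12$. For non-solvability I would exhibit explicit rational $D_{ij}$ for which the resulting minimal polynomial is irreducible of degree $\ge 5$ with non-solvable Galois group (e.g. $S_5$), a finite check showing that no universal radical formula can exist.

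For part (ii) I would impose the degeneracy of the statement, in which the two diagonal couplings are switched off so that $m_{13}=m_{24}=0$ (this is why $d_2,d_5$ drop out). Then $S^2-4P$ factors and, writing $u=m_{12},v=m_{23},w=m_{34},z=m_{14}$, the norm becomes the sum of two planar norms,
\[ \|[D,a]\|=\tfrac12\lp\sqrt{(u-w)^2+(v-z)^2}+\sqrt{(u+w)^2+(v+z)^2}\rp, \]
subject to the cocycle relation $d_3z=d_1u+d_4v+d_6w$. Maximising $d_1|u|$ over this convex body is now explicit. From $|u-w|+|u+w|=2\max(|u|,|w|)$ one gets $\|[D,a]\|\ge\max(|u|,|w|)\ge|u|$, so $d(1,2)\le d_1$ always; equality forces $v=z=0$, hence $w=-(d_1/d_6)u$ by the cocycle, which is admissible iff $|w|\le|u|$, i.e. iff $d_1\le d_6$ — this is the first case $d(1,2)=d_1$. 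The remaining branches come from the KKT analysis of the body: its boundary is smooth except at the corners where one square-root term vanishes, and the interior critical-point equations produce the candidate values, the sign of $C$ (respectively the equality $d_1d_6=d_3d_4$) selecting the active stratum and the $\max$ of two branches in the generic case. The formula for $d(1,3)$ follows by the same computation after permuting the roles of the edges.

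The hard part is twofold. In (i), the honest part is the degree count above; the genuinely non-elementary step is the non-solvability claim, which I do not expect to prove by hand — it requires pinning down the exact eliminant and computing a non-solvable Galois group in a concrete instance, i.e. a computer-algebra verification as in \cite{Iochum:2001fv}. In (ii), the delicate point is the case analysis itself: one must correctly enumerate the boundary strata (smooth face versus the corner where a transverse component vanishes), match each region to the right critical branch, and, crucially, certify that the chosen critical point is the global maximiser rather than a spurious stationary point — this is what produces the two-branch $\max$ in the generic regime and the measure-zero equality case $d_1d_6=d_3d_4$.
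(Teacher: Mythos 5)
The paper contains no proof of this proposition --- it is a survey item quoted from \cite[Theo.~9]{Iochum:2001fv}, and the author explicitly omits all proofs --- so your attempt can only be measured against the cited reference. There, your strategy is essentially the original one: reduce to real diagonal $a$, identify $[D,a]$ with a real antisymmetric $4\times4$ matrix $M$ whose norm satisfies $\|[D,a]\|^2=\tfrac12\bigl(S+\sqrt{S^2-4P}\bigr)$ with $S=\sum_{i<j}m_{ij}^2$ and $P=\mathrm{Pf}(M)^2$, treat the supremum as a convex optimisation in the three difference variables, and for (ii) switch off the two diagonal links. Your reading of the (misprinted) hypothesis ``$1/d_2=1/d_5=\infty$'' as $D_{13}=D_{24}=0$ is the correct one (it is forced by the fact that $d_2,d_5$ do not appear in the formulas); the resulting factorisation $\|[D,a]\|=\tfrac12\bigl(\sqrt{(u-w)^2+(v-z)^2}+\sqrt{(u+w)^2+(v+z)^2}\bigr)$ is right; and, writing $A,B$ for the two radicals, your derivation of the branch $d(1,2)=d_1\iff d_1\le d_6$ from $A+B\ge 2\max(|u|,|w|)$ together with the cocycle constraint $d_3z=d_1u+d_4v+d_6w$ is complete and correct.

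The proposal nevertheless falls short of a proof in three places, two of which you flag yourself. For (i), the B\'ezout count is only a heuristic: substituting the quadratic syzygy $Q_j\partial_k\tilde G-Q_k\partial_j\tilde G$ for a cubic Lagrange equation strictly enlarges the solution variety (it adjoins the locus $Q_j=0$), B\'ezout counts projective solutions with multiplicity including those at infinity, and even granting $24$ critical points the $\pm$ pairing most naturally yields a rational eliminant of degree $12$ in $d^2$ (degree $24$ in $d$), so the passage to ``$\delta\le 12$'' really needs the explicit resultant computation that the reference performs. Non-solvability is, as you concede, irreducibly a computer-algebra verification of a non-solvable Galois group on a concrete rational instance. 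For (ii), only one of the four branches of $d(1,2)$ is actually established; the remaining branches, the role of the sign of $C$, the two-branch maximum, and the $d(1,3)$ formula all require carrying out --- and certifying global optimality within --- the KKT stratification you only describe. None of this points to a wrong idea: the gaps coincide exactly with the computations done in \cite{Iochum:2001fv}.
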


This proposition shows that already for $N=4$ the distance formula cannot be inverted.
This means that  given a set of $\frac 12N(N-1)$ real positive  numbers satisfying the triangle inequality, there is no algorithm
permitting to build a $N\times N$ Dirac operator $D_N$ giving back
these numbers as the spectral distance associated to the spectral
triple $(\C^N, \C^N, D_N)$. However such an algorithm exists if one
allows the size of the Hilbert space to increase.
\begin{prop}\cite[Prop. 13]{Iochum:2001fv}
  Let $d_{ij}, 1\leq i, j\leq N, i\neq j$,  be a finite sequence of possibly infinite
  strictly positive numbers such that
  \begin{equation}
d_{ij} = d_{ji} \;\text{ and }\; d_{ij} \leq d_{ik} +
  d_{kj} \,\text{ for any }\, i, j, k.\label{eq:63}
  \end{equation}
Then there exists a spectral triple $(\A,
  \HH, D)$ with $\A = \C^N$ and $\HH = \C^{\frac 32N(N-1)}$ such that the resulting distance on
  the set of pure states of $\A$ is given by the numbers~$d_{ij}$.
\end{prop}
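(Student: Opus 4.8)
The plan is to decouple the $\binom{N}{2}$ prescribed distances by taking $D$ to be a direct sum of elementary two--point Dirac operators, one per pair, so that each pair $(i,j)$ contributes exactly the single constraint $|a_i-a_j|\le d_{ij}$ to the $D$--Lipschitz ball. First I would reduce the problem using the lemma that lets one search the supremum in \eqref{eq:1} over selfadjoint elements: identifying $\C^N$ with diagonal operators, it suffices to consider real tuples $a=(a_1,\dots,a_N)$, for which $d(\delta_i,\delta_j)=\sup\{|a_i-a_j|:a\in L_D(\A)\}$. The whole construction then amounts to arranging that $L_D(\A)$, defined in \eqref{eq:14}, is precisely the set $\{a\in\R^N:|a_i-a_j|\le d_{ij}\text{ for all }i\neq j\}$.

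For the construction I attach to each pair $p=(i,j)$ a block $\HH_p\cong\C^2$ on which $\C^N$ acts by $\pi_p(a)=\mathrm{diag}(a_i,a_j)$, equipped with the off--diagonal Dirac operator of \eqref{eq:27} with $m=1/d_{ij}$; by \eqref{eq:28} this single block already realizes $d_{ij}$ between $\delta_i$ and $\delta_j$. Taking $\HH=\bigoplus_p\HH_p$, $\pi=\bigoplus_p\pi_p$, $D=\bigoplus_p D_p$, the commutator is block diagonal and, as the operator norm of a direct sum is the maximum of the norms of its summands,
\begin{equation}
\|[D,\pi(a)]\|=\max_{p=(i,j)}\frac{|a_i-a_j|}{d_{ij}},
\end{equation}
so that $L_D(\A)$ is exactly the convex region singled out above. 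Each finite pair uses two Hilbert dimensions; padding every block to dimension three with inert summands (carrying $\pi=0$ and $D=0$, hence invisible to the commutator, and likewise replacing the absent block of an infinite pair by an inert $\C^3$) makes each of the $\binom{N}{2}$ pairs cost three dimensions, bringing the total to $\C^{3\binom{N}{2}}=\C^{\frac32 N(N-1)}$, as claimed.

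It then remains to check that the supremum of $|a_i-a_j|$ over this region equals $d_{ij}$. The upper bound is immediate from the constraint attached to the pair $(i,j)$ itself. For the matching lower bound I would write down an explicit optimal element in the sense of Definition \ref{optimelement}: fixing the source $i$, set $a_k:=d_{ik}$ for every $k$ in the connected component of $i$ (and any finite value elsewhere). Hypothesis \eqref{eq:63} gives the reverse triangle inequality $|d_{ik}-d_{il}|\le d_{kl}$, so $a\in L_D(\A)$, while $a_i=0$ and $a_j=d_{ij}$ yield $|a_i-a_j|=d_{ij}$.

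The step I expect to demand the most care is the bookkeeping when some $d_{ij}=+\infty$, where no two--point block is attached and one must show the supremum is genuinely $+\infty$. Here \eqref{eq:63} is essential: it forces $i$ and $j$ into distinct connected components of the finite--distance graph (a finite path would otherwise bound $d_{ij}$), so the coordinates across components are entirely unconstrained and $|a_i-a_j|$ can be driven to infinity, in agreement with Lemmas \ref{infinitelem} and \ref{infinitelemfinitedim}. Combining the finite and infinite cases then gives $d(\delta_i,\delta_j)=d_{ij}$ for every pair, which completes the proof.
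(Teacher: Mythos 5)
Your construction is correct, and it is essentially the argument behind the cited result: the survey itself omits the proof (deferring to \cite{Iochum:2001fv}), and the natural route there is exactly yours, namely decoupling the pairs by a direct sum of elementary two-point triples so that $\|[D,\pi(a)]\|=\max_{i<j}|a_i-a_j|/d_{ij}$ and the Lipschitz ball becomes the polytope $\{|a_i-a_j|\le d_{ij}\}$, with the triangle inequality guaranteeing that the candidate optimal element $a_k=d_{ik}$ lies in it and the infinite distances handled by the connected components of the finite-distance graph. Two small points deserve a cleaner formulation. First, padding with summands carrying $\pi=0$ makes the representation degenerate, which conflicts with the paper's standing convention on spectral triples; pad instead with a third diagonal entry on which the algebra acts (say $\pi_p(a)=\mathrm{diag}(a_i,a_j,a_i)$ with $D_p$ supported on the first two coordinates), which leaves the commutator, hence the distance, unchanged while keeping $\pi$ non-degenerate and still giving $\HH=\C^{3\binom{N}{2}}=\C^{\frac32 N(N-1)}$. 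Second, in the optimal element you cannot take ``any finite value elsewhere'': points outside $\mathrm{Con}(i)$ may still be constrained among themselves, so you should set $a$ equal to a constant on each of the other connected components (or run the same recipe $a_k=d_{i'k}$ from a basepoint $i'$ of each component); with that correction the element does lie in $L_D(\A)$ and the lower bound $|a_i-a_j|=d_{ij}$ follows as you state.
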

\noindent A similar
 construction has been proposed in \cite[\S 2.2]{Walterlivre}.
\subsection{Distances on graphs}

\label{sec:graph}
The four-point space in Prop. \ref{fuck} suggests that for $N\geq 4$,  there is little sense in 
trying to compute explicitly the distance in
a $N$-point space with the most general Dirac
operator. However some general properties of the distance can be worked
out for arbitrary $N$. To this aim we let $\mathcal{A}=\C^N$
act as diagonal matrices on $\HH= \C^N$, and we  identify the $N$ pure states of $\C^N$,
\begin{equation}
\delta_i(z_1, z_2, ..., z_N) := z_i\quad\quad \forall (z_1, z_2, ... , z_N)\in\C^N,
\label{eq:380bis0}
\end{equation}
with the points $1, 2, ..., N$ of a $N$-point graph. We take as a
Dirac operator the incidence matrix of the graph, that is 
\begin{equation}
D= \left(
\begin{array}{ccccc}
0    &D_{12}&\ldots&\ldots&D_{1N} \\
D_{12}&      &D_{23}&      &0 \\
\vdots&D_{23}&0    &\ddots&\vdots   \\
\vdots&     &\ddots&\ddots& D_{N-1,N}   \\
D_{1N}&\ldots&\ldots &D_{N-1,N}& 0 \\
\end{array}
\right) \;\quad\; D_{ij} \in \mathbb{R},\label{eq:31}
\end{equation}
where $D_{ij}=D_{ji}$ is non-zero if and only if there is a link in
the graph between the points $i$ and $j$. A path $\gamma_{ij}$ is a sequence of $p$ distinct points
$(i,i_2,...,i_{p-1},j)$ such that 
$$D_{i_{k}i_{k+1}}\neq 0\; \text{ for all } k\in\{1,p-1\}.$$
Two points $i$,$j$ are said connected if there exists at least one
path $\gamma_{ij}$.  We define the length of a path $\gamma_{ij}$ as
\begin{equation*}
L(\gamma_{ij}):= \underset
{k=1}{\overset{p-1}\Sigma}{\frac{1}{|D_{i_ki_{k+1}}|}},
\end{equation*}
and the geodesic distance $L_{ij}$ between any two connected points $i, j$
as the length of the
shortest path $\gamma_{ij}$.

\begin{prop}\cite [Prop. 4]{Iochum:2001fv}
  \label{chemin}
  \begin{itemize}
\item[i.] 
Let $D'$ be the operator obtained by canceling one or more
    lines and the corresponding columns in $D$, and $d'$ the
    associated distance. Then $d' \geq d$.

 \item[ii.]
The distance between two points $i$ and $j$
    depends only on the matrix elements corresponding to points
    located on paths $\gamma_{ij}$.

 \item[iii.]
 The distance between any two points is
    finite if and only if they are connected.

\item[iv.] For any two points $i,j$, one has $d(i, j) \leq L_{ij}$. 
 \end{itemize}
\end{prop}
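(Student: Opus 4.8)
The plan is to reduce everything to the explicit shape of the commutator. Writing $a=\mathrm{diag}(a_1,\dots,a_N)$ and using that $D$ has vanishing diagonal, one gets $[D,a]_{kl}=D_{kl}(a_l-a_k)$, an off-diagonal matrix; by the lemma allowing the supremum in \eqref{eq:1} to be searched on selfadjoint elements, I may take every $a_k\in\R$. For part (iv) I would then telescope along a path $\gamma_{ij}=(i=i_1,\dots,i_p=j)$: since $a_i-a_j=\sum_{k=1}^{p-1}(a_{i_k}-a_{i_{k+1}})$ and $\abs{a_{i_k}-a_{i_{k+1}}}=\abs{[D,a]_{i_ki_{k+1}}}/\abs{D_{i_ki_{k+1}}}$, while every matrix entry is bounded by the operator norm, $\abs{[D,a]_{i_ki_{k+1}}}\le\norm{[D,a]}=L_D(a)\le 1$. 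Hence $\abs{a_i-a_j}\le\sum_k 1/\abs{D_{i_ki_{k+1}}}=L(\gamma_{ij})$; minimizing over paths and taking the supremum over $a$ yields $d(i,j)\le L_{ij}$.

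Part (iii) splits along connectedness. If $i$ and $j$ are connected, $L_{ij}$ is finite and (iv) gives $d(i,j)\le L_{ij}<\infty$. If they are not connected, I take $a$ to be the indicator of the connected component of $i$ (value $1$ there, $0$ elsewhere); because $D_{kl}=0$ whenever $k,l$ lie in distinct components, every entry $[D,a]_{kl}=D_{kl}(a_l-a_k)$ vanishes, so $L_D(a)=0$, while $\delta_i(a)-\delta_j(a)=1\neq 0$. Lemma \ref{infinitelem}, via \eqref{eq:64}, then forces $d(i,j)=\infty$.

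For part (i) the engine is that compression does not increase the operator norm. If $P$ is the orthogonal projection onto the coordinates retained in $D'$, then restricting $a$ to those coordinates gives $[D',a|_T]=P[D,a]P$, so $L_{D'}(a|_T)=\norm{P[D,a]P}\le\norm{[D,a]}=L_D(a)$. Thus any $a$ admissible for $D$ restricts to one admissible for $D'$ realizing the same value $\abs{a_i-a_j}$, whence the supremum defining $d'$ dominates the one defining $d$, i.e.\ $d'\ge d$.

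Part (ii) is where the real work lies. Let $S$ be the set of vertices lying on some path $\gamma_{ij}$, and $d_S$ the distance computed from the principal submatrix $D_S$; part (i) already gives $d_S\ge d$. The reverse bound $d\ge d_S$ requires extending a near-optimal $a$ for $D_S$ to all $N$ vertices without enlarging $\norm{[D,a]}$. The structural fact I would establish is that each connected component of the graph induced on $S^c$ is attached to $S$ through a single vertex $c\in S$ (two distinct attachment points would let one route an $i$-$j$ path through the component, placing its vertices in $S$). Granting this, I set $a$ constant, equal to $a_c$, on each dangling component; then every commutator entry touching $S^c$ vanishes, $[D,a]$ is block-diagonal with $S$-block equal to $[D_S,a]$, so $\norm{[D,a]}=\norm{[D_S,a]}\le 1$ while $\abs{a_i-a_j}$ is unchanged. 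The main obstacle is precisely proving this single-attachment property in full generality; I would argue it through the block--cut-tree decomposition, showing the $i$-$j$ paths occupy exactly the blocks on the block-path from $i$ to $j$, so that every remaining vertex hangs off a single cut vertex.
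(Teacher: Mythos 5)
The paper itself gives no proof of this proposition --- it is quoted from \cite{Iochum:2001fv} and the survey explicitly omits all such proofs --- so I can only judge your argument on its own terms. Parts (i), (iii) and (iv) are correct and complete: the identity $[D,a]_{kl}=D_{kl}(a_l-a_k)$, the bound $\abs{M_{kl}}\leq\norm{M}$ on matrix entries, the telescoping along a path, the indicator of the connected component combined with Lemma \ref{infinitelem}, and the compression estimate $\norm{P[D,a]P}\leq\norm{[D,a]}$ (valid because the diagonal $a$ commutes with the diagonal projection $P$, so $[PDP,PaP]=P[D,a]P$) are exactly the right tools, and they are the standard ones for this proposition.

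For part (ii) you have correctly located the only genuinely delicate point: the extension of a near-optimal element from $S$ to the whole vertex set requires that each connected component of $G[S^c]$ meet $S$ in at most one vertex. Be aware that your parenthetical one-line justification (``two distinct attachment points would let one route an $i$--$j$ path through the component'') is not by itself a proof: concatenating a sub-path from $i$ to $u$, a detour through the component to $v$, and a sub-path from $v$ to $j$ need not produce a \emph{simple} path, and extracting a simple path from the resulting walk may discard the detour. The block--cut-tree route you propose instead does close this gap: $S$ is the union of the blocks on the block-path from $i$ to $j$ (using that in a $2$-connected graph every vertex lies on a simple path between any two prescribed vertices), a cycle through two distinct attachment vertices $u\neq v$ and part of the dangling component would force all of them into a single block, and that block must lie on the block-path (otherwise $u$ and $v$ would both have to coincide with the separating cut vertex), contradicting that the component lies in $S^c$. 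With the single-attachment lemma in hand, your constant extension makes every commutator entry touching $S^c$ vanish, giving $d\geq d_S$, which together with (i) yields (ii). So the proposal is sound; just make sure the block--cut-tree step is written out rather than waved at, since it is the one place where the argument could silently fail.
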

\medskip

A case explicitly computable is the maximally connected graph, that is
the operator $D$ with all
coefficients equal to a fixed real constant $k$.
\begin{prop}\cite[Prop. 5]{Iochum:2001fv} 
  \begin{itemize}
  \item[i.] The distance between any two points $i, j$ is
\begin{equation}
d(i,j) ={\frac{1}{|k|}}{\sqrt{\frac{2}{N}}}.
\label{eq:35bis}
\end{equation}

\item[ii.] If the link between two points $i_1,i_2$ \-- and only this
  link \-- is cut, $D_{i_1i_2}=0$, then
  \begin{equation}
 d(i_1,i_2) ={\frac{1}{|k|}}{\sqrt{\frac{2}{N-2}}}.
\label{eq:36}
\end{equation}
\end{itemize}
\end{prop}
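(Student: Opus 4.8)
The plan is to convert the supremum defining $d(i,j)$ into a finite-dimensional optimization and to solve it using the rank structure of the commutator together with a symmetry reduction. Since $\A=\C^N$ acts diagonally, restricting to self-adjoint elements (the lemma following Definition~\ref{optimelement}) lets me write $a=\mathrm{diag}(x)$ with $x\in\R^N$, so that $[D,a]_{ij}=D_{ij}(x_j-x_i)$; only the differences $x_i-x_j$ enter, hence the functional $\delta_i(a)-\delta_j(a)=x_i-x_j$ and the constraint are both insensitive to adding a multiple of $\mathbf1=(1,\dots,1)$ to $x$. For part (i) all off-diagonal entries equal $k$, so $[D,a]=k(\mathbf1\,x^{\mathsf T}-x\,\mathbf1^{\mathsf T})$ is antisymmetric of rank at most two. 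Its nonzero eigenvalues are $\pm i|k|\sqrt{\,|\mathbf1|^2|x|^2-(\mathbf1\!\cdot\!x)^2\,}$, so that $\|[D,a]\|=|k|\sqrt{\sum_{i<j}(x_i-x_j)^2}$. The Lipschitz ball is therefore the ellipsoid $\sum_{i<j}(x_i-x_j)^2\le k^{-2}$, and maximizing the linear functional $x_1-x_2$ over it is a one-line Cauchy--Schwarz computation (project onto $\mathbf1^{\perp}$, where the quadratic form is $N|\!\cdot\!|^2$), giving $d(1,2)=\frac1{|k|}\sqrt{2/N}$.

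Part (ii) is harder because cutting the link $\{1,2\}$ adds a rank-two correction, $[D,a]=k(\mathbf1 x^{\mathsf T}-x\mathbf1^{\mathsf T})+k(x_1-x_2)(e_1e_2^{\mathsf T}-e_2e_1^{\mathsf T})$, and the operator norm no longer has a closed form. Here I would exploit symmetry. The permutation matrix $P$ exchanging the indices $1$ and $2$ is unitary and commutes with $D$ (the cut graph is invariant under swapping its two endpoints), so by Lemma~\ref{selfadjointlem} it preserves the Lipschitz ball. Given any feasible $x$, the antisymmetrized vector $y:=\tfrac12(x-Px)$ has the form $(t,-t,0,\dots,0)$ with $t=\tfrac12(x_1-x_2)$; it is still feasible because $[D,\mathrm{diag}(Px)]=P[D,\mathrm{diag}(x)]P$ yields $\|[D,\mathrm{diag}(y)]\|\le\tfrac12(\|[D,\mathrm{diag}(x)]\|+\|[D,\mathrm{diag}(Px)]\|)=\|[D,\mathrm{diag}(x)]\|$ by the triangle inequality for the norm, and it realizes the same value $y_1-y_2=x_1-x_2$. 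Hence it suffices to compute the distance on the one-parameter family $x=(t,-t,0,\dots,0)$.

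On this family the commutator leaves invariant the three-dimensional subspace spanned by $e_1$, $e_2$ and $f:=\sum_{m\ge3}e_m$, and in the orthonormal basis $\{e_1,e_2,f/\sqrt{N-2}\}$ it equals $|k|t\sqrt{N-2}$ times the fixed antisymmetric matrix $\left(\begin{smallmatrix}0&0&-1\\0&0&1\\1&-1&0\end{smallmatrix}\right)$, whose norm is $\sqrt2$; thus $\|[D,a]\|=|k|\,t\,\sqrt{2(N-2)}$. Saturating $\|[D,a]\|\le1$ and recalling $x_1-x_2=2t$ gives $d(1,2)=\frac1{|k|}\sqrt{2/(N-2)}$, as claimed. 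The genuinely delicate step is the symmetry reduction in part (ii): without it one faces the operator norm of a rank-four antisymmetric matrix depending on all of $x$, whereas the convexity-plus-isometry argument collapses the problem to the single scalar $t$, after which the remaining eigenvalue computations are routine finite-dimensional linear algebra.
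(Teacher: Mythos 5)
Your proof is correct. Note that the survey itself gives no proof of this proposition --- it explicitly defers to the cited reference --- so there is no in-paper argument to compare against; judged on its own terms your argument is complete. Both computational pillars check out: for the complete graph, $[D,a]=k(\mathbf{1}x^{T}-x\mathbf{1}^{T})$ is antisymmetric of rank at most two, its nonzero eigenvalues are $\pm i|k|\sqrt{N|x|^2-(\sum_i x_i)^2}$, and the identity $N|x|^2-(\sum_i x_i)^2=\sum_{i<j}(x_i-x_j)^2$ gives the ellipsoidal Lipschitz ball; the Cauchy--Schwarz bound on $\mathbf{1}^{\perp}$ is saturated by $x\propto e_1-e_2$, which does lie in $\mathbf{1}^{\perp}$, so the supremum is attained and equals $\frac{1}{|k|}\sqrt{2/N}$. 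For the cut graph, the symmetrization $y=\frac12(x-Px)$ preserves feasibility --- by $PDP=D$, homogeneity of the norm and the triangle inequality, exactly as you write; the appeal to Lemma~\ref{selfadjointlem} is not even needed since you verify $[D,\mathrm{diag}(Px)]=P[D,\mathrm{diag}(x)]P$ directly --- and preserves the objective $x_1-x_2$, so the reduction to the family $(t,-t,0,\dots,0)$ loses nothing. On that family the commutator is again of the rank-two form $uv^{T}-vu^{T}$ with $u=ktf$ and $v=e_1-e_2$ orthogonal, whence $\|[D,a]\|=|kt|\sqrt{2(N-2)}$ and $d(i_1,i_2)=2t=\frac{1}{|k|}\sqrt{2/(N-2)}$, as claimed. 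The only implicit ingredient worth naming is the restriction to real diagonal $a$, which is licensed by the lemma on selfadjoint elements following Definition~\ref{optimelement} (selfadjoint elements of $\C^N$ acting diagonally are exactly the real ones). One could streamline part (i) by the same symmetrization trick, but your direct computation is equally clean.
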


Examples of explicit computation of the spectral distance in lattices
can be found in \cite{bimonte}, \cite{pekin},
\cite{DAndrea:2013kx}. Applications to quantum gravity have been
explored in~\cite{CR2014}.

\subsection{Projective spaces}
\label{sec:proj}
The space of pure states of  $M_n(\cc)$, $n\in\N$, is the  projective space
$\C P^{n-1}$: any normalized vector $\xi\in\C^n$ defines
the pure state 
\begin{equation}
\label{notationcp}
\omega_\xi (a)
= \scl{\xi}{a\xi} \quad \forall a\in M_n(\C)
\end{equation}
where $\scl{\cdot}{\cdot}$ is the usual  inner product on $\C^n$. Two such vectors equal up to a phase define the same state, and any
pure state comes in this way. All the  representations of $M_n(\C)$
induced by these pure states via the Gelfand-Neimark-Segal
construction are  equivalent,  that is why it is sometimes argued
\cite{madore} that $M_n(\cc)$
should be considered as a $1$-point space. On the contrary, we argue
 that the spectral distance  provides the space of pure states of $\mn$
with structure finer than the one of irreducible representations,
and there is no reason to neglect it. We come back to this point in
section \ref{points}.

We consider the spectral triple
\begin{equation}
\A=M_n(\C), \quad  \HH= \C^n, \quad D
\label{eq:146}
\end{equation}
where the action of $\A$ on $\HH$ is the usual representation of
matrices, while $D$  is an arbitrary selfadjoint element of $M_n(\C)$.  There exists no explicit computation of the distance in the
most general
case, that is between any two states of  $M_n(\C)$ for arbitrary
$n$. There are such computations for $n=2$, which are  the object of
\S \ref{section:ball}. For $n\geq 2$, we expose below some properties
of the connected components, which are
are slight generalizations of unpublished results 
of \cite{Martinetti:2001fk}. There is also an
explicit computation of the distance between any  pure states of $M_n(\C)\oplus \C$ but
with a particular class of operator $D$, presented in
Prop. \ref{sphere+point}.

From now on we assume $n\geq 2$. To make the correspondence between $\C P^{n-1}$
and normalized vectors in $\C^n$ explicit, it is convenient to fix as a basis
of $\HH$ an orthonormal set of eigenvectors $\psi_i$ of $D$, $i=1,n$,
so that
\begin{equation}
D =\text{diag }(d_1, d_2, ..., d_n)
\label{eq:39}
\end{equation}
where $ d_i \in \R$ are the eigenvalues of $D$ (possibly null). For
any eigenvector $\psi_i$, we call \emph{eigenstate of $D$ } the pure state
\begin{equation}
\omega_i := \scl{\psi_i}{\cdot\; \psi_i}.
\label{eq:44}
\end{equation}
We write $e_{ii}$ the diagonal
matrix with only non-zero component 
the $i^\text{th}$ entry that is equal to $1$. Given a normalized complex $n$-vector $\xi$, we
write $\xi_i := \scl{\xi}{\psi_i}$ its components on the eigenbasis of
$D$, and 
\begin{equation}
  \label{eq:45}
  \ox := \scl{\xi}{\cdot\; \xi}
\end{equation}
the corresponding pure state of $M_n(\C)$. It is not difficult to
characterize the pure states at finite distance from one another.
\begin{prop}
\label{pointsinfinis} 
Let $\xi, \zeta$ be 
normalized vectors in $\C^n$. The distance between the pure states $\ox$ and $\oz$ is finite if and only if the
projections of $\xi$ and $\zeta$ on the kernel - as well as on any
eigenspace of $D$ - are equal up
to a phase. That is, for any eigenspace $\HH_J$ of $D$ ($J\geq 1$ an integer),
there exists a phase $\theta_J\in
[0, 2\pi[$ such that 
\begin{equation}
  \label{eq:17}
  \xi_i = e^{i\theta_J} \zeta_i \quad\text{ for any } i\in I_J,
\end{equation}
where $I_J$ is the subset of $\left\{1, n\right\}$ such that $\HH_J =
\text{span} \left\{e_{ii}, i\in I_J\right\}$. 
\end{prop}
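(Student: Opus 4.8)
The plan is to reduce everything to Lemma~\ref{infinitelemfinitedim}, which applies verbatim since both $\A=M_n(\C)$ and $\HH=\C^n$ are finite dimensional. By that lemma, $d(\omega_\xi,\omega_\zeta)$ is finite if and only if $\omega_\xi$ and $\omega_\zeta$ agree on the whole of $\text{Ker}\,L_D$. The first step is therefore to identify this kernel explicitly, the second to write out the agreement condition, and the third to recognize it as \eqref{eq:17}.

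For the first step, recall $L_D(a)=\norm{[D,a]}$, so that $\text{Ker}\,L_D$ is precisely the commutant of $D$ in $M_n(\C)$. Working in the eigenbasis $\{\psi_i\}$ in which $D=\text{diag}(d_1,\dots,d_n)$, one has $[D,a]_{ij}=(d_i-d_j)a_{ij}$, hence $a$ commutes with $D$ if and only if $a_{ij}=0$ whenever $\psi_i$ and $\psi_j$ belong to distinct eigenspaces. Thus $\text{Ker}\,L_D$ is exactly the set of matrices that are block diagonal with respect to the eigenspace decomposition $\HH=\bigoplus_J\HH_J$, one free block $(a_{ij})_{i,j\in I_J}$ per eigenvalue.

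The second step is to spell out $\omega_\xi(a)=\omega_\zeta(a)$ on such block diagonal $a$. Since $\omega_\xi(a)=\scl{\xi}{a\xi}$ is the sesquilinear expression $\sum_{i,j}\overline{\xi_i}\,a_{ij}\,\xi_j$ in the eigenbasis components, the condition decouples across the blocks, and for each eigenspace $\HH_J$ it reads $\sum_{i,j\in I_J}\overline{\xi_i}\,a_{ij}\,\xi_j=\sum_{i,j\in I_J}\overline{\zeta_i}\,a_{ij}\,\zeta_j$ for all choices of the entries $a_{ij}$. Because these entries range over arbitrary complex numbers (no selfadjointness is imposed on $\text{Ker}\,L_D$), comparing coefficients yields the entrywise identity $\overline{\xi_i}\,\xi_j=\overline{\zeta_i}\,\zeta_j$ for all $i,j\in I_J$, separately for each $J$.

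The last step is to read this off as the stated phase condition. The identity $\overline{\xi_i}\,\xi_j=\overline{\zeta_i}\,\zeta_j$ on $I_J$ says that the rank-one positive matrices obtained by restricting the projectors $\xi\xi^*$ and $\zeta\zeta^*$ to $\HH_J$ coincide. Taking $i=j$ gives $\abs{\xi_i}=\abs{\zeta_i}$ on $I_J$, fixing the moduli; choosing an index $i_0\in I_J$ with $\xi_{i_0}\neq0$ (hence $\zeta_{i_0}\neq0$) and setting $e^{i\theta_J}:=\xi_{i_0}/\zeta_{i_0}$, which has modulus $1$, the off-diagonal relations then force $\xi_i=e^{i\theta_J}\zeta_i$ for every $i\in I_J$. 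This is exactly \eqref{eq:17}, the kernel of $D$ being just the eigenspace for the eigenvalue $0$. The only point requiring a word of care is the degenerate case where one restriction vanishes: then $\overline{\xi_i}\,\xi_j=0$ forces the other to vanish too, and $\theta_J$ is simply unconstrained, which the ``up to a phase'' formulation already accommodates. I expect no genuine obstacle here; the entire content of the proposition is carried by the identification of $\text{Ker}\,L_D$ in the first step, after which everything is elementary linear algebra.
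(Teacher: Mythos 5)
Your proof is correct and follows essentially the same route as the paper: reduce via Lemma~\ref{infinitelemfinitedim} to the condition that $\omega_\xi$ and $\omega_\zeta$ agree on $\text{Ker}\,L_D$, identify that kernel as the block-diagonal algebra $\bigoplus_J \mathcal{B}(\HH_J)$, and unpack the agreement block by block into the phase condition \eqref{eq:17}. You merely spell out the degenerate case (testing against the matrix units $e_{ij}$ to get $\overline{\xi_i}\xi_j=\overline{\zeta_i}\zeta_j$ and extracting the phase) in more detail than the paper, which disposes of it in one sentence.
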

\begin{proof}
By lemma \ref{infinitelem}, we just need to show that $\ox, \oz$
coincide on the kernel of $L_D$ if and only if \eqref{eq:17} holds true. Let us first assume there is no degeneracy,
that is all the eigenvalues $d_i$ of $D$ are distinct. Then \eqref{eq:17} amounts to
 \begin{equation}
\label{eqq:0}
   |\xi_i| = |\zeta_i| \quad \forall i=1, ...,n.
 \end{equation} 
The kernel of $L_D$ is the set of diagonal matrices. Any two states $\ox,\oz$ coincide 
on the kernel if and only if  $\ox(e_{jj}) = |\xi_j|^2$  equals $\oz(e_{jj}) = |\zeta_j|^2$ 
for any $j$, that is equation \eqref{eqq:0}. 

In case of degeneracy, one has 
\begin{equation}
  \label{eq:59}
  \text{Ker}\,L_D = \underset{J\in\N}{\bigoplus} {\mathcal B}(\HH_J)
\end{equation}
 and $\varphi, \varphi'$
coincide on each summand if and only if \eqref{eq:17} holds.
\end{proof}
As a corollary, one obtains that the connected component in the space
of pure states of any pure state is a torus inside $\C P^{n-1}$.
\begin{cor}
  Let $\ox$ be the pure state of $M_n(\C)$ associated to the  normalized vector $\xi\in\C^n$ with components $\xi_i$ in the
  eigenbasis of $D$. Then
 %\begin{equation}
$\text{Con}(\ox)\cap {\mathcal P}(M_n(\C))$
%\label{eq:148}
%\end{equation}
is  
  the $k-1$-torus 
  \begin{equation}
    \label{eq:68}
   U_\xi =\left\{\left(\begin{array}{c} \xi_i \quad \forall
       i\in I_1\\ \xi_i e^{i\theta_2}\quad \forall
       i\in I_2\\ \vdots \\ \xi_i e^{i\theta_k}\quad \forall
       i\in I_k\end{array}\right), \theta_2, ..., \theta_k
   \in [0,2\pi[\right\}, 
  \end{equation}
where $k$ is the number of distinct eigenvalues of D.
\end{cor}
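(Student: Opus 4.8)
The plan is to read the statement off Proposition \ref{pointsinfinis} directly, since the corollary is essentially a repackaging of the finiteness criterion as an explicit parametrization. First I would unfold the definitions: by \eqref{con}, the set $\text{Con}(\ox)\cap\pp(M_n(\C))$ consists of the pure states $\oz$, with $\zeta\in\C^n$ normalized, for which $d(\ox,\oz)<\infty$. Let $\HH_1,\dots,\HH_k$ be the $k$ distinct eigenspaces of $D$ (the kernel being one of them when $0$ is an eigenvalue), with index sets $I_1,\dots,I_k$ partitioning $\{1,\dots,n\}$. Proposition \ref{pointsinfinis} then says precisely that $\oz\in\text{Con}(\ox)$ if and only if for each $J\in\{1,\dots,k\}$ there is a phase $\theta_J\in[0,2\pi[$ with $\zeta_i=e^{i\theta_J}\xi_i$ for all $i\in I_J$.

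Next I would turn this phase condition into the explicit set $U_\xi$ of \eqref{eq:68}. The key point is that a pure state determines, and is determined by, its representing vector only up to an overall phase, $\oz=\omega_{e^{i\alpha}\zeta}$ for every $\alpha$. I would use this freedom to absorb the phase attached to the first eigenspace, setting $\theta_1=0$, which is exactly the normalization built into the first block of \eqref{eq:68}. The remaining phases $\theta_2,\dots,\theta_k$ then range freely over $[0,2\pi[^{\,k-1}$, and the associated vector is the one displayed in \eqref{eq:68}. Conversely, every vector in $U_\xi$ has components $\zeta_i=e^{i\theta_J}\xi_i$ on each block, hence satisfies the criterion of Proposition \ref{pointsinfinis}, so $\omega_\zeta\in\text{Con}(\ox)$. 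This yields the set-theoretic equality $\text{Con}(\ox)\cap\pp(M_n(\C))=U_\xi$.

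Finally I would verify that $U_\xi$ genuinely is a $(k-1)$-torus. Normalization is automatic: since $|\zeta_i|=|\xi_i|$ for every $i$, one has $\norm{\zeta}=\norm{\xi}=1$, so each element of $U_\xi$ is a legitimate pure state. The only place that requires care is the faithfulness of the parametrization $(\theta_2,\dots,\theta_k)\mapsto\omega_\zeta$. Two parameter tuples give the same pure state only if their vectors differ by a global phase; comparing the first block, where both equal $\xi_i$, pins that global phase to $1$, after which the remaining phases must agree block by block. I expect this to be the main (and only) obstacle: the argument is clean provided $\xi$ has a nonzero component on each eigenspace $\HH_J$, whereas if $\xi$ vanishes on some $\HH_J$ the corresponding $\theta_J$ is immaterial and the component collapses to a lower-dimensional torus, so the stated dimension $k-1$ is to be read as the generic, full-support value. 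Everything else is a direct transcription of Proposition \ref{pointsinfinis}.
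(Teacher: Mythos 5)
Your proof is correct and follows essentially the same route as the paper: the corollary is obtained by reading Proposition \ref{pointsinfinis} as a parametrization and absorbing the overall phase of the representing vector into the first eigenspace block, which is exactly why the paper factorizes out $e^{i\theta_1}$ in \eqref{eq:68}. Your closing caveat --- that the $(k-1)$-torus can degenerate when $\xi$ has no component on some eigenspace $\HH_J$ --- is a legitimate refinement that the paper's own proof passes over silently (it only treats the extreme case $k=1$), so it strengthens rather than deviates from the argument.
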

\begin{proof}
  This follows directly from Prop. \ref{pointsinfinis}. Notice that if
  $k=1$, that is $D$ is proportional to the identity, then
  \eqref{eq:17} indicates that $\oz$ is at finite distance from $\ox$
  if and only if $\xi = e^{i\theta_1}\zeta$. But this means 
  $\ox=\oz$, so that $\text{Con} (\ox)$ reduces to $\ox$ itself. That
  is why in \eqref{eq:68} the phase $e^{i\theta_1}$ is factorized out. 
\end{proof}
\medskip
% In practical, the computation of the distance with $\A=M_n(\C)$ thus
% simplifies to the case $M_m(\C)$, with $m$ is at most the number of distinct eigenvalues of
% $D$.
% \begin{prop}
% Let $\ox, \oz$ be two states of $M_n(\C)$ such that $\ox\in \text{Con}(\oz)$.
%  Let $k\geq 2$ be the number of distinct eigenvalues of  $D$. Then
%  there exists an integer $1\leq m\leq k$ and a projection $e: \C^n\to
%  \C^m$ such that
% \begin{equation}
%   \label{eq:53}
%   d(\ox, \oz) = d_e(\ox\circ\alpha_e, \oz\circ \alpha_e) 
% \end{equation}
% with $d_e$ is the distance associated to the spectral triple
% $(M_m(\C), \C^m, D_m)$ where $D_m=e D_n e$.
% \end{prop}
% \begin{proof}
% In the eigenbasis of $D$, the projection $e$ is $\text{diag}(\I_m,
% 0_{n-m})$ hence it commutes with $D$. The result follows from lemma \ref{projectionlem}.  
% \end{proof}

\subsection{The three dimensional ball}
\label{section:ball}

For $n=2$, the distance can be explicitly computed. The space of pure
states of $M_2(\C)$ is the complex projective plane $\mathbb{C}P^{1}$, which is  in $1$-to-$1$
correspondence with the $2$-sphere: to any  normalized complex
vector $\xi\in\C^2$ with components $\xi_1,  \xi_2$,
one associates the point $p_\xi$ of $S^2$ with Euclidean coordinates
\begin{equation}
\label{hopf}
x_{\xi} := 2\re(\xi_1\bar{\xi}_2), \quad y_{\xi} := 2\im(\xi_1
\bar{\xi}_2)\,  \text{ and } \, z_{\xi} :=  |\xi_1|^2 -|\xi_2|^2.
\end{equation}
The evaluation of $\ox$ on $a\in M_2(\C)$ with components $a_{ij}$
reads (see e.g. \cite[\S 4.3]{Cagnache:2009oe})
\begin{equation*}
  \ox(a) = \frac{1+z_\xi }2\, a_{11} + \frac{1- z_\xi }2\, a_{22} + r\,\Re \left(e^{i\Xi} a_{12}\right)
\end{equation*}
 where
 \begin{equation}
r e^{i\Xi} := x_\xi + i y_\xi = 2\xi_1\bar \xi_2.
\label{eq:72}
\end{equation}

A non-pure state $\varphi$ of $M_2(\C)$ is given by a probability distribution $\phi$ on $S^2$,
\begin{equation}
  \label{eq:38}
  \varphi (a) = \int_{S^2} \phi(\xi) \,\omega_\xi(a)\, d\xi
 \;=\; \frac{1+\tilde z_\phi }2\, a_{11} + \frac{1- \tilde z_\phi }2 \,a_{22} + \tilde r_\phi\,\Re \left(e^{i\tilde\Xi_\phi} a_{12}\right)
\end{equation}
where 
$d\xi$ is the $SU(2)$ invariant measure on $S^2$ normalized to $1$ and
\begin{equation}
{\bf \tilde x}_\phi:= (\tilde x_\phi, \, \tilde y_\phi,\,  \tilde z_\phi)
\label{eq:73}
\end{equation}
denotes the mean
point of $\phi$, that is
\begin{equation*}
  \tilde x_\phi  := \int_{S^2} \phi(\xi)\, x_\xi \,d\xi
\end{equation*}
with similar notation for $\tilde y_\phi$, $\tilde z_\phi$, and
$\tilde r_\phi e^{i\tilde\Xi_\phi}:= \tilde x_\phi + i\tilde y_\phi$. The
correspondence between a state and a mean point,
\begin{equation}
  \label{eq:74}
  \varphi \longleftrightarrow {\bf \tilde x}_\phi,
\end{equation}
 is $1$-to-$1$, that is ${\mathcal
  S}(M_2(\C))$ is the 3-ball. But unlike the commutative case,  two
distinct probability measures
may have the same mean point, so that the correspondence between
states and probability measures is not $1$-to-$1$.
\smallskip

Let us first consider the spectral triple \eqref{eq:146}, that is
$M_2(\C)$ acting on $\C^2$ with $D$ an arbitrary selfadjoint $2\times
2$ matrix. As in \S \ref{sec:proj}, we chose as basis of
$\C^2$ an orthonormal eigenbasis of $D$, so that the north and south poles of $S^2$ are the image of the eigenstates
of $D$ by \eqref{hopf}.
\begin{prop}
 \label{propmercrediun}
Assume the two eigenvalues $d_1$, $d_2$ of $D$
are distinct (otherwise $D$ is proportional to the identity and all
the distances are infinite).
 \begin{itemize}
 \item[i.] Two points ${\bf \tilde x}_\phi, {\bf \tilde x}_{\phi'}$ in
   the 3-ball are at finite distance iff $\tilde z_\phi = \tilde z_{\phi'}$. 
\item[ii.] The distance between two points ${\bf \tilde x}_\phi, {\bf
    \tilde x}_{\phi'}$ with the same $z$-coordinate is proportional to
  the
  chord distance on the circle:
  \begin{equation}
d({\bf \tilde x}_\phi, {\bf \tilde
  x}_{\phi'})=\frac{1}{|d_1-d_2|}\sqrt{(\tilde x_{\phi} - \tilde x_{\phi'})^2+(y_{\phi}-y_{\phi'})^2}.
\label{eq:32}
\end{equation}
\end{itemize}
\end{prop}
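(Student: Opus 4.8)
The plan is to work directly in the eigenbasis of $D$, where $D=\text{diag}(d_1,d_2)$ is already diagonalised. First I would compute the commutator: for $a=(a_{ij})\in M_2(\C)$ one has $[D,a]_{ij}=(d_i-d_j)a_{ij}$, so $[D,a]$ is the purely off-diagonal matrix with entries $(d_1-d_2)a_{12}$ and $(d_2-d_1)a_{21}$. Since for such a matrix $[D,a]^*[D,a]$ is diagonal, the operator norm is the larger singular value, giving $\|[D,a]\|=|d_1-d_2|\max(|a_{12}|,|a_{21}|)$. Invoking the lemma that lets one search the supremum in \eqref{eq:1} over self-adjoint $a$ (for which $a_{21}=\overline{a_{12}}$), the Lipschitz constraint $\|[D,a]\|\le 1$ collapses to the single scalar condition $|a_{12}|\le |d_1-d_2|^{-1}$, while the diagonal entries $a_{11},a_{22}$ remain unconstrained reals.

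For (i) I would appeal to Lemma \ref{infinitelemfinitedim}: the distance is finite iff $\varphi$ and $\varphi'$ agree on $\text{Ker}\,L_D$. Because $d_1\neq d_2$, this kernel is exactly the algebra of diagonal matrices, spanned by $e_{11}$ and $e_{22}$. Reading \eqref{eq:38} one gets $\varphi(e_{11})=\tfrac{1+\tilde z_\phi}2$ and $\varphi(e_{22})=\tfrac{1-\tilde z_\phi}2$, so coincidence on the diagonal is equivalent to $\tilde z_\phi=\tilde z_{\phi'}$, which is the announced finiteness criterion.

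For (ii) I would first rewrite the objective. Noting that $\tilde r_\phi\,\Re(e^{i\tilde\Xi_\phi}a_{12})=\Re(w_\phi\,a_{12})$ with $w_\phi:=\tilde x_\phi+i\tilde y_\phi$, formula \eqref{eq:38} yields
\[
\varphi(a)-\varphi'(a)=\frac{\tilde z_\phi-\tilde z_{\phi'}}{2}\,(a_{11}-a_{22})+\Re\big[(w_\phi-w_{\phi'})\,a_{12}\big].
\]
Under the hypothesis $\tilde z_\phi=\tilde z_{\phi'}$ the diagonal term vanishes identically, so only $a_{12}$ survives. Maximising $\big|\Re[(w_\phi-w_{\phi'})a_{12}]\big|$ over $|a_{12}|\le|d_1-d_2|^{-1}$ is a one-line phase optimisation: the bound $|w_\phi-w_{\phi'}|\,|a_{12}|$ is saturated by choosing the phase of $a_{12}$ to align with $\overline{w_\phi-w_{\phi'}}$ and taking $|a_{12}|=|d_1-d_2|^{-1}$. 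This gives $d=|d_1-d_2|^{-1}|w_\phi-w_{\phi'}|$, and since $|w_\phi-w_{\phi'}|=\sqrt{(\tilde x_\phi-\tilde x_{\phi'})^2+(\tilde y_\phi-\tilde y_{\phi'})^2}$ is the chord length, this is exactly \eqref{eq:32}.

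The computation is short, so the only points needing care are technical: identifying the commutator norm correctly (it is the \emph{maximum} of the two moduli, not their sum), and verifying that the optimiser — with the aligned phase in $a_{12}$, $a_{21}=\overline{a_{12}}$, and say $a_{11}=a_{22}=0$ — is genuinely self-adjoint and lies in the Lipschitz ball, so the supremum is attained rather than merely approached. The conceptual thread, worth emphasising, is that $\tilde z_\phi$ is precisely the value of $\varphi$ on $\text{Ker}\,L_D$, which is why the same coordinate governs both the finiteness dichotomy in (i) and the vanishing of the diagonal contribution in (ii).
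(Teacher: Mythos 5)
Your proof is correct and follows essentially the same route as the paper's: part (i) via Lemma \ref{infinitelemfinitedim} applied to the diagonal kernel of $L_D$, and part (ii) via the bound $|\varphi(a)-\varphi'(a)|\le |a_{12}|\,|w_\phi-w_{\phi'}|$ together with $L_D(a)=|a_{12}|\,|d_1-d_2|$ for selfadjoint $a$, saturated by the same phase-aligned off-diagonal optimal element. Your explicit computation of the commutator norm as $|d_1-d_2|\max(|a_{12}|,|a_{21}|)$ merely spells out what the paper calls ``a direct calculation.''
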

\begin{proof}
  The result has been shown for pure states in
  \cite[Prop. 2]{Iochum:2001fv}{\footnote{Notice the misprint of a factor $2$  in the result as
  expressed below Prop. 2}}. The proof easily adapts to non-pure states as follows. If
  $\tilde z_\phi \neq \tilde z_{\phi'}$, then for
  \begin{equation}
      b=\left(\begin{array}{cc}
          1&0\\ 0&0 \end{array}\right)\in\text{Ker} \, L_D
        \end{equation}
one has by \eqref{eq:38} that $\varphi(b) \neq \varphi'(b)$, meaning
the distance is infinite by lemma \ref{infinitelemfinitedim}.

Assume   $\tilde z_\phi = \tilde z_{\phi'}$. Then 
\begin{align}
|\varphi(a) -\varphi'(a)| &= \Re \left( a_{12} \left(\tilde x_{\varphi} - \tilde x_{\varphi'} + i(\tilde y_{\varphi} - \tilde y_{\varphi'})\right)\right)\\
&\leq |a_{12}||\left(\tilde x_{\varphi} -  \tilde x_{\varphi'} + i(\tilde y_{\varphi} - \tilde y_{\varphi'})\right)|.
  \end{align}
A  direct calculation shows that 
  \begin{equation}
    L_D(a) = |a_{12}| |d_1 - d_2| 
  \end{equation}
so that 
\begin{equation}
    d(\varphi, \varphi') \leq \frac 1{|d_1 - d_2|}\sqrt{
\left(\tilde x_{\varphi} - \tilde x_{\varphi'}\right)^2 + \left(\tilde y_{\varphi} -
      \tilde y_{\varphi'}\right)^2}.
  \end{equation}
This upper bound is attained by $a=a^* \in M_2(\C)$ with components
$a_{11}=a_{22}=0$ and $a_{12}= \frac 1{|d_1 - d_2|}e^{-i\theta}$ with
$\theta = \text{arg} \left(\tilde x_{\varphi} - \tilde x_{\varphi'} + i(\tilde y_{\varphi} - \tilde y_{\varphi'})\right)$.
\end{proof}
The proposition above shows that the simplest spectral triple on $M_2(\C)$ equips the $3$-ball with a
metric that slices the ball into circles at infinite distance from one
another, in particular the poles of $S^2$ are at infinite distance from any
   other state. To avoid such infinities,  according to lemma
\ref{infinitelem} one needs to reduce the kernel of the semi-norm $L_D$ to
the multiples of the identity. This can be done by changing the space
of representation and the operator $D$. An exemple is the following
spectral triple, which comes from the truncation of the spectral
triple of the Moyal plane in~\S ~\ref{section:Moyal}. Namely, one makes
\begin{equation}
\A= M_2(\C)
\;\text{ act on }\; \HH = M_2(\C)\otimes \C^2 \; \text{ as } m\otimes \I_2,
\label{eq:147}
\end{equation}
where the
element $m$ of the algebra $M_2(\C)$ acts on the Hilbert space $M_2(\C)$ by matrix multiplication. As a Dirac
operator, one takes
\begin{equation}
  \label{eq:61}
  D= -i\sqrt 2\left(\begin{array}{cc} 0_2 & [X^\dagger, \cdot] \\ -[X,
      \cdot ] & 0_2\end{array}\right)
\end{equation}
where the non-zero terms are the commutators with the matrix
\begin{equation}
  \label{eq:71}
  X= \frac 1{\sqrt \theta} \left(\begin{array}{cc} 0&0 \\ 1&0\end{array}\right)
\end{equation}
and its adjoint. This operator is the restriction to $M_2(\C)$ of the
usual Dirac operator of the plane acting on $L^2(\R^2)$ (see \cite{Cagnache:2009oe}
for details).

\begin{prop}\cite[Prop. 4.4]{Cagnache:2009oe}.
\label{propdista2} The spectral distance between any two states of
$M_2(\C)$, identified to points ${\bf \tilde x}_\phi, {\bf \tilde
  x}_{\phi'}$ of the 3-ball by \eqref{eq:74} is finite. More exactly, 
\begin{equation*}
d({\bf \tilde x}_\phi, {\bf \tilde
  x}_{\phi'})=
\sqrt{\frac{\theta}{2}}\times\begin{cases}
d_{eq} ({\bf \tilde  x_\phi}, {\bf \tilde  x_{\phi'}}) & \mathrm{if}\;\;|\tilde z_\phi-\tilde z_{\phi'}|\leq d_{eq} ({\bf \tilde  x_\phi}, {\bf \tilde  x_{\phi'}}) \;,\\
\frac{d_{Ec} ({\bf \tilde  x_\phi}, {\bf \tilde  x_{\phi'}})^2}{2 |\tilde z_\phi-\tilde z_{\phi'}|}  & \mathrm{if}\;\;|\tilde z_\phi-\tilde z_{\phi'}|\geq d_{eq} ({\bf \tilde  x_\phi}, {\bf \tilde  x_{\phi'}}),  \;
\end{cases}
\end{equation*}
 where
 \begin{equation}
 d_{Ec} ({\bf \tilde  x_\phi}, {\bf \tilde  x_{\phi'}}) =  \sqrt{\abs{\tilde x_\phi - \tilde x_{\phi'}}^2 + \abs{\tilde y_\phi - \tilde y_{\phi'}}^2+ \abs{\tilde z_\phi - \tilde z_{\phi'}}^2 }
\label{eq:75}
\end{equation}
denotes the euclidean distance on $B^3$ while
\begin{equation}
 d_{eq} ({\bf \tilde  x_\phi}, {\bf \tilde  x_{\phi'}})=
 \sqrt{\abs{\tilde x_\phi - \tilde x_{\phi'}}^2 + \abs{\tilde y_\phi -
     \tilde y_{\phi'}}^2 }
\label{eq:76bis}
 \end{equation}
is the Euclidean distance between the projections of the points on the
equatorial plane $z=0$.
\end{prop}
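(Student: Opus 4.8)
The plan is to turn the supremum defining $d(\varphi,\varphi')$ into an explicit finite-dimensional optimisation. First I would use \cite[Lem.~1]{Iochum:2001fv} to restrict the supremum to self-adjoint $a\in M_2(\C)$, and then compute the Lipschitz seminorm. Writing $L_a$ for left multiplication on $M_2(\C)$ and $\mathrm{ad}_Y(\xi)=[Y,\xi]$, the identity $[\mathrm{ad}_X,L_a]=L_{[X,a]}$ gives
\[
[D,\pi(a)]=-i\sqrt2\begin{pmatrix}0 & L_{[X^\dagger,a]}\\ -L_{[X,a]} & 0\end{pmatrix}.
\]
Since $[X^\dagger,a]=-[X,a]^\dagger$ when $a=a^\dagger$, and since left multiplication on $(M_2(\C),\langle\cdot,\cdot\rangle_{HS})$ has operator norm $\|L_Y\|=\|Y\|$, the norm of a block–off–diagonal operator yields $\|[D,\pi(a)]\|=\sqrt2\,\|[X,a]\|$. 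Thus the $D$-Lipschitz ball is exactly $\{a=a^\dagger:\ \sqrt2\,\|[X,a]\|\le1\}$. (Incidentally $\ker L_D=\C\I$, since $X,X^\dagger$ generate $M_2(\C)$, so by Lemma~\ref{infinitelemfinitedim} all distances are finite, which already proves the first assertion.)

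Next I would make both objective and constraint explicit in the entries of $a$. A direct computation gives $[X,a]=\theta^{-1/2}M_0$ with $M_0=\begin{pmatrix}-a_{12}&0\\ a_{11}-a_{22}&a_{12}\end{pmatrix}$, so the constraint reads $\|M_0\|\le\sqrt{\theta/2}$. On the other hand, from the evaluation formula \eqref{eq:38}, writing $\Delta x=\tilde x_\phi-\tilde x_{\phi'}$ and similarly for $\Delta y,\Delta z$, one finds
\[
\varphi(a)-\varphi'(a)=\tfrac{\Delta z}{2}\,(a_{11}-a_{22})+\Re\big((\Delta x+i\Delta y)\,a_{12}\big),
\]
which depends only on $u:=a_{11}-a_{22}\in\R$ and $w:=a_{12}\in\C$, i.e.\ on the very entries of $M_0$. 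After rescaling $M_0=\sqrt{\theta/2}\,\hat M$ (hence $u=\sqrt{\theta/2}\,\hat u$, $w=\sqrt{\theta/2}\,\hat w$), the prefactor $\sqrt{\theta/2}$ of the statement factors out, and it remains to maximise $\tfrac{\Delta z}{2}\hat u+\Re((\Delta x+i\Delta y)\hat w)$ subject to $\|\hat M\|\le1$. Optimising over the phase of $\hat w$ (which leaves $\|\hat M\|$ unchanged, as the latter depends only on $|\hat w|$ and $\hat u$) and using that the constraint depends only on $|\hat u|$, this reduces to maximising $\tfrac{|\Delta z|}{2}\,\hat u+d_{eq}\,\rho$ over $\hat u\ge0$, $\rho:=|\hat w|\ge0$.

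The key simplification is the description of the boundary $\|\hat M\|=1$. The eigenvalues of $\hat M^\ast\hat M$ satisfy $\lambda_+\lambda_-=\rho^4$ and $\lambda_++\lambda_-=2\rho^2+\hat u^2$; imposing $\lambda_+=1$ forces $\lambda_-=\rho^4\le1$ and hence $\hat u^2=(1-\rho^2)^2$, so the boundary is the arc $\rho\in[0,1]$, $\hat u=1-\rho^2$. Along it the objective becomes the single-variable function $f(\rho)=\tfrac{|\Delta z|}{2}(1-\rho^2)+d_{eq}\,\rho$, whose maximum over $[0,1]$ is attained at $\rho=d_{eq}/|\Delta z|$ when $d_{eq}\le|\Delta z|$, and at $\rho=1$ otherwise. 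Using $d_{Ec}^2=d_{eq}^2+|\Delta z|^2$ these two values are $d_{Ec}^2/(2|\Delta z|)$ and $d_{eq}$ respectively; multiplying by $\sqrt{\theta/2}$ gives the stated formula, the optimal $a$ being recovered from the maximising $(\rho,\hat u)$ together with the optimal phase of $\hat w$.

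The main obstacle is the first step: correctly identifying $\|[D,\pi(a)]\|$ with $\sqrt2\,\|[X,a]\|$, which rests on the block–off–diagonal norm formula together with the equality $\|L_Y\|=\|Y\|$ for left multiplication in the Hilbert–Schmidt norm, and on the reduction to self-adjoint $a$ (so that the two off-diagonal blocks have equal norm). Once the Lipschitz ball is pinned down as $\|M_0\|\le\sqrt{\theta/2}$, the rest is the elementary $2\times2$ spectral computation giving $\hat u=1-\rho^2$, after which the optimisation is a one-line calculus exercise.
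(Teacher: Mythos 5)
Your proof is correct. The paper itself does not reproduce an argument for this proposition (it defers entirely to \cite[Prop.~4.4]{Cagnache:2009oe}), but your route is the standard direct computation and every step checks out: the reduction to selfadjoint $a$, the identification $\|[D,\pi(a)]\|=\sqrt2\,\|[X,a]\|$ via $[\mathrm{ad}_Y,L_a]=L_{[Y,a]}$ and $\|L_Y\|=\|Y\|$ on $(M_2(\C),\langle\cdot,\cdot\rangle_{HS})$, the description of the feasible region as $|\hat u|\le 1-\rho^2$, $\rho\le 1$ from $\lambda_+\lambda_-=\rho^4$ and $\lambda_++\lambda_-=2\rho^2+\hat u^2$, and the one-variable optimisation with critical point $\rho^*=d_{eq}/|\Delta z|$ reproducing both branches of the stated formula (with the finiteness claim following from $\ker L_D=\C\I$ and Lemma~\ref{infinitelemfinitedim}).
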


 \noindent Contrary to the simplest spectral triple on $M_2(\C)$ of
proposition \ref{propmercrediun}, with the spectral triple
\eqref{eq:147} inherited
from the Moyal plane, the spectral distance induces on $B^3$ the Euclidean
topology, which coincides with the weak$^*$ topology \cite[\S
4.3]{Cagnache:2009oe}.
 \medskip

% \begin{figure}[ht*]
% \begin{center}
% \hspace{-0truecm}
% \vspace{-1truecm}
% \mbox{\rotatebox{0}{\scalebox{.6
% }{\includegraphics{/users/pierre/physique/figures/Moyalfig3d.pdf}}}}
% \end{center}
% \caption{The vertical plane containing ${\bf \tilde  x_\phi}$, ${\bf \tilde  x_{\phi'}}$.}\label{Moyalfig}
% \end{figure} 

For sake of completeness, let us mention another example of finite
dimensional spectral triple that allows to
orientate the $3$-ball, by adding one point at finite distance from one
of the
pole of $S^2$. This is obtained by letting
\begin{equation}
\mathcal{A}=M_{n}(\mathbb{C})\oplus\mathbb{C}\quad \text{ act on }\quad
\mathcal{H}=\mathbb{C}^{n}\oplus\mathbb{C}
\label{eq:149}
\end{equation}
as
\begin{equation}
\label{rep2points}
a=\left(\begin{array}{cc} x & 0\\ 0 &y\end{array}\right),
\end{equation}
with $x \in M_n(\mathbb{C})$ and $y\in \mathbb{C}$. 
As a Dirac operator, one takes
\begin{equation}
\label{dfi}
D = \left(\begin{array}{cc} 0_n & v \\ v^* & 0_n \end{array}\right).
\end{equation}
where $v\in\C^n$. 
  \begin{prop}\cite [Prop. 3]{Iochum:2001fv}
\label{sphere+point}
For two pure states $\ox$, $\oz$ of $M_n(\C)$ such that $\xi_j=
e^{i\theta}\zeta_j$ for all $j\in[2,n]$, the distance is 
\begin{equation}
d(\ox, \oz) = \frac{2}{\norm{v}} \sqrt{1 - \abs{\scl{\xi}{\zeta}}^2}.
\label{eq:77}
\end{equation}
Furthermore, the pure state $\omega_c$ of $\C$ is at infinite distance
from all the pure states of $M_n(\C)$, except 
$\omega_{v}$ for which
$$d(\oc, \omega_{v}) = \frac{1}{\norm{v}}. $$
\end{prop}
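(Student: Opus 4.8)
The plan is to reduce both assertions to one transparent optimization over selfadjoint matrices. First I would use \cite[Lem.~1]{Iochum:2001fv} to restrict the supremum in \eqref{eq:1} to selfadjoint $a=\text{diag}(x,y)$, i.e. $x=x^*\in\mn$ and $y\in\R$. A block computation then gives
\begin{equation*}
[D,a]=\begin{pmatrix}0 & (y\I-x)v\\ v^*(x-y\I) & 0\end{pmatrix},
\end{equation*}
which is skew-adjoint with $[D,a]^*[D,a]=\text{diag}(ww^*,\,w^*w)$ for $w:=(x-y\I)v$; since both blocks have norm $\norm w^2$ this yields the key simplification $L_D(a)=\norm{[D,a]}=\norm{(x-y\I)v}$. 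Thus the Lipschitz constraint controls only the single vector $(x-y\I)v$.

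Next I would introduce $b:=x-y\I$, which ranges over all selfadjoint matrices as $(x,y)$ does and carries the constraint into $\norm{bv}\le 1$. For the first assertion the shift by $y\I$ is invisible, since $\ox(a)-\oz(a)=\tra{x\Delta}=\tra{b\Delta}$ with $\Delta:=\xi\xi^*-\zeta\zeta^*$ traceless, so
\begin{equation*}
d(\ox,\oz)=\sup\bigl\{\,\abs{\tra{b\Delta}}\ :\ b=b^*,\ \norm{bv}\le 1\,\bigr\}.
\end{equation*}
Working in an orthonormal basis of $\cc^n$ whose first vector is $\hat v:=v/\norm v$ and writing $b=\begin{pmatrix}\mu & c^*\\ c & B\end{pmatrix}$, $\Delta=\begin{pmatrix}\delta & g^*\\ g & G\end{pmatrix}$ (so $\mu,\delta\in\R$, $c,g\in\cc^{n-1}$), the constraint becomes $\mu^2+\norm c^2\le\norm v^{-2}$ while the $(n-1)\times(n-1)$ block $B$ stays completely free. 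Hence the supremum is finite iff the free block contributes nothing, i.e. $G=Q\Delta Q=0$ with $Q:=\I-\hat v\hat v^*$; this is exactly the hypothesis $\xi_j=e^{i\theta}\zeta_j$ for $j\ge 2$. Under it $\tra{b\Delta}=\mu\delta+2\re(c^*g)$ is a real-linear functional of $(\mu,c)$ on a ball, so Cauchy--Schwarz (with equality for an explicit $b$) gives $d(\ox,\oz)=\norm v^{-1}\sqrt{\delta^2+4\norm g^2}$.

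It then remains to evaluate $\delta$ and $\norm g$. Replacing $\zeta$ by $e^{i\theta}\zeta$ (which alters neither $\oz$ nor $\abs{\scl{\xi}{\zeta}}$) I may assume $Q\xi=Q\zeta=:\eta$, so $\xi=s\hat v+\eta$, $\zeta=t\hat v+\eta$ with $\abs s=\abs t$. A short computation gives $\delta=\abs s^2-\abs t^2=0$ and $g=(\bar s-\bar t)\eta$, while $\scl{\xi}{\zeta}=\bar s\,t+\norm\eta^2$; expanding then shows $4\norm g^2=4\abs{s-t}^2\norm\eta^2=4\bigl(1-\abs{\scl{\xi}{\zeta}}^2\bigr)$. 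This single algebraic identity, substituted above, proves \eqref{eq:77}.

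For the second assertion the same substitution makes $y$ disappear from the objective, $\oc(a)-\ox(a)=y-\scl{\xi}{x\xi}=-\scl{\xi}{b\xi}=-\tra{bP_\xi}$ with $P_\xi:=\xi\xi^*$, so $d(\oc,\ox)=\sup\{\abs{\tra{bP_\xi}}:b=b^*,\ \norm{bv}\le 1\}$. Now the free block yields $\tra{BG}=\scl{\eta}{B\eta}$ with $G=\eta\eta^*$ and $\eta=Q\xi$, which is unbounded unless $\eta=0$, i.e. unless $\xi\propto v$; concretely, when $\eta\neq 0$ the element $a=\text{diag}(\eta\eta^*,0)$ satisfies $[D,a]=0$ and $\oc(a)\neq\ox(a)$, so $d(\oc,\ox)=\infty$ by Lemma~\ref{infinitelem}. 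When $\xi=\hat v$ one has $\delta=1$, $g=0$, and the problem reduces to maximizing $\abs\mu$ under $\mu^2\le\norm v^{-2}$, giving $d(\oc,\omega_v)=\norm v^{-1}$. The main obstacle is the opening reduction $L_D(a)=\norm{(x-y\I)v}$ together with the recognition that the unconstrained block $B$ governs all infinities; once that structure is exposed, only Cauchy--Schwarz and the one identity of the third paragraph remain.
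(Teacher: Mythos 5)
Your argument is correct: the reduction $\norm{[D,a]}=\norm{(x-y\I)v}$, the elimination of $y$ via $b=x-y\I$, the observation that the unconstrained block $B$ forces $Q\Delta Q=0$ (resp.\ $Q\xi=0$) for finiteness, the Cauchy--Schwarz evaluation, and the identity $\norm{g}^2=\abs{s-t}^2\norm{\eta}^2=1-\abs{\scl{\xi}{\zeta}}^2$ all check out. Note that the survey itself gives no proof of this proposition (it defers to \cite[Prop.~3]{Iochum:2001fv}); your route is essentially the same computation carried out there, namely reducing the Lipschitz condition to a constraint on the single vector $(x-y\I)v$ and then optimizing a linear functional over a ball, so there is nothing to flag beyond the (correct) implicit convention that the components $\xi_j$, $\zeta_j$ in the hypothesis are taken in a basis whose first vector is $v/\norm{v}$.
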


Applied to $M_2(\cc)\oplus \cc$, one has that the space of pure states
is the disjoint union of the sphere $S^2$ and the point $\oc$. On the
sphere the condition of finitude of the distance is the
same as in proposition \ref{propmercrediun}: $S^2$ is sliced in
circles at infinite distance from one another, while on each circle the distance is proportional to the Euclidean distance
on the disk. The
pure state $\omega_{v}$ gives the north pole of the sphere, and is at finite distance from
$\oc$. In this sense  adding a point allows to give an orientation to the
sphere, by distinguishing between the south pole at infinite distance
from any other points and the north pole at finite distance from the
isolated point $\omega_c$.

%\rosso[Various metrics on the sphere, Moyal, Fubini, Distance in information theory ?
%Mentionner l'erreur de Chakra, états non purs ?]
%\newpage

 \section{Product of geometries and the Higgs field}
\label{section:almostcg}
We now consider the metric aspect of the product of
spectral triples. Recall that a spectral triple $(\A, \HH, D)$ (with
representation $\pi$) is
\emph{graded} if there exists a grading $\Gamma$ of $\HH$ (that is
a selfadjoint operator $\Gamma$  such that $\Gamma^2=\I$) which satisfies
\begin{equation}
  \label{eq:150}
\Gamma D = - D \Gamma, \quad   [\Gamma, \pi(a)]=0\quad \forall a\in\A.
\end{equation}
Given two spectral triples $T_1 = (\A_1, \HH_1, D_1)$,
$T_2 = (\A_2, \HH_2, D_2)$ where we assume that $T_1$ is graded
with grading $\Gamma_1$, the product
\begin{equation}
  \label{eq:78}
  \A=\A_1 \otimes \A_2,\quad \HH=\HH_1\otimes \HH_2,\quad D= D_1\otimes \I_2
  + \Gamma_1\otimes D_2
\end{equation}
is again a spectral triple \cite{Connes:1996fu} that we denote
\begin{equation}
  \label{eq:83bis}
  T:= T_1 \times T_2.
\end{equation}
We are interested in
the spectral distance $d$ associated to $T$, and how it is
related to the distance $d_1, d_2$ associated to $T_1$ and $T_2$.

General  results on that matter are recalled in \S
\ref{section:pythagoras}. In \S \ref{section:almost} we focus on
the case where $T_1$ is the usual spectral triple of a closed (spin) manifold 
described in  \eqref{eq:13} - \eqref{eq:142} and $T_2$ a finite dimensional spectral
triple as investigated in section \ref{section:finite}. In that case,
the product $T$ describes a slightly noncommutative generalization of a
manifold, called \emph{almost commutative geometry}, which is important
for physical applications since it is at the hearth of the
description of the standard model of particle physics, as explained in
\S \ref{higgs}. The bundle structure of the space of pure states then
also opens interesting links with
sub-Riemannian geometry. This is the object of section \ref{subriemannian}.

\subsection{Pythagoras inequality}
\label{section:pythagoras}

Till recently, the metric aspect of product of spectral triples had
been studied mainly for almost commutative geometries. In particular, it came out
that for the spectral triple describing the standard model of
elementary particles, the
distance $d$ between pure states satisfies the Pythagoras equality with respect to the
distances $d_1$ on the manifold  and the distance $d_2$ of the finite dimensional
spectral triple describing the gauge degrees of freedom \cite{Martinetti:2002ij}.  
A similar result was found for the product of the Moyal plane with the
two-point space of  \S \ref{sec:discretespaces} \cite{Martinetti:2011fko}. 
This raises the
question whether the product \eqref{eq:78} is always orthogonal in the
sense of Pythagoras theorem. By this we intend that given two
separable{\footnote{A state $\varphi\in\sa$ is said separable if it decomposes as the
product of two states $\varphi_1\in{\mathcal S}(\A_1)$,
$\varphi_2\in{\mathcal S}(\A_2)$}} states~of~$\A=\A_1\otimes \A_2$,
\begin{equation}
  \label{eq:84}
  \varphi:=  \varphi_1 \otimes \varphi_2,\quad \varphi':=\varphi'_1\otimes \varphi'_2,
\end{equation}
does one have - at least between pure states - that 
\begin{equation}
d^2(\varphi, \varphi')  \quad\text{equals}\quad d_1^2\,(\varphi_1, \varphi'_1) +
d_2^2\,(\varphi_2, \varphi'_2)\;?
\label{eq:85bis}
\end{equation}

In
\cite{DAndrea:2012fkpm} we proved the following Pythagoras inequalities for
the product of arbitrary unital spectral triples. For a complete and
pedagogical treatment on that matter, as well as some significant generalizations, we invite the reader to see the contribution
of F. D'Andrea in this volume \cite{DAndrea:2015ab}.
\begin{thm}\cite[Theo. 5]{DAndrea:2012fkpm}
\label{thm:Pythagoras}
Given the product \eqref{eq:78} of two spectral triples $(\A_i, \HH_i,
D_i)$, $i=1,2$ and two separable states
$\varphi=\varphi_1\otimes\varphi_2$ and
$\varphi'=\varphi_1'\otimes\varphi_2'$ of $\A$, one has:
\begin{equation}
\label{ineqPyth:1}
d(\varphi,\varphi') \leq d_1(\varphi_1,\varphi'_1)+d_2(\varphi_2,\varphi'_2) \;.
\end{equation}
Furthermore, if the spectral triples are unital, then
\begin{equation}
d(\varphi,\varphi') \geq \sqrt{d_1(\varphi_1,\varphi'_1)^2+d_2(\varphi_2,\varphi'_2)^2} \;.
\label{eq:mainB}
\end{equation}
\end{thm}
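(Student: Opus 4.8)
The plan is to prove the two inequalities by opposite constructions: \eqref{ineqPyth:1} by \emph{slicing} an arbitrary element of $L_D(\A)$ into admissible test elements for the two factors, and \eqref{eq:mainB} by \emph{assembling} a product test element out of near-optimal elements of each factor. Both rest on expanding, for $a\in\A_1\otimes\A_2$,
\begin{equation*}
[D,a]=P+Q,\qquad P:=[D_1\otimes\I,a],\quad Q:=[\Gamma_1\otimes D_2,a],
\end{equation*}
where $D$ is the product Dirac operator of \eqref{eq:78}. Writing $\tilde\Gamma:=\Gamma_1\otimes\I$, the grading axiom $[\Gamma_1,a_1]=0$ (identifying elements with their representation) makes $\tilde\Gamma$ commute with every $a\in\A$ and with $\Gamma_1\otimes D_2$, while $\Gamma_1 D_1=-D_1\Gamma_1$ from \eqref{eq:150} makes it anticommute with $D_1\otimes\I$. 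A one-line computation then gives $\tilde\Gamma P\tilde\Gamma=-P$ and $\tilde\Gamma Q\tilde\Gamma=Q$, so that in the $\pm1$-eigenspace splitting of $\tilde\Gamma$ the operator $P$ is off-diagonal and $Q$ block-diagonal. Compressing $P+Q$ to the off-diagonal and to the diagonal parts respectively yields at once
\begin{equation*}
\norm{P}\leq\norm{[D,a]},\qquad \norm{Q}=\norm{[\I\otimes D_2,a]}\leq\norm{[D,a]},
\end{equation*}
the middle equality because $Q=(\Gamma_1\otimes\I)[\I\otimes D_2,a]$ with $\Gamma_1\otimes\I$ unitary.

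For \eqref{ineqPyth:1} I would fix $a\in L_D(\A)$ and telescope
\begin{equation*}
\varphi(a)-\varphi'(a)=\big(\varphi_1-\varphi_1'\big)(b_1)+\big(\varphi_2-\varphi_2'\big)(b_2),
\end{equation*}
where $b_1:=(\id\otimes\varphi_2)(a)$ and $b_2:=(\varphi_1'\otimes\id)(a)$ are obtained by the completely positive contractive slice maps attached to $\varphi_2$ and $\varphi_1'$. Since a slice map on one leg commutes with the Dirac operator on the other, $[D_1,b_1]=(\id\otimes\varphi_2)(P)$ and $[D_2,b_2]=(\varphi_1'\otimes\id)([\I\otimes D_2,a])$; contractivity of the slices together with the two norm bounds above gives $\norm{[D_1,b_1]}\leq1$ and $\norm{[D_2,b_2]}\leq1$. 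Hence $b_i\in L_{D_i}(\A_i)$ and $|(\varphi_i-\varphi_i')(b_i)|\leq d_i(\varphi_i,\varphi_i')$; summing and taking the supremum over $a$ proves \eqref{ineqPyth:1}. Note that unitality is not used here.

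For \eqref{eq:mainB} I would use the units to form, from self-adjoint $a_i\in L_{D_i}(\A_i)$ with $\norm{[D_i,a_i]}=1$, the self-adjoint element $a:=\lambda\,a_1\otimes\I+\mu\,\I\otimes a_2$ with real $\lambda,\mu$. Because $[\Gamma_1,a_1]=0$ one gets $[D,a]=\lambda\,[D_1,a_1]\otimes\I+\mu\,\Gamma_1\otimes[D_2,a_2]$, and the two summands \emph{anticommute}, since $\Gamma_1$ anticommutes with $[D_1,a_1]$ so their cross terms cancel. Squaring the anti-self-adjoint commutator then kills the cross term and leaves
\begin{equation*}
\norm{[D,a]}^2=\lambda^2\,\norm{[D_1,a_1]}^2+\mu^2\,\norm{[D_2,a_2]}^2=\lambda^2+\mu^2,
\end{equation*}
the last step using that the commuting negative tensor factors $[D_1,a_1]^2\otimes\I$ and $\I\otimes[D_2,a_2]^2$ add their spectral infima. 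Thus $a\in L_D(\A)$ as soon as $\lambda^2+\mu^2\leq1$, while $\varphi(a)-\varphi'(a)=\lambda\,(\varphi_1-\varphi_1')(a_1)+\mu\,(\varphi_2-\varphi_2')(a_2)$; optimizing this linear form over the unit disc by Cauchy--Schwarz and letting $a_1,a_2$ approach optimality yields $d(\varphi,\varphi')\geq\sqrt{d_1^2+d_2^2}$.

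The main obstacle is the norm bookkeeping feeding both bounds, namely showing that $P$ and $Q$ occupy complementary parts of the $\tilde\Gamma$-grading and, for \eqref{eq:mainB}, that the two commutator pieces anticommute exactly so that $\norm{\cdot}^2$ splits additively; this is precisely where the grading hypothesis on $T_1$ and the reduction to self-adjoint test elements (searching the supremum over self-adjoints) are essential, and where unitality enters through the units used to lift $a_1,a_2$. A secondary technical point, which I would treat as in \cite{DAndrea:2012fkpm}, is to make the slice maps and the identities for $[D_i,b_i]$ rigorous on the appropriate completion, since $b_i$ a priori lies in a multiplier or weak closure rather than in $\A_i$, and to handle the case in which the optimal $a_i$ are only attained in the limit.
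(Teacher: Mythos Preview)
Your argument is correct and follows essentially the same route as the proof in \cite{DAndrea:2012fkpm}, to which the present survey defers without giving its own proof: the upper bound via the telescoping identity and the contractive slice maps $(\id\otimes\varphi_2)$, $(\varphi_1'\otimes\id)$, together with the $\tilde\Gamma$-grading to separate $\|P\|$ and $\|Q\|$; and the lower bound via the anticommutation of $[D_1,a_1]\otimes\I$ with $\Gamma_1\otimes[D_2,a_2]$, which makes $\|[D,a]\|^2$ additive on elements of the form $\lambda\,a_1\otimes\I+\mu\,\I\otimes a_2$. The technical caveats you flag (extending the slice maps beyond the algebraic tensor product, and approaching optimal $a_i$ by sequences) are exactly the ones handled in \cite{DAndrea:2012fkpm}.
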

Combining \eqref{eq:mainB} and \eqref{ineqPyth:1} one obtains a
noncommutative version of Pythagoras theorem that
holds true for any separable states in the product of arbitrary unital
spectral triples (it was first proven in \cite[Prop.~II.4]{Martinetti:2011fko} for
pure states, with 
one of the spectral triples the two-point space $\C^2$).
\begin{cor}\cite{DAndrea:2012fkpm} Let
  $\varphi=\varphi_1\otimes\varphi_2$,
  $\varphi'=\varphi'_1\otimes\varphi'_2$ be two separable states in
  the product of two unitary spectral triples. Then
  \begin{equation}
   \sqrt{d_1(\varphi_1,\varphi'_1)^2+d_2(\varphi_2,\varphi'_2)^2}\leq
  d(\varphi,\varphi') \leq
  \sqrt{2}\sqrt{d_1(\varphi_1,\varphi'_1)^2+d_2(\varphi_2,\varphi'_2)^2}
  \;.\label{eq:79}
  \end{equation}
Furthermore these inequalities are optimal, in that there exist
examples that saturate the bounds.
\end{cor}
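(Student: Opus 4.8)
The plan is to read off both inequalities of \eqref{eq:79} directly from Theorem~\ref{thm:Pythagoras}, and then to produce the saturating configurations for the two sides separately. The left-hand inequality is nothing but \eqref{eq:mainB}, so there is nothing to do there. For the right-hand inequality I would start from the additive bound \eqref{ineqPyth:1}, $d(\varphi,\varphi')\leq d_1(\varphi_1,\varphi_1')+d_2(\varphi_2,\varphi_2')$, and combine it with the elementary estimate $x+y\leq\sqrt{2}\,\sqrt{x^2+y^2}$, valid for all $x,y\geq 0$ because it is just a rewriting of $(x-y)^2\geq 0$. Taking $x=d_1(\varphi_1,\varphi_1')$ and $y=d_2(\varphi_2,\varphi_2')$ yields the upper bound at once. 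The useful by-product of this derivation is the equality analysis: the numerical estimate is an equality precisely when $x=y$, so the upper bound of \eqref{eq:79} can be attained only when \emph{simultaneously} the additive bound \eqref{ineqPyth:1} is tight \emph{and} the two component distances coincide, $d_1(\varphi_1,\varphi_1')=d_2(\varphi_2,\varphi_2')$.

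For optimality of the lower bound I would exhibit an explicit finite example. Take $T_1$ and $T_2$ to be two copies of the two-point geometry of \S\ref{sec:discretespaces}, the first graded by $\Gamma_1=\mathrm{diag}(1,-1)$, with off-diagonal Dirac parameters $m_1,m_2$, so that $d_i(\delta_1,\delta_2)=1/|m_i|$ by \eqref{eq:28}. The product algebra is $\C^2\otimes\C^2=\C^4$ acting diagonally on $\C^4$, and the product Dirac operator \eqref{eq:78} becomes (up to signs) the incidence matrix of a four-cycle graph. Computing the spectral distance between the diagonal pure states $\delta_1\otimes\delta_1$ and $\delta_2\otimes\delta_2$ amounts to maximizing the gap of a diagonal element under the operator-norm constraint on the resulting antisymmetric commutator; the optimum turns out to equal $\sqrt{d_1^2+d_2^2}$, realizing the Pythagoras equality and hence saturating \eqref{eq:mainB}. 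The same saturation occurs for almost commutative geometries, as originally observed in \cite{Martinetti:2002ij}, and for the Moyal plane times a two-point space \cite{Martinetti:2011fko}.

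Optimality of the upper bound is the delicate point, and where I expect the real obstacle to lie. By the equality analysis above, one must find separable states for which the additive bound \eqref{ineqPyth:1} — rather than the Pythagorean one — is an equality, with balanced legs $d_1=d_2$. This is a genuinely non-generic, ``taxicab'' situation: the natural pure-state configurations, such as the diagonal states of the four-cycle just discussed, realize the Pythagorean lower bound instead, so a different example is required, in general involving non-pure states and/or a tailored Dirac operator (the explicit constructions being those of \cite{DAndrea:2012fkpm}). The hard part is precisely to verify that the additive bound is attained, i.e. to exhibit an optimal element for the coupled operator $D_1\otimes\I+\Gamma_1\otimes D_2$ along which no ``diagonal shortcut'' shortens the two-leg path; because the grading $\Gamma_1$ entangles the two factors, this optimal element need not be a tensor product of the factorwise optimal elements, and identifying it is the crux of the computation. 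Once one such example with $d_1=d_2$ is produced, rescaling and permuting the data furnishes a whole family, establishing that the constant $\sqrt{2}$ cannot be improved.
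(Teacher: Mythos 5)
Your derivation coincides with the paper's: the lower bound is \eqref{eq:mainB} verbatim, and the upper bound is \eqref{ineqPyth:1} combined with the elementary inequality $x+y\le\sqrt{2}\sqrt{x^2+y^2}$, which is exactly how the survey obtains \eqref{eq:79}. For the optimality claim the paper, like you, defers the explicit saturating examples to \cite{DAndrea:2012fkpm}, so your equality analysis (tightness of \eqref{ineqPyth:1} together with $d_1=d_2$ for the upper bound, and the two-point-space products for the lower bound) is consistent with, and somewhat more detailed than, what the paper records.
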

Notice that \eqref{ineqPyth:1} is not the triangle inequality
\begin{equation}
  \label{eq:80bisb}
  d(\varphi, \varphi') \leq d(\varphi_1\otimes \varphi_2, \varphi_1\otimes\varphi'_2) +
  d(\varphi_1\otimes \varphi'_2, \varphi'_1\otimes\varphi'_2) 
\end{equation}
because
nothing guarantees that the distance between two states $\varphi_1\otimes\varphi_2$,
$\varphi_1\otimes \varphi'_2$ that differ only on one of the algebras gives back the distance
on a single spectral triple, that is $d(\varphi_1\otimes \varphi_2, \varphi_1\otimes\varphi'_2)$
    equals $d_2(\varphi_2, \varphi'_2)$. % and 
    % $d(\varphi_1\otimes \varphi'_2, \varphi'_1\otimes\varphi'_2) =
    % d_1(\varphi_1, \varphi_2)$
    In fact, this comes out as a corollary, initially
proven in \cite{Martinetti:2002ij}. 
\begin{cor}\cite[Cor. 6]{DAndrea:2012fkpm} 
\label{coroll:distancereduite} Let
  $\varphi=\varphi_1\otimes\varphi_2$,
  $\varphi'=\varphi'_1\otimes\varphi'_2$ be two separable states in
  the product of two unitary spectral triples. If
  $\varphi_2=\varphi'_2$, then $d(\varphi,\varphi')=d_1(\varphi_1,\varphi'_1)$,
and similarly if $\varphi_1=\varphi'_1$ then
$d(\varphi,\varphi')=d_2(\varphi_2,\varphi'_2)$.
\end{cor}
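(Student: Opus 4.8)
The plan is to deduce the statement directly from the Pythagoras inequalities of Theorem~\ref{thm:Pythagoras}, since the corollary is precisely the degenerate case in which one of the two factor-distances vanishes. The starting observation is that the spectral distance of any triple is zero on a coinciding pair of states (as noted just below \eqref{eq:1}); hence the hypothesis $\varphi_2=\varphi_2'$ forces $d_2(\varphi_2,\varphi_2')=0$, and symmetrically $\varphi_1=\varphi_1'$ forces $d_1(\varphi_1,\varphi_1')=0$.

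Assuming first $\varphi_2=\varphi_2'$, I would substitute $d_2(\varphi_2,\varphi_2')=0$ into the two bounds of Theorem~\ref{thm:Pythagoras}. The subadditivity bound \eqref{ineqPyth:1} collapses to $d(\varphi,\varphi')\leq d_1(\varphi_1,\varphi_1')$, while the lower Pythagoras bound \eqref{eq:mainB} collapses to $d(\varphi,\varphi')\geq \sqrt{d_1(\varphi_1,\varphi_1')^2}=d_1(\varphi_1,\varphi_1')$, the last equality using only that a distance is non-negative. Squeezing the two then yields $d(\varphi,\varphi')=d_1(\varphi_1,\varphi_1')$, and the case $\varphi_1=\varphi_1'$ is obtained verbatim by exchanging the roles of the indices $1$ and $2$. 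Note that this argument also covers infinite distances, both bounds returning $+\infty$ when $d_1(\varphi_1,\varphi_1')=+\infty$, and that the unital hypothesis enters only through the lower bound \eqref{eq:mainB}, so the statement should not be weakened to arbitrary triples.

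Because the whole content is carried by Theorem~\ref{thm:Pythagoras}, there is essentially no obstacle left at the level of the corollary itself. For a reader wanting a self-contained check of the easy inequality $d(\varphi,\varphi')\geq d_1(\varphi_1,\varphi_1')$, I would test the candidate elements $a_1\otimes\I_2$, with $a_1$ (near-)optimal for the pair $\varphi_1,\varphi_1'$ in $T_1$: since the grading commutes with the representation, $[\Gamma_1,a_1]=0$ by \eqref{eq:150}, so the cross term $[\Gamma_1\otimes D_2,\,a_1\otimes\I_2]=[\Gamma_1,a_1]\otimes D_2$ vanishes and $[D,a_1\otimes\I_2]=[D_1,a_1]\otimes\I_2$. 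Hence $a_1\otimes\I_2$ lies in $L_D(\A)$ whenever $a_1\in L_{D_1}(\A_1)$, and, using $\varphi_2(\I_2)=\varphi_2'(\I_2)=1$, it reproduces $|\varphi_1(a_1)-\varphi_1'(a_1)|$. The genuinely substantive direction is the reverse inequality $d(\varphi,\varphi')\leq d_1(\varphi_1,\varphi_1')$, which is exactly the part one does not want to redo by hand, and which is best left to the subadditivity estimate \eqref{ineqPyth:1} of the theorem.
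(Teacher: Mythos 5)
Your argument is correct and is exactly the derivation the paper intends: the corollary is presented immediately after Theorem~\ref{thm:Pythagoras} as the degenerate case $d_2(\varphi_2,\varphi_2')=0$, squeezed between the subadditivity bound \eqref{ineqPyth:1} and the lower Pythagoras bound \eqref{eq:mainB}. Your additional direct check of the lower bound via $a_1\otimes\I_2$ is a sound (and standard) verification but is not needed once the theorem is invoked.
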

\medskip

To conclude the generalities on the product of spectral triples, let us mention an application of the projection lemma \ref{projectionlem}. It is not of
great interest in se because of the strong conditions required, but it 
turns out to be extremely useful to compute the distance in the
standard model of elementary particles, as explained in the next
subsection. Let us consider the product \eqref{eq:78} and
restrict the attention to normal states for, say, the algebra
$\A_2$. To any such state $\varphi$ is associated a support, namely a projection $s\in\aa_2$ such that $\varphi$ is faithful on $s\aa_2 s$. For a pure state $\omega$, being normal implies 
\begin{equation}
  \label{3eq:3}
  sas = \omega(a) s\quad \forall a\in\aa_2.
\end{equation}
We say that two normal pure states $\omega_1, \omega_2$ are in direct sum if
\begin{equation}
s_1 a s_2 = 0 \quad \forall a\in\aa_2.\label{eq:87}
\end{equation}
If furthermore the sum $p=: s_1 + s_2$ of their support
commutes with $D_2$, then the distance in the product \eqref{eq:78} projects down to a two point-case $\aa_1\ot \cc^2$. 
 \begin{prop}\cite{Martinetti:2002ij}
\label{reduction}  Let $d$ be the distance associated with
the product $T= T_1\times T_2$. Let $\omega_2$, $\omega_2'$ be normal pure states of
$\aa_2$ in direct sum, and whose sum of supports $p$ commutes with
$D_2$. Then for any pure states $\omega_1$, $\omega'_1$ of $\A_1$ one has 
\begin{equation}
d\lp \omega_1\otimes \omega_2, \omega_1'\otimes \omega_2'\rp =
d_e\!\lp\omega_1\otimes\omega_c, \omega_1'\otimes \omega'_c\rp
\label{eq:82}
\end{equation}
where $\omega_c,\omega'_c$ are the two pure states of $\,\cc^2$ while $d_e$ is the distance
associated
to the product  $T_e:= T_1\times T_r$ where $T_r :=(\A_r, \HH_r,
D_r)$ with
\begin{equation}
 \aa_r:=\cc^2,\quad \hh_r:=
 p \hh_2,\quad
D_r:= p D_2 p  \big|_{\hh_r}.
\label{eq:88}
\end{equation}
\end{prop}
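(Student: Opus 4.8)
The plan is to exploit the projection lemma \ref{projectionlem} applied to the product triple $T=T_1\times T_2$, using the projection $\I_1\otimes p$ on $\HH=\HH_1\otimes\HH_2$, where $p=s_1+s_2$. First I would observe that since $\omega_2,\omega_2'$ are normal pure states with supports $s_1,s_2$ satisfying \eqref{3eq:3}, both states are supported entirely on $p\HH_2$; concretely $\omega_2(a)=\omega_2(pap)$ and likewise for $\omega_2'$, so the separable states $\omega_1\otimes\omega_2$ and $\omega_1'\otimes\omega_2'$ factor through the compression by $\I_1\otimes p$. The hypothesis that $p$ commutes with $D_2$ is exactly what makes $\I_1\otimes p$ commute with the product Dirac operator $D=D_1\otimes\I_2+\Gamma_1\otimes D_2$, since $\I_1\otimes p$ commutes with $D_1\otimes\I_2$ trivially and with $\Gamma_1\otimes D_2$ because $[p,D_2]=0$. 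This verifies the hypotheses of lemma \ref{projectionlem} for the projection $e=\I_1\otimes p$, and the projected triple is precisely $T_e=T_1\times T_r$ with $T_r$ as defined in \eqref{eq:88}.

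Next I would apply lemma \ref{projectionlem} directly: for states $\varphi,\varphi'$ of the compressed algebra $\A_e$ one has $d_e(\varphi,\varphi')=d(\varphi\circ\alpha_e,\varphi'\circ\alpha_e)$, where the right-hand side is computed in the full triple $T$. The task is then to identify the states. I would show that under the compression $\alpha_e(\,\cdot\,)=e\pi(\,\cdot\,)e$, the two pure states $\omega_c,\omega_c'$ of $\cc^2\cong p\aa_2 p / (\text{off-diagonal})$ pull back exactly to $\omega_2,\omega_2'$. This is where the direct-sum condition \eqref{eq:87}, $s_1 a s_2=0$, plays its role: it guarantees that $p\aa_2 p$ decomposes as a genuine two-block diagonal, so that the compressed algebra $\aa_r$ really is $\cc^2$ (or at least that its two relevant pure states are the characters reading off the two diagonal blocks), and that $\omega_c\circ\alpha_e=\omega_2$, $\omega_c'\circ\alpha_e=\omega_2'$ as states of $\aa_2$. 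Tensoring with the identity on the $T_1$ factor then gives $(\omega_1\otimes\omega_c)\circ\alpha_e=\omega_1\otimes\omega_2$ and similarly for the primed states, which is exactly \eqref{eq:82}.

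The main obstacle I anticipate is the careful bookkeeping in the identification step: one must check that the two distinguished pure states of the compressed algebra $\aa_r=\cc^2$ correspond to $\omega_2,\omega_2'$ under \eqref{eq:51}, and in particular that the supports $s_1,s_2$ become the two minimal projections of $\cc^2\subset\bb(p\HH_2)$. The direct-sum hypothesis \eqref{eq:87} is precisely the algebraic condition ensuring there are no cross terms obstructing this identification, and the normality condition \eqref{3eq:3} ensures the states act as characters on their supporting blocks. A minor subtlety is that lemma \ref{projectionlem} requires the spectral triple to be unital and the projection to live in $\bb(\HH)$ and commute with $D$; the first holds by assumption on $T_1,T_2$ (hence on $T$), and the commutation was verified above. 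Once the states are correctly matched, \eqref{eq:82} follows immediately from \eqref{eq:49}, with no further distance computation needed.
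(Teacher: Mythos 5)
Your proposal is correct and follows exactly the route the paper intends: the survey introduces Proposition \ref{reduction} explicitly as an application of the projection Lemma \ref{projectionlem}, applied to the projection $e=\I_1\otimes p$, which commutes with $D=D_1\otimes\I_2+\Gamma_1\otimes D_2$ precisely because $[p,D_2]=0$. Your identification of the compressed algebra $p\aa_2 p\simeq\cc^2$ via the direct-sum condition $s_1as_2=0$ and of the pulled-back states $\omega_c\circ\alpha_e=\omega_2$, $\omega'_c\circ\alpha_e=\omega'_2$ is the same bookkeeping underlying the original argument of the cited reference.
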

Note that this proposition remains true for an algebra $\aa_2$ on a
field other than $\cc$, assuming that the  notion of states is
still available. For instance in the standard model one deals with real algebras.

\subsection{Almost commutative geometries and fluctuation of the metric}
\label{section:almost}
 
A slightly noncommutative generalization of a manifold
is obtained by taking the product \eqref{eq:78} of the 
spectral triple of a closed, spin manifold $\M$, that is (see \eqref{eq:Diracoperateur})
\begin{equation}
T_1 = (\cinf, L^2(\M, S), \ds),
\label{eq:23}
\end{equation}
by a finite
dimensional spectral triple $T_2 = (\A_F, \HH_F, D_F)$. Namely one considers
\begin{equation}
  \label{eq:88bis}
  \A = \cinf \otimes \A_F, \; \HH= L^2(M, S)\otimes \HH_F,\;
  D=\ds\otimes \I_F + \gamma^5\otimes D_F
\end{equation}
where $\I_F$ is the identity operator on $\HH_F$ and $\gamma^5$ is the
grading of $L^2(\M,S)$ given by the product of the Dirac
matrices. The center of $\A$ is infinite dimensional (as an algebra)
while the noncommutative part is finite dimensional, hence the name
\emph{almost commutative geometries} often used to describe \eqref{eq:88bis}.
\smallskip

Because $\cinf$ is nuclear, the space of pure states $\pa$ of $\A$ 
 is \cite{Kadison1983}
 \begin{equation}
 \pp(\A) \simeq \pp(\cinf) \times \pp(\aa_F),
\label{eq:86}
 \end{equation}
and $\sa$ its convex hull.  $\pp(\cinf)$ is homeomorphic to $\M$ and $\pp(\aa_F)$ carries a
 natural action of the special unitarie group $SU(\A_F)$ of $\A_F$,
 \begin{equation}
   \label{eq:16}
   \omega\to \omega\circ \alpha_u \quad \forall \omega \in \pp(\A_F)
 \end{equation}
with $\alpha_u$ the inner automorphism of $\A_F$ given by conjugate
action of $u\in  SU(\A_F)$. In other terms,  $\pp(\A)$ is a trivial
$SU(\A_F)$-bundle on $\M$ with fiber $\pp(\A_F)$.
 \medskip

In the study of noncommutative algebras (or more generally
noncommutative rings), there exists a notion of equivalence
which is weaker than isomorphism but turns out to be very fruitful,
that of Morita equivalence. Given a spectral triple $(\A, \HH, D)$,
there is a generic procedure to export the
geometrical structure to a Morita equi\-valent algebra
\cite{Connes:1996fu}. Taking advantage of the self-Morita equivalence of $\A$, this
procedure yields a natural way to introduce a connection in the geometry $(\A, \HH,
D)$, by substituting the operator $D$ with a \emph{covariant Dirac
  operator} $D_A$, such that $(\A, \HH, D_A)$ is still a spectral
triple. Explicitly, this covariant operator is
\begin{equation}
\label{diraccov} D_A := D + A + JAJ^{-1},
\end{equation} where $A$ is a selfadjoint element of the set of
generalized $1$-forms \footnote{We use Einstein summation over repeated indices in alternate
positions (up/down).}
\begin{equation}
  \Omega^1_D:=\left\{ a^i [D, b_i], \quad a^i, b_i \in \A\right\},
\end{equation}
and $J$ is the \emph{real structure}. The latter is a generalization to the
non-commutative setting of the charge conjugation operator on
spinors. The only thing we need to know about it at the moment is that for any $a, b\in \A$ one has  $[JaJ^{-1},b]=0$,
so that substituting $D$ with $D_A$ in the distance formula yields
\begin{equation}
\label{distancefluc} d_A(\varphi, \varphi') := \suup{a\in\aa}\left\{
\abs{\varphi(a) - \varphi'(a)}\,,\; \norm{[D +A, a]}\leq 1\right\}.
\end{equation}
There is no reason  for $||\left[D_A,a\right]||$ to equal $||\left[D,a\right]||$,
neither for the distance $d_A$ computed with the covariant Dirac operator $D_A$ to equal the distance $d$ computed with the initial operator
$D$. That is why one talks of a
{\emph{fluctuation of the metric}}.

For almost commutative geometries \eqref{eq:88bis},   a generalized
$1$-form in $\Omega^1_D$ is \cite{D.-Kastler:1993aa}:
 \begin{equation}
A = -i\gamma^\mu f_\mu^i \ot m_i + \gamma^5 h^j\ot n_j
\label{eq:9k}
\end{equation}
where
$m_i\in\aa_F$, $h^j, f_\mu^i\in\cinf$, while  
\begin{equation}
n_j\in\Omega_{D_F}^1:= \left\{ a^i[D_F, b_i], \; a^i, b_i \in
  \A_F\right\}.
\label{eq:81bis}
\end{equation}
Omitting the tensor product,  a selfadjoint $1$-forms $A=A^*\in\Omega^1_D$ thus decomposes as the
sum
\begin{equation}
  \label{eq:151}
 A= -i\gamma^\mu A_\mu + \gamma^5 H,
\end{equation}
where 
\begin{equation}
\label{gaugg}
A_\mu:= f^i_\mu m_i
\end{equation}
is a  $\A_F$-valued skew-adjoint
1-form field over $\mm$, and 
\begin{equation}
\label{gaugh}
H:= h^j
n_j
\end{equation}
is a $\Omega^1_{D_F}$-valued selfadjoint scalar field.
 The part
of the covariant Dirac operator $D_A$ relevant for the distance
formula is the fluctuated operator
\begin{equation}
\label{drelevant}
\ds + A = \ds + \gamma^5
 H -i\gamma^\mu A_\mu  . 
\end{equation}

We investigate below how the distance on the bundle of pure
states \eqref{eq:86} is affected by the two pieces of the fluctuation: the scalar fluctuation
$H$ in \S \ref{higgs}, and the gauge fluctuation $A_\mu$ in \S \ref{subriemannian}.

\subsection{Two sheet models and the metric interpretation of the
  Higgs}\label{higgs}

Let us consider a scalar fluctuation of the metric, namely
formula (\ref{distancefluc}) with $D+A$ given by (\ref{drelevant}) where
\begin{equation}
 H \neq 0 ,\; A_\mu = 0.
\label{eq:91}
 \end{equation}
This amounts to take the product of the manifold by an internal geometry 
\begin{equation}
T_F^x := (\aa_F, \hh_F, D_F(x) := D_F + H(x))\label{eq:8}
\end{equation}
 in which $D_F$ is now a non-constant
section of $\text{End}\, \hh_F$.
\smallskip

Let $d_{\text{geo}}$, $d_x$, $d_H$ denote  the geodesic distance on $\mm$, the
spectral distance associated to the spectral triple $T_F^x$, and
the distance \eqref{distancefluc} with generalized $1$-form $A$ given by \eqref{eq:91}. %\comment{?}
% $\aa_E$ being nuclear, the pure states of $\aa_E\ot\aa_I$ are given by couples $(\omega_x,\omega_I)$ with $\omega_x\in\pp(\cinf),
% \omega_I\in\pp(\aa_I)$. In other term $\pp(\aa) = P\rightarrow \mm$ is a trivial bundle with fiber $\pp(\aa_I)$.
Corollary \ref{coroll:distancereduite} and proposition \ref{reduction}
gives respectively
\noindent
\begin{prop}\cite[Theo. 2']{Martinetti:2002ij}
\label{Pyth-proj}
For any pure states $\delta_x$, $\delta_y$ of $\cinf$ and  $\omega_F,
\omega_F' \in \pp(\aa_F)$, one has
\begin{eqnarray*}
d_H(\delta_x\otimes\omega_F, \,\delta_x\otimes\omega_F') &=& d_x(\omega_F,\omega_F'),\\
d_H(\delta_x\otimes\omega_F, \,\delta_y\otimes\omega_F) &=& d_{\text{geo}}(x,y).
\end{eqnarray*} 
\end{prop}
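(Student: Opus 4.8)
The plan is to treat the scalarly fluctuated geometry as a product $T_1\times T_F^x$ of the manifold triple \eqref{eq:23} with the internal triple $T_F^x=(\aa_F,\hh_F,D_F(x):=D_F+H(x))$, the only novelty being that the internal Dirac operator now varies with the base point. Writing $a\in\A=\cinf\otimes\aa_F$ as an $\aa_F$-valued function $x\mapsto a(x)$ and using that $\gamma^5$ commutes with multiplication operators while anticommuting with the $\gamma^\mu$, the fluctuated commutator splits as
\begin{equation*}
[\ds+A,a]=-i\gamma^\mu\,\partial_\mu a+\gamma^5\,[D_F(x),a(x)],
\end{equation*}
a section over $\M$ whose two Clifford components are mutually ``orthogonal'' at each point. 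The first thing I would record is that, because of this orthogonality, the operator norm of $[\ds+A,a]$ dominates pointwise both the horizontal piece $\norm{-i\gamma^\mu\partial_\mu a(x)}$ and the vertical piece $\norm{[D_F(x),a(x)]}$; this is the operator-level Pythagoras structure underlying Theorem \ref{thm:Pythagoras}.

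For the second identity (horizontal, same internal state $\omega_F$) I would invoke Corollary \ref{coroll:distancereduite} in the direction $\varphi_2=\varphi_2'$. The lower bound $d_H\geq d_{\text{geo}}(x,y)$ comes from the test elements $a=f\otimes\I_F$ with $f$ one-Lipschitz, for which the vertical part of the commutator vanishes identically (since $[D_F(x),\I_F]=0$), so $\norm{[\ds+A,a]}=\norm{[\ds,f]}\leq 1$ while $\varphi(a)-\varphi'(a)=f(x)-f(y)$. For the matching upper bound, given any admissible $a$ I would set $g(x'):=\omega_F(a(x'))$ and use the orthogonality above to bound the Lipschitz norm of $g$ by the horizontal part of the commutator, whence $|\varphi(a)-\varphi'(a)|=|g(x)-g(y)|\leq d_{\text{geo}}(x,y)$. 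Crucially the Higgs field $H$ drops out of both bounds, exactly as the corollary predicts the reduction to $d_1=d_{\text{geo}}$.

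For the first identity (vertical, same base point $\delta_x$) I would read off from Corollary \ref{coroll:distancereduite} in the direction $\varphi_1=\varphi_1'$ that $d_H=d_2$, with $d_2$ the spectral distance $d_x$ of the frozen internal triple $T_F^x$. The upper bound $d_H\leq d_x$ is immediate: only $a(x)\in\aa_F$ enters $\omega_F(a(x))-\omega_F'(a(x))$, and evaluating the split commutator at $x$ gives $\norm{[D_F(x),a(x)]}\leq 1$, so $a(x)$ lies in the $D_F(x)$-Lipschitz ball and $|\omega_F(a(x))-\omega_F'(a(x))|\leq d_x(\omega_F,\omega_F')$. The reverse inequality is where I would use Proposition \ref{reduction}: one cannot simply lift a near-optimal internal element $m$ to the constant section equal to $m$, because $\norm{[D_F(x'),m]}$ may exceed $1$ away from $x$. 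Instead one freezes the internal geometry at $x$ and reduces, via the projection machinery of Lemma \ref{projectionlem} and Proposition \ref{reduction}, to a genuine position-independent product with the frozen internal triple, for which the corollary applies verbatim.

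The main obstacle is precisely this last step: showing that the vertical distance depends only on the local value $D_F(x)$ of the fluctuated internal operator, and not on $H$ at neighbouring points. The naive cutoff localization around $x$ fails because confining the cutoff to the region where $\norm{[D_F(x'),m]}\leq 1$ forces its gradient to be large, and the horizontal cost then spoils the bound. The clean route is to exploit that the two Clifford components decouple, so the horizontal cost enters only in quadrature and can be driven to zero, together with the explicit reduction of Proposition \ref{reduction}, which replaces the $x$-dependent internal problem by a frozen two-point model in which the optimal element is built by hand; finiteness of the distances involved is controlled throughout by Lemma \ref{infinitelemfinitedim} applied to $\text{Ker}\,L_{D_F(x)}$.
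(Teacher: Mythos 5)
Your argument for the second identity is essentially the correct (and standard) one: the test elements $f\otimes\I_F$ give the lower bound, and the scalar function $g=\omega_F\circ a$ together with the grading trick for $\Gamma=\gamma^5\otimes\I_F$ gives the matching upper bound. The same is true of the upper bound $d_H\leq d_x$ in the first identity. Note that the survey itself offers no proof beyond an appeal to Corollary \ref{coroll:distancereduite} and the citation of the original reference, and that corollary is proved only for a genuine product with a \emph{constant} internal operator $D_2$; so on these points you are actually supplying more than the paper does.

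The gap is in the lower bound $d_H\geq d_x$ of the first identity, and the tools you invoke do not close it. First, Proposition \ref{reduction} requires $\omega_F,\omega_F'$ to be normal pure states \emph{in direct sum} whose supports add up to a projection commuting with $D_F(x)$ for every $x$; for generic pure states of $\aa_F$ (already for two non-orthogonal vector states of $M_n(\C)$) these hypotheses fail, so the proposed ``freezing'' step has nothing to stand on. Second, the decoupling of the two Clifford components only yields the one-sided bound $\norm{[D+A,a]}\geq\max$ of the horizontal and vertical pieces: compressing $[D+A,a]^*[D+A,a]$ by the eigenprojections of $\gamma^5\otimes\I_F$ kills only the cross term, which is off-diagonal with respect to the grading but not zero, so there is no inequality $\norm{[D+A,a]}^2\leq\norm{-i\gamma^\mu\partial_\mu a}^2+\norm{[D_F(\cdot),a]}^2$ and the horizontal cost of a candidate optimal element cannot be ``driven to zero in quadrature''. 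Worse, there is a genuine obstruction rather than a mere technical difficulty: combining your (correct) second identity with the upper bound $d_H(\delta_{x'}\otimes\omega_F,\delta_{x'}\otimes\omega_F')\leq d_{x'}(\omega_F,\omega_F')$ and the triangle inequality gives $d_H(\delta_x\otimes\omega_F,\delta_x\otimes\omega_F')\leq 2\,d_{\text{geo}}(x,x')+d_{x'}(\omega_F,\omega_F')$ for every $x'$, so the claimed equality can hold only if $x\mapsto d_x(\omega_F,\omega_F')$ is $2$-Lipschitz — a nontrivial constraint on the scalar field $H$ that neither you nor the statement assumes. This is exactly why the companion Proposition \ref{kkprop} records the fiber distance as a geodesic distance in $[0,1]\times\M$, which allows detours through neighbouring fibers where the frozen internal distance is shorter. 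A complete proof must either impose such a Lipschitz-type condition on $x\mapsto d_x$ (under which a slowly varying rescaling $a(x')=\lambda(x')m$ can be built) or replace $d_x$ by the corresponding infimum over detours; as written, your proposal, like the survey's one-line reduction, does not confront this.
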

\begin{prop}\cite[Theo. 4']{Martinetti:2002ij}\label{kkprop}
Let $\ou,\od$ be two normal pure states of $\aa_F$ with support $s_1,
s_2$ in direct sum, and  such that the sum of their support
commutes with $D_F(x)$ for all $x$. Then
$$d(\delta_x\otimes\ou, \delta_y\otimes\od)= L'((0,x),(1,y)),$$
where $L'$ is the geodesic distance in the manifold $\mm':= [0,1]\times \mm$ equipped with the metric
\begin{equation}
\left(\begin{array}{cc} \norm{\tilde H(x)}^2 & 0 \\
0 & g^{\mu\nu}(x) \end{array}\right)
\label{eq:89}
\end{equation}
in which $g^{\mu\nu}$ is the metric on $\mm$  and  $\tilde H$ is projection on $s_2\aa_F$ of the restriction of $D_H$ to $s_1\aa_F$.%\comment{?}
\end{prop}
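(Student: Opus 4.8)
The plan is to first collapse the almost-commutative product down to a two-sheeted model over $\M$ by the projection Proposition~\ref{reduction}, and then to match the resulting spectral distance with the geodesic distance of the warped cylinder $\M'=[0,1]\times\M$ equipped with \eqref{eq:89}. First I would note that the projection $e:=\I_{L^2(\M,S)}\otimes p$, with $p:=s_1+s_2$, commutes with the fluctuated operator $\ds\otimes\I_F+\gamma^5\otimes D_F(x)$: the first summand commutes with $\I\otimes p$ trivially, while the second does so precisely because $[p,D_F(x)]=0$ for all $x$, which is the standing hypothesis. Lemma~\ref{projectionlem} and Proposition~\ref{reduction} then reduce the problem to the product $T_1\times T_r$ whose internal part is $\C^2$ carrying the off-diagonal coupling $\tilde H(x)$. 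Writing a self-adjoint $a\in\cinf\otimes\C^2$ as a pair $(f_1,f_2)$ of real functions, one sheet each, a direct computation gives the pointwise block form
\begin{equation*}
[D,a](x)=\begin{pmatrix}-i\gamma^\mu\partial_\mu f_1 & \gamma^5\tilde H(x)(f_2-f_1)\\ \gamma^5\tilde H(x)^*(f_1-f_2) & -i\gamma^\mu\partial_\mu f_2\end{pmatrix}=:M(x),
\end{equation*}
so that $\norm{[D,a]}=\sup_{x\in\M}\norm{M(x)}$, where $M(x)$ is anti-self-adjoint on $S_x\otimes\C^2$.

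For the upper bound $d\leq L'$ I would use that $\norm{M(x)}\leq 1$ is equivalent to $M(x)^*M(x)\leq\I$. Compressing this operator inequality to a single sheet, and using $(-i\gamma^\mu\partial_\mu f_i)^*(-i\gamma^\nu\partial_\nu f_i)=\abs{\nabla f_i}^2\I$ together with $(\gamma^5)^2=\I$, yields the two pointwise constraints
\begin{equation*}
\abs{\nabla f_i(x)}^2+\norm{\tilde H(x)}^2\,\abs{f_1(x)-f_2(x)}^2\leq 1,\qquad i=1,2.
\end{equation*}
Given such an admissible $a$ I would then build the interpolation $F(t,x):=(1-t)f_1(x)+t f_2(x)$ on $\M'$. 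Since $\partial_t F=f_2-f_1$ and $\nabla_x F$ is a convex combination of $\nabla f_1,\nabla f_2$, the two constraints give $\norm{\tilde H}^2(\partial_t F)^2+g^{\mu\nu}\partial_\mu F\,\partial_\nu F\leq 1$, i.e. $F$ is $1$-Lipschitz for the cometric \eqref{eq:89}. As $F(0,x)=f_1(x)$ and $F(1,y)=f_2(y)$, the Kantorovich/geodesic duality $L'((0,x),(1,y))=\sup\{F(1,y)-F(0,x):\norm{dF}_{g'}\leq 1\}$ forces $\abs{f_2(y)-f_1(x)}\leq L'((0,x),(1,y))$, and taking the supremum over $a$ delivers $d\leq L'$.

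The hard part will be the reverse inequality $d\geq L'$, namely producing admissible elements whose value $f_2(y)-f_1(x)$ approaches $L'$. The naive idea of restricting a geodesic-optimal Lipschitz function $F^\ast$ to the two sheets fails: the off-diagonal entry of $M$ involves the finite difference $f_1-f_2=-\int_0^1\partial_t F^\ast\,dt$ rather than the pointwise $\partial_t F^\ast$, so the pointwise constraints — and hence $\norm{M(x)}\leq 1$ — need not survive the restriction, and moreover the off-diagonal block of $M^*M$, proportional to $\gamma^\mu\gamma^5\partial_\mu(f_1-f_2)$, is genuinely present once $\nabla f_1\neq\nabla f_2$. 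My plan to overcome this is to reduce the optimization to one dimension along a minimizing geodesic $c$ of $(\M,g)$ from $x$ to $y$: exploiting the $t$-independence of \eqref{eq:89} and an averaging argument, one may seek the optimal $a$ among functions depending on the base point only through the arc length $r$ along $c$, so that $\nabla f_1$, $\nabla f_2$ and $\nabla(f_1-f_2)$ are all collinear with $\dot c$. In that $(r,t)$-reduced problem $\norm{M}$ can be computed explicitly in the Clifford algebra generated by the two anticommuting elements $\gamma^r$ and $\gamma^5$ (represented by two Pauli matrices), turning $\norm{M}\leq 1$ into a concrete relation between $f_1',f_2'$ and $\norm{\tilde H}\,(f_1-f_2)$; solving the associated one-dimensional variational problem then reproduces the length functional of \eqref{eq:89}, and its maximizer furnishes the optimal element.

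The two genuine obstacles are thus (i) justifying the reduction to geodesic-dependent elements, so that the off-diagonal coupling is tamed, and (ii) the exact norm computation of $M$ together with the explicit solution of the resulting $(r,t)$ optimization. Completeness of $\M$ (hence of the length space $\M'$) is what guarantees the existence of both the minimizing geodesic and its dual Lipschitz potential, thereby closing the gap between the infimum over paths defining $L'$ and the supremum over admissible $a$ defining $d$, and yielding $d=L'$.
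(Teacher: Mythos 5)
The survey itself does not prove this proposition: it only records that it follows from Proposition~\ref{reduction} together with \cite[Theo.~4']{Martinetti:2002ij}, where the actual two--sheet computation is carried out. Measured against that, your reduction and your upper bound are correct and are essentially the intended route. The hypothesis $[p,D_F(x)]=0$ does make $\I\otimes p$ commute with the fluctuated operator, Lemma~\ref{projectionlem} collapses the problem to $\cinf\otimes\C^2$ with off--diagonal coupling $\tilde H$, the block form of $[D,a]$ and the compressed necessary conditions $\abs{\nabla f_i}^2+\norm{\tilde H}^2(f_1-f_2)^2\leq 1$ are right, and the convex interpolation $F(t,\cdot)=(1-t)f_1+tf_2$, combined with convexity of the square, cleanly gives $d\leq L'$ via the Lipschitz characterization of the geodesic distance for the cometric \eqref{eq:89}.

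The genuine gap is the lower bound, which you only announce as a plan. Two remarks. First, your obstacle (i) is not actually an obstacle: since $d$ is a supremum, to prove $d\geq L'$ it suffices to exhibit admissible competitors, so you may restrict at will to elements depending on the base point through the arc length along a minimizing geodesic, with no averaging argument needed; what you do have to check instead is global admissibility on $\M$ of such an element (functions of $d_{\text{geo}}(x,\cdot)$ are merely Lipschitz at the cut locus, so a regularization is required, and $\norm{[D,a]}\leq 1$ must hold everywhere, not only along the chosen geodesic). Second, and decisively, the entire content of the theorem sits in your obstacle (ii): because the off--diagonal block of $M^*M$, proportional to $\gamma^5\tilde H\,(f_2-f_1)\,\gamma^\mu\partial_\mu(f_1-f_2)$, does not vanish when one naively restricts the optimal Lipschitz potential of $\M'$ to the two sheets, one must compute $\norm{M}$ exactly in the algebra generated by $\gamma^r$ and $\gamma^5$ and then solve the resulting one--dimensional constrained optimization, verifying that its value reproduces the length functional of \eqref{eq:89}. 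That computation is precisely what is done in \cite[Theo.~4]{Martinetti:2002ij} and it is absent from your argument; as written you have only established $d\leq L'$.
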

\begin{figure}[hh]
\begin{center}
\mbox{\rotatebox{0}{\scalebox{.45}{\includegraphics{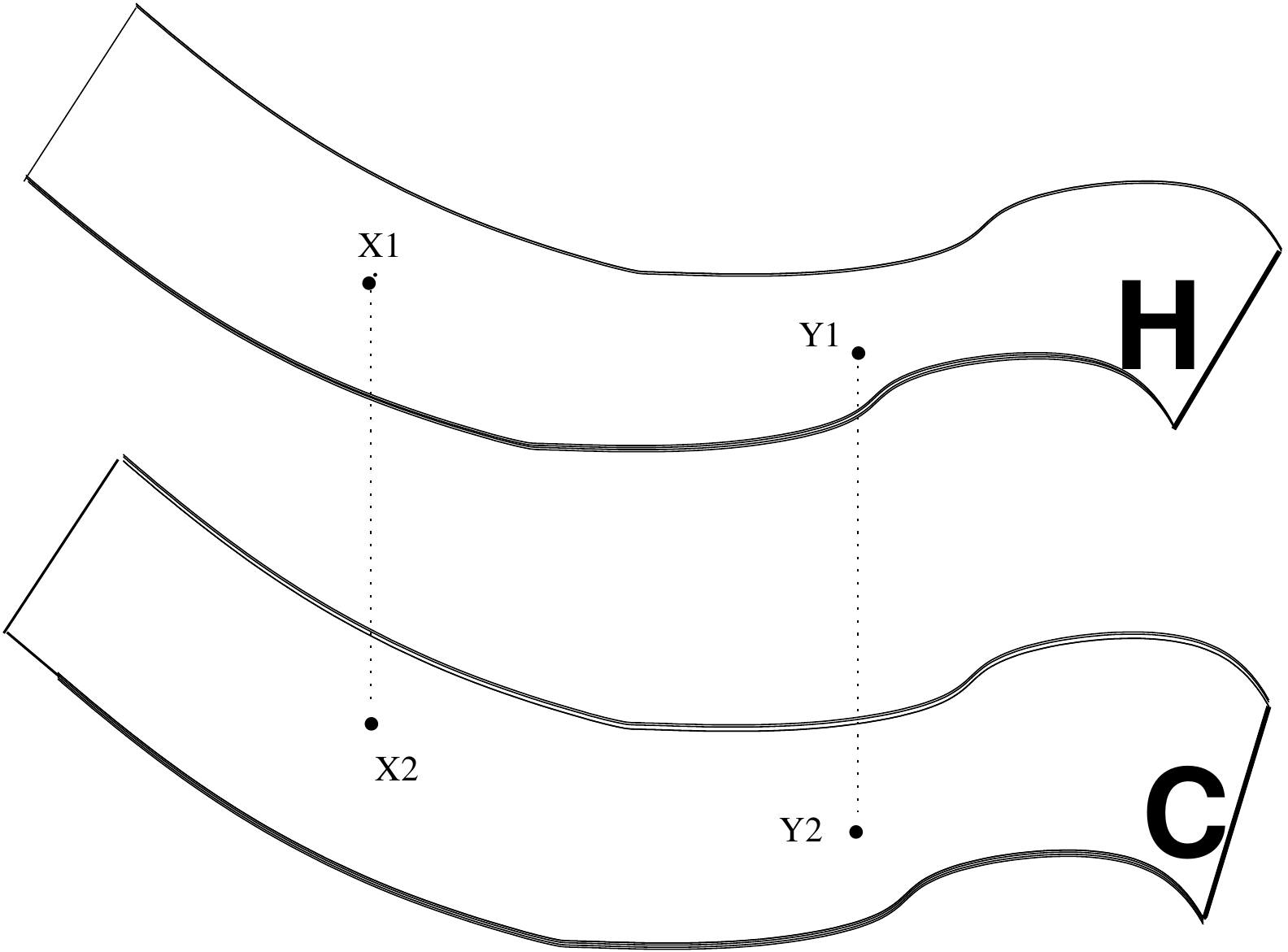}}}}
\end{center}
\caption{ \label{ellipse} Space-time of the standard model with a pure scalar fluctuation of the metric.}
\end{figure}
Proposition \ref{kkprop} gives an intuitive picture of the spacetime
of the standard model. The later is described by an almost commutative
geometry \eqref{eq:88bis} where
\begin{equation}
 \aa_F = \cc \oplus \hhh \oplus M_3(\cc),
\label{eq:99}
 \end{equation}
with $\mathbb H$ the algebra of quaternions. $\A_F$ is suitably represented over
 a finite dimensional vetor space $\hh_F$ generated by elementary fermions, while
 $D_F$ is a finite dimensional matrix that contains the masses of the
 elementary fermions together with the Cabibbo matrix and neutrinos
 mixing angles. We refer the reader to  \cite{Chamseddine:2007oz} for the most advanced
 version of the model pre-discovery of the Higgs, and
 \cite{Chamseddine:2013uq}, \cite{Devastato:2013fk},
 \cite{Devastato:2015aa} for enhanced version post-Higgs.
Through the \emph{spectral action} \cite{Chamseddine:1996kx} the
scalar fluctuation $H$ further identifies with the Higgs field
\cite{Connes:1996fu} (see also \cite{Dubois-Violette:1989fk} for the first appearance of the Higgs
field as a connection in a noncommutative space). 

From the metric point of view, one finds that 
the pure states of $M_3(\cc)$ are at infinite distance from one another,
whereas the states of $\cc$ and $\hhh$ are in direct sum, with support
the identity. Hence the model of spacetime that emerges is a two-sheet model, two copies of the manifold, one indexed by
the pure state of $\cc$, the other one by the pure state of $\hhh$ (cf figure \ref{ellipse}).
\begin{prop}\cite[Prop. 8]{Martinetti:2002ij}
  The distance between the two sheets coincides with the geodesic distance in a $(\text{dim } \mm) +1$ dimension manifold, and the extra-component of the metric is
\begin{equation}
\norm{\tilde H(x)}^2 = \lp\abs{1 + h_1(x)}^2 + \abs{h_2(x)}^2\rp m_t^2\label{eq:24}
\end{equation}
where the $h_i$'s are the components of the Higgs field and $m_t$ is the
mass of the quark~top.
\end{prop}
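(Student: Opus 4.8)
The plan is to read the statement as the specialisation of Proposition~\ref{kkprop} to the almost commutative geometry of the standard model, for which $\aa_F=\cc\oplus\hhh\oplus M_3(\cc)$. First I would fix the two pure states labelling the sheets: let $\ou$ be the (unique) pure state of the summand $\cc$ and $\od$ a pure state of $\hhh$, each regarded as a normal pure state of $\aa_F$ with supports $s_1$ and $s_2$ respectively. The two sheets of Figure~\ref{ellipse} are then $\delta_x\ot\ou$ and $\delta_y\ot\od$. I would check that the hypotheses of Proposition~\ref{kkprop} hold, namely that $\ou$ and $\od$ are in direct sum, $s_1 a s_2=0$ for all $a\in\aa_F$, and that $p:=s_1+s_2$ commutes with $D_F(x)=D_F+H(x)$ for every $x$. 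This is where the explicit representation of $\aa_F$ on the fermionic space $\hh_F$ enters: on the electroweak subspace the summands $\cc$ and $\hhh$ act on orthogonal (right/left) blocks, so the direct sum condition is automatic, and $p$ is the projector onto that subspace, which must be preserved by the Yukawa block of $D_F(x)$.

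Granting the hypotheses, Proposition~\ref{kkprop} identifies the inter-sheet distance with the geodesic distance $L'$ on $\mm'=[0,1]\times\mm$ for the metric \eqref{eq:89}, which is $(\dim\mm)+1$ dimensional; the whole content of the statement then reduces to computing the extra component $\norm{\tilde H(x)}^2$. By definition $\tilde H(x)$ is the off-diagonal block $s_2\,D_F(x)\,s_1$ of the fluctuated finite Dirac operator, i.e.\ the part joining the $\cc$-sheet to the $\hhh$-sheet. In the standard model this block is the quark Yukawa coupling multiplied by the Higgs doublet: the constant piece coming from $D_F$ supplies the vacuum expectation value (the ``$1$'' in $1+h_1$), while the fluctuation $H(x)$ supplies the physical components $h_1(x),h_2(x)$.

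The core computation is then the operator norm of this block. I would write it in the factorised form $\Phi(x)\ot M$, where $\Phi=(1+h_1,\,h_2)$ is the vacuum-shifted Higgs doublet and $M$ is the quark mass matrix carried by $D_F$, so that $\norm{\tilde H(x)}^2=\lp\abs{1+h_1(x)}^2+\abs{h_2(x)}^2\rp\norm{M}^2$. Since the operator norm $\norm{M}$ is the largest singular value of the mass matrix and the heaviest quark is the top, $\norm{M}=m_t$ and formula \eqref{eq:24} follows.

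The hard part will be the two steps that require detailed knowledge of the model rather than a general argument: verifying that $p=s_1+s_2$ genuinely commutes with the fluctuated $D_F(x)$ despite the colour structure of $M_3(\cc)$ and the Cabibbo mixing hidden in $D_F$ (so that Proposition~\ref{kkprop} applies at all), and controlling the normalisations in the Yukawa block so that the background appears exactly as the unit shift and the operator norm collapses to the single scale $m_t$ rather than to some combination of all quark masses.
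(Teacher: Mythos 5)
Your proposal is correct and follows exactly the route intended by the paper (which omits the proof and refers to the cited reference): the statement is the specialisation of Proposition~\ref{kkprop} to $\aa_F=\cc\oplus\hhh\oplus M_3(\cc)$, with the two sheets labelled by the pure states of $\cc$ and $\hhh$, whose supports are in direct sum and sum to a projection commuting with $D_F(x)$ because the Yukawa block acts trivially on the colour sector, and the extra metric component is the operator norm of the off-diagonal block $s_2 D_F(x) s_1$, which factorises through the vacuum-shifted Higgs doublet times the fermion mass matrix whose largest singular value is $m_t$. The two ``hard parts'' you flag are indeed where the model-specific work lies, and they resolve as you anticipate.
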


%  Note that this result has been obtained for a version of the standard model with massless neutrinos. It is expected to be still true
%  in the most recent version\cite{Chamseddine:2007oz} that includes massive neutrinos. Work on that matter is in progress.

% Already in the commutative case, the product of a manifold by a finite
% dimensional algebra has interesting metric properties. 
%\newpage

\section{Sub-Riemannian geometry from gauge fluctuation of the metric}
\label{subriemannian}

In this section we study a gauge fluctuation of the metric, that is 
formula (\ref{distancefluc}) with $D$ substituted with (\ref{drelevant})
where
\begin{equation}
A_\mu\neq 0,\;  H=0.
\label{eq:92}
\end{equation}
This is a review of \cite{Martinetti:2006db} and
\cite{Martinetti:2008hl}. 

As recalled in  \S\ref{sec:subriem}, it was expected that the spectral distance on the bundle of
pure states \eqref{eq:86} were equal to the Carnot-Carath\'eodory (or horizontal) distance associated
to the sub-Riemannian structure defined by the $1$-form field
$A_\mu$. In fact, the link between the
two distances is more intricate and
interesting. As explained in \S\ref{sec:obstruction}, the horizontal
distance is an upper bound to the spectral distance, but it has no
reason to be the lowest one, unless the
holonomy is trivial. In \S \ref{sec:counterexample} we study the
example where the base manifold $\M$ is a circle. The holonomy is not
trivial, and indeed the two distances are not equal. We show it by
working out the connected components of both distances, in case $\A_F=M_2(\C)$. This
result is extended to $\A_F=M_n(\C)$ with $n\geq 2$ in \S\ref{sec:convsacc}.
The two remaining sections contain exact computations of the spectral
distance: on the whole of the bundle of pure states for
$\A_F=M_2(\C)$ (\S \ref{sec:lowdim}); between two pure states on the
same fiber in case $\A_F=M_n(\C)$ for arbitrary $n$  (\S\ref{sec:fiber}). 

 Let us mention that other applications of noncommutative
geometry to sub-Riemannian geometry have been investigated in \cite{Hasselmann:2014fk}.

\subsection{Horizontal structure on the space of pure states}
\label{sec:subriem}
A gauge fluctuation \eqref{eq:92} is obtained from an almost
commutative geometry \eqref{eq:88bis}  by
taking $D_F=0$. In practical, we take as a finite dimensional spectral
triple
\begin{equation}
\aa_F =
M_n(\cc),\quad \hh_F = M_n(\cc),\quad D_F = 0
\label{eq:90}
\end{equation}
for some $n\in\N$, so that the almost commutative geometry we are
dealing with is
\begin{equation}
  \label{eq:93}
  \A =\cinf\otimes M_n(\C),\; \HH = L^2(\M, S)\otimes M_n(\C),\; D= \ds
  \otimes \I_F.
\end{equation}
 The vanishing of $D_F$ implies that the scalar part $H$ of the
fluctuation in \eqref{gaugh} vanishes. Since the spin connection in $\ds$ commutes
with the algebra, the part of the fluctuated operator ~\eqref{drelevant}
relevant in the
distance formula reduces to 
\begin{equation}
  \label{eqbis:24}
  D_\mu:= -i\gamma^\mu (\partial_\mu + A_\mu).
\end{equation}

  As explained in \S \ref{sec:proj}, the space $\pp(M_n(\C))$  of
pure states of  $M_n(\C)$  identifies with the projective space $\cc
P^{n-1}$.
% :  any $\xi\in\cc P^{n-1}$ defines the pure state 
% \begin{equation}
%  \label{evaluationun}
%  \omega_\xi(m) = \scl{\xi}{m\xi} = \text{Tr} (s_\xi\, m)\quad \forall m\in M_n(\C)
%  \end{equation}
% where the inner product is the usual one on $\C^n$ and
% $s_\xi\in M_n(\C)$ is the support of $\omega_\xi$ (the density matrix in
% physics), while any pure state of $M_n(\C)$ comes in this way. 
The
 action~\eqref{eq:16} of $SU(n)$ on $\pp(M_n(\C))$ reads as the free
 action of $U(n)$ on $\cc P^{n-1}$,
 \begin{equation}
\xi \to u\xi \quad\quad \forall \xi\in\C P^{n-1}, u\in U(n),
\label{eq:152}
\end{equation}
 and $\pp(\aa)$ in \eqref{eq:86} is now the trivial $SU(n)$-bundle with fiber $\cc P^{n-1}$
\begin{equation}
P\overset{\pi}{\rightarrow}  M\label{eq:11}.
\end{equation}
We denote
 \begin{equation}
\label{defxiox}
\xox :=
 \left(\delta_x\in\pp\left(\cinf\right),\; \omega_\xi\in\pp\left(M_n\left(\C\right)\right)\right)
\end{equation}
  an element of $P$, where $\delta_x$ is the evaluation \eqref{eq:02} and $\ox$ is
  the pure state of $M_n(\C)$ defined by \eqref{eq:45}. Its evaluation on an element of $\A$
  \begin{equation}
 a=f^i\ot m_i,\quad  f^i\in\cinf, m_i\in M_n(\C) 
\label{eq:94}
 \end{equation}
reads
\begin{equation}
\label{evaluationdeux}
 \xi_x(a)= \scl{\xi}{a(x)\xi}, %\text{Tr}(s_\xi\, a(x))
 \end{equation}
\smallskip
where for any $x$ in $\M$ one writes
\begin{equation}
a(x) = f^i(x)\otimes m_i\in
M_n(\C).
\label{eq:95}
\end{equation}
The gauge part $A_\mu$ of the fluctuation has value in the set of skew-adjoint elements of $M_n(\C)$, that is the Lie algebra 
$\mathfrak{u}(n)$.
 Thus $A_\mu$ is the local form of the
$1$-form field associated to some  Ehresmann connection $\Xi$ on the trivial
$U(n)$-principal bundle on $\M$. 
By reduction to
$SU(n)$ followed by a mapping to the associated bundle
\eqref{eq:11}, one inherits from $\Xi$ a connection on the bundle
$P$ of pure
states of $\A$. This means that at any $p\in P$ the tangent space $T_pP$ splits into a
vertical subspace and an horizontal subspace,
\begin{equation}
\label{tangenthor}
T_pP = V_pP \oplus H_pP\quad\quad p\in P,
\end{equation} where $HP$ is the kernel of the
connection $1$-form associated to $\Xi$. 

A curve $t\in[0,1]\mapsto
c(t)\in P$  is horizontal when its
tangent vector is everywhere horizontal, that is for any $t$ one has
\begin{equation}
  \label{eq:153}
  \dot c (t) \in H_{c(t)}P.
\end{equation} The horizontal (also called Carnot-Carath\'eodory) distance $d_h(p,q)$ is
defined as the infimum on the length of the horizontal paths joining 
$p$ to $q$,
\begin{equation}
\label{dcc} d_h(p,q) := \underset{\dot{c}(t)\in H_{\!c(t)}\!
P}{\text{Inf}}\; \int_0^1 \norm{\dot{c}(t)} dt \quad \forall
p,q\in P,
\end{equation}
where the norm on $HP$ is the pull back of the metric{\footnote{In all
    this section, $\pi$ denotes the projection from $P$ to $\M$, and not
    the representation of the algebra in the spectral triple.}}
\begin{equation}
\label{normh}
\norm{\dot{c}} = \sqrt{g(\pi_*(\dot{c}),\pi_*(\dot{c}))}.
\end{equation}
When 
$p,q$ cannot 
be linked by any horizontal path then
$d_h(p,q)$
is infinite.

To summarize, the gauge part $A_\mu$ of the covariant Dirac operator (\ref{drelevant})
equips the bundle $P$ of pure states of an almost commutative geometry
with two distances: the
horizontal distance $d_h$ \eqref{dcc} and the fluctuated spectral distance $d_A$
\eqref{distancefluc} computed with $D_\mu$.
 The rest of this section is a collection of results regarding the
comparison of these two distances.

\subsection{Holonomy obstruction}
\label{sec:obstruction}

 \begin{defi}
   A pure state at finite
horizontal distance from $\xox$ is said {\it accessible}, and we define 
\begin{equation}
\label{acc} \text{Acc}(\xox):= \{q\in P;\; d_h(\xox,q) <
+\infty\}.
\end{equation}
A pure state at finite spectral distance from  $\xox$ is said
connected, and we define
\begin{equation}
\label{conn} \text{Con}(\xox):= \{q\in P;\; d_A(\xox,q) <
+\infty\}.
\end{equation}
\end{defi}
\noindent We use the same notation as in \eqref{con} although here we restrict
to pure states.
\smallskip

 In the same way as the spectral distance on a manifold is
bounded by the geodesic distance, for almost commutative geometry with
gauge fluctuation the horizontal distance provides an upper bound to
the spectral distance.
 
\begin{prop}\cite[Prop. 1]{Martinetti:2006db}
 For any $\xox, \yoz\in P$, 
  \begin{equation}
    \label{dinfdh}
    d_A(\xox, \yoz) \leq d_h(\xox, \yoz) \quad \forall \xox,\yoz\in P.
  \end{equation}
In other terms
\begin{equation}
    \label{inclusion} \text{Acc}(\xox)\subset\text{Con}(\xox).
  \end{equation}
\end{prop}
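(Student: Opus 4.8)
The inclusion \eqref{inclusion} is just a restatement of the metric bound \eqref{dinfdh}: if $d_h(\xox,\yoz)<\infty$ then \eqref{dinfdh} forces $d_A(\xox,\yoz)<\infty$, so $\text{Acc}(\xox)\subset\text{Con}(\xox)$. The plan is therefore to prove \eqref{dinfdh} by producing, for every horizontal path joining the two pure states, an upper bound on $d_A$ equal to the length of that path, and then taking the infimum. First I would dispose of the trivial case: if $\xox$ and $\yoz$ are joined by no horizontal path, then $d_h(\xox,\yoz)=+\infty$ and there is nothing to prove. Otherwise, fix a horizontal curve $t\in[0,1]\mapsto c(t)=(\delta_{x(t)},\omega_{\xi(t)})\in P$ with $c(0)=\xox$, $c(1)=\yoz$, and fix a selfadjoint $a\in\A$ with $\norm{[\ds+A,a]}\leq1$ (restriction to selfadjoint elements is harmless, by the lemma allowing the supremum in \eqref{distancefluc} to be searched on selfadjoint elements). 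Using the evaluation \eqref{evaluationdeux} and writing $a(x)$ as in \eqref{eq:95}, the function $t\mapsto c(t)(a)=\scl{\xi(t)}{a(x(t))\,\xi(t)}$ is differentiable, and the fundamental theorem of calculus gives
\begin{equation}
\xox(a)-\yoz(a)=-\int_0^1\frac{d}{dt}\,c(t)(a)\,dt.
\end{equation}

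The key step is to show that horizontality converts this derivative into the covariant derivative of $a$. The horizontal lift obeys a parallel transport equation of the form $\dot\xi(t)=-\dot x^\mu A_\mu(x(t))\,\xi(t)+i\lambda(t)\,\xi(t)$, the last term being the vertical $U(1)$ phase left free by the projective identification. Since $A_\mu$ is skew-adjoint, and since the two phase contributions $\scl{i\lambda\xi}{a\xi}$ and $\scl{\xi}{a\,i\lambda\xi}$ cancel, a direct computation yields
\begin{equation}
\frac{d}{dt}\,c(t)(a)=\dot x^\mu\,\scl{\xi(t)}{\big[\partial_\mu a+[A_\mu,a]\big](x(t))\,\xi(t)},
\end{equation}
in which the bracket is exactly the $M_n(\C)$-valued covariant derivative $[\nabla_\mu,a]$ entering the fluctuated commutator $[\ds+A,a]=-i\gamma^\mu[\nabla_\mu,a]$ of \eqref{eqbis:24}. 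This cancellation of the vertical terms, together with the check that the phase ambiguity in the definition of $\omega_{\xi(t)}$ does not affect the value $c(t)(a)$, is the conceptual heart of the argument.

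It then remains to bound the integrand by $\norm{\dot c(t)}\,\norm{[\ds+A,a]}$. Setting $u^\mu:=\dot x^\mu/\norm{\dot c}$, a $g$-unit direction, the pointwise estimate
\begin{equation}
\abs{\scl{\xi}{u^\mu[\nabla_\mu,a]\,\xi}}\leq\norm{u^\mu[\nabla_\mu,a](x)}\leq\norm{\gamma^\mu[\nabla_\mu,a]}=\norm{[\ds+A,a]}
\end{equation}
— the noncommutative analogue of the commutative identity $\norm{[D,f]}=\mathrm{Lip}(f)$, resting on the Clifford relation \eqref{eq:142} and on $\norm{\gamma^\mu u_\mu}=1$ — gives, after integration,
\begin{equation}
\abs{\xox(a)-\yoz(a)}\leq\int_0^1\norm{\dot c(t)}\,\norm{[\ds+A,a]}\,dt\leq\int_0^1\norm{\dot c(t)}\,dt,
\end{equation}
the last step using $\norm{[\ds+A,a]}\leq1$. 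Taking the supremum over such $a$ bounds $d_A(\xox,\yoz)$ by the length of $c$, and the infimum over horizontal paths $c$ yields \eqref{dinfdh}.

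The main obstacle I expect is the pointwise operator-norm estimate of the third paragraph. One must control the operator norm, on the full Hilbert space $L^2(\M,S)\otimes M_n(\C)$, of the $\gamma^\mu$-contracted and $M_n(\C)$-valued coefficients $-i\gamma^\mu[\nabla_\mu,a]$, and relate it cleanly to the $g$-norm of the base direction $u^\mu$; this is precisely where the Riemannian metric enters through the Dirac matrices, and where the tensor factor $M_n(\C)$ must be carried along without spoiling the Clifford computation. A secondary technical point is to justify the parallel transport formula for $\dot\xi$ and the vanishing of the vertical contributions rigorously, passing from the $U(n)$-connection through its $SU(n)$-reduction to the associated bundle \eqref{eq:11}; the accessibility question (whether a horizontal path exists at all) is not needed for this direction of the inequality, since its absence only makes $d_h$ infinite.
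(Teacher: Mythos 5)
Your argument is correct and is essentially the proof given in the cited reference \cite[Prop.~1]{Martinetti:2006db} (the survey itself omits it): differentiate $t\mapsto c(t)(a)$ along a horizontal lift, use skew-adjointness of $A_\mu$ and the cancellation of the vertical phase to obtain $\dot x^\mu\scl{\xi}{(\partial_\mu a+[A_\mu,a])\xi}$, bound this by $\norm{\dot c}\,\norm{[\ds+A,a]}$ via the Clifford relation, integrate, and take the infimum over horizontal paths. The operator-norm step you flag as the main obstacle goes through exactly as you suspect, since for a $g$-unit vector $u$ one has $u^\mu T_\mu=\tfrac12\{\gamma(u),\gamma^\nu T_\nu\}$ with $\norm{\gamma(u)}=1$ (the $M_n(\C)$-valued coefficients commuting with the Dirac matrices), so no new idea is needed there.
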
 

\noindent However this upper bound is not optimal. In
\cite{Connes:1996fu} was suggested that $d_A$ and $d_h$ were equal. This
is true when the holonomy group reduces to the identity: then
$\text{Acc}(\xox)= \text{Con}(\xox)$ coincides with the horizontal lift
  of $\M$ passing through $\xox$. In particular, on a
given fiber there is no points accessible from one another and both
the spectral and the horizontal distances are infinite.

However when the holonomy is not trivial, then $\text{Acc}(\xox) $ has
no reason to equal $\text{Con}(\xox)$.
The obstruction comes from the number of times a
minimal horizontal curve between $\xox$ and $p\in\text{Acc}(\xox)$
- that is an horizontal curve whose
length is the horizontal distance - intersects the same orbit of the holonomy group. To be more explicit, given an horizontal curve $c$ between
$\xox$ and $\yoz$, we call \emph{ordered self-intersecting points at $p_0=c(t_0)$} a set of $K$ elements
$p_1:= c(t_1),..., p_K:= c(t_K)$  such that for any $i=1,... ,K$
\begin{equation}
\pi(p_i) = \pi(p_0),\quad d_h(p_0, p_{i+1}) >  d_h(p_0, p_i).
\label{eq:125bis}
\end{equation}

\begin{figure}[hbt]
\label{ososipcaption}
\begin{center}
\mbox{\rotatebox{0}{\scalebox{.65}{\includegraphics{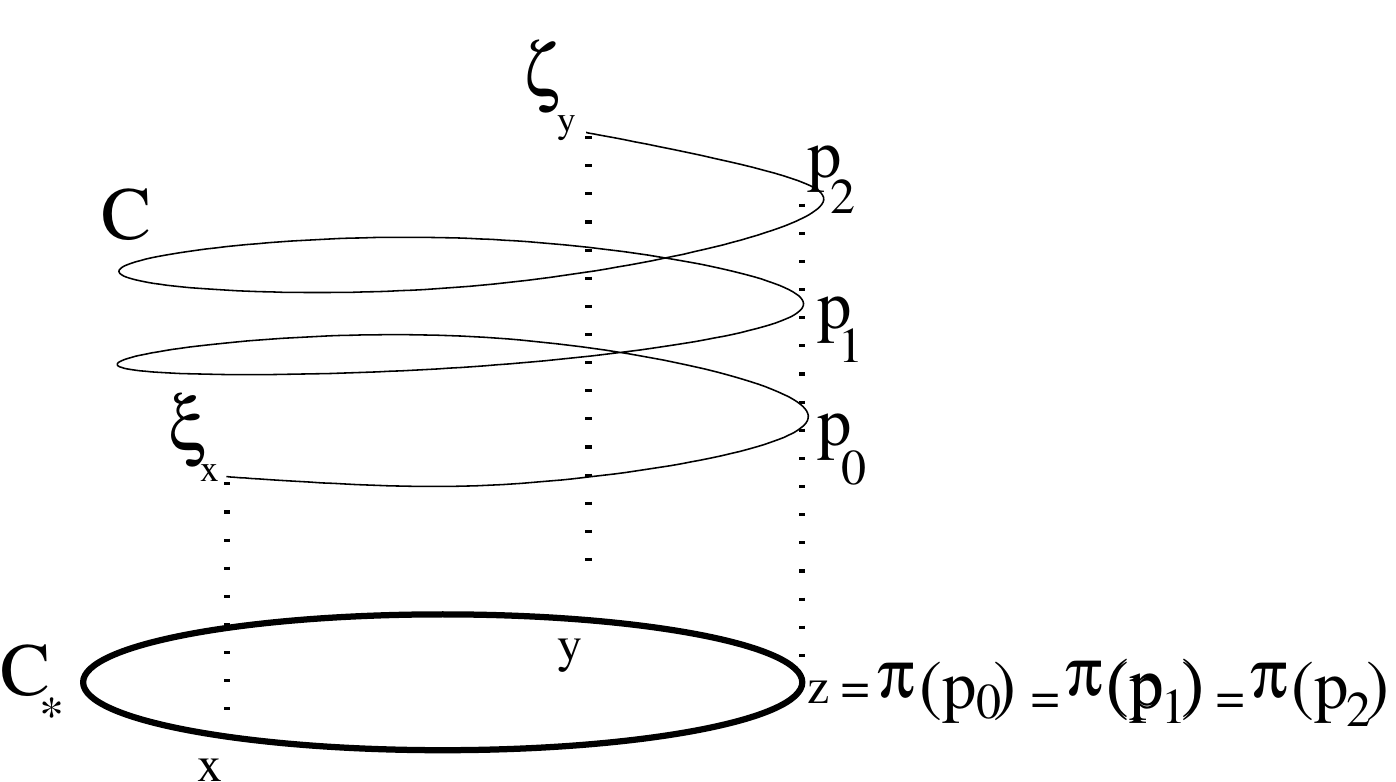}}}}
\end{center}
\caption{An ordered sequence of self-intersecting points.}
\end{figure}
\noindent Assuming the spectral distance between to pure states $\xox$, $\yoz$ is the
horizontal distance, and that there exists at least one minimal horizontal
curve between $\xox$ and $\yoz$, then  one has the following constraint on the optimal
element of definition ~\ref{optimelement}.

\begin{prop} \label{ososip} Let $\xox$, $\yoz$ be two points in $P$ such
that $d_A(\xox,\yoz) = d_h(\xox,\yoz)$.  Then for any minimal horizontal
curve $c$ between $\xox$ and $\yoz$ one has
\begin{equation}
\label{unter} d_A(\xox,c(t))= d_h(\xox,c(t)).
\end{equation} 
Moreover there exists an optimal element $a\in\aa$
such that for any $\xi_{t} := c(t)$
\begin{equation}
\xi_{t}(a) = d_h(\xox, c(t))\;\text{ or }\; \underset{n\rightarrow
\infty}{\text{lim}} \xi_{t}(a_n) = d_h(\xox, c(t)).
\end{equation}
\end{prop}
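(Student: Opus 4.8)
The plan is to deduce both assertions from the single inequality $d_A\leq d_h$ established in \eqref{dinfdh}, together with the additivity of length along the minimal curve $c$. First I would extract from \eqref{dinfdh} its pointwise form: since $d_A(p,q)$ is the supremum of $\abs{p(a)-q(a)}$ over the ball $\norm{[D_\mu,a]}\leq 1$, the bound $d_A(p,q)\leq d_h(p,q)$ says that each such $a$ is $1$-Lipschitz for the horizontal distance,
\[
\abs{p(a)-q(a)}\leq d_h(p,q)\qquad\forall\,p,q\in P,\ \norm{[D_\mu,a]}\leq 1.
\]
Next I would record that, $c$ being minimal, its restrictions are minimal too: additivity of length gives $L(c)=L(c|_{[0,t]})+L(c|_{[t,1]})$, each summand dominates the corresponding horizontal distance, and the triangle inequality forces
\[
d_h(\xox,\yoz)=d_h(\xox,c(t))+d_h(c(t),\yoz).
\]

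For the first claim I would simply chain the triangle inequality for $d_A$ with \eqref{dinfdh} and this additivity,
\begin{align*}
d_A(\xox,\yoz)&\leq d_A(\xox,c(t))+d_A(c(t),\yoz)\\
&\leq d_h(\xox,c(t))+d_h(c(t),\yoz)=d_h(\xox,\yoz)=d_A(\xox,\yoz),
\end{align*}
the final equality being the hypothesis. All inequalities collapse to equalities; since the two summands are individually bounded by their $d_h$-counterparts, each must saturate, giving $d_A(\xox,c(t))=d_h(\xox,c(t))$, which is \eqref{unter}.

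For the optimal element I would start from an $a$ optimal for the pair $\xox,\yoz$ in the sense of Definition \ref{optimelement}. Since the supremum may be searched over self-adjoint elements, I may take $a=a^*$, so that $\xi_t(a)=c(t)(a)$ is real; replacing $a$ by $-a$ if needed and subtracting $\xox(a)\,\I$ (which does not alter $[D_\mu,a]$) I may assume $\xox(a)=0$ and $\yoz(a)=d_h(\xox,\yoz)=:\ell$. Applying the horizontal-Lipschitz bound to the pairs $(\xox,c(t))$ and $(c(t),\yoz)$ yields
\[
\xi_t(a)\leq d_h(\xox,c(t)),\qquad \ell-\xi_t(a)\leq d_h(c(t),\yoz),
\]
whose sum is exactly the additivity identity; both must therefore be equalities, forcing $\xi_t(a)=d_h(\xox,c(t))$. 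If the supremum defining $d_A(\xox,\yoz)$ is not attained, I would run the same squeeze on an optimizing sequence $a_n$ normalized by $\xox(a_n)=0$, $\yoz(a_n)\to\ell$: the two bounds give $\limsup_n\xi_t(a_n)\leq d_h(\xox,c(t))$ and $\liminf_n\xi_t(a_n)\geq\ell-d_h(c(t),\yoz)=d_h(\xox,c(t))$, so $\lim_n\xi_t(a_n)=d_h(\xox,c(t))$.

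Because \eqref{dinfdh} already furnishes the Lipschitz bound and the metric facts are routine, no step is genuinely deep; the care needed is only in justifying the two normalizations (both preserve the constraint $\norm{[D_\mu,a]}\leq 1$) and in the $\liminf/\limsup$ bookkeeping of the non-attained case. The conceptual point I would stress is that the element optimal for the endpoints is forced, by the squeeze, to be simultaneously optimal at every intermediate point of $c$, which is precisely what makes a single $a$ (or sequence) work along the whole curve.
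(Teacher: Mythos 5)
Your proof is correct and is essentially the argument behind the proposition (the paper itself omits the proof, deferring to \cite{Martinetti:2006db}): a squeeze combining the triangle inequality for $d_A$, the bound $d_A\le d_h$ of \eqref{dinfdh} read pointwise as a horizontal-Lipschitz condition on the unit ball, and the additivity of $d_h$ along a minimal horizontal curve, applied first to the distances and then to the normalized optimal element (or optimizing sequence). The two normalizations you single out are indeed the only points needing care, and both are legitimate since passing from $a$ to $\pm a+\lambda\I$ leaves $\|[D_\mu,a]\|$ unchanged, the algebra being unital here.
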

\noindent Consequently, assuming there is a minimal horizontal curve
between $\xox$ and $\yoz$ with $K$ self-intersecting points at $p_0$, proposition
\ref{ososip} puts $K+1$ condition on the $n^2$ real components of
the selfadjoint matrix $a(\pi(p_0))$,
 \begin{equation}
 \label{holonomycons}
 p_i(a) = \text{Tr}\,(s_{p_i} a(\pi(p_0))) = d_h(\xox, p_i) \quad \forall i=0,
 1, ... ,K
 \end{equation}
 where $s_{p_i}$ denotes the support of the pure state $p_i$. So it is most likely that $d_A(\xox,\yoz)$ cannot equal $d_h(\xox,\yoz)$ 
unless there exists a minimal horizontal curve between $\xox$ and
$\yoz$ such that its projection does not self-intersect more than
$n^2$ times. Actually, questioning the equality between $d_A$ and $d_h$ amounts to the following problem:
 \smallskip

\noindent\emph{Given a minimal horizontal curve $c$, is there a way to
  deform it into another horizontal curve $c'$, keeping its length
  and its end-points fixed, such that $c'$ has less 
selfintersecting points than $c$ ?}
\smallskip

 Say differently:

\smallskip
\noindent \emph{Can one characterize the minimum number of selfintersecting points
  in a minimal horizontal curve between two given points ?}
\smallskip

It seems that there is no known answer to these 
questions \cite{Montgomery:2002yq}. In some cases it might be possible
indeed to  reduce the number of self-intersecting points of
a minimal horizontal curve by
smooth deformations that keep its length constant (see \cite[\S 2.3]{Martinetti:2008hl}).
% For instance
% on $\rr^2\times \rr$ with (\ref{exempleh}), there
% exists at least two horizontal curve between $p_0$ and $p_2$. One, say $c_1$,
% is the double lift of the circle $\mathcal{C}_z$ and has length
% $d_1 = 4\pi\abs{z}.$ The other one is the horizontal lift  $c_2$  of
% $ [z,\sqrt{2} z]\circ {\mathcal C}_{\sqrt{2} z}\circ[\sqrt{2} z,z]$  with length
% $d_2 = 2\pi\sqrt{2}\abs{z} + 2\abs{z}(\sqrt{2}-1).$ 
% As $$d_1-d_2 = 2\abs{z}
% (\sqrt{2}-1)(\sqrt{2}\pi -1)>0$$ 
% $c_1$ is not the shortest horizontal path. Of course both $c_1$ and
% $c_2$ have infinitely many pairs of points projecting onto the same
% points on the base. But rotating a little bit the radial segment
% around the $z$-vertical axis, it is not difficult to find a new
% horizontal curve $c_3$ whose length is arbitrarily close to $d_2$ and
% has no
% such points (except $p_0$ and
% $p_2$ themselves), see Fig. \ref{doublevue}. 
  % \begin{figure}[htb]
% \begin{center}
% \hspace{-2truecm}
% \mbox{\rotatebox{0}{\scalebox{1.3}{\includegraphics{/Users/pierre/physique/figures/doublevue.pdf}}}}
% \end{center}
% \caption{\label{doublevue} Two horizontal curves between $p_0$ and $p_2$
%   and their projection on the base. Thin line is the double lift
%   $c_1$ of the circle of radius $\abs{z}$. Thick line is $c_3$ and has
%   only one pair of points that project onto the same point on the base.}
% \end{figure} 
% \bigskip
In order to escape these issues, we  consider a case where there 
is at most one minimal horizontal curve between two points: bundles on the circle $S^1$.
 \bigskip

\subsection{The counter-example of the circle} 
\label{sec:counterexample}
We consider \eqref{eq:93} for $\M = S^1$.
The gauge fluctuation $A_\mu$  has only one component
$A$ and we fix on $\C^n$
a basis of real eigenvectors of $iA$ such that
\begin{equation}
\label{diagA}
A = i\left(
\begin{array}{ccc} 
\theta_{1} &\ldots & 0\\  
\vdots & \ddots&\vdots\\
0&\ldots & \theta_n\end {array}\right),
\end{equation} where the $\theta_j$'s are real
functions on $S^1$. The space of pure states of
\begin{equation}
\A=C^\infty(S^1)\otimes M_n(\C)
\label{eq:101}
\end{equation}
is a $\C
P^{n-1}$ bundle $P\overset{\pi}{\rightarrow} S^1$ on the
circle. In a  trivialization $(\pi, V)$, we associate to the  pure state
$\xox\in P$ with 
\begin{equation}
V(\xox)=\xi = \left(\begin{array}{c} V_1\\\vdots \\
    V_n\end{array}\right) \in \cc P^{n-1},
\label{eq:97}
\end{equation}
the $n-1$-torus of $\cc P^{n-1}$
\begin{equation} \label{txi}
 T_\xi :=  \{ \left(\begin{array}{r} V_1\\
e^{i\varphi_j}V_j \end{array}\right),\; \varphi_j\in \rr,\; j=2,..., n\},
\end{equation}
and the $n$-torus of $P$,
\begin{equation}
\label{tz0}\mathbb{T}_\xi:= S^1 \times T_{\xi}.
\end{equation}

The set $\text{Acc}(\xox)$ of points in $P$ accessible to $\xox$ is
the horizontal lift $c(\tau)$, $\tau\in\R$, of the circle with initial conditions
$\pi(c(0))= x$, $V(c(0))=\xi$.
Explicitly, one has 
\begin{equation}
c(\tau)=(c_*(\tau), V(\tau))
\label{eq:98}
\end{equation}
where $c_*(\tau):=\pi(c(\tau))$ while $V(\tau)$ has components
\begin{equation}
\label{vi} \displaystyle V_j(\tau) = V_j e^{-i\Theta_j(\tau)}\quad
\text{ with }\quad
\Theta_j(\tau) := \int_0^ {\tau}
\theta_j(t)dt.
\end{equation}
Hence on a given fiber $\pi^{-1}(c_*(\tau))$ the set of accessible
points is the sub-torus of $T_\xi$,
\begin{equation}
H^\xi_\tau := \text{Acc}(\xox)\cap \pi^{-1}(c_*(\tau)) =
\{ \left(\begin{array}{r} V_1(\tau)\\
e^{ik\Theta_{1j}(2\pi)}V_j(\tau) \end{array}\right),\; k\in\zz,
j=2,..., n\}.
\label{eq:7bis}
\end{equation}
This is at best dense in $T_{\xi}$  if all the $\Theta_{1j}(2\pi)$'s are distinct and
irrational. The union over all $S^1$ yields
\begin{equation}
\text{Acc}(\xox) = \underset{\tau\in [0,2\pi[}{\bigcup} H_\tau^\xi \;\varsubsetneq \mathbb{T}_\xi.
\label{eq:10}
\end{equation}

The simplest counter-example to the equality between the horizontal and the
spectral distances is given by the $n=2$ case (i.e. $T_\xi = S^1$).
\begin{prop}\cite[Prop. 5]{Martinetti:2006db}, \cite[Prop. 3.4]{Martinetti:2008hl}
\label{propconn2}
  For $\A_F=M_2(\C)$ and a gauge fluctuation $A$ non proportional to
  the identity, one has
  \begin{equation}
\label{eq:104}
\text{Con}(\xox)=\mathbb{T}_\xi.
\end{equation}
\end{prop}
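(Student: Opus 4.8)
The plan is to prove the two inclusions $\text{Con}(\xox)\subseteq\mathbb{T}_\xi$ and $\mathbb{T}_\xi\subseteq\text{Con}(\xox)$ separately, the first being immediate and the second carrying all the content. Throughout I work in the trivialization \eqref{eq:97}, I write $\nabla a:=\partial_x a+[A,a]$ for the covariant derivative that appears in the relevant commutator, so that $[D_\mu,a]=-i\gamma\,\nabla a$ and (since $M_n(\C)$ acts by left multiplication) $L_D(a)=\sup_{x\in S^1}\norm{\nabla a(x)}$ with $\norm{\cdot}$ the operator norm on $M_2(\C)$, and I set $\Theta_{12}(x):=\int_0^x(\theta_1-\theta_2)$. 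The hypothesis that $A$ is \emph{not proportional to the identity} should be read, in the present $S^1$ setting, as the statement that the holonomy is non-trivial, $\Theta_{12}(2\pi)\notin 2\pi\zz$ (if $A\propto\I$ then $[A,\cdot]=0$ and the whole internal direction lies in $\text{Ker}\,L_D$); I will use this sharper form and point out exactly where it enters.

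For $\text{Con}(\xox)\subseteq\mathbb{T}_\xi$: if $\yoz\notin\mathbb{T}_\xi$ then $\abs{\zeta_1}\neq\abs{\xi_1}$, and the constant diagonal matrix $a=e_{11}\in\A$ satisfies $\nabla a=\partial_x e_{11}+[A,e_{11}]=0$, hence $a\in\text{Ker}\,L_D$, while $\xox(a)=\abs{\xi_1}^2\neq\abs{\zeta_1}^2=\yoz(a)$. Lemma \ref{infinitelem} in its form \eqref{eq:64} then gives $d_A(\xox,\yoz)=\infty$. This is the ``slicing'' by the conserved moduli $\abs{\xi_j}$, and it already shows $\text{Con}(\xox)\subseteq\mathbb{T}_\xi$.

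The hard inclusion $\mathbb{T}_\xi\subseteq\text{Con}(\xox)$ is where I would concentrate the work; note it is strictly stronger than $\text{Acc}(\xox)\subseteq\text{Con}(\xox)$ since $\text{Acc}(\xox)\subsetneq\mathbb{T}_\xi$ by \eqref{eq:10}. Fix $\yoz\in\mathbb{T}_\xi$, so $\abs{\zeta_j}=\abs{\xi_j}$ for $j=1,2$ while $x,y$ and the relative phase are arbitrary. I would bound $\abs{\xox(a)-\yoz(a)}$ by a finite multiple of $L_D(a)$ uniformly over the ball, forcing the supremum \eqref{distancefluc} to be finite. Writing $\xox(a)-\yoz(a)=\scl{\xi}{a(x)\xi}-\scl{\zeta}{a(y)\zeta}$ and splitting into diagonal and off-diagonal parts, the diagonal part equals $\abs{\xi_1}^2\bigl(a_{11}(x)-a_{11}(y)\bigr)+\abs{\xi_2}^2\bigl(a_{22}(x)-a_{22}(y)\bigr)$ because the moduli coincide; since $(\nabla a)_{jj}=\partial_x a_{jj}$ one has $\abs{a_{jj}(x)-a_{jj}(y)}\leq (\text{arc-length})\,L_D(a)$, which is controlled. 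The whole difficulty is thus concentrated in the off-diagonal part, a combination of $a_{12}(x)$ and $a_{12}(y)$.

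The main step — and the main obstacle — is therefore to bound $\abs{a_{12}}$ on $S^1$ by $L_D(a)$. Here $a_{12}$ obeys the first-order linear ODE $(\nabla a)_{12}=\partial_x a_{12}+i(\theta_1-\theta_2)a_{12}$, whose solution is $a_{12}(x)=e^{-i\Theta_{12}(x)}\bigl(a_{12}(0)+\int_0^x e^{i\Theta_{12}(t)}(\nabla a)_{12}(t)\,dt\bigr)$. Imposing periodicity $a_{12}(2\pi)=a_{12}(0)$ and using $\Theta_{12}(2\pi)\notin 2\pi\zz$, so that $1-e^{-i\Theta_{12}(2\pi)}\neq 0$, solves for $a_{12}(0)$ and gives $\abs{a_{12}(0)}\leq \frac{2\pi}{\abs{1-e^{-i\Theta_{12}(2\pi)}}}\,L_D(a)$, since $\abs{(\nabla a)_{12}(t)}\leq\norm{\nabla a(t)}\leq L_D(a)$; integrating once more bounds $\abs{a_{12}(x)}$ for every $x$ by $C(A)\,L_D(a)$ with $C(A)<\infty$. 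Feeding this back yields $\abs{\xox(a)-\yoz(a)}\leq(\text{const})\,L_D(a)$, hence $d_A(\xox,\yoz)<\infty$ and $\yoz\in\text{Con}(\xox)$. This is precisely the place where non-trivial holonomy is indispensable: were $\Theta_{12}(2\pi)\in 2\pi\zz$, the element $e^{-i\Theta_{12}(x)}e_{12}+e^{i\Theta_{12}(x)}e_{21}$ would be a periodic, covariantly constant, off-diagonal member of $\text{Ker}\,L_D$ separating states of different relative phase, and one would instead recover $\text{Con}(\xox)=\text{Acc}(\xox)\subsetneq\mathbb{T}_\xi$, exactly the trivial-holonomy situation described after \eqref{dinfdh}. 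Equivalently, one may reach the conclusion by the triangle inequality, joining $\xox$ to the horizontal-lift point over the fibre of $y$ (at finite distance by \eqref{dinfdh}) and then moving within that fibre, where the same off-diagonal estimate controls the remaining change of phase.
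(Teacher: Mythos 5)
Your argument is correct, and since the survey itself omits the proof (referring to \cite{Martinetti:2006db} and \cite{Martinetti:2008hl}), there is no in-paper proof to compare against; your route is in any case the natural one and matches the strategy of the cited references. The easy inclusion via the covariantly constant diagonal element $e_{11}\in\text{Ker}\,L_D$ together with \eqref{eq:64} is exactly right, and the hard inclusion is correctly reduced to a uniform bound $\abs{a_{12}(x)}\leq C(A)\,L_D(a)$ obtained by solving the first-order ODE for $(\nabla a)_{12}$ and invoking periodicity, with the factor $\left(1-e^{-i\Theta_{12}(2\pi)}\right)^{-1}$ supplying the constant. Your reading of the hypothesis is also the right one: ``$A$ not proportional to the identity'' must be understood as non-trivial holonomy $\Theta_{12}(2\pi)\notin 2\pi\zz$, since a traceless $A$ with $\int_0^{2\pi}(\theta_1-\theta_2)\in 2\pi\zz$ admits the periodic covariantly constant off-diagonal element you exhibit, which forces $\text{Con}(\xox)\subsetneq\mathbb{T}_\xi$; this is consistent with Definition \ref{connectdirect} and Proposition \ref{propconneccomp} later in the section, where directions with equal holonomy components are ``far'' and collapse the connected component. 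The only cosmetic remark is that your closing alternative via the triangle inequality (horizontal lift to the fibre over $y$, then motion along the fibre) silently uses the finiteness of the fibre distance, which is itself established by the same off-diagonal estimate, so it is not really an independent route.
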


\noindent Thus by \eqref{eq:10} one has that $\text{Acc}(\xox)$ is at best dense in $\text{Con}(\xox)$. Any
element of $\mathbb{T}_\xi$ that is not in $\text{Acc}(\xox)$ is at
finite spectral distance  from $\xox$, although it is infinitely
Carnot-Carath\'eodory far from it.  This shows that the two distances
are not equal.
\smallskip

In this example, the discrepancy between the two distances follows from  the
holonomy obstruction of Prop. \ref{ososip}. The holonomy
is non-trivial because the base $\M=S^1$ is non-simply connected. A
open question is whether there is the same obstruction when the
holonomy comes from the curvature of the connection.

\subsection{Connected versus accessible points on the $\C P^{n-1}$ bundle on $S^1$}
\label{sec:convsacc}
For $n>2$, proposition \ref{propconn2} needs to be refined. $\text{Con}(\xox)$
is still a subset of the torus $\mathbb{T}_\xi$ but not necessary equal to
it.
 Viewing the torus $\mathbb{T}_\xi$ as the subset
of $\rr^n$, 
\begin{equation}
\mathbb{T}_\xi = \left\{ \tau\in [0,2\pi[,\varphi_i \in [0,2\pi[, i=2,...n\right\}
\end{equation} 
one has that $\text{Con}(\xox)$ is a sub-torus
$\mathbb{U}_\xi$ of $\mathbb{T}_\xi$,
\begin{equation}
\mathbb{U}_\xi = \left\{ \tau\in [0,2\pi[,\varphi_i \in [0,2\pi[,i=2,...n_c\right\}
\end{equation}
with dimension $n_c\leq n$ given by the number of equivalence classes of the following
relation. 
 
\begin{definition}
\label{connectdirect}  Let us fix
a pure state $\xox$ in $P$. Two directions $i,j$ of
$\mathbb{T}_\xi$ are said {\it far} from each other if the components $i$
and $j$ of the holonomy at $x$ are equal, and we write
$\text{Far}(.)$ the equivalence classes,
\begin{equation}
\text{Far}(i):= \{ j\in [1,n] \text{ such that } \Theta_j(2\pi)
= \Theta_i(2\pi)\;\text{mod}[2\pi]\}.
\end{equation}
We denote $n_c$ the numbers of such equivalence
classes and we label them as
$$\text{Far}_1 = \text{Far}(1),\,
\text{Far}_p = \text{Far}(j_p)\quad p=2,... ,n_c$$ where $j_p\neq0$
is the smallest integer that does not belong to
$\underset{q=1}{\overset{p-1}{\bigcup}}\text{Far}_q$.
Two directions belonging to distinct equivalence classes are said
\emph{close} to each other.
\end{definition}
 The terminology comes from the following
proposition, which shows that the torus-dimension of the connected components
for the spectral distance is given by the number of directions close
to each other. On the contrary, two directions that are not close to
each other do
not contribute to the connected components: from the spectral distance
point of view, they are infinitely far from each other.

% The word ``far'' simply reflects that some directions (one may prefer say
% ``dimensions'') of $\mathbb{T}_\xi$ will be found to be infinitely
% \emph{far} from one another
% with respect to the spectral distance. 

\begin{prop} \cite[Prop. 3.4]{Martinetti:2008hl}
\label{propconneccomp}
$\text{Con}(\xox)$ is the $n_c$ torus
\begin{equation}
\label{defu} \mathbb{U}_\xi:= \underset{\tau\in [0,2\pi[}{\bigcup}
U^\xi_\tau \end{equation} where $U^\xi_\tau\subset T_\xi$ is the
$(n_c\!-\!1)$ torus defined by ($V_{i}(\tau)$ is given in
(\ref{vi}))
\begin{equation}
\label{txif} 
U^\xi_\tau :=  \{ 
\left( 
\begin{array}{rl}
    V_{i}(\tau) & \forall i\in\text{Far}_1\\
    e^{i\varphi_2} V_{i}(\tau) & \forall i\in\text{Far}_2\\ \ldots & \\
    e^{i\varphi_{n_{\!c}} }V_{i}(\tau) & \forall i\in\text{Far}_{n_c} 
\end{array}\right)
, \varphi_j\in\rr, j\in [2,n_c]\}.
\end{equation}
\end{prop}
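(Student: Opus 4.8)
The plan is to reduce the question to a single fibre and then classify, matrix entry by matrix entry, which components of an admissible $a$ can be driven large while keeping $L_D(a)$ controlled. First I would use that the horizontal lift through $\xox$ already lies in $\text{Con}(\xox)$ by \eqref{inclusion}, so moving the base point along $S^1$ costs only a finite geodesic distance; hence it suffices to decide finiteness of $d_A$ between $\ox$ and $\oz$ sitting on the \emph{same} fibre. Testing against the constant diagonal matrices, which satisfy $\partial_\tau a+[A,a]=0$ and so lie in $\ker L_D$, Lemma \ref{infinitelem} forces $|\xi_k|=|\zeta_k|$ for all $k$, i.e. $\oz\in T_\xi$. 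On this fibre, writing $a\in\A$ in the eigenbasis \eqref{diagA} of $iA$ and using \eqref{eqbis:24}, the Lipschitz seminorm is (up to the positive metric factor on $S^1$) the sup-norm of the covariant derivative, $L_D(a)=\|\partial_\tau a+[A,a]\|_\infty$, whose $(i,j)$ entry is $\partial_\tau a_{ij}+i(\theta_i-\theta_j)a_{ij}$.

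The key observation is a dichotomy governed by the holonomy. Solving the first-order equation for $a_{ij}$ along $S^1$ and imposing periodicity gives $(1-e^{-i\delta_{ij}})\,a_{ij}(x)=e^{-i\delta_{ij}}\!\int_0^{2\pi}e^{i(\Theta_i-\Theta_j)(s)}\bigl(\partial_\tau a+[A,a]\bigr)_{ij}(s)\,ds$, with $\delta_{ij}:=\Theta_i(2\pi)-\Theta_j(2\pi)$. When $i,j$ belong to the same class $\text{Far}_p$ of Definition \ref{connectdirect} one has $\delta_{ij}\equiv0$ and $a_{ij}(x)$ is \emph{unconstrained}: indeed $a_{ij}(\tau)=\lambda\,e^{-i(\Theta_i-\Theta_j)(\tau)}$ is a genuine periodic section with $L_D=0$. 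When instead $i,j$ are \emph{close} (distinct classes), $\delta_{ij}\not\equiv0\ (\mathrm{mod}\,2\pi)$, the prefactor is invertible, and one gets the uniform bound $|a_{ij}(x)|\le C_{ij}\,L_D(a)$ with $C_{ij}=2\pi/|1-e^{-i\delta_{ij}}|$.

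Both inclusions then follow. For $\text{Con}(\xox)\subseteq\mathbb{U}_\xi$: if $\oz\notin\mathbb{U}_\xi$ then, moduli being already equal, the relative phase of $\zeta$ inside some class $\text{Far}_p$ differs, i.e. $\bar\zeta_i\zeta_j\neq\bar\xi_i\xi_j$ for some $i,j\in\text{Far}_p$; feeding the self-adjoint covariantly constant section supported on the $(i,j)$ and $(j,i)$ slots into Lemma \ref{infinitelem}, and choosing the phase of $\lambda$ so that $\ox(a)-\oz(a)=2\,\re\bigl(\lambda(\bar\xi_i\xi_j-\bar\zeta_i\zeta_j)\bigr)\neq0$ while $L_D(a)=0$, forces $d_A=\infty$. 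For $\mathbb{U}_\xi\subseteq\text{Con}(\xox)$: when $\oz\in\mathbb{U}_\xi$ one has $\bar\zeta_i\zeta_j=\bar\xi_i\xi_j$ for all diagonal and all same-class off-diagonal pairs, so in $\ox(a)-\oz(a)=\sum_{i,j}(\bar\xi_i\xi_j-\bar\zeta_i\zeta_j)\,a_{ij}(x)$ only the \emph{cross-class} entries survive; each of these obeys $|a_{ij}(x)|\le C_{ij}L_D(a)$, whence $|\ox(a)-\oz(a)|\le\bigl(\sum_{i,j\ \text{close}}2|\bar\xi_i\xi_j|\,C_{ij}\bigr)\,L_D(a)$ and $d_A$ is finite. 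The description \eqref{txif} of $\mathbb{U}_\xi$ as the sub-torus where phases are locked within each $\text{Far}_p$ is exactly the statement that only cross-class entries remain.

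The step I expect to be the main obstacle is the lower bound $\mathbb{U}_\xi\subseteq\text{Con}(\xox)$. Conceptually, finiteness holds on \emph{all} of $\mathbb{U}_\xi$ even when the holonomy data $\{\delta_{ij}\}$ are rationally related, so that the horizontally accessible set of \eqref{eq:10} is only a proper sub-torus of $\mathbb{U}_\xi$; one therefore cannot deduce finiteness from accessibility and must produce the explicit uniform estimate $|a_{ij}(x)|\le C_{ij}L_D(a)$, which is precisely what lets the spectral distance reach phase configurations that are \emph{not} horizontally reachable. Technically one must also (i) carry out the reduction to a single fibre through the horizontal lift so that the remaining fibre estimate is genuinely multiplicative in $L_D(a)$, and (ii) verify that the sup-norm of $\partial_\tau a+[A,a]$ is indeed what \eqref{eqbis:24} yields for $L_D$ up to the $S^1$ metric factor. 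Once the fibre bound is in place the finiteness on $\mathbb{U}_\xi$ is immediate, and Proposition \ref{propconn2} is recovered as the case $n_c=2$.
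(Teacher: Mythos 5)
Your argument is correct and is, in substance, the proof of the cited reference \cite[Prop.~3.4]{Martinetti:2008hl} to which the survey defers: reduction to a single fibre via the horizontal lift, then the periodicity/integrating-factor identity $(1-e^{-i\delta_{ij}})a_{ij}=e^{-i\delta_{ij}}\int_0^{2\pi}e^{i(\Theta_i-\Theta_j)}b_{ij}$, which splits the entries into same-class pairs supporting covariantly constant sections with $L_D=0$ (whence infinite distance by Lemma~\ref{infinitelem}) and cross-class pairs obeying the uniform bound $\lvert a_{ij}(x)\rvert\le C_{ij}L_D(a)$ (whence finiteness). You also correctly isolate the genuinely delicate point, namely that in this infinite-dimensional algebra finiteness cannot be inferred from agreement on $\ker L_D$ or from horizontal accessibility and must come from the explicit multiplicative estimate.
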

\medskip

The spectral and the horizontal distances yield two distinct topologies
$\text{Con}$ and $\text{Acc}$ on the bundle of pure states $P$. Obviously
\begin{equation}
e^{i\Theta_{1j}(2k\pi)} = e^{i\Theta_{1i}(2k\pi)}\quad \forall
j\in\text{Far}(i),
\label{eq:102}
\end{equation}
hence $H_\tau^\xi\subset U_\tau^\xi$ fiber-wise and
$\text{Acc} (\xox) \subset \mathbb{U}(\xox)$ globally, as
expected from (\ref{inclusion}). Also obvious is the inclusion of
$\mathbb{U}_\xi$ within $\mathbb{T}_\xi$. To summarize the various
connected components organize as follows,
\begin{equation}
\label{composantesconnexesbis} \text{Acc}(\xox)\subset
\text{Con}(\xox) = {\mathbb{U}}_\xi\subset {\mathbb{T}}_\xi\subset P,
\end{equation}
or fiber-wise
\begin{equation}
\label{composantesconnexes} H_\tau^\xi \subset U_\tau^\xi\subset
T_\xi\subset \cc P^{n-1}.
\end{equation}
The difference between $\text{Acc}(\xox)$ and ${\mathbb{U}}_\xi$ is
governed by the irrationality of the connection, whereas the difference between
 ${\mathbb{U}}_\xi$ and $\mathbb{T}_\xi$ is governed by the number of
close directions. 
 More specifically
\begin{equation}
\label{diverstxi} \mathbb{T}_\xi= \underset{\zeta\in
T_\xi}{\bigcup}\text{Acc}(\zeta_x) \end{equation}
 is the union of all states with equal components up to phase
 factors. Meanwhile
\begin{equation}
\label{diversuxi} \mathbb{U}_\xi= \underset{\zeta\in
U_\xi}{\bigcup}\text{Acc}(\zeta_x), \end{equation}
 with
$U_\xi = U_{\tau = 0}^\xi$, is the union of all states with equal components up
to phase factors, with the extra-condition that phase factors
corresponding to directions far from each other must be equal.%  This is
% illustrated fiber-wise in low dimension in figure \ref{troistore}.
% \begin{figure}[hbt]
% \begin{center}
% \mbox{\rotatebox{0}{\scalebox{.6}{\includegraphics{/Users/pierre/physique/figures/troistore.pdf}}}}
% \end{center}
% \caption{\label{troistore} The connected components for the spectral
%   and the horizontal distances on the fiber over $x$, in case $n=4$ with directions
%   $3,4$ far from each other.  We chose a rational $\Theta_{12}(2\pi)$
%  and irrational $\Theta_{13}(2\pi) = \Theta_{14}(2\pi)$.
%  $\abs{V_i}$ determine a $3$-torus
%   $T_\xi=\left\{\varphi_i\in [0,2\pi[ , i=2,3,4 \right\} \subset\cc
%   P^{3}$. $\text{Arg}\, V_i$ 
% determine a point $O$ in $T_\xi$.
% \newline
% -$\text{Con}(\xox)\cap\pi^{-1}(x)$ is the $2$-torus $U_\xi = \left\{\varphi_2\in
%     [0,2\pi [, \varphi_3=\varphi_4\in [0,2\pi [\right\}$ containing $O$.\newline
%   -$\text{Acc}(\xox)\cap\pi^{-1}(x)$ 
% is a subset of $U_\xi$, discrete in $\varphi_2$ and
% dense in $\varphi_3\sim \varphi_4$.}
% \end{figure}
%  When 
% all the directions are close to each other
% (e.g. when the functions $\theta_i$'s are constant and distinct from
% one another) then $U_\xi=T_\xi$, $n_c =n$ and
% $$\mathbb{U}_\xi= \mathbb{T}_\xi.$$
% On the contrary when all the directions are far from each other,
% that is to say when the holonomy is trivial, $n_c = 1$ and, in
% agreement with \cite[Prop. IV.1]{cc},
% $${\mathbb{U}}_\xi =  S^1 = \text{Acc}(\xox).$$
% In short $\text{Con}(\xox)$ varies from
% $\text{Acc}(\xox)$ to $\mathbb{T}_\xi$ while 
% $\text{Acc}(\xox)$ varies from a discrete to a dense subset of  $\text{Con}(\xox)$.

Note that none of the distances is able to "see" between different
tori $\mathbb{T}_\xi$, $\mathbb{T}_\eta$. However within a given
$\mathbb{U}_\xi$ the spectral
distance ``sees'' between the horizontal components.
In this sense the spectral distance keeps "better in mind"
the bundle structure of the set of pure states $P$ (see also figure
\ref{ds} in \S \ref{sec:fiber}).  This suggests
that the spectral 
distance could be relevant to study some transverse metric structure in a more general
framework of foliation.

\subsection{A low dimensional example}
\label{sec:lowdim}
Having individuated the connected components of the spectral distance, we now compute 
the latter explicitly in two examples: on the whole of the bundle $P$
of pure states in the low dimension case $n=2$ below, and on a given fiber for
arbitrary $n$ in \S\ref{sec:fiber}. 
\medskip

Identifying
${\mathcal P}(M_2(\C)) \simeq \C P^{n1-}$ with the $2$-sphere via
\eqref{hopf}, the pure state space of $C^\infty(S^1, M_2(\cc))$ is a
bundle in sphere over $S^1$. The pure state $\xox$ in \eqref{eq:97} is
mapped to the point  
\begin{equation}
  \label{eq:107}
x_0 =  R \cos \theta_0,\quad y_0 =  R \sin \theta_0,\quad z_0 =
z_\xi
\end{equation}
on the fiber $\pi^{-1}(x)$, where we define
\begin{equation}
 2V_1\overline{V_2} =: Re^{i\theta_0}.\label{eq:106}
 \end{equation}
The torus  $T_\xi$ in \eqref{txi} is mapped to the circle of
radius $R$
\begin{equation}
S_R := \left\{ x,y,z \in S^2, z= z_0\right\},
\end{equation}
 so that by Prop. \ref{propconn2} and assuming the holonomy is not
 trivial, the connected component
\begin{equation}
\label{ttxi}
\text{Con}(\xox) = \mathbb{T}_{\xi} = S^1 \times
S_R
\end{equation}
 is a 2-dimensional torus (see Figure \ref{figtore}).

The points accessible from $\xox$ are given in (\ref{eq:98}) as 
\begin{equation}
\xox^k:= c(\tau +2k\pi), \tau\in [0, 2\pi[, k\in \mathbb{Z}.
\end{equation}
On the sphere they 
have coordinates
\begin{equation}
x_\tau^k := R \cos (\theta_0 - \theta_\tau^k),\quad
y_\tau^k:= R \sin (\theta_0 - \theta_\tau^k),\quad
z_\tau^k:= z_\xi
\end{equation}
where
$\theta_\tau^k:= \theta(\tau + 2k\pi).$
${\text{Acc}(\xi_x)}$ is discrete or dense within
$\mathbb{T}_{\xi}$ , depending whether $\Theta(2\pi)$  is rational or
not. %and assuming the holonomy is not trivial, i.e. that 
% $$\int_0^{2\pi} A(t) dt$$ is not a multiple of the identity, both directions of $T_{\xi}$ are closed
% to each other which means that
% \begin{equation}
% \mathbb{U}_{\xi}=\mathbb{T}_{\xi}
% .\label{eq:105}
% \end{equation}
% Similarly for any $\zeta\in\cc P^1$ such that $z_\xi =
% z_\zeta$ one has $\text{Acc}(\zeta_x)\subset \mathbb{T}_{\xi}$. In
% fact
% \begin{equation}
% \label{tz0}
%  \mathbb{T}_{\xi} = \underset{\underset{z_\zeta =
% z_\xi}{\zeta\in\cc P^1,}}{\bigcup}\text{Acc}(\zeta_x).
% \end{equation}
 \begin{figure}[hh]
\begin{center}
\mbox{\rotatebox{0}{\scalebox{.5}{\includegraphics{{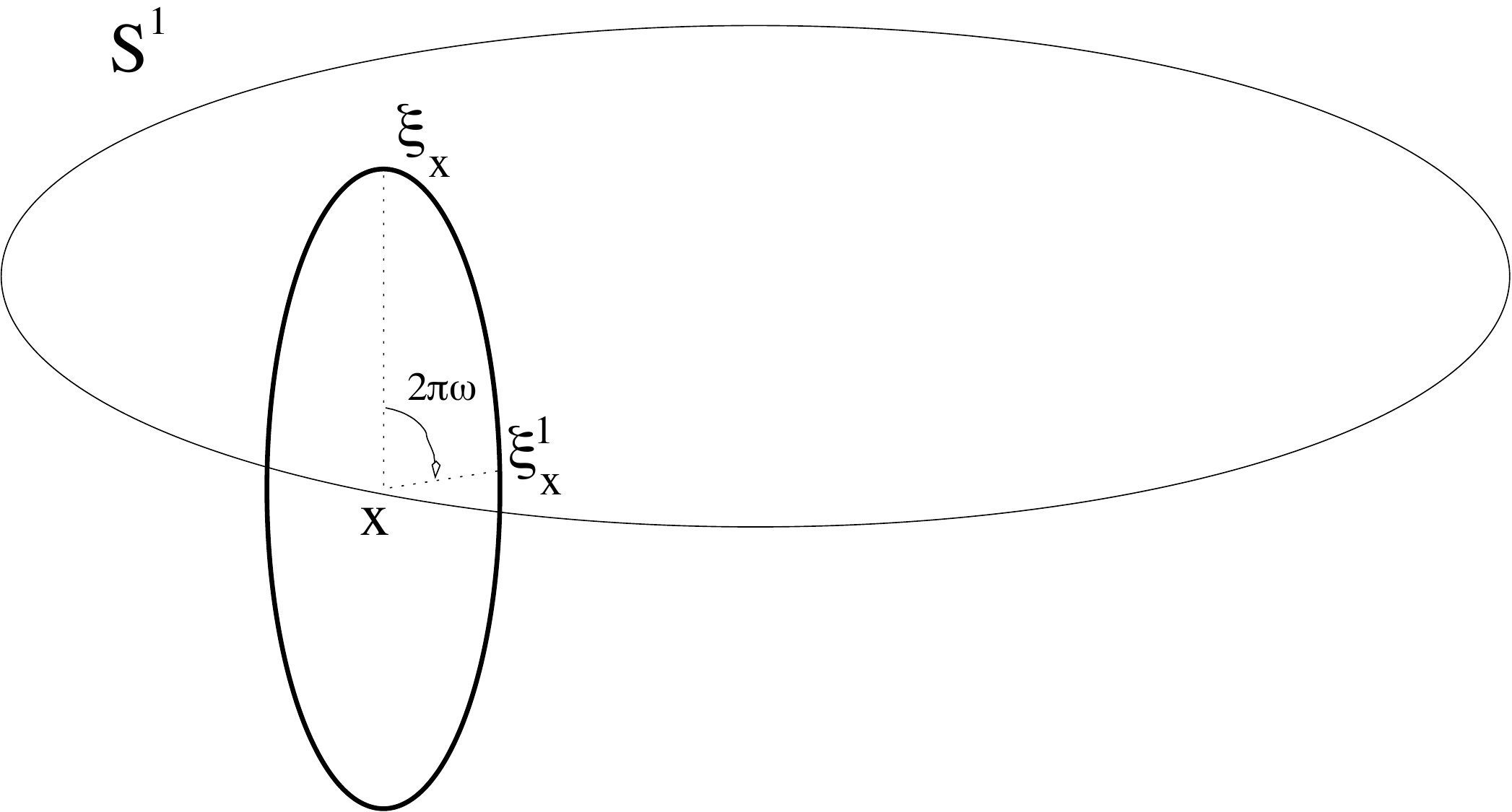}}}}}
\end{center}
\caption{ \label{figtore} The $2$-torus $\mathbb{T}_{\xi}$, the accessible point $\xi_x^1$ and an arbitrary pure state
$\zeta_y$.}
\end{figure}

To compute the spectral distance on on $\mathbb{T}_{\xi}$ we use the
following parametrization.
{\defi{Given $\xox$ in $P$, any pure state $\yoz$ in the
    $2$-torus $\mathbb{T}_\xi$ is in
one-to-one correspondence with an equivalence class
\begin{equation}
\label{parameters} (k\in\nn,\; 0\leq \tau_0\leq 2\pi,\;0\leq\varphi
\leq 2\pi)\sim (k + \zz,\; \tau_0,\;\varphi- 2\zz\omega\pi)
\end{equation} such that
\begin{equation}
\label{tauk} \tau = 2k\pi + \tau_0,\quad \omega:=
\frac{\Theta_1(2\pi) - \Theta_2(2\pi)}{2\pi},\quad
\yoz =
\left(\begin{array}{r} V_1(\tau)\\
e^{i\varphi}V_2(\tau)\end{array}\right).
\end{equation}}}
After a rather lengthy computation, one finds
\begin{prop}\cite[Prop. 4.5]{Martinetti:2008hl}
\label{propcarat} Let $\xox$ be a pure state in $P$ and $\yoz  = (k, \tau_0, \varphi)$
 a pure state in ${\mathbb{T}_\xi}$. Then either
 the two directions are far from each other
so that $\text{Con}(\xox)=\text{Acc}(\xox)$ and
\begin{equation}
\label{caratzero} \displaystyle d_A(\xox, \yoz)= \left\{
\begin{array}{ll}
\text{min} (\tau_0, 2\pi - \tau_0) &\text{when } \varphi = 0\\
+\infty &\text{when } \varphi \neq 0
\end{array}
\right.;
\end{equation}
or the directions are close to each other so that
$\text{Con}(\xox)={\mathbb{T}}_\xi$ and
\begin{equation}
\label{carat} \displaystyle d_A(\xox, \yoz)=
\underset{\mathcal{T}_\pm}{\max} \; H_\xi(T,\Delta)
\end{equation}
where \begin{equation*}
%\label{defH} 
H_\xi(T,\Delta):= T + z_\xi\Delta +
RW_{k+1}\sqrt{(\tau_0-T)^2-\Delta^2}
+RW_{k}\sqrt{(2\pi-\tau_0-T)^2-\Delta^2}
\end{equation*}
with
\begin{equation}
\label{wmaxx} W_k  :=
\frac{\abs{\sin(k\omega\pi+\frac{\varphi}{2})}}{\abs{\sin
\omega\pi}};
\end{equation}
and the maximum is on one of the triangles (see fig. \ref{triangle})
\begin{equation}
  \label{eq:002}
  {\mathcal T}_\pm := T \pm \Delta \leq \text{min}(\tau_0, 2\pi-\tau_0)
\end{equation}
with sign the one of $z_\xi$.
\end{prop}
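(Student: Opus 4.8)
The plan is to reduce the supremum defining $d_A$ in \eqref{distancefluc} to a two-parameter optimisation over $(T,\Delta)$ by gauge-trivialising the connection, and then to solve the resulting control problem explicitly. First I would record that for a selfadjoint $a\in\A=C^\infty(S^1,M_2(\C))$ the relevant fluctuated operator \eqref{eqbis:24} gives
\[
\norm{[D_\mu,a]}=\sup_{s\in S^1}\norm{\partial_s a(s)+[A(s),a(s)]}_{\mathrm{op}},
\]
the right-hand norm being the operator norm on $M_2(\C)$; this is the matrix analogue of the fact that $\norm{[\ds,f]}$ is the Lipschitz constant of $f$. Since the supremum in \eqref{distancefluc} may be restricted to selfadjoint elements, one works throughout with $a=a^*$.

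The key simplification is a gauge trivialisation. With $A=i\,\mathrm{diag}(\theta_1,\theta_2)$, set $U(s)=\mathrm{diag}(e^{-i\Theta_1(s)},e^{-i\Theta_2(s)})$, the parallel transport of \eqref{vi}, and put $b(s):=U(s)^*a(s)U(s)$. A short computation gives $\partial_s b=U^*\bigl(\partial_s a+[A,a]\bigr)U$, so the constraint becomes $\norm{\partial_s b}_{\mathrm{op}}\le 1$, while the $2\pi$-periodicity of $a$ turns into the holonomy twist $b(s+2\pi)=W^*b(s)W$ with $W:=U(2\pi)$, an equatorial rotation by $2\pi\omega$. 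The two evaluations also collapse: since $U(0)=\I$ and, by \eqref{vi}, the vector of $\yoz$ equals $E_\varphi U(\tau)\xi$ with $E_\varphi=\mathrm{diag}(1,e^{i\varphi})$ diagonal, one finds $\xox(a)=\omega_\xi(b(0))$ and $\yoz(a)=\omega_{\xi_\varphi}(b(\tau))$, where $\xi_\varphi:=(V_1,e^{i\varphi}V_2)$ and $\tau=2k\pi+\tau_0$. Writing $b=\beta_0\I+\vec\beta\cdot\vec\sigma$, the eigenvalues of $\partial_s b$ yield the clean constraint $\abs{\dot\beta_0}+\abs{\dot{\vec\beta}}\le 1$, while $\omega_\eta(b)=\beta_0+\vec\beta\cdot p_\eta$ with $p_\eta\in S^2$ the Bloch point of \eqref{hopf}; both $p_\xi$ and $p_{\xi_\varphi}$ then lie on the circle of radius $R$ at height $z_\xi$, separated in longitude by an angle fixed by $\varphi$ and, through $W$, by the winding $k$. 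The distance becomes
\[
d_A(\xox,\yoz)=\sup\Big\{\,[\beta_0(0)-\beta_0(\tau)]+\vec\beta(0)\!\cdot\!p_\xi-\vec\beta(\tau)\!\cdot\!p_{\xi_\varphi}\ :\ \abs{\dot\beta_0}+\abs{\dot{\vec\beta}}\le 1,\ b(\cdot+2\pi)=W^*b(\cdot)W\Big\}.
\]

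Solving this control problem is the heart of the argument. As the functional depends only on the endpoints $b(0),b(\tau)$, an optimiser spends its unit speed budget either on the identity direction (moving $\beta_0$, contributing linearly and accounting for the term $T$) or on transporting $\vec\beta$ between an alignment with $p_\xi$ and an anti-alignment with $p_{\xi_\varphi}$. Splitting the Euclidean velocity $\dot{\vec\beta}$ into a vertical part of net displacement $\Delta$ (giving $z_\xi\Delta$) and an equatorial part shows that for a budget $\ell$ on an arc the maximal equatorial displacement is $\sqrt{\ell^2-\Delta^2}$, which is the source of the two square roots. The two arcs joining the base points along $S^1$, of lengths $\tau_0$ and $2\pi-\tau_0$, must both be used, and the holonomy forces their equatorial phases to differ by one winding; optimising the off-diagonal amplitude on each produces the coefficients $RW_{k+1}$ and $RW_k$ with $W_k$ as in \eqref{wmaxx}. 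Collecting terms gives exactly $H_\xi(T,\Delta)$, with $(T,\Delta)$ constrained by the per-arc budget to the triangle $\mathcal T_\pm$ of \eqref{eq:002}, the sign of $z_\xi$ deciding towards which pole it is advantageous to lean. In the far case $\omega\in\zz$, so $\sin\omega\pi=0$: the equatorial holonomy is trivial and $b_{12}$ carries a conserved quantity, whence Lemma \ref{infinitelem} forces $d_A=+\infty$ when $\varphi\neq0$, while for $\varphi=0$ the state $\yoz$ lies on $\mathrm{Acc}(\xox)$ and Corollary \ref{coroll:distancereduite} reduces the computation to the base $S^1$, giving $\min(\tau_0,2\pi-\tau_0)$.

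The delicate point, and the location of the ``rather lengthy computation'', is the last step: one must prove that the supremum is actually realised on the two-parameter family indexed by $(T,\Delta)$ — i.e. that extremal trajectories saturate $\abs{\dot\beta_0}+\abs{\dot{\vec\beta}}=1$, move $\vec\beta$ along straight equatorial/meridional segments, and distribute the budget optimally between the two arcs — and that no admissible $b$ with a more intricate winding history does better, so that the feasible region is precisely $\mathcal T_\pm$ and the winding bookkeeping matches the factors $W_k,W_{k+1}$.
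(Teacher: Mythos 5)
Your reduction is correct and is essentially the route of the proof in \cite{Martinetti:2008hl} (the present survey only quotes the result, referring to that paper for the ``rather lengthy computation''): on $S^1$ the commutator $[D_\mu,a]$ acts by multiplication by $-i(\partial_s a+[A,a])$; the gauge transformation $b=U^*aU$ turns the constraint into $\norm{\partial_s b}\leq 1$ with the holonomy-twisted periodicity $b(s+2\pi)=W^*b(s)W$; and the Bloch decomposition gives the pointwise budget $\abs{\dot\beta_0}+\abs{\dot{\vec\beta}}\leq 1$ together with evaluations depending only on the endpoint data $b(0)$ and $b(\tau_0)$ (since $W$ is diagonal, $\beta_0$ and $\beta_3$ are genuinely $2\pi$-periodic while $b_{12}$ acquires the phase $e^{2\pi ik\omega}$ after $k$ turns). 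The identification $T=\beta_0(0)-\beta_0(\tau_0)$, $\Delta=\beta_3(0)-\beta_3(\tau_0)$, and the observation that both square roots must carry the same $T,\Delta$ because the diagonal displacement accrued on one arc has to be undone on the complementary one, are exactly right. Two small points: Corollary \ref{coroll:distancereduite} is stated for the unfluctuated product and cannot be invoked for $D_\mu$; in the far case with $\varphi=0$ the lower bound $\min(\tau_0,2\pi-\tau_0)$ should instead come from elements $f\otimes\I_2$, which commute with $A$, combined with the upper bound $d_A\leq d_h$.

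The genuine gap is that the decisive step is asserted rather than proved, and it is precisely the content of the proposition. After your reduction, the off-diagonal contribution is a constrained optimization over the two complex numbers $u=b_{12}(0)$, $v=b_{12}(\tau_0)$: a linear functional determined by $\theta_0$, $\varphi$ and the winding phase $e^{2\pi ik\omega}$ must be maximized under the two disc constraints $\abs{u-v}\leq\ell_1$ and $\abs{e^{2\pi i\omega}u-v}\leq\ell_2$, with $\ell_1=\sqrt{(\tau_0-T)^2-\Delta^2}$ and $\ell_2=\sqrt{(2\pi-\tau_0-T)^2-\Delta^2}$. Showing that the value of this problem is $RW_{k+1}\ell_1+RW_k\ell_2$ with exactly the coefficients \eqref{wmaxx} --- including why $W_{k+1}$ attaches to the $\tau_0$-arc and $W_k$ to the other --- is the heart of the matter and is not carried out. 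Also missing are the reduction of the feasible region to the single triangle $\mathcal{T}_\pm$ with the sign dictated by $z_\xi$, and the attainability of the bound (an optimal element or optimal sequence), without which one only has an upper estimate for the supremum. You flag these issues honestly, but as written the proposal is a correct and well-organized plan rather than a proof.
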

\begin{figure}[h*]
\label{triangle}
\begin{center}
\mbox{\rotatebox{0}{\scalebox{.5}{\includegraphics{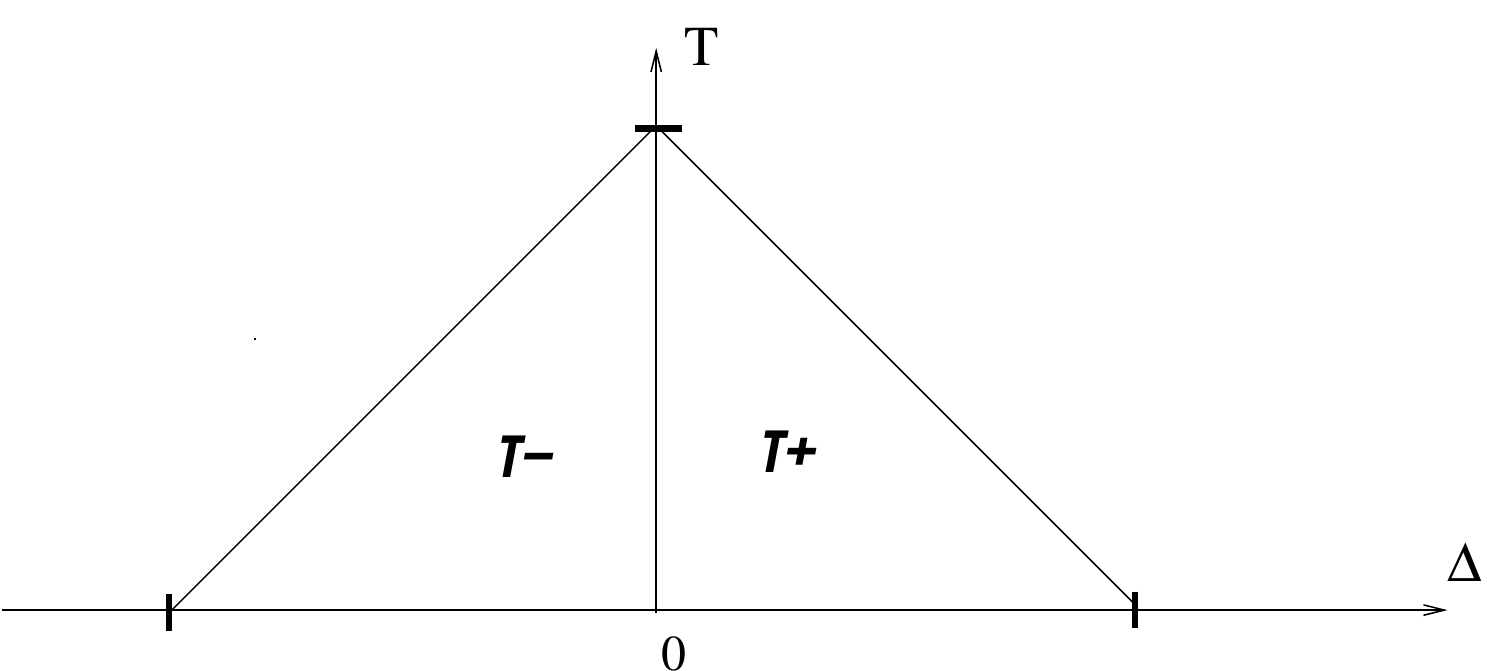}}}}
\end{center}
\caption{Unit is $\text{min}(\tau_0, 2\pi-\tau_0)$. }
\end{figure}
\noindent For $\xi$ an equatorial state, i.e. $z_{\xi} =
0$, the result greatly simplifies
\begin{prop} \cite[\S 4.3.1]{Martinetti:2008hl}
    \label{revisun}
    \begin{equation}
    d_A(\xox,\yoz) = H_\xi(0,0)= RW_{k+1}\tau_0 + RW_{k}(2\pi-\tau_0).
\label{eq:117}
    \end{equation}
\end{prop}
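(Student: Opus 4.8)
The plan is to derive Proposition~\ref{revisun} directly from the general formula \eqref{carat} of Proposition~\ref{propcarat}, working in the close-direction case (so that $\text{Con}(\xox)=\mathbb{T}_\xi$ and \eqref{carat} is the operative formula) and simply locating the maximiser of $H_\xi$ under the equatorial hypothesis. First I would record the two simplifications produced by $z_\xi=0$. Since the pure states of $M_2(\C)$ sit on the unit sphere via \eqref{hopf}, the circle $S_R$ at height $z_\xi$ has radius $R=\sqrt{1-z_\xi^2}$; equivalently, from \eqref{eq:106} the equatorial condition $|V_1|^2=|V_2|^2=\tfrac12$ gives $R=|2V_1\overline{V_2}|=2|V_1||V_2|=1$. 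Moreover, with $z_\xi=0$ the term $z_\xi\Delta$ drops out of $H_\xi(T,\Delta)$ and the two radicals depend on $\Delta$ only through $\Delta^2$; hence $H_\xi(T,\Delta)=H_\xi(T,-\Delta)$, the two triangles $\mathcal{T}_\pm$ give the same supremum (the sign being immaterial when $z_\xi=0$), and I may work on
$$
\mathcal{T}_+=\{\,T\geq 0,\ \Delta\geq 0,\ T+\Delta\leq m\,\},\qquad m:=\min(\tau_0,2\pi-\tau_0).
$$

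On $\mathcal{T}_+$ both $\tau_0-T$ and $2\pi-\tau_0-T$ are non-negative (each is $\geq m-T\geq\Delta\geq 0$), so the radicands are genuinely of the form $(\,\cdot\,)^2-\Delta^2$ with non-negative base, and substituting $z_\xi=0$, $R=1$ yields
$$
H_\xi(T,\Delta)=T+W_{k+1}\sqrt{(\tau_0-T)^2-\Delta^2}+W_k\sqrt{(2\pi-\tau_0-T)^2-\Delta^2}.
$$
The argument then proceeds by two monotonicity steps. For fixed $T$ each radical is decreasing in $\Delta\geq 0$ (its derivative is $-\Delta/\sqrt{(\,\cdot\,)^2-\Delta^2}\leq 0$) and the weights $W_k,W_{k+1}$ are non-negative, so $\partial_\Delta H_\xi\leq 0$ and the maximum over admissible $\Delta$ is reached at $\Delta=0$. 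It then remains to maximise over $T\in[0,m]$ the affine function
$$
H_\xi(T,0)=W_{k+1}\tau_0+W_k(2\pi-\tau_0)+\big(1-W_{k+1}-W_k\big)\,T,
$$
whose slope is $1-(W_{k+1}+W_k)$.

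The crux, and the step I expect to be the main obstacle, is to show this slope is non-positive, i.e. $W_k+W_{k+1}\geq 1$; only then is the maximum forced to the vertex $T=0$, giving $H_\xi(0,0)=W_{k+1}\tau_0+W_k(2\pi-\tau_0)=RW_{k+1}\tau_0+RW_k(2\pi-\tau_0)$, which is the value claimed in \eqref{eq:117}. Writing $\alpha:=k\omega\pi+\tfrac{\varphi}{2}$, definition \eqref{wmaxx} together with $R=1$ turns the desired inequality into $|\sin\alpha|+|\sin(\alpha+\omega\pi)|\geq|\sin\omega\pi|$. I would obtain this from the subtraction formula $\sin\omega\pi=\sin(\alpha+\omega\pi)\cos\alpha-\cos(\alpha+\omega\pi)\sin\alpha$ combined with the ordinary triangle inequality and $|\cos|\leq 1$, which gives $|\sin\omega\pi|\leq|\sin(\alpha+\omega\pi)|+|\sin\alpha|$. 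Chaining the two monotonicity reductions with this sine triangle inequality pins the supremum of $H_\xi$ at $(T,\Delta)=(0,0)$ and establishes \eqref{eq:117}; the only genuine inputs beyond routine calculus are the equatorial identity $R=1$ and this trigonometric inequality, so I would concentrate the write-up there and treat the domain bookkeeping and the affine maximisation as elementary.
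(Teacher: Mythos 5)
Your proposal is correct and follows the route the survey intends: the paper itself omits the proof (deferring to \cite[\S 4.3.1]{Martinetti:2008hl}) and presents Prop.~\ref{revisun} as the specialization of \eqref{carat} to $z_\xi=0$, which is exactly what you carry out — dropping the $z_\xi\Delta$ term, using $R=1$, reducing to $\Delta=0$ by monotonicity, and pinning $T=0$ via the inequality $W_k+W_{k+1}\geq 1$ obtained from $\sin\omega\pi=\sin\bigl((\alpha+\omega\pi)-\alpha\bigr)$ and the definition \eqref{wmaxx}. The identification of that last inequality as the only non-routine ingredient is accurate, and the argument is complete.
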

\medskip

\subsection{Distances on the fiber}
\label{sec:fiber}

In the general case $\aa=C^{\infty}(S^1,\mn)$
for arbitrary integer $n\in\nn$ one can explicitly compute the spectral distance for two pure states on the same fiber. 
$\mathbb{T}_\xi$ is now a $n$-torus
and instead of (\ref{parameters}) one deals with equivalence classes
of $(n+1)$-tuples
\begin{equation}
\label{nparameters} \lp k\in \nn,\, 0\leq \tau_0\leq 2\pi, \, 0\leq
\varphi_i \leq 2\pi\rp \sim \lp k+ \zz,\, \tau_0,\, \varphi_j -
2\zz\omega_j\pi\rp
\end{equation}
with
\begin{equation}
\label{ntauk} \omega_j:= \frac{\Theta_1(2\pi) -
\Theta_j(2\pi)}{2\pi}\quad\quad\quad \forall j\in[2,n],
\end{equation}
such that $\yoz$ in $\mathbb{T}_\xi$ writes
\begin{equation}
\label{etattaun}
\yoz = \left(\begin{array}{r} V_1(\tau)\\
e^{i\varphi_j} V_j(\tau)\end{array}\right) \end{equation} where
$\tau:= 2k\pi + \tau_0$. As soon as $n>2$ there is no longer
correspondence between the fiber of $P$ and a sphere, however in
analogy with (\ref{eq:106})  we write
\begin{equation} \label{ndeltaxi}
V_j = \sqrt{\frac{R_j}{2}}e^{i\theta^0_j}
\end{equation}
where $R_j\in\rr^+$ and $\theta^0_j\in[0,2\pi]$. The spectral distance on a given fiber has a simple expression.  To fix notation we consider the fiber over $x$ and we identify
$\xi_x$ to the $n+1$-tuple $(0,0, ..., 0)$.
\begin{prop}\cite[Prop. 5.2]{Martinetti:2008hl}
 Given a pure
  state $\xoz = (k, 0, \varphi_j)\in\mathbb{T}_\xi$, either $\xoz$
  does not belong to the connected component $\mathbb{U}_\xi$ and
  $\label{infn} d(\xox,\xoz) = +\infty $
  or $\xoz\in\mathbb{U}_\xi$ and
  \begin{equation}
    \label{resultatnproof} 
d_A(\xox,\xoz) = \pi \text{Tr}\, {\abs{S_k}}
  \end{equation}
  where $\abs{S_k} = \sqrt{S_k^*S_k}$ and $S_k$ is the matrix with
  components
  \begin{equation}
    \label{defSij} S^k_{ij}:= \sqrt{R_i R_j}\;\frac{\sin\lp
      k\pi(\omega_j - \omega_i)+ \frac{\varphi_j - \varphi_i}{2}\rp}{ \sin
      \pi(\omega_j - \omega_i)}.
  \end{equation}
\end{prop}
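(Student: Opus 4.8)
The plan is to collapse the supremum defining $d_A$ to a finite–dimensional matrix problem depending only on the value of $a$ at the base point, and then to dualise it. I work on the fibre over $x_0$ with $c_*(0)=x_0$, and write $\rho_\xi=\ket{\xi}\bra{\xi}$, $\rho_\zeta=\ket{\zeta}\bra{\zeta}$ for the density matrices of $\ox,\oz$, with $\Delta:=\rho_\xi-\rho_\zeta$. Since $\zeta_j=V_je^{i\psi_j}$, $\psi_j:=\varphi_j+2\pi k\omega_j$, has the same moduli as $\xi_j=V_j$, the matrix $\Delta$ is Hermitian, traceless, of rank at most two and \emph{with vanishing diagonal} --- this last fact is what drives the whole computation.

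First I would push the Lipschitz constraint to the base point. For $a\in\aa$ the relevant commutator norm is $\norm{[\ds\otimes\I_F+A,\,a]}=\sup_{x}\norm{\partial_x a(x)+[A,a](x)}$, the inner norm being the operator norm on $\cc^n$ (the spinor factor is inert and the left action of $\mn$ on itself has operator norm equal to that on $\cc^n$). Conjugating by the diagonal unitary $U(x)=\mathrm{diag}(e^{i\Theta_j(x)})$ absorbs the connection: $b:=UaU^{*}$ obeys $\norm{\partial_x b}=\norm{\partial_x a+[A,a]}$ together with the holonomy–twisted condition $b(2\pi)=Wb(0)W^{*}$, $W:=\mathrm{diag}(e^{i\Theta_j(2\pi)})$. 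As both states see $a$ only through $P:=a(x_0)=b(0)$, one has $\abs{\ox(a)-\oz(a)}=\abs{\text{Tr}(P\Delta)}$, while admissibility of $a$ is equivalent, by the fundamental theorem of calculus (the straight segment $b$ being length–optimal), to $\norm{WPW^{*}-P}\le 2\pi$. Hence
\begin{equation*}
d_A(\xox,\yoz)=\sup\bigl\{\,\abs{\text{Tr}(P\Delta)}\;:\;P=P^{*},\ \norm{WPW^{*}-P}\le 2\pi\,\bigr\}.
\end{equation*}

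Next I would dualise this norm–constrained linear program; Slater's condition holds ($P=0$ is strictly feasible), so with $\Phi(P):=WPW^{*}-P$ and adjoint $\Phi^{*}(Y)=W^{*}YW-Y$ for the trace pairing one gets
\begin{equation*}
d_A(\xox,\yoz)=2\pi\,\inf\bigl\{\,\norm{Y}_{1}\;:\;W^{*}YW-Y=\Delta\,\bigr\},
\end{equation*}
$\norm{\cdot}_1$ being the trace norm, dual to the operator norm. Reading the constraint entrywise, $(\bar w_iw_j-1)Y_{ij}=\Delta_{ij}$, fixes each \emph{close} off–diagonal entry to $Y_{ij}=\Delta_{ij}/(\bar w_iw_j-1)$, while the diagonal and the \emph{far} off–diagonal entries ($w_i=w_j$) remain free. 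Feasibility demands $\Delta_{ij}=0$ on every far pair, i.e. $\varphi_i=\varphi_j$ inside each class $\text{Far}(i)$; this is precisely $\yoz\in\mathbb U_\xi$, and otherwise the infimum is over the empty set and $d_A=+\infty$, recovering the dichotomy of Prop.~\ref{propconneccomp} directly from the dual. A short manipulation identifies the fixed block with $\tfrac12 FS_kF^{*}$ for a diagonal unitary $F$, $S_k$ being the matrix of the statement ($S^{k}_{ij}=\sqrt{R_iR_j}\,\sin(\beta_j-\beta_i)/\sin(\alpha_j-\alpha_i)$, $\alpha_j=\pi\omega_j$, $\beta_j=k\pi\omega_j+\varphi_j/2$).

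The heart of the matter --- and the step I expect to be hardest --- is then to evaluate $\inf\norm{Y}_1$ over the free entries and show it equals $\tfrac12\text{Tr}\abs{S_k}$. This is a minimal–trace–norm completion problem: the naive completion by the zero diagonal gives only the bound $\inf\norm{Y}_1\le\tfrac12\text{Tr}\abs{S_k}$, and for $n\ge 3$ this bound is genuinely \emph{not} attained inside the admissible class, since $W^{*}YW-Y$ forces $Y$ to carry the diagonal prescribed by optimality rather than zero. The mechanism I anticipate is that, because $\Delta$ has rank two, the optimal $Y$ may be taken of rank two as well: choosing the free diagonal so as to drop the completed matrix to rank two (matching $\mathrm{rank}\,\Delta=2$) pins the value, its two non–zero singular values reproducing $\text{Tr}\abs{S_k}$. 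Carrying this out explicitly with the trigonometric entries above, and identifying those singular values, is the lengthy computation; when several directions are mutually far one first collapses each class $\text{Far}_p$ to a single effective direction, reducing to the all–close case $n_c=n$. The matching lower bound is finally certified by transporting the optimal $P$ back along the straight path $b(x)=P+\tfrac{x}{2\pi}(WPW^{*}-P)$, $a=U^{*}bU$, which attains the supremum and yields $d_A=\pi\,\text{Tr}\abs{S_k}$.
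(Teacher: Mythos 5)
Your reduction is correct as far as it goes, and it is the natural route to this result: the commutator norm does collapse to $\sup_x\norm{\partial_x a+[A,a]}$, the conjugation by $U(x)=\mathrm{diag}(e^{i\Theta_j(x)})$ does trade the connection for the twisted periodicity $b(2\pi)=Wb(0)W^*$, the reachable set of $P=a(x_0)$ is exactly $\{\norm{WPW^*-P}\le 2\pi\}$ (necessity by the fundamental theorem of calculus, sufficiency by the affine path, up to a standard smoothing at the endpoint), and the duality $d_A=2\pi\inf\{\norm{Y}_1:\,W^*YW-Y=\Delta\}$, together with the entrywise analysis (close entries fixed, diagonal and far entries free, infeasibility precisely off $\mathbb U_\xi$), is sound. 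The gap is entirely in the step you leave as ``the lengthy computation'': the assertion that the trace-norm--minimal completion can be taken of rank two (matching $\mathrm{rank}\,\Delta$) and that its trace norm equals $\tfrac12\mathrm{Tr}\abs{S_k}$ is not proved, and as stated it is false. Note also that the sentence ``$W^*YW-Y$ forces $Y$ to carry the diagonal prescribed by optimality'' is empty: the constraint leaves the diagonal of $Y$ entirely free, so nothing forces anything there.

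Here is a configuration where the proposed mechanism fails. Take $n=3$, $V_j=1/\sqrt3$ (so $R_j=2/3$), a connection with $\omega_j=(j-1)\epsilon$ for a small irrational $\epsilon$ (all directions close), and let $\xoz=\xi^1_x$ be the first return of the horizontal lift, i.e.\ $(k,\tau_0,\varphi_j)=(1,0,0)$. Then $S^1_{ij}=\tfrac23$ for every $i\neq j$, the zero completion is $\tfrac23(J_3-I_3)$ with trace norm $8/3$, and the zero-diagonal reading of \eqref{resultatnproof} would give $d_A=8\pi/3$. But $d_A\le d_h=2\pi$ by \eqref{dinfdh}, and your own dual bound with the rank-\emph{one} completion $Y=-\tfrac13 J_3$ (trace norm $1$) already gives $d_A\le 2\pi$; this bound is attained, since $Q:=WPW^*-P=\pi(I_3-J_3)$ is Hermitian, has zero diagonal, has norm $2\pi$, and yields $\mathrm{Tr}(\Delta P)=2\pi$. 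So the infimum over completions is strictly below the zero-completion value, the optimal completion has rank one, and $\mathrm{rank}\,\Delta=2$ does not control it. What your (correct) reduction actually establishes is $d_A=\pi\,\inf_D\norm{S_k+D}_1$, the infimum running over Hermitian matrices supported on the diagonal and the far pairs; for $n=2$ the zero completion is optimal and one recovers \eqref{eq:155}, but for $n\ge3$ it need not be. To close the proof you must either show the zero completion is always optimal (the example above shows it is not) or identify the optimal completion and prove its trace norm is the quantity meant by $\mathrm{Tr}\abs{S_k}$ in \eqref{resultatnproof} (whose value on the diagonal is a $0/0$ in \eqref{defSij} and must be pinned down for the statement to be unambiguous); neither is done in the proposal.
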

\bigskip

In the low dimensional case $n=2$, the connected component
$\text{Con}(\xox)$ of the spectral distance is 
$2$-torus $\mathbb T_\xi$ (see figure \ref{figtore}), whereas
the connected component $\text{Acc}(\xox)$ of the horizontal distance
is a subset of it. Following \eqref{nparameters}, one parametrizes the
$S^1$-fiber of $\mathbb T_\xi$ over $x$ - denoted $S_x$ -  by
\begin{equation}
\Xi:= 2k\omega\pi + \varphi \quad\text{mod} [2\pi]\quad\quad k\in{\mathbb N}, \;0\leq
\varphi\leq 2\pi.
\label{eq:154}
\end{equation}
In this parametrization, $\xox$ has coordinate $\Xi=0$.
\smallskip

\begin{prop} \cite[\S 6.2]{Martinetti:2006db}
\label{dfibre}
For $\zeta_x\in S_x$ with coordinate  $\Xi\in[0, 2\pi]$, one has
\begin{equation}
 d_A(\xox,\zeta_x) 
= \frac{2\pi R}{\abs{\sin
    \omega\pi}}\sin \frac{\Xi}{2}
.\label{eq:155}
\end{equation}
\end{prop}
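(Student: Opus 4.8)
The plan is to evaluate the supremum in \eqref{distancefluc} directly, exploiting the fact that both $\xox$ and $\zeta_x$ sit over the \emph{same} base point $x$. Restricting the search to self-adjoint $a=a^*$ and writing $t\in[0,2\pi]$ for the arc-length coordinate on $S^1$ (so that $x$ corresponds to $t=0$), the functional depends only on the single Hermitian matrix $M:=a(x)\in M_2(\C)$:
\[
\xox(a)-\zeta_x(a)=\mathrm{Tr}\big((P_\xi-P_\zeta)\,M\big),
\]
where $P_\xi,P_\zeta$ are the rank-one projectors onto $\xi,\zeta$. With $A=i\,\mathrm{diag}(\theta_1,\theta_2)$ as in \eqref{diagA}, the Lipschitz constraint reads $\norm{[D_\mu,a]}=\sup_{t}\norm{\partial_t a+[A,a]}\leq1$, a bound on the covariant derivative of the loop $t\mapsto a(t)$. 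The whole subtlety is that although only $M=a(x)$ enters the functional, $a$ must be a smooth \emph{periodic} section on $S^1$, so periodicity couples $M$ to itself through the holonomy of $A$.

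To make this explicit I would trivialize the connection. Setting $g(t):=\mathrm{diag}(e^{-i\Theta_1(t)},e^{-i\Theta_2(t)})$ with $\Theta_j$ as in \eqref{vi}, one has $g(0)=\I$ and $g(2\pi)=:h$ (the holonomy). Writing $a=g\tilde a\,g^{-1}$ and using $g'=-Ag$ one checks $\partial_t a+[A,a]=g\,(\partial_t\tilde a)\,g^{-1}$, so the constraint collapses to $\norm{\partial_t\tilde a}\leq1$: $\tilde a$ is a $1$-Lipschitz path of Hermitian matrices in the operator norm. Periodicity $a(2\pi)=a(0)$ becomes the twisted condition $\tilde a(2\pi)=h^{-1}\tilde a(0)\,h$. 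Since the functional only sees $M=\tilde a(0)$, admissibility of a value $M$ is equivalent to the existence of a $1$-Lipschitz path of length at most $2\pi$ joining $M$ to $h^{-1}Mh$, i.e.
\[
\norm{M-h^{-1}Mh}\leq 2\pi .
\]
A direct computation gives $(h^{-1}Mh-M)_{12}=(e^{2\pi i\omega}-1)M_{12}$, with $\omega$ as in \eqref{tauk} and vanishing diagonal, so $\norm{M-h^{-1}Mh}=2\abs{\sin\omega\pi}\,\abs{M_{12}}$ and the constraint is exactly $\abs{M_{12}}\leq \pi/\abs{\sin\omega\pi}$.

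It then remains to maximize $\mathrm{Tr}((P_\xi-P_\zeta)M)$ under this single scalar bound. Here the restriction to one fiber is crucial: since $\xi$ and $\zeta$ have the same moduli $\abs{V_1},\abs{V_2}$ (they lie on the same torus $T_\xi$), the matrix $P_\xi-P_\zeta$ has vanishing diagonal, so the otherwise \emph{unconstrained} diagonal of $M$ drops out and the functional reduces to $2\,\re(\overline{M_{12}}\,N_{12})$ with $N_{12}:=(P_\xi-P_\zeta)_{12}=V_1\overline{V_2}(1-e^{-i\Xi})$. Its modulus is $\abs{N_{12}}=2\abs{V_1}\abs{V_2}\,\abs{\sin\tfrac{\Xi}{2}}=R\,\abs{\sin\tfrac{\Xi}{2}}$ by \eqref{eq:106}. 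Aligning the phase of $M_{12}$ with $N_{12}$ and saturating $\abs{M_{12}}=\pi/\abs{\sin\omega\pi}$ (attained by the straight-line path for $\tilde a$) yields
\[
d_A(\xox,\zeta_x)=2\cdot\frac{\pi}{\abs{\sin\omega\pi}}\cdot R\,\sin\tfrac{\Xi}{2}=\frac{2\pi R}{\abs{\sin\omega\pi}}\sin\tfrac{\Xi}{2},
\]
using $\sin(\Xi/2)\geq0$ for $\Xi\in[0,2\pi]$, which is \eqref{eq:155}.

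The delicate step will be the reduction of the global Lipschitz-plus-periodicity requirement to the single matrix inequality $\norm{M-h^{-1}Mh}\leq2\pi$, together with the verification that the bound is attained; note that the cancellation of the diagonal of $M$ — what makes the distance finite — is precisely the same-fiber (equal $z$-coordinate) condition isolated in the connected-component analysis of Prop. \ref{propconneccomp}. Alternatively, one may obtain \eqref{eq:155} by specialising Proposition \ref{propcarat} to $\tau_0=0$, where the triangle $\mathcal{T}_\pm$ collapses to the origin, so that $d_A=H_\xi(0,0)=2\pi R\,W_k$ with $W_k=\sin(\Xi/2)/\abs{\sin\omega\pi}$ after substituting $\Xi/2=k\omega\pi+\varphi/2$.
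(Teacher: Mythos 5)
Your argument is correct, and it is essentially the computation underlying the cited result: the survey itself omits the proof of Prop.~\ref{dfibre} (deferring to \cite[\S 6.2]{Martinetti:2006db}), and the mechanism there is exactly your gauge--trivialization: untwist by $g(t)=\mathrm{diag}(e^{-i\Theta_1(t)},e^{-i\Theta_2(t)})$ so that the Lipschitz constraint becomes $\sup_t\|\partial_t\tilde a\|\le 1$ and periodicity becomes the holonomy condition $\tilde a(2\pi)=h^{-1}\tilde a(0)h$, whence the single scalar bound $|M_{12}|\le \pi/|\sin\omega\pi|$. All the individual steps check out: the equivalence ``admissible $M$ $\Leftrightarrow$ $\|M-h^{-1}Mh\|\le 2\pi$'' is right in both directions (necessity by integrating $\tilde a'$, sufficiency by the affine path), the same-fiber condition is exactly what kills the unconstrained diagonal of $M$, and $|N_{12}|=R\sin(\Xi/2)$ follows from \eqref{eq:106}. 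Your answer is also consistent with the two general formulas quoted in the survey: specializing Prop.~\ref{propcarat} to $\tau_0=0$ collapses ${\mathcal T}_\pm$ to the origin and gives $2\pi R W_k$, and the $n=2$ case of \eqref{resultatnproof} gives $\pi\,\mathrm{Tr}|S_k|=2\pi R\,|\sin(\Xi/2)|/|\sin\omega\pi|$. The only gloss worth flagging is the word ``attained'': the affine path for $\tilde a$ produces an element that is periodic but not smooth at the gluing point when the bound is saturated, so strictly one saturates the supremum only in the limit (or by enlarging to Lipschitz elements); this does not affect the value of the distance.
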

\bigskip

\noindent It is quite interesting to note that for those points on the fiber which are accessible from $\xox$, namely $\xox^k = (k,0,0)$ or equivalently
\begin{equation}
\Xi = \Xi_k := 2k\omega\pi,
\label{eq:126}
\end{equation}
 the Carnot-Carath\'eodory distance is $d_h(0,\Xi_k) = 2k\pi.$
Hence, as soon as $\omega$ is irrational, one can find close to $\xox$ in the Euclidean topology of $S_x$
some $\xox^k$ which are arbitrarily Carnot-Carath\'eodory-far
from $\xox$. In other terms, $d_h$ destroys the $S^1$ structure of
the fiber. On the contrary the spectral distance preserves it since
$d_A(\xox,\yox)$ is proportional to the chord distance on $S^1$ (see
figure \ref{ds}).
\bigskip

The chord distance already appeared in Prop. \ref{propmercrediun}  for the finite dimensional
spectral triple with $\A=M_2(\C)$. This suggests that
the distance in an almost commutative geometry with a gauge
fluctuation may be retrieved as a spectral
distance associated to the finite dimensional algebra $\A_F$
only, as this happens in the product of spectral triples with the
\emph{non-fluctuated} Dirac operator (see Prop.~\ref{coroll:distancereduite}).
\bigskip

Proposition \ref{dfibre} gives another example where the space of pure
states equipped with the spectral distance is not a path metric
space. We come back to this point in \S\ref{points}. Let us just notice
that the chord distance on $S^1$ is smooth at the cut-locus, contrary to the Euclidean distance
on the circle (cf Figure \ref{ds}). A possibility to make the fiber
$S_x$ equipped with $d_A$ a path metric space, is to interprete the chord
distance as the arc length on a cardioid. We elaborated on this point
in~\cite{Martinetti:2006rz}. 
\vfill

\begin{figure}[h]
\vspace{1.2truecm}\mbox{\rotatebox{0}{\scalebox{.8}{\hspace{-2.4truecm}\includegraphics{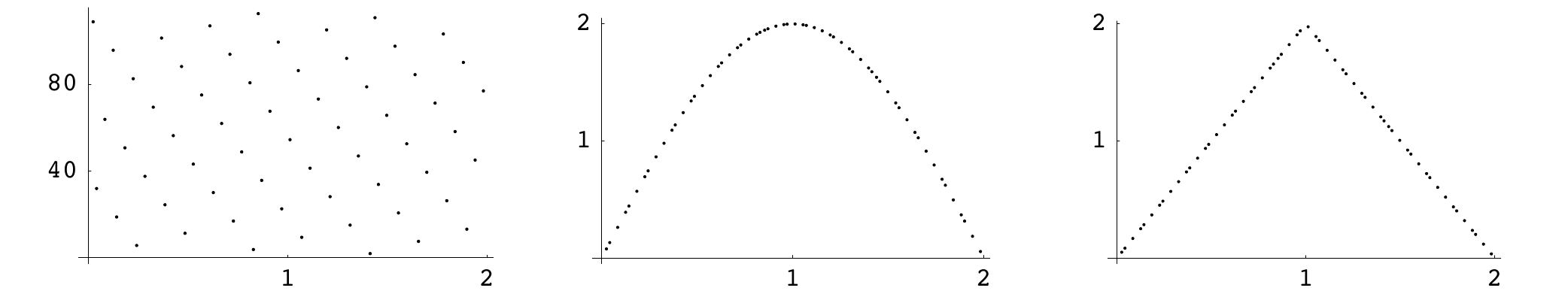}}}}
%\end{center}
\caption{ \label{ds} From left to right: $d_h(0,\Xi_k)$, $d_A(0,\Xi_k)$ and
$d_E(0,\Xi_k)$. Vertical unit is $\frac{\pi R}{\abs{\sin\omega\pi}}$,
horizontal unit is $\pi$.}
\vspace{-.25truecm}
\end{figure}

\section{Compact operators: Moyal plane vs quantum space}
\label{sec:compact}
After finite dimensional spectral triples in section \ref{section:finite}
and almost commutative geometries in sections \ref{section:almostcg}
and \ref{subriemannian}, we consider the spectral distance on an
truly noncommutative
algebra (that is: an infinite dimensional algebra with finite dimensional center),
namely the $C^*$- algebra $\K$ of compact operators.
We  first view it as the norm
closure of the algebra of the Moyal plane $({\mathcal S}, \star)$, in
order to have a natural candidate as a Dirac operator. The main results on the spectral distance on the Moyal plane
coming from \cite{Cagnache:2009oe} and \cite{Martinetti:2011fko}
are exposed in \S \ref{section:Moyal}. 

Then we view $\K$ as  the algebra of functions on a quantum
spacetime. This interpretation is common to various models,  motivated by quantum gravity.  Following the idea that at very small scale
spacetime itself becomes quantum, hence the length might be
quantized, a quantum length
operator is defined in these models. It has a priori no links with
the spectral distance but we showed \cite{Martinetti:2011fk} that
between certain classes of states, including coherent states, the
quantum length and the spectral distance actually capture the same
metric information. This is the object of \S \ref{DFR}.

\subsection{Spectral distance on the Moyal plane}
\label{section:Moyal}
The spectral triple of the Moyal plane \cite{Gayral:2004rc} is 
\begin{equation}
  \label{eq:19}
  \A= (\mathcal S, \star), \quad \HH =L^2(\R^2)\otimes \C^2, \quad D =-i\sigma^\mu\nabla_\mu,
\end{equation}
where $\A$ is a noncommutative deformation of the algebra of Schwartz
functions on the plane,
\begin{equation}
  \label{eq:109}
  (f\star g)(x) := \frac 1{(\pi\theta)^2}\int d^2y d^2z f(x+y) g(x+z)
  e^{-2i y \Theta^{-1}z} \quad \forall f,g\in\A,
\end{equation}
 induced by a symplectic form on
$\R^2$,
\begin{equation}
\Theta:=\theta\left(\begin{array}{cc}0&1 \\ -1&0\end{array}\right),\quad
\theta\in(0,1).\label{eq:127}
\end{equation}
$D$ is the usual Dirac operator of the plane and $f\in\A$ acts
$L_f\otimes\I_2$ where 
\begin{equation}
  \label{eq:18}
  (L_f \psi)(x) = (f\star \psi)(x) \quad \forall x\in \R^2
\end{equation}
denotes the left-star multiplication on $L^2(\R^2)$.

The $C^*$-closure of $\A$ is the algebra of compact operators.
Thus the set of pure states of $\A$ is the set of vector states in the
irreducible faithful representation (the Schr\"odinger representation)
of $\A$ on $L^2(\R)$, where any element of $\A$ appears as  an
infinite dimensional matrix with rapidly decreasing coefficients. The eigenstates of the harmonic oscillator, that is the Hermite
functions $h_m$,  form a
basis of $L^2(\R)$. We denote $\omega_m$ the
associated vector state 
\begin{equation}
  \label{eq:20}
  \omega_m = \scl{h_m}{\cdot\; h_m}.
\end{equation}
Another class of interesting states are the coherent states
\begin{equation}
\alpha_\kappa\omega_0:= \omega_0\circ \alpha_\kappa,\quad  \kappa\in\R^2,\label{eq:112}
\end{equation}
which are
the lift to the ground state $\omega_0$ of the action of $\R^2$ on
$\A$ by translation, that is
\begin{equation}
  \label{eq:22}
  (\alpha_\kappa f)(x) = f(x +\kappa).
\end{equation} 
More generally, for any state $\varphi$ of $\A$
we denote its translated by  $\kappa\in\R^2$ as
\begin{equation}
  \label{eq:21}
  \varphi_\kappa = \varphi \circ \alpha_\kappa. 
\end{equation}

The main results on the spectral distance on the Moyal plane are
summarized in the following proposition.
\begin{prop}
\label{propdist}
i.\cite[Theo. 3.7]{ Martinetti:2011fko}  The spectral distance
between any state $\varphi\in\sa$ of the Moyal algebra and any of its
$\kappa$-translated, $\kappa\in\C$, is precisely the amplitude of translation
\begin{equation}
 d(\varphi, \alpha_\kappa\varphi) = \abs{\kappa}.
 \label{eq:131}
 \end{equation}

ii. \cite{Cagnache:2009oe, Martinetti:2011fko} The spectral distance on the Moyal plane takes all possible
 values in $[0,\infty]$. In particular there are pure states at
 infinite distance from one another.
\medskip

 iii. \cite[Prop. 3.6]{Cagnache:2009oe} The distance between the
eigenstates $\omega_m$ of the harmonic oscillator is additive: for
$m\leq n$,
\begin{equation}
d(\omega_m,\omega_n)=\sqrt{\frac{\theta}2}\sum_{k=m+1}^n{{1}\over{{\sqrt{k}}}}.
\label{eq:135}
\end{equation}
\end{prop}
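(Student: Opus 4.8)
The plan is to handle the three parts by different routes, deriving \emph{(ii)} from the other two. For \emph{(i)} I would prove matching bounds. On the flat plane the spin connection vanishes, so $D=-i\sigma^\mu\partial_\mu$ commutes with the unitary $U_\kappa$ on $L^2(\R^2)$ implementing the translation $\alpha_\kappa$; by Lemma~\ref{selfadjointlem} this makes $\alpha_\kappa$ an isometry, so $d(\varphi,\alpha_\kappa\varphi)$ is translation invariant and it suffices to evaluate it. For the upper bound I would integrate along the path $t\mapsto\varphi_{t\kappa}$,
\begin{equation*}
\alpha_\kappa\varphi(a)-\varphi(a)=\int_0^1\varphi_{t\kappa}\big(\kappa^\mu\partial_\mu a\big)\,dt,
\end{equation*}
and bound the integrand by $\|\kappa^\mu L_{\partial_\mu a}\|$, since states have norm~$1$. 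The geometric input is that the Clifford relation $\{\sigma^\mu,\sigma^\nu\}=2\delta^{\mu\nu}$ forces $\|\kappa^\mu L_{\partial_\mu a}\|\le|\kappa|\,\|[D,a]\|\le|\kappa|$, giving $d(\varphi,\alpha_\kappa\varphi)\le|\kappa|$.

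For the lower bound in \emph{(i)} the natural candidate optimal element is the linear coordinate $\ell(x)=\hat\kappa\cdot x$ along $\hat\kappa=\kappa/|\kappa|$: since $\partial_\mu\ell$ is constant, $L_{\partial_\mu\ell}=\hat\kappa_\mu\I$ and $[D,\ell]=-i\hat\kappa_\mu\sigma^\mu$ has norm $1$, while $\alpha_\kappa\ell-\ell=|\kappa|$ is constant, whence $|\varphi(\ell)-\alpha_\kappa\varphi(\ell)|=|\kappa|$. As $\ell\notin\mathcal S$, the real work — and the main obstacle for \emph{(i)} — is to regularise $\ell$ by a cutoff into a sequence $a_n\in\A$ that stays in the $\star$-Lipschitz ball while $|\varphi(a_n)-\alpha_\kappa\varphi(a_n)|\to|\kappa|$, exactly the approximating sequence permitted by Definition~\ref{optimelement}; controlling $\|[D,a_n]\|$ under the nonlocal $\star$-product is the delicate estimate.

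For \emph{(iii)} I would first reduce to radial elements. The rotation $R_\phi$ of $\R^2$ is implemented by a unitary $V_\phi$ commuting with $D$, and the number states $\omega_m$ are rotation invariant, so averaging any near-optimal $b$ over $\phi$ produces $\bar b=\int_0^{2\pi}V_\phi bV_\phi^{-1}\,\tfrac{d\phi}{2\pi}$ with $\omega_m(\bar b)=\omega_m(b)$ and, by Lemma~\ref{selfadjointlem} and convexity of the norm, $\|[D,\bar b]\|\le\|[D,b]\|$. One may thus assume $b$ diagonal in the Hermite basis, $b=\mathrm{diag}(b_0,b_1,\dots)$. In that basis the two star-derivatives entering $D$ act as weighted shifts through the ladder operators, giving the explicit norm
\begin{equation*}
\|[D,b]\|=\sqrt{\tfrac{2}{\theta}}\;\sup_{k\ge 1}\sqrt{k}\,\big|b_k-b_{k-1}\big|,
\end{equation*}
so that $\|[D,b]\|\le1$ is equivalent to $|b_k-b_{k-1}|\le\sqrt{\theta/2}\,k^{-1/2}$. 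Then $\omega_n(b)-\omega_m(b)=b_n-b_m$ telescopes, is bounded by $\sqrt{\theta/2}\sum_{k=m+1}^n k^{-1/2}$, and is \emph{simultaneously} saturated by the single diagonal element with $b_k-b_{k-1}=\sqrt{\theta/2}\,k^{-1/2}$; since one element realises every consecutive step at once, the triangle inequality is saturated and additivity follows. The main obstacle here is twofold: establishing the exact weighted-shift norm above (the genuine computation), and, because the saturating $b$ is unbounded, replacing it by a cutoff sequence in $\A$ without losing the value.

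Finally \emph{(ii)} is assembled from the first two. Every finite value is attained by \emph{(i)}: translating a fixed state by $\kappa$ gives distance $|\kappa|\in[0,\infty)$. The metric is moreover unbounded, since \emph{(iii)} gives $d(\omega_0,\omega_n)=\sqrt{\theta/2}\sum_{k=1}^n k^{-1/2}\to\infty$ as $\sum_k k^{-1/2}$ diverges. To upgrade this to an actual pair of pure states at infinite distance I would invoke Lemma~\ref{infinitelem}, exhibiting (as in the cited works) two pure states together with a sequence $a_n\in\A$ for which $\|[D,a_n]\|\to0$ while $|\varphi(a_n)-\varphi'(a_n)|\to\infty$, which places them in distinct connected components and certifies $d=+\infty$.
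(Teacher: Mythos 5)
The paper itself omits the proof of this proposition (it is a survey; the arguments live in \cite{Cagnache:2009oe} and \cite{Martinetti:2011fko}), so the comparison is with those works and with the discussion of optimal elements in \S\ref{sec:geodesics}. Your strategy matches them closely. For (i), the upper bound obtained by integrating $\varphi_{t\kappa}(\kappa^\mu\partial_\mu a)$ along the translation orbit together with the Clifford estimate $\|\kappa^\mu L_{\partial_\mu a}\|\leq |\kappa|\,\|[D,a]\|$ is exactly the argument of \cite{Martinetti:2011fko}, and your candidate optimal element $\hat\kappa\cdot x$ is the function $l_\kappa$ of \eqref{eq:133}, whose commutator \eqref{eq:143} has norm one. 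For (iii), the reduction to diagonal elements and the weighted-shift norm reproduce \cite[Prop.~3.6]{Cagnache:2009oe}; the saturating diagonal element is (up to the regularization you flag) the function $l_0$ of \eqref{eq:1590}, whose derivative is the shift as in \eqref{eq:143bis}.

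Two caveats. First, the rotation unitary $V_\phi$ must include the spin part, $V_\phi=e^{i\phi(L+\sigma^3/2)}$: the purely orbital rotation does not commute with $D=-i\sigma^\mu\partial_\mu$. Since the spin factor commutes with $\pi(b)=L_b\otimes\I_2$, the averaging still projects $b$ onto its diagonal part and the argument survives, but the claim ``$V_\phi$ commutes with $D$'' needs this correction. Second, and more substantively, part (ii) is not assembled from (i) and (iii) alone: every value produced there is finite ((iii) only shows the metric is unbounded on pure states), and the existence of pure states at \emph{infinite} distance is a separate result (\cite[Prop.~3.10]{Cagnache:2009oe}, \cite{Cagnache:2009vn}) requiring an explicit pair of vector states and a sequence $a_n$ as in Lemma~\ref{infinitelem}. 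You correctly identify this and defer to the literature, but it should be counted as an unproved ingredient rather than a consequence of the first two parts. Likewise, the acknowledged regularisation-at-infinity steps (cutting off $\ell$ and the unbounded diagonal $b$ while staying in the Lipschitz ball of the nonlocal $\star$-product) are precisely where the analytic work of the cited proofs is concentrated; your outline names them but does not carry them out.
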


Point ii. follows from i. and the fact that there exist (pure)  states
at infinite distance from one another (see
\cite[Prop. 3.10]{Cagnache:2009oe} and \cite{Cagnache:2009vn}). This an important
difference with the commutative $\theta=0$ case, where the distance
between any pure states - i.e.  any point - is as large as one wants,
but remains finite. 
\medskip

The distance in the Moyal plane computed with the  ``harmonic Dirac
operator'' that appears in various physical models of quantum
spacetime \cite{Grosse:2007ec} has been investigated in
\cite{Wallet:2011uq}. One finds a multiple of the distance computed
with the usual Dirac operator in \eqref{eq:19}, yielding a
formalization of homothetic spectral triples.

\subsection{Minimal length on quantum spacetime}
\label{DFR}

At the Planck scale $\lambda_P$, the general relativistic picture of spacetime as
a smooth manifold $\M$ is expected to loose
any operational meaning, due to the
impossibility of simultaneously measuring with arbitrary accuracy the
four spacetime coordinates $x_\mu$.  This comes  as a consequence of the
\emph{principle of gravitational stability against localization}
\cite{Doplicher:2001fk,Doplicher:2006uq}, which states 
that to prevent the
formation of black-hole during an arbitrarily accurate localization
process, one postulates a non-zero minimal uncertainty in the \emph{simultaneous}
measurement of \emph{all} coordinates of space-time. 
A way to
implement these uncertainty relations is to view the coordinates in a
chart $U$ of 
$\M$ no more as functions $x\in U\subset\M \mapsto x_\mu\in\R$,
but as quantum operators $q_\mu$ satisfying non trivial commutation relations,
\begin{equation}
  \label{eq:7}
  [q_\mu,q_\nu] = i\lambda_P^2 Q_{\mu\nu},
\end{equation}
where the $Q_{\mu\nu}$'s are operators whose properties 
depend on the model. In particular in \cite{Doplicher:1995hc,Bahns:2010fk}, the commutators
$Q_{\mu\nu}$'s are central operators with selfadjoint
closure, covariant under the action of the Poincar\'e group. In this
case,  the quantum coordinates $Q_\mu$
are affiliated to the algebra of compact operators $\K$, in the same way as in the
commutative case the coordinates $x_\mu$ does not belong to
$C_0(\R^n)$ but are affiliated to it. From this perspective, $\K$ plays the role of algebra of noncommutative functions on
the quantum plane. 

A natural candidate to capture the metric information of the quantum
space \eqref{eq:7} is the length operator 
\begin{equation}
  \label{eq:1bis}
  L = \sqrt{\underset{\mu}{\sum} (dq_\mu)^2} \quad\text{ where } \quad dq_\mu := q_\mu\otimes 1 - 1
  \otimes q_\mu.
\end{equation}
The idea is that the minimum $l_P$ of the spectrum of 
$L$
represents the minimum value of the measurable length on a quantum
space \cite{Bahns:2010fk,Amelino-Camelia:2009fk}.
\medskip

Taking advantage of the double role of the algebra of compact
operators as the closure of the Moyal algebra and as the algebra of
functions on quantum spacetime, it is natural to wonder whether the metric information
captured by the length operator $L$ is related to the metric information captured by the
spectral distance $d$ in the Moyal plane. To this aim, given two states
$\varphi, \varphi'$ of $\K$, one needs to associate a number with the length
operator $L$, that could then be compared with $d(\varphi,
\varphi')$. Assuming that $\varphi, \varphi'$ are in the domain of the
operators $Q_\mu$ (that is $\varphi(Q_\mu)$ and $\varphi'(Q_\mu)$ make
sense), the most natural choice is to consider the evaluation of the
separable state $\varphi\otimes \varphi'$ on the operator $L$,
\begin{equation}
  \label{eq:108}
  d_L(\varphi, \varphi'):= (\varphi\otimes\varphi')(L).
\end{equation}
 We
call it the \emph{quantum length} of  the state $\varphi\otimes
\varphi'$. However, to avoid the difficulties in taking the square
root of an operator, it is more convenient to work with the
\emph{quantum square length}
\begin{equation}
  \label{eq:110}
  \sqrt{d_L^2(\varphi, \varphi')}:= \sqrt {\varphi\otimes \varphi'(L^2)}.
\end{equation}

In the commutative case $q_\mu=x_\mu$, one has
\begin{equation}
  \label{eq:111}
  d_L(\delta_x, \delta_y) = \sqrt{d_{L^2}(\delta_x, \delta_y)} =
  d(\delta_x, \delta_y)  = d_\text{geo}(x, y).
\end{equation}
All the tools introduced so far to measure a distance (spectral
distance, spectrum of a length operator or of its square) all coincide
with the geodesic distance. In the noncommutative case,  there is no reason that the three
quantities on the l.h.s. of the equation above remain equal. In
particular while the spectral distance $d$ is actually a distance between
states, so that $d(\varphi, \varphi)=0$,  there is no reason that
$d_{L^2}(\varphi, \varphi)$ and $d_L(\varphi, \varphi)$ vanish. 

This
can be checked on the set of \emph{generalized coherent states}
(see \eqref{eq:112})
\begin{equation}
  \label{eq:113}
  {\mathcal C} :=\left\{ \alpha_\kappa \omega_m, m\in\N, \kappa\in \R^2\right\}. 
\end{equation}

 \begin{prop} \textnormal{\cite{Martinetti:2011fk,
  Bahns:2010fk}}
\label{propquantdist} The quantum square-length on ${\mathcal C}$ is
\begin{equation}
  \label{eq:bbb162}
  d_{L^2}(\alpha_{\kappa} \omega_m, \alpha_{\tilde\kappa} \omega_n) =
  2E_m + 2E_n + \abs{\kappa-\tilde\kappa}^2
\end{equation}
for any $m,n\in \N,\; \kappa, \tilde\kappa\in\R^2$, with
\begin{equation}
E_m = \lambda_P^2(m+\frac 12)\label{eq:162}
\end{equation}
 the $n^\text{th}$ eigenvalue of the
 Hamiltonian $H$ of the quantum harmonic oscillator.
Hence the quantum square length is invariant by translation. Moreover one has \begin{equation}
\label{eq:840}
  d_L(\alpha_\kappa\omega_m,\alpha_{\tilde\kappa}\omega_n) \leq  \sqrt{d_{L^2}(\alpha_\kappa\omega_m,\alpha_{\tilde\kappa}\omega_n)} 
\end{equation}
with equality only when $m=n=0$ and $\kappa=\tilde\kappa$.
% , that is
% \begin{equation}
%   \label{eq:8400}
%   d_L(\alpha_\kappa\omega_0,\alpha_\kappa\omega_0) =2\sqrt{E_0} = \sqrt{d_{L^2}(\alpha_\kappa\omega_0,\alpha_\kappa\omega_0)}.
% \end{equation}
\end{prop}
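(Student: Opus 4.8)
The plan is to treat the three assertions in turn: the closed form for $d_{L^2}$, its translation invariance (which will be immediate), and the inequality \eqref{eq:840} together with its equality case (the only delicate point). First I would expand the length operator: since $q_\mu\ot 1$ and $1\ot q_\mu$ commute, $L^2=\sum_\mu(dq_\mu)^2$ reads
\[
L^2=\sum_\mu q_\mu^2\ot 1-2\sum_\mu q_\mu\ot q_\mu+1\ot\sum_\mu q_\mu^2.
\]
Evaluating on $\varphi\ot\varphi'$ gives $d_{L^2}(\varphi,\varphi')=\varphi(\sum_\mu q_\mu^2)+\varphi'(\sum_\mu q_\mu^2)-2\sum_\mu\varphi(q_\mu)\,\varphi'(q_\mu)$. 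With $[q_1,q_2]=i\lambda_P^2$ one has $\sum_\mu q_\mu^2=2H$ for the oscillator Hamiltonian $H=\lambda_P^2(a^\dagger a+\tfrac12)$, so $\omega_m(\sum_\mu q_\mu^2)=2E_m$; moreover $\alpha_\kappa$ sends $q_\mu\mapsto q_\mu+\kappa_\mu$, and since $\omega_m(q_\mu)=0$ we get $(\alpha_\kappa\omega_m)(q_\mu)=\kappa_\mu$ and $(\alpha_\kappa\omega_m)(\sum_\mu q_\mu^2)=2E_m+|\kappa|^2$. Substituting yields $2E_m+2E_n+|\kappa|^2+|\tilde\kappa|^2-2\,\kappa\!\cdot\!\tilde\kappa=2E_m+2E_n+|\kappa-\tilde\kappa|^2$, which is \eqref{eq:bbb162}; translation invariance follows at once because the answer depends on $\kappa-\tilde\kappa$ only.

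For the inequality I would apply Cauchy--Schwarz to the state $\psi:=\varphi\ot\varphi'$ with the positive selfadjoint operator $L$: $d_L^2=\psi(L)^2=|\psi(1\cdot L)|^2\le\psi(1)\,\psi(L^2)=d_{L^2}$. Equivalently this is Jensen's inequality for the concave map $t\mapsto\sqrt t$ applied to the spectral measure of $L^2$ in $\psi$, so by strict concavity equality holds if and only if that measure is a point mass, i.e. if and only if the vector representing $\varphi\ot\varphi'$ is an eigenvector of $L^2$ (equivalently of $L$, since $L\ge0$), that is $\mathrm{Var}_{\varphi\ot\varphi'}(L^2)=0$.

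To make this condition explicit I would diagonalize $L^2$. A one-line computation gives $[dq_1,dq_2]=[q_1,q_2]\ot1+1\ot[q_1,q_2]=2i\lambda_P^2$, so $(dq_1,dq_2)$ is again a canonical pair; introducing the relative mode $b_-:=(a_1-a_2)/\sqrt2$ (with $a_j$ the annihilation operator of the $j$-th factor) one finds $L^2=4\lambda_P^2(N_-+\tfrac12)$ with $N_-:=b_-^\dagger b_-$, so the eigenstates of $L^2$ are exactly those of the relative number operator. Writing $N_-=\tfrac12(N_1+N_2-a_1^\dagger a_2-a_2^\dagger a_1)$ and using the displaced-number-state moments $\langle a\rangle=\zeta$, $\langle a^\dagger a\rangle=m+|\zeta|^2$, $\langle a^2\rangle=0$ for $D(\zeta)h_m$, I would compute the variance in the generalized coherent state, the outcome being
\[
\mathrm{Var}_{\alpha_\kappa\omega_m\ot\alpha_{\tilde\kappa}\omega_n}(N_-)=\frac{2mn+m+n}{4}+(m+n+1)\,\frac{|\kappa-\tilde\kappa|^2}{4\lambda_P^2}.
\]
Both terms are nonnegative and vanish together precisely when $m=n=0$ and $\kappa=\tilde\kappa$, giving the stated equality case.

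The hard part will be this variance computation, and specifically the vanishing of the off-diagonal contributions. The clean way to see it, which I would use rather than expanding displaced Hermite functions, is a superselection argument: the vector $h_m\ot h_n$ lies in the eigenspace $N_1+N_2=m+n$, and in the $\pm$ modes each value of $N_-$ is paired with a unique value of $N_+$, so every monomial with a nonzero net $b_-$–number has zero expectation. This annihilates both the anomalous term $\langle b_-^2\rangle$ and the cross term $\langle\{N_-,\,\bar\beta b_-+\beta b_-^\dagger\}\rangle$ arising after displacing by $\beta=(\zeta-\tilde\zeta)/\sqrt2$, leaving only the two manifestly nonnegative contributions above. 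The sole bookkeeping input is the normalization $|\zeta-\tilde\zeta|^2=|\kappa-\tilde\kappa|^2/(2\lambda_P^2)$ relating displacement to translation, which is already fixed by matching $4\lambda_P^2\langle N_-\rangle+2\lambda_P^2$ to the formula of the first step.
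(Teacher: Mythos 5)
Your argument is correct. Note that the survey itself gives no proof of this proposition (it defers to the cited references \cite{Martinetti:2011fk,Bahns:2010fk}), and your route is essentially the one taken there: expand $L^2=\sum_\mu(dq_\mu)^2$ on product states to get \eqref{eq:bbb162}, observe that $(dq_1,dq_2)$ is again a canonical pair so that $L^2=4\lambda_P^2(N_-+\tfrac12)$ for the relative number operator, and obtain \eqref{eq:840} from Cauchy--Schwarz, with the equality case settled by the vanishing of the variance of $L$ in $\varphi\otimes\varphi'$; I checked your variance formula (both the $\frac{2mn+m+n}{4}$ term and the $(m+n+1)\lvert\kappa-\tilde\kappa\rvert^2/(4\lambda_P^2)$ term) and the superselection argument killing the cross terms, and they are sound.
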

% Together with proposition \ref{propdist}, Prop. \ref{propquantdist}
% stresses 
\noindent Identifying the parameter $\theta$ in
\eqref{eq:127} with the square $\lambda_P^2$ of the Planck length, one
has  between eigenstates of the harmonic oscillatore that
\begin{equation}
\label{discrep1}
d(\omega_m,\omega_n)=\frac{\lambda_P}{\sqrt 2}\sum_{k=m+1}^n{{1}\over{{\sqrt{k}}}}
\; \quad\neq \quad\sqrt{d_{L^2}(\omega_m,\omega_n)} =\sqrt{2E_m + 2E_n},
\end{equation}
whereas between generalized coherent states
\begin{equation}
\label{discrep2}
 d(\omega_m, \alpha_\kappa\omega_m) = \abs{\kappa} \quad
\neq \quad \sqrt{d_{L^2}(\omega_m, \alpha_{\kappa} \omega_m)}
= \sqrt{4E_m + \abs{\kappa}^2}.
\end{equation}

To understand the discrepancy between the quantum length and the
 spectral distance, one should understand how to \emph{turn the quantum length into a true distance} that vanishes
on the diagonal $\varphi'=\varphi$, or how to
\emph{give a quantum taste to the spectral
distance} so that it no longer vanishes on the diagonal. As explained
below, the two points of view turn out to be equivalent thanks to
Pythagoras theorem of \S \ref{section:pythagoras}.
\medskip

One ``quantizes'' the spectral distance by doubling the Moyal plane, that
is by taking the product in the sense of \eqref{eq:78}  of the spectral
triple \eqref{eq:19} with the
two point space of~\S~\ref{sec:discretespaces}, namely
\begin{equation}
\label{eq:9}
\A'\,:=  \A\otimes\, \C^2,\quad\hh':= \hh\,\otimes\,\C^2,\quad
D':= D\,\otimes\,\ii +
\Gamma\otimes D_F
\end{equation}
where $\Gamma$ is a grading of $\hh$ and 
\begin{equation}
D_F:=  \left(\begin{array}{cc} 0&\bar\Lambda\\
\Lambda&0 \end{array}\right)\;\text{ with } \;\Lambda =
\text{ const}.\label{eq:78bis}
\end{equation}
Pure states of $\A'$ are pairs
\begin{equation}
\omega^i := (\omega,\delta_i), \quad 
\omega\in\pa, \; {\mathcal P}(\C^2) = \left\{\delta_1, \delta_2 \right\}.
\label{eq:114}
\end{equation}
 Hence
 \begin{equation}
\pp(\A') \simeq \pa \times \pa\label{eq:115}
\end{equation}
and the geometry described by the doubled spectral triple  $(\A',
\hh', D')$   is a \emph{two-sheet model} - two copies of the Moyal
plane. The 
associated distance $d'$ is known between translated states $\varphi^i=
(\varphi, \delta^i)$, $\varphi_\kappa^j= (\varphi_\kappa, \delta^j)$
non-necessarily localized on the same copy, 
and is given by Pythagoras theorem.
\begin{prop}\cite{Martinetti:2011fko}
\label{Pythquant}
For any $\varphi\in\sa$ and $\kappa\in \R^2$, one has
\begin{equation}
  \label{eq:10bis}
  {d'}^2(\varphi^1, \varphi_\kappa^2) =
  d^2(\varphi, \varphi_\kappa) +
  d_2^2(\delta^1, \delta^2)
\end{equation}
where $d$, $d_2$ are the distances on the Moyal plane and the two point space.\end{prop}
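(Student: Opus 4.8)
The plan is to treat $(\A',\hh',D')$ as the product $T_1\times T_2$ of the Moyal triple \eqref{eq:19} with the two-point triple of \S\ref{sec:discretespaces} (so that $D_F$ has off-diagonal entry $\Lambda$), and to prove the equality by matching upper and lower bounds for ${d'}^2$. Note first that Theorem~\ref{thm:Pythagoras} does \emph{not} apply directly, since the Moyal triple is non-unital, so both bounds must be produced by hand. By Prop.~\ref{propdist}.i the horizontal ingredient is $d(\varphi,\varphi_\kappa)=\abs{\kappa}$, and by \eqref{eq:28} the vertical one is $d_2(\delta^1,\delta^2)=\abs{\Lambda}^{-1}$, so the target value is $\abs{\kappa}^2+\abs{\Lambda}^{-2}$. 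Since the supremum may be restricted to selfadjoint $a=(a_1,a_2)$ with $a_i\in\A$, a direct computation using $[\Gamma,a_i]=0$ gives the block form
\begin{equation*}
[D',a]=\begin{pmatrix} [D,a_1] & \bar\Lambda(a_2-a_1)\Gamma \\ \Lambda(a_1-a_2)\Gamma & [D,a_2]\end{pmatrix}.
\end{equation*}
The structural fact I would isolate first is that, since $D$ is built from the Clifford action anticommuting with the grading $\Gamma$, every commutator $[D,a_i]$ anticommutes with $\Gamma$; this is what makes the norm split \emph{à la} Pythagoras rather than linearly.

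For the lower bound I would exhibit an optimal sequence built from affine elements. Fix $\alpha\in[0,\tfrac\pi2]$ and let $\ell$ be the coordinate function on $\R^2$ in the direction of $\kappa$, so that $\ell\circ\alpha_\kappa=\ell+\abs{\kappa}$ and $[D,\ell]$ is a constant Clifford element with $[D,\ell]^*[D,\ell]=\I$. Taking $a_1=\cos\alpha\,\ell+c$ and $a_2=a_1-v$ with $v$ a constant of modulus $\sin\alpha/\abs{\Lambda}$, the anticommutation above together with $\Gamma^2=\I$ forces $\norm{[D',a]}^2=\cos^2\alpha+\abs{\Lambda}^2\abs{v}^2=1$, while a short evaluation gives $\abs{\varphi^1(a)-\varphi_\kappa^2(a)}=\cos\alpha\,\abs{\kappa}+\sin\alpha\,\abs{\Lambda}^{-1}$. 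Optimising over $\alpha$ yields $\sqrt{\abs{\kappa}^2+\abs{\Lambda}^{-2}}$. Since $\ell$ and the constants $c,v$ are not Schwartz, each such element must be realised as a limit of a sequence in $\A$ exactly as in the proof of Prop.~\ref{propdist}.i; this approximation is routine but must be carried out to make the construction legitimate in the non-unital algebra.

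For the upper bound I would decompose, for arbitrary admissible $a$,
\begin{equation*}
\varphi^1(a)-\varphi_\kappa^2(a)=\bigl(\varphi(a_1)-\varphi_\kappa(a_1)\bigr)+\varphi_\kappa(a_1-a_2),
\end{equation*}
bounding the first term by $\abs{\kappa}\,\norm{[D,a_1]}$ (using $d(\varphi,\varphi_\kappa)=\abs{\kappa}$) and the second by $\norm{a_1-a_2}$. A Cauchy--Schwarz step then gives
\begin{equation*}
\abs{\kappa}\,\norm{[D,a_1]}+\norm{a_1-a_2}\le\sqrt{\abs{\kappa}^2+\abs{\Lambda}^{-2}}\;\sqrt{\norm{[D,a_1]}^2+\abs{\Lambda}^2\norm{a_1-a_2}^2},
\end{equation*}
so the bound follows \emph{provided} one proves the operator inequality $\norm{[D',a]}^2\ge\norm{[D,a_1]}^2+\abs{\Lambda}^2\norm{a_1-a_2}^2$.

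Establishing this coupling is the step I expect to be the main obstacle: the norm of the block matrix is not controlled by its diagonal blocks alone, and one genuinely needs the anticommutation of $\Gamma$ with the $[D,a_i]$ (which cancels the cross terms in $[D',a]^*[D',a]$) together with an argument that $\norm{[D,a_1]}$ and $\norm{a_1-a_2}$ can be approached on a common family of vectors. I would first prove the inequality in the simplified case $a_1-a_2=\mathrm{const}$ and $[D,a_1]=[D,a_2]$, where the square of the block matrix is exactly scalar, and then reduce the general case to it by controlling both the discrepancy $[D,a_1]-[D,a_2]$ and the non-constancy of $a_1-a_2$.
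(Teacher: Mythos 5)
The paper itself offers no proof of this proposition: it is a survey statement quoted from \cite{Martinetti:2011fko}, so your attempt can only be judged against what such a proof must contain. Your architecture is right --- two separate bounds, neither of which follows from Theorem~\ref{thm:Pythagoras} (the lower one because the Moyal triple is non-unital, the upper one because the general theorem only gives $d'\le d+d_2$) --- and your lower bound is essentially correct: the block form of $[D',a]$, the vanishing of the cross terms of $[D',a]^*[D',a]$ when $a_1-a_2$ is constant and $[D,a_1]$ anticommutes with $\Gamma$, the value $\cos\alpha\,\abs{\kappa}+\sin\alpha\,\abs{\Lambda}^{-1}$, the optimisation in $\alpha$, and the regularisation of $l_\kappa$ inside $(\mathcal S,\star)$ all match what is needed.

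The genuine gap is in the upper bound, exactly where you flagged the main obstacle. The operator inequality $\norm{[D',a]}^2\ge\norm{[D,a_1]}^2+\abs{\Lambda}^2\norm{a_1-a_2}^2$ is false in general, and no reduction to the case $a_1-a_2=\mathrm{const}$ can repair it, because the two norms on the right need not be approached on a common sequence of vectors. Already in the commutative two-sheet model one can take $a_1=f$ with $\abs{f'}$ concentrated near a point $x_0$ and $a_1-a_2=g$ concentrated near a distant point $x_1$ (with $g'$ small); then $\norm{[D',a]}$ is the supremum over $x$ of the \emph{pointwise} Pythagorean combination, which equals $\max(\norm{f'}_\infty,\abs{\Lambda}\norm{g}_\infty)$ and is strictly smaller than $\sqrt{\norm{[D,a_1]}^2+\abs{\Lambda}^2\norm{a_1-a_2}^2}$. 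Approximate versions of this configuration exist in the Moyal algebra, so the inequality you need as input to your Cauchy--Schwarz step simply is not available.

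The underlying reason is that your decomposition $\abs{\varphi^1(a)-\varphi_\kappa^2(a)}\le\abs{\kappa}\,\norm{[D,a_1]}+\norm{a_1-a_2}$ throws away the fact that the constraint $\norm{[D',a]}\le 1$ couples the derivative of $a_1$ and the difference $a_1-a_2$ \emph{at the same place}, namely along the orbit of translated states. The argument that works (and is the one behind \cite{Martinetti:2011fko}, in the spirit of the two-sheet Kaluza--Klein computation of Prop.~\ref{kkprop}) writes $\varphi(a_1)-\varphi_\kappa(a_2)$ as an integral over $t\in[0,1]$ of quantities evaluated at the interpolating states $\varphi_{t\kappa}$, and bounds \emph{at each $t$} a fixed convex combination of the directional derivative of $a_j$ along $\kappa$ and of $\Lambda(a_1-a_2)$ by $\norm{\tfrac12\{u\cdot\sigma,[D',a]\}}\le\norm{[D',a]}\le1$, where $u\cdot\sigma$ is the constant selfadjoint unitary Clifford element built from the unit vector proportional to $(\kappa,\abs{\Lambda}^{-1})$. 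Integrating this uniform bound along the path of total ``length'' $\sqrt{\abs{\kappa}^2+\abs{\Lambda}^{-2}}$ gives the upper bound directly, with the Cauchy--Schwarz step performed state-wise rather than on global operator norms. Without this localisation your upper bound does not close.
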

% Although the spectral triple \eqref{eq:78} is not unital, as in Prop. \ref{coroll:distancereduite}
% the projection of $d'$ on each
% sheet gives back the distance $d_1$ on a single sheet,
% \begin{equation}
% d'(\varphi^i, \,\tilde\varphi^i) = d_D(\varphi,
% \tilde\varphi), \quad \forall \varphi, \tilde\varphi\in \sa
% \label{eq:80}
% \end{equation}
% where $\varphi^i$ is defined as in \eqref{eq:114}, while the distance between the sheets is constant and non-zero \cite{Martinetti:2002ij,DAndrea:2012fk},
% \begin{equation}
% d'(\varphi^i, \,\varphi^j) = d_2(\delta^i,\delta^j) = \frac
% 1{\abs{\Lambda}}.
% \label{eq:81}
% \end{equation}
Rather than comparing the quantum length
  with the spectral distance on a single sheet, the idea %  is to use the constant $\abs{\Lambda}^{-1}$ to implement the notion of minimal length within the
  % spectral distance framework, and to 
is to 
  compare the quantum square-length $\sqrt{d_{L^2}(\varphi, \tilde\varphi)}$
  with the spectral distance in the double-sheeted model  $d'(\varphi^1,
\tilde \varphi^2)$.
% To do so, the free parameter $\Lambda$ is fixed as 
% $\abs{\Lambda}^{-2} = d_{L^2}(\varphi, \varphi)$
% for some reference state $\varphi$, so that 
% \begin{equation}
% {d'}(\varphi^1, \varphi^2) = \sqrt{d_{L^2}(\varphi, \varphi)}.
% \label{eq:128}
% \end{equation}
%  The point is then to check whether the identification
%  of $d^2_{D'}$ with $d_{L^2}$
% holds true for other states. Obviously this has chance to be true only for those
% states $\tilde\varphi$ belonging to 
% \begin{equation}
%   \label{eq:76}
%  \pp(\varphi) := \left\{ \tilde\varphi \in\sa,\; d_{L^2}(\tilde\varphi, \tilde\varphi) = d_{L^2}(\varphi, \varphi)\right\}.
% \end{equation}
% Luckily,  for any $\omega_m$ the
% set of its translated states
% \begin{equation}
% {\mathcal
%   C}(\omega_m):=\left\{\alpha_\kappa\omega_m, \kappa\in\R^2, m\in \N\right\}\label{eq:116}
% \end{equation}
% satisfies the required conditions,  ${\mathcal C}(\omega_m)\subset \pp(\omega_m)$.
In particular, 
for $\tilde\varphi =\varphi_\kappa$,  one has from
Prop. \ref{Pythquant} and \eqref{eq:28} that
\begin{equation}
d'(\varphi^1, \varphi_\kappa^2 ) = 
\sqrt{d_{L^2}(\varphi, \varphi_\kappa)} 
\label{eq:68new}
\end{equation}
if and only if 
on a single sheet
\begin{equation}
\label{identific}
d(\varphi, \varphi_\kappa) 
=   \sqrt{d_{L^2}(\varphi, \varphi_\kappa) -\abs{\Lambda}^{-2}}.
\end{equation}
 For any $\varphi, \tilde\varphi$ in the domain of the length
  operator $L$, we thus  define the \emph{modified quantum length} as
  \begin{equation}
    d'_L(\varphi, \tilde\varphi) := \sqrt{\abs{d_{L^2}(\varphi, \tilde\varphi)
        - \Lambda^{-2}(\varphi, \tilde\varphi)}}
    \label{eq:83}
  \end{equation}
  where
  \begin{equation}
    \Lambda^{-2}(\varphi, \tilde\varphi) := \sqrt{d_{L^2}(\varphi,\varphi)
      d_{L^2}(\tilde\varphi, \tilde\varphi)}.\label{eq:85}
  \end{equation}
\noindent This is the correct quantity, built from the
length operator $L$, that should be compared with the spectral distance.
\begin{prop}\cite{Martinetti:2011fk}
\label{propmodifiedql}
On the set of generalized coherent states for a fixed $m\in\N$, that is
\begin{equation}
{\mathcal C}(\omega_m):=\left\{\alpha_\kappa\omega_m,
  \kappa\in\R^2\right\},
\label{eq:40}
\end{equation}
one has
% the  identification between the spectral distance and the quantum
% length  holds true both in the two-sheet model,
% \begin{equation}
% \label{theone0}
% d_{L^2}(\omega,\tilde\omega) = d^2_{D'}(\omega^1, \tilde\omega^2)
% \quad\quad \forall \omega, \tilde\omega\in {\cal C}(\omega_m),
% \end{equation}
% and on a single sheet (see figure \ref{figurecorfou})
\begin{equation}
d_D(\omega,  \tilde\omega)  = d'_L(\omega,  \tilde\omega) \quad\quad\quad
\forall \omega, \tilde\omega\in {\mathcal C}(\omega_m).
\label{eq:69}
\end{equation} 

On the set of all generalized coherent states \eqref{eq:113}, $d_D$ coincides with
$d'_L$ asymptotically, both in the limit of large translation

\smallskip\begin{equation}
  \label{eq:185quinte}
  \lim_{\kappa\to \infty} \frac{d_D(\akom, \akton) -d'_L(\akom,
    \akton)}{d'_L(\akom,\akton)} = 0, \quad \forall m,n\in\N,\, \tilde\kappa\in\C,
\end{equation}
\smallskip

 and for large difference of energy

\smallskip\begin{equation}
  \label{eq:185}
  \lim_{n\to 0} \frac{d_D(\akom, \akton) -d'_L(\akom,
    \akton)}{d'_L(\akom,\akton)} = 0,\quad \forall m\in\N,\, \kappa,\tilde\kappa\in\C.
\end{equation}
\end{prop}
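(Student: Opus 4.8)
The plan is to reduce everything to the three exact results already available — the translation formula \eqref{eq:131}, the eigenstate formula \eqref{eq:135}, and the quantum square-length \eqref{eq:bbb162} — together with the triangle inequality (here $d_D$ denotes the single-sheet Moyal spectral distance $d$ of Prop.~\ref{propdist}). First I would record a closed form for the modified quantum length. Inserting \eqref{eq:bbb162} into the definitions \eqref{eq:83}--\eqref{eq:85} and using the diagonal values $d_{L^2}(\akom,\akom)=4E_m$ and $d_{L^2}(\akton,\akton)=4E_n$, one gets $\Lambda^{-2}(\akom,\akton)=4\sqrt{E_mE_n}$, and since $2E_m+2E_n-4\sqrt{E_mE_n}=2(\sqrt{E_m}-\sqrt{E_n})^2\ge 0$,
\begin{equation*}
d'_L(\akom,\akton)=\sqrt{2\big(\sqrt{E_m}-\sqrt{E_n}\big)^2+\abs{\kappa-\tilde\kappa}^2}.
\end{equation*}
The first statement ($n=m$) is then immediate: the square root collapses to $\abs{\kappa-\tilde\kappa}$, while \eqref{eq:131} applied to $\varphi=\akom$ with translation $\tilde\kappa-\kappa$ gives $d_D(\akom,\alpha_{\tilde\kappa}\omega_m)=\abs{\kappa-\tilde\kappa}$, so the two coincide exactly on $\mathcal C(\omega_m)$.

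For the asymptotic statements the obstacle is that no closed form for $d_D(\akom,\akton)$ is known when simultaneously $m\ne n$ and $\kappa\ne\tilde\kappa$. I would bypass this by sandwiching $d_D$ between computable quantities. Writing $\delta:=\abs{\kappa-\tilde\kappa}$ and $D_{mn}:=d_D(\omega_m,\omega_n)$, I use that translation by $\kappa$ is implemented by a unitary commuting with the constant-coefficient operator $D=-i\sigma^\mu\nabla_\mu$, so by Lemma~\ref{selfadjointlem} it is an isometry of $\sa$ and $d_D(\alpha_\kappa\omega_m,\alpha_\kappa\omega_n)=D_{mn}$. Combining this with \eqref{eq:131} and the triangle inequality used in both directions yields
\begin{equation*}
\abs{D_{mn}-\delta}\ \le\ d_D(\akom,\akton)\ \le\ D_{mn}+\delta,
\end{equation*}
so that $d_D(\akom,\akton)$ equals the larger of $D_{mn}$ and $\delta$ up to an error bounded by the smaller one in each regime.

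For the large-translation limit, $m,n,\tilde\kappa$ are fixed and $\delta\to\infty$; the sandwich gives $d_D=\delta+O(1)$, while the closed form gives $d'_L=\sqrt{\delta^2+c}=\delta+O(1/\delta)$ with $c=2(\sqrt{E_m}-\sqrt{E_n})^2$ constant, so $d_D-d'_L=O(1)$ and, dividing by $d'_L\sim\delta$, the ratio tends to $0$. For the large-energy-difference limit ($m,\kappa,\tilde\kappa$ fixed, $n\to\infty$), $\delta$ is fixed and both $D_{mn}$ and $d'_L$ diverge. The key computation is to match their leading behaviour: from \eqref{eq:135} and the elementary estimate $\sum_{k=1}^{n}k^{-1/2}=2\sqrt{n}+O(1)$ one gets $D_{mn}=\sqrt{2\theta}\,\sqrt{n}+O(1)$ (with $\theta=\lambda_P^2$), while \eqref{eq:162} gives $E_n=\theta(n+\tfrac12)$, hence $\sqrt 2\,\big|\sqrt{E_n}-\sqrt{E_m}\big|=\sqrt{2\theta}\,\sqrt{n}+O(1)$ and therefore $d'_L=\sqrt{2\theta}\,\sqrt{n}+O(1)$ as well. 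Thus $d_D-d'_L=(d_D-D_{mn})+(D_{mn}-d'_L)=O(1)$, and dividing by $d'_L\sim\sqrt{2\theta n}\to\infty$ shows the ratio vanishes.

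The only genuinely non-routine point is this last one: that the harmonic-type sum in \eqref{eq:135} and the square-root-of-energy term in $d'_L$ share the same leading coefficient $\sqrt{2\theta}$, i.e.\ the coincidence $\sqrt{\theta/2}\sum_{k=m+1}^{n}k^{-1/2}\simeq\sqrt{2\theta}\,(\sqrt{n}-\sqrt{m})\simeq\sqrt 2\,\big(\sqrt{E_n}-\sqrt{E_m}\big)$, which is precisely what makes the bounded triangle-inequality errors asymptotically negligible. Everything else is substitution into \eqref{eq:bbb162} and elementary asymptotics; I would present the two limits in parallel, stressing that the argument never requires the (unknown) exact value of $d_D$ off the diagonal $m=n$, only its two-sided bound.
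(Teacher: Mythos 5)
Your proof is correct, and it follows essentially the route that the survey itself points to: the paper omits the proof (deferring to \cite{Martinetti:2011fk}), but the ingredients it quotes --- the translation invariance behind \eqref{eq:131}, the eigenstate formula \eqref{eq:135}, the square-length \eqref{eq:bbb162}, and the Riemann-sum relation between \eqref{eq:125} and \eqref{eq:124} that encodes your leading-coefficient matching $\sqrt{\theta/2}\sum_{k=m+1}^{n}k^{-1/2}=\sqrt{2}\left(\sqrt{E_n}-\sqrt{E_m}\right)+O(1)$ --- are exactly the ones you combine, with the triangle-inequality sandwich around the translation isometry supplying the two asymptotic statements. The only point worth flagging is notational: the limit written as $n\to 0$ in \eqref{eq:185} is a typo for $n\to\infty$, which you have interpreted correctly.
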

\bigskip

It is quite remarkable that the spectral distance $d_D$ on a single copy of
the Moyal plane coincides (exactly on the set  of translated of a
states, asymptotically on the set of generalized coherent states) with
the ``natural'' quantity $d'_L$, vanishing on
the diagonal, that one can
build from the quantum length $d_L$. The two options
``quantizing the spectral distance'' by allowing the emergence of a
non-zero minimal spectral distance,  or ``geometrizing the quantum
length'' by turning it into a true distance are two equivalent
procedures.
\newpage
\section{Discussion}
\label{sec:discussion}

In \S\ref{points} and \S\ref{sec:geodesics} we gather several observations coming from the previous examples concerning
the following question: what should play the role of points and
geodesics in noncommutative geometry ? As a concluding remark, we
discuss in \S\ref{sectionncg} Kantorovich duality in the noncommutative framework.

\subsection{Points and geodesics}
\label{points}
In the commutative case $\A=C(\M)$ for $\M$ a compact manifold, two pure
states $\delta_x$, $\delta_y$ provide via the GNS construction two
inequivalent irreducible
representations. This is no longer true in the noncommmutative
case. For instance all the pure states of $M_n(\C)$ yields equivalent
irreducible representations. A point of view is to consider that a
``point'' in noncommutative geometry should be a class of irreducible
equivalent representations - that is a class of pure states - rather than a pure
state. From this point of view, any spectral triple with algebra
$\A=M_n(\C)$ describes a one-point space. On the contrary, we argued in
\S\ref{sec:proj} that the spectral distance gives a
non-trivial structure to the set of pure states of $M_n(\C)$,
regardless the unitary equivalence of the representations they induce. We
dot not see any good reason to wash out this structure by considering only
quotients of $\pa$ instead of $\pa$ entirely.

Furthermore, several facts suggest that the purity of state might
not be such a relevant concept regarding the metric aspect of
noncommutative geometry. For instance Pythagoras equality holds between pure states
$\delta_x^1, \delta_y^2$ in the product of a manifold by $\C^2$, but in the case of the Moyal
plane it holds between translated states $\varphi^1, \varphi_\kappa^2$, pure or not. What is
important to pass from Pythagoras inequalities of
theorem \ref{thm:Pythagoras} to the equality is not
the purity of the states, but the existence of a curve $t\to
\varphi(t) \in \sa$ between the two considered states such that 
\begin{equation}
  \label{eq:118}
  d(\varphi(s), \varphi(t)) = |s-t|\, d(\varphi(0), \varphi(1)) \quad
  \forall s,t\in[0,1].
\end{equation}
In case of a manifold, such a  curve is provided by the minimal
geodesic between $\delta_x=\varphi(0)$ and $\delta_y=\varphi(1)$, which has value in pure states. In case of the Moyal plane,
this curve is the orbit of $\varphi=\varphi(0)$ under the translation action
of $\R^2$, which lies in $\pa$ if $\varphi$ is pure, in $\sa$
otherwise. 

Other instances where the purity of state does not seem an adequate
criteria to characterize a ``point'' of a noncommutative geometry are the cut-off geometries
developed in \cite{DAndrea:2013kx}. There,  pure states need to be approximated by
non-pure states. Namely, given a spectral triple $(\A, \HH, D),$ one truncates the Dirac operator via the
adjoint action of a sequence of increasing projections $P_N$ tending to
$\I$,
\begin{equation}
  \label{eq:119}
  D\to D_N:= P_N D P_N.
\end{equation}
This has motivations from the spectral action where the Dirac operator is truncated by a cut-off energy. In
case $(\A, \HH, D)$ is the usual spectral triple of  a manifold,
substituting in the spectral distance formula the semi-norm $L_D$ by 
\begin{equation}
  \label{eq:120}
  L_N(a) := ||[D_N, a]||
\end{equation}
yields a distance between any pure states $\delta_x, \delta_y$ which is
infinite as soon as $D_N$ has finite rank. To make it finite, one
should truncate the pure states as well. In case $\M=S^1$, an explicit
truncation is given by the Fejer transform of rank $N$, yielding non
pure states of $\cinf$ (see \cite[\S
5]{DAndrea:2013kx} for details).
\medskip

Related to the problem of determining what
points are in a noncommutative context, is the question of what should
play the role of a geodesic. From a purely metric point of view, one
may take as a definition of (minimal) geodesic between two
states $\varphi, \varphi'$ a curve like \eqref{eq:118} with
$\varphi(0)=\varphi$, $\varphi(1)=\varphi'$. In the case of a manifold
\begin{equation}
\A=\cinf,\quad  \varphi =\delta_x,\; \varphi'=\delta_y
\label{eq:130}
\end{equation}
there are two such curves:
the usual geodesic between $x$ and $y$ (with $t$ the proper length)
which lies completely in $\pa$, and the convex combination
\begin{equation}
\varphi(t) = t\delta_x + (1-t)\delta_y\label{eq:134}
\end{equation}
which lies in non-pure
states.
In case of $\A= M_2(\C)$, the distance on the $2$-sphere $\pa$  is
the Euclidean distance in the $3$-ball $\sa$, hence any curve \eqref{eq:118}
between two pure states necessarily goes through
non-pure states. The same is true for the two point space of \S
\ref{sec:discretespaces}, the distance between the two sheets of the
standard model in \S \ref{higgs}, and the distance on a fiber $S_x$ with the gauge
fluctuated Dirac operator in \S \ref{sec:fiber}. In other terms, in
all these examples the space of pure states $\pa$ equipped with the spectral
distance is not a path metric space. This forbids to take \eqref{eq:118} as a
definition of a geodesic, at  least as long as one imposes that the later must
be a curve of pure states.

 If one  allows non-pure states, then  a
geodesic in the sense of \eqref{eq:118}  always exists and is given by
\eqref{eq:134}.
 This does
not seem a very operative definition of a geodesic; it simply shows that
by the very definition of states as convex combinations of pure states,
then the space of states $\sa$ equipped with the spectral distance is
 trivially always path metric. A more interesting question could be
 the following: is the commutative case $\A=\cinf$ the
only example where the space of pure states equipped with the spectral
distance is path metric ? 

More understanding on these questions may come from optimal
transport.  As pointed out by a referee of an early version of this
text, the question in this context  is whether the curve of measures (a curve of states, in our terminology) is
produced by the underlying measure on curves. A discussion on that
matter, for the Wasserstein distance of order $p$ though, can be found
in \cite{Lisini:2007aa}, see also \cite{Villani:2009tp}. For the distance of order $1$,
one should see the appendix of \cite{Paolini:2012aa}.
\medskip

% Finally let us stress a subtlety in the way to define not a point, but
% a pair of points in  the quantum space of \S \ref{DFR}. The tensor
% product $\varphi\otimes \varphi'$ is taken on the center  there is also
% some subtelties in defining a pair of points.

\subsection{Optimal elements and geodesics}
\label{sec:geodesics}
Another point of view \cite{Martinetti:2011fk,Martinetti:2011fkbis} on
the question of geodesics in noncommutative geometry could be to define a geodesic in a dual
way, that is to find a substitute of the geodesic in the notion
of \emph{optimal element} introduced in definition
\ref{optimelement}. This makes sense because in the commutative case, the commutator norm
condition
\begin{equation}
\norm{[\ds,f]} = \suup{x\in\M}{\norm{{\nabla f}_{\lvert
    x}}_{T_x \M}} = 1
\label{eq:58bisbisbis}
 \end{equation}
characterizes the optimal element between $\delta_x$ and $\delta_y$ locally, in the
 sense that the constraint is carried by the gradient of $f$. The geodesics through $x$
are retrieved as the curves tangent to gradient of the optimal element
$f=d_{\text{geo}}(x,\, .)$.
In this sense, computing the
spectral distance - that is finding an optimal element - amounts to solving the equation of the geodesics:
\begin{enumerate}
\item[-] eq. (\ref{eq:58bisbisbis}) plays the role of the geodesic equation;

 \item[-] the optimal element $d_{\text{geo}}(x,\, .)$ fully characterizes the geodesics through $x$;

\item[-] the valuation of the optimal element on $\delta_x - \delta_y$ gives the integration of
  the line element on a minimal geodesic between $x$ and $y$.
\end{enumerate}
 \smallskip

At the moment there is no clear translation of the above points in a
noncommutative context. However, focusing on the optimal element
yields interesting interpretations of the results on the Moyal plane
of section \ref{sec:compact}.
%  Let us investigate from the point of view of the optimal element the 
% difference between the distance in the Moyal plane and the quantum
% length in the quantum plane.
 Recall that in the commutative case, as observed in \eqref{eq:111}, both the quantum length and the spectral distance
coincide with the geodesic distance; and this is the same
function
\begin{equation}
  \label{eq:123}
  l(x_\mu):= \sqrt{\sum_\mu x_\mu^2}
\end{equation}
which yields both the optimal element between two pure states $\delta_x,
\delta_{\lambda x}$, $
\lambda \in \R^+$
and - by the functional calculus -  the length operator $L=l(dq_\mu)$
in \eqref{eq:1bis}.  On the contrary, on the Moyal plane the quantum length and the spectral distance
no longer coincide,  as stressed in \eqref{discrep1} and
Prop.~\ref{propmodifiedql}, so that  one could expect the length
operator not to
be defined by the same function as the optimal element. This is indeed
the case. To see
it is convenient to work with the
creation/annihilation operators 
\begin{equation}
  \label{eq:124bis}
  a:= \frac{q_1 + i q_2}{\sqrt 2},\quad   a^*:= \frac{q_1 - i q_2}{\sqrt 2},
\end{equation}
as well as with their universal differentials
\begin{equation}
  \label{eq:96bis}
 da = \frac 1{\sqrt 2}(dq_1 + idq_2), \quad  da^* = \frac 1{\sqrt 2}(dq_1 - idq_2).  
\end{equation}
 \begin{prop}\cite{Martinetti:2011fk}
\label{moyalgeo}
The length operator can be equivalently
defined as  
$L=l_i (da)$, 
with
\begin{equation}
l_1(z)
:=  \sqrt{z \bar z + z\bar z} 
\;\text{ or } \; l_2(z) :=  \sqrt{2(z \bar z  - \lambda_P^2)} \;\text{ or } \;l_3(z)
:=  \sqrt{2(\bar z z + \lambda_P^2)}. 
\label{eq:158}
\end{equation}
% Let  $l_{i*}$ denote the functions above with $z\bar z,
% \bar z z$ substituted with $z\star\bar z$, $\bar z \star z$, that is
% {\footnote{Notice the change of notations with respect to
%     \cite{Martinetti:2011fk}, where  we did not make the distinction
%   between $l{i*}$ and $l_i$ since we focused our attention on $l_!$
%   which turns out to be equal to $l_{1*}$.}}
% \begin{equation}
% \ll(l_i) = l_{i*}(a).
% \label{eq:97}
% \end{equation}
The optimal element between any two eigenstates of the Hamiltonian of the
quantum harmonic oscillator is - up to regularization at infinity
- the $\star$-action of  the function $l_0$, defined as the solution of 
\begin{equation}
  \label{eq:1590}
  \left(\partial_z l_0 \star z\right) \star \left(\partial_z
    l_0 \star z\right)^*= \frac 1{2} z^*\star z.
\end{equation} 
Neither $l_{1}(a)$ nor $l_{2}(a)$ or $l_{3}(a)$ are optimal elements
between eigenstates.
\end{prop}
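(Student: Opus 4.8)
The plan is to treat the three assertions in turn: the first is a purely algebraic rewriting of the length operator, whereas the last two rely on the explicit form of the optimal element underlying the additive distance \eqref{eq:135}.

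For the first assertion I would express $L^2=(dq_1)^2+(dq_2)^2$ through the differentials \eqref{eq:96bis}. Since $dq_\mu=q_\mu\otimes 1-1\otimes q_\mu$, the relation \eqref{eq:7} gives $[dq_1,dq_2]=2i\lambda_P^2$, hence $[da,da^*]=2\lambda_P^2$, and a one-line expansion of $da\,da^*$ and $da^*\,da$ yields
\begin{equation*}
L^2 \;=\; da\,da^* + da^*\,da \;=\; 2\big(da\,da^*-\lambda_P^2\big) \;=\; 2\big(da^*\,da+\lambda_P^2\big).
\end{equation*}
Taking square roots identifies $L$ with $l_1(da)$, $l_2(da)$ and $l_3(da)$ respectively (the first understood as $\sqrt{z\bar z+\bar z z}$).

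For the second assertion I would use that, writing a selfadjoint candidate $f$ as the diagonal operator with entries $f_k:=\omega_k(f)$ in the Hermite basis, the relation $[a^*,f]=-\theta\,\partial_z f$ together with $a^*\ket{h_k}\propto\ket{h_{k+1}}$ turns the constraint $\norm{[D,f]}\le 1$ into $\sqrt{k}\,\abs{f_k-f_{k-1}}\le\sqrt{\theta/2}$, the supremum in \eqref{eq:135} being reached precisely when this is saturated. For such a saturating $l_0$ the operator $\partial_z l_0\star z$ is diagonal with entries $-\sqrt{k\theta/2}$, so that
\begin{equation*}
(\partial_z l_0\star z)\star(\partial_z l_0\star z)^* \;=\; \tfrac{k\theta}{2} \;=\; \tfrac12\,z^*\star z,
\end{equation*}
which is \eqref{eq:1590}; conversely this equation fixes $\abs{\partial_z l_0\star z}$, hence the increments $f_k-f_{k-1}$ up to a phase, so its solution reproduces the optimal element. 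As $f_k=-\sqrt{\theta/2}\sum_{j\le k}j^{-1/2}$ diverges, $l_0$ is unbounded and only becomes a multiplier after truncation — this is the meaning of ``regularization at infinity'' — the distance being recovered as the limit of the truncated valuations.

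For the third assertion I would compute the diagonal entries of the functions \eqref{eq:158} evaluated on $a$ rather than on $da$: for instance $l_3(a)=\sqrt{2(a^*a+\lambda_P^2)}$ has entries $\sqrt{2\theta(k+1)}$, with increments $\sqrt{2\theta}\,(\sqrt{k+1}-\sqrt{k})$. Because $\sqrt{k}\,(\sqrt{k+1}-\sqrt{k})=\sqrt{k}/(\sqrt{k+1}+\sqrt{k})<\tfrac12$ for every finite $k$, the bound $\sqrt{k}\,\abs{f_k-f_{k-1}}=\sqrt{\theta/2}$ is never saturated, so $l_3(a)$ sits strictly inside the Lipschitz ball and cannot be optimal between two finite eigenstates; the same estimate disposes of $l_1(a)$ and $l_2(a)$. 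The main obstacle is the regularization in the second assertion: one must control the unbounded $l_0$ finely enough to guarantee that its truncations stay in the Lipschitz ball and that their valuations converge to \eqref{eq:135} rather than merely approaching it, which is exactly the gap that the near-optimal functions \eqref{eq:158} fail to close.
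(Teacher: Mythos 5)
The survey itself gives no proof of Prop.~\ref{moyalgeo} (it is quoted from \cite{Martinetti:2011fk}); the only proof-level material in the text is the discussion around \eqref{eq:143bis}--\eqref{eq:88ter}, and your argument follows exactly that route. The first claim is indeed the one-line identity $L^2=da\,da^*+da^*da$ combined with $[da,da^*]=2\lambda_P^2$, which follows from \eqref{eq:7} and \eqref{eq:96bis}; the second is the matrix-basis reduction of $\norm{[D,L_f]}\le 1$ for diagonal $f$ to $\sqrt{k}\,\abs{f_k-f_{k-1}}\le\sqrt{\theta/2}$, which is the same information as the shift-operator form \eqref{eq:143bis}. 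Both are sound, up to the ordering ambiguity ($k$ versus $k+1$, i.e.\ $z^*\star z$ versus $z\star z^*$) in \eqref{eq:1590}, which you should fix by the conventions of the reference, and up to the fact that \eqref{eq:1590} only determines the moduli of the increments, so a sign choice is still needed to recover the optimal element.

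The one step that fails as written is the closing claim that ``the same estimate disposes of $l_1(a)$ and $l_2(a)$''. Your estimate for $l_3(a)$, whose diagonal entries are $\sqrt{2\theta(k+1)}$, shows $\sqrt{k}\,\abs{f_k-f_{k-1}}=\sqrt{2\theta}\,\sqrt{k}/(\sqrt{k+1}+\sqrt{k})<\sqrt{\theta/2}$: the element lies strictly inside the Lipschitz ball and hence undershoots \eqref{eq:135}. For $l_2(a)$, with entries $\sqrt{2\theta k}$, and for $l_1(a)$, with entries $\sqrt{\theta(2k+1)}$, the inequality reverses: $\sqrt{k}/(\sqrt{k}+\sqrt{k-1})>\tfrac12$, and by strict concavity of the square root $\sqrt{2k+1}+\sqrt{2k-1}<2\sqrt{2k}$, so both elements \emph{violate} the commutator-norm bound and are excluded from $L_D(\A)$ altogether rather than failing to saturate it. The conclusion that none of the three is optimal survives, but the two cases must be argued separately and in opposite directions. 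The remaining soft spot you flag yourself --- controlling the truncations of the unbounded $l_0$ so that they stay in the Lipschitz ball and their valuations converge to \eqref{eq:135} --- is genuine and is where the technical work of \cite{Cagnache:2009oe} and \cite{Martinetti:2011fk} lies.
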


% A similar analysis cannot be done for the modified quantum length
% $d'_L$ since the latter, unlike the quantum length
% \eqref{eq:108}, is not the mean value of some modified length operator
% \cite[\S 6.3]{Martinetti:2011fkbis}. However one has that
% \begin{equation}
% d'_L(\omega_m, \omega_n) = \abs{\omega_m (l_{1}(a)) -\omega_n (l_{1}(a))}.
% \label{eq:129}
% \end{equation} 
% So if $l_1(a)$ were an optimal element between eigenstates of the
% harmonic oscillators, then the identification between the spectral
% distance and the modified quantum lenght would hold exactly and not
% asymptotically as in Prop. \ref{propmodifiedql}. From this point of
%  view, the difference between $d'_L$ and $d$ is indeed encoded in the difference between
% $l_1$ and $l_0$. 
% \medskip

In a similar way, the
spectral distance between 
translated states $\varphi$, $\varphi_\kappa$ 
being the
 amplitude of translation $|\kappa|$ both in the commutative and the
 noncommutative cases, one could expect the respective optimal
 elements to be related. And this is indeed the same function 
\begin{equation}
  \label{eq:133}
  l_\kappa(z) = \frac{ze^{-i\Xi} + \bar ze^{i\Xi}}{\sqrt 2}\quad \text{ with
  } \quad \Xi := \text{Arg} \,\kappa,
\end{equation}
which yields the optimal element (up to regularization at infinity) both on the Euclidean plane (through the pointwise
action of $l_\kappa$) and the Moyal plane (through its $\star$-action). For the
latter, this has been shown in \cite[Theo. III.9]
{Martinetti:2011fko}, for the former in
\cite[Prop. 3.2]{dAndrea:2009xr}). It is quite remarkable that the same function $l_\kappa$ gives an optimal element
between translated states, regardless of the commutativity of the
algebra. 
%  In particular
% $l_\kappa(z)$ is an optimal element between $\delta_x$ and  $\delta_y$ viewed
% as the $\kappa=\frac{y_1-x_1 +i(y_2-x_2)}{\sqrt 2}$-translated of $\delta_x$:
% one the one hand, 
% \begin{equation}
%   \label{eq:89quat}
% \abs{\delta_x( l_\kappa) - \delta_{x+\kappa}(l_\kappa)}=  l_{\kappa}(\kappa) =  \sqrt 2
%  \abs{\kappa} = \abs{x-y},
% \end{equation}
% on the other hand
% \begin{equation}
%   \label{eq:143}
%   [\ds, l_\kappa ] =-i\sqrt 2 \left(\begin{array}{cc} 0 & \bar\partial
%       l_\kappa\\ \partial l_\kappa&0\end{array}\right) = -i\left(\begin{array}{cc} 0 & e^{i\Xi}\\ e^{-i\Xi}&0\end{array}\right)
% \end{equation} 
%  has obviously norm $1$. 
\smallskip

Let us now compare the optimal elements
$l_\kappa$ for translated states and $l_0$
for eigenstates of the
harmonic oscillator. Modulo regularization at infinity, the latter is characterized as a solution
of \cite[Prop. 3.7]{Cagnache:2009oe}
\begin{equation}
  \label{eq:143bis}
  [\ds,L_{l_0}] =-i\left(\begin{array}{cc} 0 & S^*\\ S&0\end{array}\right)
\end{equation} 
where $L_{l_0}$ denotes the $\star$-multiplication by $l_0$ defined by
\eqref{eq:1590}, while $S$ is the shift operator (eq.~(\ref{eq:1590}) actually follows
from \eqref{eq:143bis}).  In analogy with the
commutative case where $[\ds, f] = (\slashed \partial f)$, we
interpret $[\ds, L_{l_0}]$ as the derivative of the optimal element
$l_0$. The presence of the shift operator in this derivative suggests
that the ``geodesic'' is somehow non smooth. A similar interpretation
follows from the observation that the spectral distance \eqref{eq:135}
\begin{equation}
  \label{eq:125}
    d(\omega_m,
\omega_n) =\lambda_P \sum_{k=m+1}^n{{1}\over{{\sqrt{2k}}}}
\end{equation}
is the middle Riemann sum
approximation of the modified quantum length \eqref{eq:83}
\begin{equation}
  \label{eq:124}
 d_L'(\omega_m,\omega_n) = \lambda_P\left(\sqrt{2n+1} -
  \sqrt{2m+1}\right) = \lambda_P\int_{m+ \frac 12}^{n+\frac 12} \frac 1{\sqrt{2k}}dk.
\end{equation}
In \cite{Martinetti:2011fko,
  Martinetti:2011fk}  we interpret this result saying
that the spectral distance and the quantum length are the integration
of the same quantum line
element
\begin{equation}
\lambda_P\frac 1{\sqrt{2k}}dk
\label{eq:87bis}
\end{equation}
but along two distinct geodesics: a
continuous one for the quantum length \eqref{eq:124}, a discrete one for the spectral
distance \eqref{eq:125}.

Between translated states, the optimal element $l_\kappa$ satisfies an
equation similar to \eqref{eq:143bis}, 
\begin{equation}
   \label{eq:143}
   [\ds, L_{l_\kappa} ] = -i\left(\begin{array}{cc} 0 & e^{i\Xi}\\ e^{-i\Xi}&0\end{array}\right)
 \end{equation} 
where the shift is substituted with a term proportional to the
identity. This indicates that
the  geodesic is ``smooth'', in agreement with the analysis developed
below~\eqref{eq:133}. % For the eigenstates of the quantum harmonic oscillator, the spectral
% In a word, \emph{both the spectral distance and the length
% % operator quantize the line element; with the spectral distance one
% % also quantizes the geodesics}.

In the same vein, one has
\begin{equation}
  \label{eq:880}
[\ds, L_{l_0}]^*[\ds, L_{l_0}] =  \ii - e_0
\end{equation}
where $e_0$ is the projection on the ground state $h_0$, while
\begin{equation}
  \label{eq:88ter}
  [\ds,L_{l_\kappa}]^*[\ds, L_{l_\kappa}] = \ii.
\end{equation}
Eq. \eqref{eq:88ter} indicates that the derivative of the optimal
element $l_\kappa$ is a unitary operator, whereas the derivative of
$l_0$ is not. This comes from the fact that the set of eigenstates of the harmonic
oscillator - identified to $\N$ - is not a group, unlike
the set of translated states. So the shift $S$ acting on $l^2(\N)$ is not a
unitary operator. 

\subsection{Kantorovich duality in noncommutative geometry ?}
\label{sectionncg}

The formula \eqref{eq:1} of the spectral distance
is a way to export to the noncommutative setting the usual notion of
Riemannian geodesic distance. Notice the change of
point of view: the distance is no longer the infimum of a
geometrical object (i.e. the length of the
paths between points), but the supremum of an algebraic quantity
(the difference of the valuation of two states). 

A natural question is whether one
looses any trace of the distance-as-an-infimum  by passing to the
noncommutative side. More specifically, is there some
``noncommutative Kantorovich duality'' allowing to view the spectral
distance as the minimization of some ``noncommutative cost'' ? 
\begin{align*}
\text{\underline{distance as a supremum}:} & \hspace{1truecm} d_{\ds}  \text{ commutative case}
&\rightarrow&\quad d_D \text{ noncommutative case}\\
&\hspace{1truecm} \uparrow & & \quad\quad\;\lvert \\
 \text{Kantorovich duality: } &\hspace{1truecm} d_{\ds} = W& &\quad d_D = W_D  ? \\
&\hspace{1truecm} \downarrow & & \quad\quad\downarrow\\
\text{\underline{distance as an infimum}:}&\hspace{1truecm}  W \text{
  with cost } d_{\text{geo}}& & \quad\text{noncommutative cost ?}\\
\end{align*}
In this diagram, $d_{\ds}$ and $d_D$ denote the spectral distances computed with the
seminorms $||[\ds, \cdot]||$ and $||[D, \cdot]||$, while $W$ is the
Wasserstein distance and $W_D$ its putative noncommutative generalization.

In the commutative case, the cost function is retrieved as
the Monge-Kantorovich distance between pure states of $C_0(\M)$. So in
the noncommutative case, if
the spectral distance were to coincide with some
``Monge-Kantorovich''-like distance $W_D$ on $\sa$,
then the associated cost should be the spectral distance on the pure
state space $\pa$. So given a spectral triple 
$(\A, \hh, D)$, we aim at defining a
``Monge-Kantorovich''-like distance $W_D$ on the state space $\sa$, taking as a cost function the
spectral distance $d_D$ on
the pure state space $\pa$. A first idea is to mimic formula
\eqref{eq:3} with $\X = \pa$, that is
\begin{equation}
  \label{eq:30}
W(\mu_1, \mu_2) = \inf_{\rho} \int_{\pa\times\pa} \hspace{-1.25truecm} \de\rho \quad d_D(\omega, \tilde\omega)
\end{equation}
where $\mu_1, \mu_2$ are probability measures on $\pa$, $\omega,
\tilde\omega$ are generic elements of $\pa$  and the infimum is on the
measures $\rho$  on $\pa\times \pa$ with marginals $\mu_1$,
$\mu_2$. For this to make sense as a distance on $\sa$, we should restrict to states
$\varphi\in\sa$ that are given by a probability
measure on $\pa$.  This is possible (at least) when $\A$ is separable and
unital: $\sa$ is then metrizable
\cite[\S 4.1.4]{Bratteli:1987fk} so that by Choquet theorem any state
$\varphi\in\sa$ is given by a probability measure
$\mu\in\text{Prob}(\pa)$. One should be careful however that the correspondence is not $1$ to $1$:
$\sa\to\text{Prob}(\pa)$ is injective, but two distinct probability
measures $\mu_1, \mu_2$ may yield the same state $\varphi$. This
is because $\A$ is \emph{not} an algebra of continuous functions on
$\pa$ (otherwise $\A$ would be commutative). 
% We give an explicit
% example of such a non-unique decomposition in section \ref{sphere}.
Thus $W_D$ that we are looking for should not be a distance on  $\text{Prob}(\pa)$, but on a quotient of
it, precisely given by $\sa$. This forbids to define $W_D$ by
formula (\ref{eq:30}), since by construction the latter is a distance
on  $\text{Prob}(\pa)$.

A possibility is to consider the infimum 
\begin{equation}
\inf_{\mu_1,  \mu_2} \; W(\mu_1, \mu_2)
\label{eq:33}
\end{equation}
on all the probability measures $\mu_1,
\mu_2\in\text{Prob}(\pa)$ such that for any $a\in\A$ one has
\begin{equation}
  \label{eq:29}
  \varphi_1 (a) =\int_{\pa} \omega(a)\, \de\mu_1(\omega), \quad   \varphi_2 (a) =\int_{\pa} \omega(a)\, \de\mu_2(\omega)
\end{equation}
for two given states $\varphi_1, \varphi_2$. However it is not yet clear that (\ref{eq:33}) is a distance on $\sa$.
\medskip

In \cite{Martinetti:2012fk},  we explored another way, consisting in viewing $\A$ as an
``noncommutative algebra of functions'' on $\pa$,
\begin{equation}
  \label{eq:3mille}
  a(\omega) := \omega(a) \quad \forall \omega\in\pa, \, a\in\A;
\end{equation}
and define the set of ``$d_D$-Lipschitz
noncommutative functions''   in analogy with
\eqref{eq:6} as
\begin{equation}
 \text{Lip}_D(\A) := \left\{
 a\in\A \,,\; \abs{a(\omega_1) - a(\omega_2)} \leq d_D(\omega_1, \omega_2) \quad
  \forall \,\omega_1, \omega_2\in\pa\right\}. 
\label{eq:5}
\end{equation}
By mimicking~\eqref{eq:1} we then defines for any $\varphi_1, \varphi_2\in \sa$
  \begin{equation}
    \label{eq:2}
    W_D(\varphi_1, \varphi_2) := \sup_{a\in\lda}\abs{\varphi_1(a)
        -\varphi_2(a)}.
  \end{equation}
% For this definition to be  of any interest, $W_D$ must be a
% distance. This is the case if we assume that \begin{equation}
% [D, \pi(a)] = 0\Longrightarrow\left\{\begin{array}{ll} a = 0 &\text{if }
%     \A  \text{ is non unital},\\ a\in\C{\bf 1} &\text{if } \A  \text{ is unital}.\end{array}\right. \label{eq:26}
% \end{equation}
% which is a natural condition when one studies the topological
% proporties of the spectral distance.
\begin{prop}
\cite[Prop. 3.1]{Martinetti:2012fk}
\label{propmkncg}
$W_D$ is a distance, possibly infinite, on $\sa$. Moreover for any $\varphi_1, \varphi_2\in\sa$,
  \begin{equation}
\label{inclusionn}
d_D(\varphi_1, \varphi_2) \leq W_D(\varphi_1, \varphi_2).
\end{equation}
The equation above is an equality on the set of convex linear combinations
\begin{equation}
\varphi_\lambda := \lambda\, \omega_1 + (1-\lambda) \,\omega_2\label{eq:156}
\end{equation}
of any two given pure states $\omega_1, \omega_2$: namely
for any $\lambda, \tilde\lambda\in [0,1]$ one has
\begin{equation}
  \label{eq:35}
  d_D(\varphi_\lambda, \varphi_{\tilde\lambda}) = \abs{\lambda -
    \tilde\lambda}\,d_D(\omega_1, \omega_2) = W_D(\varphi_\lambda, \varphi_{\tilde\lambda}).
\end{equation}
 \end{prop}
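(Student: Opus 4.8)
The plan is to rest the whole argument on one inclusion of competitor sets. Write $L_D(\A):=\{a\in\A : \norm{[D,a]}\leq 1\}$ for the $D$-Lipschitz ball of \eqref{eq:14}, over which the spectral distance $d_D$ of \eqref{eq:1} is computed, and recall that $\lda$ denotes the set of $d_D$-Lipschitz noncommutative functions of \eqref{eq:5}, over which $W_D$ is computed in \eqref{eq:2}. The first step is to observe that $L_D(\A)\subseteq\lda$: if $\norm{[D,a]}\leq 1$, then $a$ is one of the competitors in the supremum \eqref{eq:1} defining $d_D(\omega_1,\omega_2)$ for every pair of pure states $\omega_1,\omega_2\in\pa$, hence $\abs{\omega_1(a)-\omega_2(a)}\leq d_D(\omega_1,\omega_2)$, which is precisely the membership condition for $\lda$. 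Because the supremum defining $W_D$ runs over the larger set $\lda\supseteq L_D(\A)$, it dominates the one defining $d_D$, giving at once inequality \eqref{inclusionn}, $d_D(\varphi_1,\varphi_2)\leq W_D(\varphi_1,\varphi_2)$ for all $\varphi_1,\varphi_2\in\sa$.

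That $W_D$ is a (possibly infinite) pseudo-distance is then formal. Symmetry and vanishing on the diagonal are evident from \eqref{eq:2}, while the triangle inequality follows, for each fixed $a\in\lda$, from $\abs{\varphi_1(a)-\varphi_3(a)}\leq\abs{\varphi_1(a)-\varphi_2(a)}+\abs{\varphi_2(a)-\varphi_3(a)}$ upon taking the supremum over $a$. To upgrade it to a genuine distance I would invoke \eqref{inclusionn} together with the fact recalled in \S\ref{subsection:spectraldist} that $d_D$ separates the states of $\A$: if $W_D(\varphi_1,\varphi_2)=0$ then $d_D(\varphi_1,\varphi_2)=0$, whence $\varphi_1=\varphi_2$.

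For the equalities \eqref{eq:35} on the segment $\varphi_\lambda=\lambda\,\omega_1+(1-\lambda)\,\omega_2$, the key feature is that the relevant functional is one-dimensional: for every $a\in\A$ one has $\varphi_\lambda(a)-\varphi_{\tilde\lambda}(a)=(\lambda-\tilde\lambda)\bigl(\omega_1(a)-\omega_2(a)\bigr)$. Taking the supremum over $L_D(\A)$ reproduces, exactly as in \eqref{eq:58}, $d_D(\varphi_\lambda,\varphi_{\tilde\lambda})=\abs{\lambda-\tilde\lambda}\,d_D(\omega_1,\omega_2)$, which is the first equality. Taking the supremum over $\lda$ instead gives $W_D(\varphi_\lambda,\varphi_{\tilde\lambda})=\abs{\lambda-\tilde\lambda}\,\sup_{a\in\lda}\abs{\omega_1(a)-\omega_2(a)}$, so it only remains to check $\sup_{a\in\lda}\abs{\omega_1(a)-\omega_2(a)}=d_D(\omega_1,\omega_2)$. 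The bound ``$\leq$'' is the defining condition \eqref{eq:5} of $\lda$ read at the particular pure states $\omega_1,\omega_2$, and the bound ``$\geq$'' follows once more from $L_D(\A)\subseteq\lda$; combining the two yields \eqref{eq:35}.

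The argument is short, and its only genuinely delicate point is conceptual rather than computational: the Lipschitz condition \eqref{eq:5} constrains an element $a$ solely on \emph{pairs of pure states}. Equality between $d_D$ and $W_D$ therefore holds precisely because, along the segment $\varphi_\lambda$, the difference $\varphi_\lambda-\varphi_{\tilde\lambda}$ is proportional to the single difference $\omega_1-\omega_2$ of pure states, so that the pure-state constraint built into $\lda$ is exactly saturated and the two suprema coincide. For a generic pair of states in $\sa$ the difference $\varphi_1-\varphi_2$ is not proportional to a difference of two pure states, the pure-state constraint leaves $\lda$ strictly larger than $L_D(\A)$ as far as these functionals are concerned, and the inequality \eqref{inclusionn} may be strict — which is exactly what makes $W_D$ a potentially new, non-trivial distance rather than a mere rewriting of $d_D$.
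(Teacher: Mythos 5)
Your proof is correct, and it follows the natural route that the survey itself defers to the cited reference for: the inclusion $L_D(\A)\subseteq\lda$ yields \eqref{inclusionn} and hence the separation of states, while the one‑dimensionality of $\varphi_\lambda-\varphi_{\tilde\lambda}=(\lambda-\tilde\lambda)(\omega_1-\omega_2)$ together with the two‑sided bound $\sup_{a\in\lda}\abs{\omega_1(a)-\omega_2(a)}=d_D(\omega_1,\omega_2)$ gives \eqref{eq:35}. Your closing remark correctly identifies why equality can fail off the segment, which is exactly the point the paper makes after the proposition.
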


The difference between $W_D$ and $d_D$ is entirely contai\-ned in the difference between
the $D$-Lipschitz ball~\eqref{eq:14} and
$\lda$ defined in (\ref{eq:5}). In the commutative case
$\A=C_0(\M)$, these two notions of Lipschitz functions
coincide with the usual one, so that $d_{\ds} = W_{\ds}$. In the
noncommutative case, they coincide on some easy low dimensional
examples, like for $\A=M_2(\C)$, but there are indications that this is not true in general
\cite[\S 7]{rieffel2003}.
\bigskip

To conclude, let us mention another direction of research still
largely unexplored:
generalizing to the noncommutative realm the Wasserstein distance
$W_p$ of
order $p\geq 2$ \eqref{eq:138}. The only attempt we are aware of is that of \cite{cite-key},
where one proposes a noncommutative version of $W_p$ based on the posets of
commutative sub-algebras of a noncommutative 
$C^*$-algebra. 
\vspace{2truecm}
\begin{center}
\line(1,0){200}
\end{center}
\newpage

\section*{Notations}

Given $z\in\C$, we denote $\bar z$ its
 conjugate, $|z|$ its module and ${\mathfrak R}(z)$ its real part.
\smallskip

 Given an involutive algebra $\A$, the
adjoint of an element $a\in\A$ is $a^*$.  A $C^*$-algebra is an
associative and involutive algebra $\A$, equipped with a norm $||\cdot||$
in which it is complete, and such that for any $a\in\A$ one has
\begin{equation}
||a^*
a|| =||a||^2.
\label{eq:139}
\end{equation}
It is \emph{unital} if it contains a unit, that is an
element ${\bf 1}\in\A$ such that
\begin{equation}
{\bf 1} a = a{\bf 1} = a\quad \forall a\in\A.\label{eq:140}
\end{equation}

 Most of the time we identify an element $a$
of the algebra $\A$ with its representation $\pi(a)$ as bounded operator on
some Hilbert space $\HH$. Unless otherwise specified, representations
are always faithful and non-degenerate. In particular,
     if $\A$ is unital with unit ${\bf 1}$, this guarantees that
     $\pi({\bf 1})$ is the identity $\I$ of ${\mathcal B}(\HH)$.
\medskip

The adjoint action of a unitary $u\in\A$ is $(\text{Ad} u)(a) := u
au^*$, for any $a\in\A$.
\medskip

We denote by $\sa$ and $\pa$
the space of states and of pure states of $\A$. We usually denote
a state by $\varphi$, with suitable decorations $\varphi', \varphi_0,
\varphi_1, ...$ if needed. A faithful state is a state $\varphi$ such state
$\varphi(a^*a)=0$ iff $a=0$. Pure states are usually denoted by
$\omega$ in the noncommutative case, and $\delta$ in the commutative case, with suitable decorations.
\medskip

 Given two operators $A, B$
acting on an Hilbert space $\HH$,  the bracket $[A, B]= AB - BA$ is their
commutator. ${\mathcal B}(\HH)$ denotes the space of bounded operators on $\HH$, and
$\I$ is the identity operator. Unless otherwise specified, the norm $||\cdot||$ is the
operator norm coming from the action on $\HH$, that is
\begin{equation}
  \label{eq:132}
  ||A|| = \sup_{\psi\in\HH} \frac{||A\psi||_{\HH}}{||\psi||_{\HH}}
\end{equation}
where
\begin{equation}
||\psi||_{\HH} = \sqrt{\langle \psi, \psi\rangle}
\label{eq:136}
\end{equation}
is the $L^2$-norm on $\HH$, with $\langle \cdot, \cdot \rangle$ the inner
product on $\HH$. We omit the index $_\HH$ and it should be
clear from the context if one deals with the operator or the
$L^2$-norm. 

A vector in $\HH$ is usually denoted by the greek letter $\psi, \zeta$
or $\xi$. Its components in a given orthonormal basis are the complex
numbers $\psi_i,
\zeta_i, \xi_i$ with $i =1, ..., \text{dim} \,\HH$. The dual vector is
$\bar \psi$, with components $\bar\psi_i$.  
 The canonical basis of $\C^N$ is $\left\{e_{ij}\right\}_{i,j =1, ...,N}$, that is $e_{ij}$ is the matrix with null entries, except $1$
at the $i^{\text{th}}$ line, $j^{\text{th}}$ column.
\medskip

A spectral triple $(\A,\HH,D)$ is the datum of a (non necessarily
commutative) involutive algebra $\A$, acting faithfully on an Hilbert space
$\HH$ via a representation $\pi$, together with a (non necessarily bounded) operator $D$ on $\HH$
such that $D-\lambda\I$ is compact for any $\lambda$ in the
resolvent set of $D$, and $[D, \pi(a)]$ is a
bounded operator for any
$a\in\A$. A spectral triple is unital when the algebra $\A$ is unital
and the representation $\pi$ is non-degenerate. It is graded if there
exists a grading $\Gamma$ of $\HH$ (that is $\Gamma=\Gamma^*$ and
$\Gamma^2=\I$) which commutes  with any $\pi(a)$ and anticommutes
with $D$. The $D$-Lipschitz ball
of $\A$ is 
\begin{equation}
  \label{eq:145}
  L_D(\A):= \left\{ a\in\A, ||[D, a]||\leq 1\right\}.
\end{equation}

$D$ denotes the (generalized) Dirac operator of an arbitrary spectral
triple. $\ds$ is the usual Dirac operator of a spin
manifold. $\gamma^\mu$ are the Dirac matrices, $\sigma^\mu$ the Pauli matrices.
\medskip

We call ``distance'' a function that verifies all the usual properties of a
 distance,  except that we do not assume it is necessarily finite. 
$d$ is the spectral distance \eqref{eq:-1}, $d_A$ the fluctuated distance
defined in \S~\ref{subriemannian}, $d_h$ the horizontal distance in
sub-Riemannian geometry.

A path metric space is a metric space $(\X,\tilde d)$ such that between any
two points $x,y\in\X$ there exists a continuous curve $c: [0,1]\to \X$
with $c(0)=x, c(1)=y$ and such that
\begin{equation}
  \label{eq:157}
  \tilde d(c(s), c(t))= |s-t|\, \tilde d(x,y) \quad \forall s,t \in [0,1].
\end{equation}

\medskip
The Lipschitz norm of a function $f$ on a Riemannian manifold $\M$ with
geodesic distance $d_{\text{geo}}$ is
\begin{equation}
  \label{eq:137}
  ||f||_{\text{Lip}}:=\sup_{x,y \in \M}\frac{|f(x) -
    f(y)|}{d_{\text{geo}(x, y)|}}.
\end{equation}

The algebra of $n$-dimensional complex matrices is $M_n(\C)$. The
algebra of quaternions is $\mathbb H$.
\medskip

 \providecommand{\bysame}{\leavevmode\hbox to3em{\hrulefill}\thinspace}
\providecommand{\MR}{\relax\ifhmode\unskip\space\fi MR }
% \MRhref is called by the amsart/book/proc definition of \MR.
\providecommand{\MRhref}[2]{%
  \href{http://www.ams.org/mathscinet-getitem?mr=#1}{#2}
}
\providecommand{\href}[2]{#2}


\begin{thebibliography}{10}

\bibitem{Amelino-Camelia:2009fk}
Giovanni Amelino-Camelia, Giulia Gubitosi, and Flavio Mercati,
  \emph{Discretness of area in noncommutative space}, Phys. Lett. B
  \textbf{676} (2009), 180--83.

\bibitem{Bahns:2010fk}
D.~Bahns, S.~Doplicher, K.~Fredenhagen, and G.~Piacitelli, \emph{Quantum
  geometry on quantum spacetime: distance, area and volume operators}, Commun.
  Math. Phys. \textbf{308} (2011), 567--589.

\bibitem{Bellissard:2010fk}
J.~V. Bellissard, M.~Marcolli, and K.~Reihani, \emph{Dynamical systems on
  spectral metric spaces}, arXiv:1008.4617v1 [math.OA] (2010).

\bibitem{bimonte}
G.~Bimonte, F.~Lizzi, and G.~Sparano, \emph{Distances on a lattice from
  noncommutative geometry}, Phys. Lett. B \textbf{341} (1994), 139--146.

\bibitem{Bratteli:1987fk}
O.~Bratteli and D.~W. Robinson, \emph{Operator algebras and quantum statistical
  mechanics 1}, Springer, 1987.

\bibitem{Cagnache:2009oe}
Eric Cagnache, Francesco d'Andrea, Pierre Martinetti, and Jean-Christophe
  Wallet, \emph{The spectral distance on {M}oyal plane}, J. Geom. Phys.
  \textbf{61} (2011), 1881--1897.

\bibitem{Cagnache:2009vn}
Eric Cagnache and Jean-Christophe Wallet, \emph{Spectral distances: Results for
  moyal plane and noncommutative torus}, SIGMA \textbf{6} (2010), no.~026, 17
  pages.

\bibitem{Chamseddine:1996kx}
A.~H. Chamseddine and A.~Connes, \emph{The spectral action principle}, Commun.
  Math. Phys. \textbf{186} (1996), 737--750.

\bibitem{Chamseddine:2007oz}
A.~H. Chamseddine, A.~Connes, and M.~Marcolli, \emph{Gravity and the standard
  model with neutrino mixing}, Adv. Theor. Math. Phys. \textbf{11} (2007),
  991--1089.

\bibitem{Chamseddine:2013uq}
A.~H. Chamseddine, A.~Connes, and Walter van Suijlekom, \emph{Beyond the
  spectral standard model: emergence of {P}ati-{S}alam unification}, JHEP
  \textbf{11} (2013), 132.

\bibitem{Christensen:2006fk}
E.~Christensen and C.~Ivan, \emph{Spectral triples for af {C}*-algebras and
  metrics on the cantor set}, J. Operator Theory \textbf{56} (2006), no.~1,
  17--46.

\bibitem{Christensen:2011fk}
Erik Christensen, Cristina Ivan, and Elmar Schrohe, \emph{Spectral triples and
  the geometry of fractals}, Journal of Noncommutative Geometry (2011).

\bibitem{Connes:1989fk}
A.~Connes, \emph{Compact metric spaces, {F}redholm modules, and
  hyperfiniteness}, Ergod. Th. \& Dynam. Sys. \textbf{9} (1989), 207--220.

\bibitem{Connes:1996fu}
\bysame, \emph{Gravity coupled with matter and the foundations of
  noncommutative geometry}, Commun. Math. Phys. \textbf{182} (1996), 155--176.

\bibitem{Connes:1992bc}
A.~Connes and J.~Lott, \emph{The metric aspect of noncommutative geometry},
  Nato ASI series B Physics \textbf{295} (1992), 53--93.

\bibitem{connesreconstruct}
Alain Connes, \emph{On the spectral characterization of manifolds}, J. Noncom.
  Geom. \textbf{7} (2013), no.~1, 1--82.

\bibitem{D.-Kastler:1993aa}
D.~Testard D.~Kastler, \emph{Quantum forms of tensor products}, Commun. Math.
  Phys. \textbf{155} (1993), 135--142.

\bibitem{pekin}
J.~Dai and X.C. Song, \emph{{P}ythagoras' theorem on a 2d-lattice from a
  "natural" {D}irac operator and {C}onnes' distance formula}, J. Phys. A
  \textbf{34} (2001), 5571--5582.

\bibitem{DAndrea:2015ab}
F.~D'Andrea, \emph{Pythagoras theorem in noncommutative geometry}, to be
  published in Comtemp. Math. arXiv 1507.08773 [math-ph] \textbf{arXiv:}
  (2015).

\bibitem{DAndrea:2013kx}
F.~D'Andrea, F.~Lizzi, and P.~Martinetti, \emph{Spectral geometry with a
  cut-off: topological and metric aspects}, J. Geom. Phys \textbf{82} (2014),
  18--45.

\bibitem{DAndrea:2012fkpm}
F.~D'Andrea and P.~Martinetti, \emph{On {P}ythagoras theorem for products of
  spectral triples}, Lett. Math. Phys. \textbf{103} (2013), 469--492.

\bibitem{dAndrea:2009xr}
Francesco D'Andrea and Pierre Martinetti, \emph{A view on optimal transport
  from noncommutative geometry}, SIGMA \textbf{6} (2010), no.~057, 24 pages.

\bibitem{Devastato:2015aa}
A.~Devastato, \emph{Noncommutative geometry, grand symmetry and twisted
  spectral triple}, J. Phys.: Conf. Ser. \textbf{634} (2015), 012008.

\bibitem{Devastato:2013fk}
A.~Devastato, F.~Lizzi, and P.~Martinetti, \emph{{G}rand {S}ymmetry, {S}pectral
  {A}ction and the {H}iggs mass}, JHEP \textbf{01} (2014), 042.

\bibitem{Doplicher:2001fk}
S.~Doplicher, \emph{Spacetime and fields, a quantum texture}, Proceedings 37th
  Karpacz Winter School of Theo. Physics (2001), 204--213.

\bibitem{Doplicher:2006uq}
\bysame, \emph{Quantum field theory on quantum spacetime}, J. Phys.: Conf. Ser.
  \textbf{53} (2006), 793--798.

\bibitem{Doplicher:1995hc}
Sergio Doplicher, Klaus Fredenhagen, and John~E. Robert, \emph{The quantum
  structure of spacetime at the {P}lanck scale and quantum fields},
  Commun.Math.Phys. \textbf{172} (1995), 187--220.

\bibitem{Dubois-Violette:1989fk}
M.~Dubois-Violette, J.~Madore, and R.~Kerner, \emph{Classical bosons in a
  noncommutative geometry}, Class. Quantum Grav. \textbf{6} (1989), 1709.

\bibitem{Franco:2010fk}
Nicolas Franco, \emph{Global eikonal condition for {L}orentzian distance
  function in noncommutative geometry}, SIGMA \textbf{6} (2010), no.~064, 11
  pages.

\bibitem{Gayral:2004rc}
V.~Gayral, J.~M.~Gracia Bondia, B.~Iochum, T.~Sch{\"u}cker, and J.~C. Varilly,
  \emph{Moyal planes are spectral triples}, Commun. Math. Phys. \textbf{246}
  (2004), 569--623.

\bibitem{Goodearl:1982fk}
K.~R. Goodearl, \emph{Notes on real and complex c*-algebras}, Notes on Real and
  Complex C*-Algebras, 1982.

\bibitem{Grosse:2007ec}
Harald Grosse and R.~Wulkenhaar, \emph{8d-spectral triple on 4d-moyal space and
  the vacuum of noncommutative gauge theory}, preprint arXiv 0709.0095 (2007).

\bibitem{Hasselmann:2014fk}
S.~Hasselmann, \emph{Spectral triples on {C}arnot manifolds},  \textbf{arXiv
  1404.5494} (2014).

\bibitem{Iochum:2001fv}
Bruno Iochum, Thomas Krajewski, and Pierre Martinetti, \emph{Distances in
  finite spaces from noncommutative geometry}, J. Geom. Phy. \textbf{31}
  (2001), 100--125.

\bibitem{Kadison1983}
Richard~V. Kadison and John~R. Ringrose, \emph{Fundamentals of the theory of
  operator algebras. {V}olume {I}, {E}lementary theory}, Academic {P}ress,
  1983.

\bibitem{Kan42}
L.~V. Kantorovich, \emph{On the transfer of masses}, Dokl. Akad. Nauk. SSSR
  \textbf{37} (1942), 227--229.

\bibitem{Latremoliere:2015ab}
F.~Latr\'emoli\`ere, \emph{Quantum metric space and the {G}romov-{H}ausdorff
  propinquity}, to be published in Contemp. Maths. arXiv 1506.04341v2 (2015).

\bibitem{Lisini:2007aa}
S.~Lisini, \emph{Characterization of absolutely continuous curves in
  {W}asserstein spaces}, Calculus of variations and partial differential
  equations \textbf{28} (2007), no.~1, 85--120.

\bibitem{madore}
J.~Madore, \emph{An introduction to noncommutative differential geometry and
  its physical applications}, Cambridge University Press, 1995.

\bibitem{Martinetti:2001fk}
P.~Martinetti, \emph{Distances en g{\'e}om{\'e}trie non-commutative}, PhD
  thesis (2001), arXiv:math--ph/0112038v1.

\bibitem{Martinetti:2012fk}
\bysame, \emph{Towards a {M}onge-{K}antorovich distance in noncommutative
  geometry}, Zap. Nauch. Semin. POMI \textbf{411} (2013).

\bibitem{Martinetti:2011fk}
P.~Martinetti, F.~Mercati, and L.~Tomassini, \emph{Minimal length in quantum
  space and integrations of the line element in noncommutative geometry}, Rev.
  Math. Phys. \textbf{24} (2012), no.~5, 1250010--36 pages.

\bibitem{Martinetti:2011fko}
P.~Martinetti and L.~Tomassini, \emph{Noncommutative geometry of the {M}oyal
  plane: translation isometries, {C}onnes' distance on coherent states,
  {P}ythagoras equality}, Commun. Math. Phys. \textbf{323} (2013), no.~1,
  107--141.

\bibitem{Martinetti:2006db}
Pierre Martinetti, \emph{{C}arnot-{C}arath{\'e}odory metric and gauge
  fluctuation in noncommutative geometry}, Commun. Math. Phys. \textbf{265}
  (2006), 585--616.

\bibitem{Martinetti:2006rz}
\bysame, \emph{Smoother than a circle or how noncommutative geometry provides
  the torus with an egocentric metric}, proceedings of Deva intl. conf. on
  differential geometry and physics (Cluj university~press (Roumania), ed.),
  2006.

\bibitem{Martinetti:2008hl}
\bysame, \emph{Spectral distance on the circle}, J. Func. Anal. \textbf{255}
  (2008), no.~1575-1612.

\bibitem{Martinetti:2011fkbis}
Pierre Martinetti and Luca Tomassini, \emph{Length and distance on a quantum
  space}, Proc. of Sciences \textbf{042} (2011).

\bibitem{Martinetti:2002ij}
Pierre Martinetti and Raimar Wulkenhaar, \emph{Discrete {K}aluza-{K}lein from
  scalar fluctuations in noncommutative geometry}, J. Math. Phys. \textbf{43}
  (2002), no.~1, 182--204.

\bibitem{Mesland:2015aa}
B.~Mesland and A.~Rennie, \emph{Nonunital spectral triples and metric
  completness in unbounded KK-theory},  (2015).

\bibitem{Montgomery:2002yq}
Richard Montgomery, \emph{A tour of subriemannian geometries, their geodesics
  and applications}, AMS, 2002.

\bibitem{Moretti:2003zw}
Valter Moretti, \emph{Aspects of noncommutative {L}orentzian geometry for
  globally hyperbolic spacetimes}, Rev. Math. Phys. \textbf{15} (2003),
  1171--1217.

\bibitem{Netsruev:2003aa}
J.~Netsruev, \emph{Smooth manifolds and observables}, Graduate Texts in
  Mathematics, vol. 220, Springer - Verlag, 2003.

\bibitem{Paolini:2012aa}
E.~Paolini and E.~Stepanov, \emph{Decomposition of acyclic normal currents in a
  metric space}, J. Func. Anal. \textbf{263} (2012), no.~11, 3358--3990.

\bibitem{rieffel2003}
Marc Rieffel, \emph{Compact quantum metric spaces}, Contemporary Mathematics
  \textbf{365} (2003).

\bibitem{Rieffel:1999wq}
Marc~A. Rieffel, \emph{Metrics on states from actions of compact groups},
  Documenta Math. \textbf{3} (1998), 215--229.

\bibitem{Rieffel:1999ec}
\bysame, \emph{Metric on state spaces}, Documenta Math. \textbf{4} (1999),
  559--600.

\bibitem{CR2014}
Carlo Rovelli, \emph{Lorentzian {C}onnes distance, spectral graph distance and
  loop gravity}, arXiv 1408.3260 (2014).

\bibitem{Walterlivre}
W.~van Suijlekom, \emph{Noncommutative geometry and particle physics},
  Springer, 2015.

\bibitem{Villani:2009tp}
C{\'e}dric Villani, \emph{Optimal transport: Old and new}, Springer, 2009.

\bibitem{Wallet:2011uq}
J.-C. Wallet, \emph{Connes distance by examples: Homothetic spectral metric
  spaces}, Rev. Math. Phys. \textbf{24} (2012), no.~9, 1250027.

\bibitem{Weaver:1999aa}
N.~Weaver, \emph{Lipschitz algebras}, World {S}cientific, 1999.

\bibitem{cite-key}
D.~Zaev, \emph{Lp-{W}asserstein distances on state and quasi-state spaces of
  c*-algebra},  (arxiv 1505.06061).

\end{thebibliography}
\end{document}